\newcommand{\restr}[1]{|_{#1}}
\newcommand{\tbox}[4]{\node (#1) at (#2)[draw,thick,text centered, minimum height=1cm]
  {\shortstack{~\\#3 \\ \vspace{0.1cm} \\ {[}#4{]} \vspace{0.1cm}}}
}
\newcommand{\tfbox}[4]{\node (#1) at (#2)[draw,line width= 3pt,text centered,minimum height=1cm]
  {\shortstack{~\\ #3 \\ \vspace{0.1cm} \\ {[}#4{]}  \vspace{0.1cm}}}}
\newcommand{\tibox}[3]{\node (#1) at (#2)[draw, dotted, thick,text centered,minimum height=2.2cm,minimum width=3cm]
  {\shortstack{~\\#3\\~}}}
\tikzset{black node/.style={draw, circle, fill = black, minimum size = 4pt, inner sep = 0pt}}
\tikzset{hblack node/.style={draw, circle, fill = black, minimum size = 5pt, inner sep = 0pt}}
\newcommand{\remove}[1]{}
\definecolor{MidnightBlack}{rgb}{0.1,0.1,0.30}
\definecolor{MidnightBlue}{rgb}{0.1,0.1,0.44}
\definecolor{Black}{rgb}{0,0, 0}
\definecolor{Blue}{rgb}{0, 0 ,1}
\definecolor{Red}{rgb}{1, 0 ,0}
\definecolor{White}{rgb}{1, 1, 1}
\definecolor{Grey}{rgb}{.6, .6, .6}
\definecolor{Mygreen}{rgb}{.0, .7, .0}
\definecolor{Yellow}{rgb}{.55,.55,0}
\definecolor{Mustard}{rgb}{1.0, 0.86, 0.35}
\definecolor{applegreen}{rgb}{0.55, 0.71, 0.0}
\definecolor{darkturquoise}{rgb}{0.0, 0.81, 0.82}
\definecolor{celestialblue}{rgb}{0.29, 0.59, 0.82}
\definecolor{green_yellow}{rgb}{0.68, 1.0, 0.18}
\definecolor{crimsonglory}{rgb}{0.75, 0.0, 0.2}
\definecolor{darkmagenta}{rgb}{0.30, 0.0, 0.30}
\definecolor{internationalorange}{rgb}{1.0, 0.31, 0.0}
\definecolor{darkorange}{rgb}{1.0, 0.55, 0.0}
\newcommand{\red}[1]{{\color{Red}#1}}
\DeclareMathOperator{\tw}{{\sf tw}}
\newcommand{\mynewtheorem}[2]{
\newaliascnt{#1}{dummy}
\newtheorem{#1}[#1]{#2}
\aliascntresetthe{#1}
% maybe we will squish some autoref defaults, but who cares?
%\expandafter\def\csname #1autorefname\endcsname{#2}
}
\theoremstyle{plain}
\theoremstyle{definition}
\theoremstyle{remark}
\newcommand{\cupall}{\pmb{\pmb{\bigcup}}}
\newcommand{\bd}{{\sf bd}}
\newcommand{\bor}{{\sf bd}}
\newcommand{\ann}{{\sf ann}}
\newcommand{\inter}{{\sf int}}
\newcommand{\bigO}{\mathcal{O}}
\newcommand{\Ostar}{\mathcal{O}^*}
\newcommand{\Fcal}{\mathcal{F}}
\newcommand{\Ocal}{\mathcal{O}}
\newcommand{\Pcal}{\mathcal{P}}
\newcommand{\Nbb}{\mathbb{N}}
\newcommand{\frR}{{\frak{R}}}
\newcommand{\ETH}{{\sf ETH}\xspace}
\newcommand{\bound}[1]{{\bf #1}}
\theoremstyle{plain}
\newcommand{\paraprobl}[5]
{
  \begin{flushleft}
    \fbox{
      \begin{minipage}{#5cm}
        \noindent {\textsc {#1}}\\
        {\bf Input:} #2\\
        {\bf Parameter:} #4\\
        {\bf Output:} #3
      \end{minipage}
    }
  \end{flushleft}
}
\newcommand{\pretp}{\preceq_{\sf tm}}
\newcommand{\prem}{\preceq_{\sf m}}
\newcommand{\paw}{{\sf paw}\xspace}
\newcommand{\chair}{{\sf chair}\xspace}
\newcommand{\butterfly}{{\sf butterfly}\xspace}
\newcommand{\claw}{{\sf claw}\xspace}
\newcommand{\banner}{{\sf banner}\xspace}
\newcommand{\house}{{\sf px}\house}
\newcommand{\ourdiamond}{{\sf diamond}\xspace}
\definecolor{linkcol}{rgb}{0,0,0.8}
\definecolor{citecol}{rgb}{0.65,0,0}
\definecolor{titlecol}{rgb}{0.65,0,0}
\newcounter{func}
\newcommand{\newfun}[1]{f_{\refstepcounter{func}\label{#1}\thefunc}}
\newcommand{\funref}[1]{\hyperref[#1]{f_{\ref*{#1}}}} % print a
\newcounter{con}
\newcommand{\conref}[1]{\hyperref[#1]{c_{\ref*{#1}}}} % print a
\definecolor{gray0}{gray}{0.875}
\definecolor{gray1}{gray}{0.775}
\definecolor{gray2}{gray}{0.75}
\newcommand\cuparrow{%
  \mathrel{\ooalign{\hss$\cup$\hss\cr%
      \kern0.3ex\raise0.7ex\hbox{\scalebox{0.7}{$\downarrow$}}}}}
\newcommand\bigcuparrow{%
  \mathrel{\ooalign{\hss$\bigcup$\hss\cr%
      \kern0.55ex\raise0.7ex\hbox{\scalebox{0.7}{$\downarrow$}}}}}
\newcommand*\samethanks[1][\value{footnote}]{\footnotemark[#1]}
\begin{document}

% OLD TITLE: Optimal algorithms for hitting (topological) minors on \\ graphs of bounded treewidth
\title{\vspace{-.5cm}Hitting minors on bounded treewidth graphs.\\IV. An optimal algorithm\thanks{An extended abstract of this article appeared in the \emph{Proceedings of the 31st Annual ACM-SIAM Symposium on Discrete Algorithms (SODA), pages 951-970, Salt Lake City, Utah, U.S., January 2020}. The first author was supported by the Deutsche Forschungsgemeinschaft (DFG, German Research Foundation) - 388217545. The two last authors were supported  by   the ANR projects DEMOGRAPH (ANR-16-CE40-0028), ESIGMA (ANR-17-CE23-0010), ELIT (ANR-20-CE48-0008), the French-German Collaboration ANR/DFG Project UTMA (ANR-20-CE92-0027), and the French Ministry of Europe and Foreign Affairs, via the Franco-Norwegian project PHC AURORA.}}

\author{\bigskip Julien Baste\thanks{
	Univ. Lille, CNRS, Centrale
	Lille, UMR 9189 - CRIStAL - Centre de Recherche en Informatique Signal
	et Automatique de Lille, F-59000 Lille, France. Email: \texttt{julien.baste@univ-lille.fr}.}$\ ^{,}$\samethanks[3] \and
Ignasi Sau\thanks{LIRMM, Univ.  Montpellier, CNRS, Montpellier, France. Emails:  \texttt{ignasi.sau@lirmm.fr}, \texttt{sedthilk@thilikos.info}.}
%~\thanks{Departamento de Matem\'atica, Universidade Federal do Cear\'a, Fortaleza, Brazil.}
\and
Dimitrios  M. Thilikos\samethanks[3]%$\,\,^,$\samethanks[4]$\,\,^,$\samethanks[5]%\thanks{Department of Mathematics, National and Kapodistrian University of Athens, Athens, Greece.}}
}

%\date{\vspace{-1cm}}
\date{}

\maketitle

%\removed{
\begin{abstract}
	\noindent 	 For a fixed finite collection of graphs $\Fcal,$ the  \textsc{${\cal F}$-M-Deletion}  problem asks, given an $n$-vertex input graph $G,$ for the minimum number of vertices that intersect all minor models in $G$ of the graphs in $\Fcal.$ by Courcelle Theorem,  this problem can be solved in time $f_{\Fcal}(\tw)\cdot n^{\Ocal(1)},$  where $\tw$ is the treewidth of $G,$ for some function $f_{\Fcal}$ depending on $\Fcal.$ %\sed{Update for ${\cal H}$}
	In a recent series of articles, we have initiated the programme of optimizing asymptotically the function $f_{\Fcal}.$
	Here we provide an algorithm showing that $f_{\Fcal}(\tw) = 2^{\Ocal(\tw\cdot \log\tw)}$ for every collection $\Fcal.$ Prior to this work, the best known function $f_{\Fcal}$ was double-exponential in $\tw.$ In particular, our  algorithm vastly extends the results of Jansen et al.~[SODA 2014] for the particular case $\Fcal=\{K_5,K_{3,3}\}$ and of Kociumaka and  Pilipczuk~[Algorithmica 2019] for graphs of bounded genus, and answers an open problem posed by Cygan et al.~[Inf Comput 2017]. We combine several ingredients such as the machinery of boundaried graphs in dynamic programming via representatives, the Flat Wall Theorem,  Bidimensionality,  the irrelevant vertex technique, treewidth modulators, and protrusion replacement. Together with our previous results providing single-exponential algorithms for particular collections~$\Fcal$~[Theor Comput Sci 2020] and general lower bounds [J Comput Syst Sci 2020], our algorithm yields the following complexity dichotomy when $\Fcal = \{H\}$ contains a single connected graph $H,$ assuming the Exponential Time Hypothesis:  $f_H(\tw)
		=2^{\Theta(\tw)}$ if $H$ is a contraction of the \chair\ or the \banner, and $f_H(\tw)
		=2^{\Theta(\tw\cdot \log\tw)}$ otherwise.

	\vspace{.5cm}

	\noindent{\bf Keywords}: parameterized complexity; graph minors; treewidth; hitting minors; Flat Wall Theorem; irrelevant vertex; dynamic programming; complexity dichotomy.
	\vspace{.5cm}

	%\ig{We should remove all the lower bounds from this article}

\end{abstract}\thispagestyle{empty}
\newpage

%\ig{TO DO BEFORE DEADLINE (July 9th)}

%\begin{enumerate}
%\item Use ``autoref'' everywhere, change size of Figure 2
%%\item Explain example of set of representatives of size $2^{2^h}.$

%\sed{Upload the part containing the algorithm to arXiv?}
%\end{enumerate}

\tableofcontents

%\tableofcontents

\thispagestyle{empty}
\newpage\clearpage

\setcounter{page}{1}
\section{Introduction}
\label{label_soggiugnendo}

Let ${\cal F}$ be a finite non-empty collection of non-empty graphs.  In the \textsc{$\Fcal$-M-Deletion} problem, we are given a graph $G$ and an integer $k,$ and the objective is to decide whether there exists a set $S \subseteq V(G)$ with $|S| \leq k$ such that $G \setminus S$ does not contain any of the graphs in ${\cal F}$ as a minor. This problem belongs to the  family of {\sl graph modification problems} and  has a big expressive power, as instantiations of it correspond, for instance,  to \textsc{Vertex Cover} (${\cal F}= \{K_2\}$), \textsc{Feedback Vertex Set} (${\cal F}= \{K_3\}$), and \textsc{Vertex Planarization} (${\cal F}= \{K_5,K_{3,3}\}$). Note that if ${\cal F}$ contains a graph with at least one edge, then \textsc{$\Fcal$-M-Deletion} is {\sf NP}-hard~\cite{LeYa80}.
%by the classical classification result of Lewis and Yannakakis~\cite{LeYa80}.

We study  the parameterized complexity of \textsc{$\Fcal$-M-Deletion} in terms of the treewidth of the input graph (while the size of $k$ may be unbounded). Since the property of containing a graph as a minor can be expressed in monadic second-order logic~\cite{KimLPRRSS16line}, by Courcelle Theorem~\cite{Courcelle90}, \textsc{$\Fcal$-M-Deletion} can be solved in time\footnote{The notation $\Ostar(\cdot)$  suppresses polynomial factors depending on the size of the input graph.} $\Ostar(f_{\Fcal}(\tw))$ on graphs with treewidth at most $\tw,$ where $f_{\Fcal}$ is some computable function depending on $\Fcal.$ As the function $f_{\Fcal}(\tw)$ given by Courcelle Theorem is typically enormous, our goal is to determine, for a fixed collection ${\cal F},$ which is the {\sl best possible} such function $f_{\Fcal}$ that one can (asymptotically) hope for, subject to reasonable complexity assumptions. Besides being an interesting objective in its own, optimizing the running time of algorithms parameterized by treewidth has usually side effects. Indeed, black-box subroutines parameterized by treewidth are nowadays ubiquitous in parameterized~\cite{CyganFKLMPPS15}, exact~\cite{FominK10}, and approximation~\cite{WiSh11-approx} algorithms.

\medskip
\noindent
\textbf{Previous work}. This line of research has attracted considerable attention  in the parameterized complexity community during the last years. For instance, \textsc{Vertex Cover} is easily solvable in time $\Ostar(2^{\Ocal(\tw)}),$ called \emph{single-exponential}, by standard dynamic programming techniques, and no algorithm with running time $\Ostar(2^{o(\tw)})$ exists, unless the Exponential Time Hypothesis (\ETH) fails~\cite{ImpagliazzoP01}. (The \ETH implies that 3-\textsc{Sat} on $n$ variables cannot be solved in time $2^{o(n)}$; see~\cite{ImpagliazzoP01} for more details.) For \textsc{Feedback Vertex Set}, standard dynamic programming techniques give a running time of $\Ostar(2^{\Ocal(\tw \cdot \log \tw)}),$ while the lower bound under the \ETH~\cite{ImpagliazzoP01} is again $\Ostar(2^{o(\tw)}).$ This gap remained open for a while, until Cygan et al.~\cite{CyganNPPRW11} presented an optimal (randomized) algorithm running in time $\Ostar(2^{\Ocal(\tw)}),$ introducing  the celebrated \emph{Cut {\sl \&} Count}
technique. This article triggered several other (deterministic) techniques to obtain single-exponential algorithms for so-called \emph{connectivity problems} on graphs of bounded treewidth, mostly based on algebraic tools~\cite{BodlaenderCKN15,FominLPS16}.

Concerning \textsc{Vertex Planarization}, Jansen et al.~\cite{JansenLS14} presented an algorithm running in time $\Ostar(2^{\Ocal(\tw \cdot \log \tw)})$ as a crucial subroutine in an   algorithm running in time $\Ostar(2^{\Ocal(k \cdot \log k)})$ where $k$ is the solution size. Marcin Pilipczuk~\cite{Pili15} proved afterwards that this running time is optimal under the \ETH, by using the framework introduced by Lokshtanov et al.~\cite{permuclique}  for proving superexponential lower bounds.

Generalizing the above algorithm, the main technical contribution of the recent paper of Kociumaka and  Pilipczuk~\cite{KociumakaP17} is an algorithm running in time $\Ostar(2^{\Ocal((\tw +g)\cdot \log (\tw + g))})$ to solve the \textsc{Genus Vertex Deletion} problem, which consists in deleting the minimum number of vertices from an input graph in order to obtain a graph embeddable on a surface of Euler genus at most $g.$

Cygan et al.~\cite{CyganMPP17} studied the problem of hitting subgraphs (instead of minors), and proved that the problem of hitting all copies of a fixed {\sl path} as a subgraph can be solved in time $\Ostar(2^{\Ocal(\tw \cdot \log \tw)}).$ As a path occurs as a subgraph if and only if it occurs as a minor, their result implies that \textsc{$\{P_h\}$-M-Deletion} can be solved in time $\Ostar(2^{\Ocal(\tw \cdot \log \tw)})$ for every fixed integer $h \geq 2,$ where $P_h$ is the path on $h$ vertices. They left as an open problem whether the algorithm in time $\Ostar(2^{\Ocal(\tw \cdot \log \tw)})$  of  Jansen et al.~\cite{JansenLS14} for \textsc{Vertex Planarization} could be generalized to more minor-closed graph classes, other than planar graphs.

%\ig{Say that we answer an open question of Cygan et al.~\cite{CyganMPP17}: ``It was recently shown that $\Fcal$ being the class of planar graphs, a $2^{\Ocal(t \log t)} |V (G)|$-time algorithm exists. Can this result be generalized to more graph classes?''}
%\ig{We may also say that an optimal algorithm for hitting paths was already given by Cygan et al.~\cite{CyganMPP17}}

%\newpage

In a recent series of three papers~\cite{BasteST20-monster1,BasteST20-monster2,BasteST20-monster3}, we initiated a systematic study of the complexity of \textsc{$\Fcal$-M-Deletion},  parameterized by treewidth\footnote{In these papers~\cite{BasteST20-monster1,BasteST20-monster2,BasteST20-monster3}, in some results we also considered the version of the problem where the graphs in $\Fcal$ are forbidden as {\sl topological} minors; in the current paper we focus exclusively on the minor version.}.  Before stating these results, we say that a collection $\mathcal{F}$ is \emph{connected} if it contains only connected graphs.

In~\cite{BasteST20-monster1} we showed that, for every fixed collection $\mathcal{F},$ \textsc{$\Fcal$-M-Deletion} can be solved in time  $\Ostar\left(2^{2^{\Ocal(\tw \cdot \log \tw)}}\right)$ by a natural dynamic programming algorithm, and that if $\mathcal{F}$ contains a planar graph, the running time can be improved\footnote{In the conference version of~\cite{BasteST20-monster1} we additionally required $\mathcal{F}$ to be connected; in the journal version we  proved this result without  this assumption.} to $\Ostar(2^{\Ocal(\tw  \cdot \log \tw)}).$ If the input graph $G$ is planar or, more generally, embedded in a surface of bounded genus, then we showed that the running time can be further
%\red{(optimally)}
improved to $\Ostar(2^{\Ocal(\tw)}).$

In~\cite{BasteST20-monster2} we provided single-exponential algorithms for \textsc{$\{H\}$-M-Deletion} when $H$ is either  $P_4,$ $C_4,$ the \claw, the \paw, the \chair, or the \banner; see \autoref{label_thoughtlessness} in \autoref{label_apreciadores} for an illustration of these graphs.

In~\cite{BasteST20-monster3} we focused on lower bounds under the \ETH. We proved that for any connected $\Fcal$ containing graphs on at least two vertices, \textsc{$\Fcal$-M-Deletion} cannot be solved in time $\Ostar(2^{o(\tw)}),$ even if the input graph $G$ is planar. More notably, we proved that \textsc{$\Fcal$-M-Deletion} cannot be solved in time $\Ostar(2^{o(\tw \cdot \log \tw)})$ for collections $\Fcal$ satisfying some generic conditions. In particular, these conditions apply when ${\cal F}$ contains a single connected graph $H$ that is {\sl not} a contraction of the \chair or the \banner. Note that the connected graphs $H$ with $|V(H)| \geq 2$ that are a contraction of the \chair or the \banner are those on the left in  \autoref{label_thoughtlessness}, and for each of them \textsc{$\{H\}$-M-Deletion} can be solved in (optimal) single-exponential time~\cite{BasteST20-monster2,BasteST20-monster3}.

% We also provided single-exponential algorithms for the cases where $\Fcal \in \{ \{P_3\}, \{P_4\}, \{C_4\} \}.$ Concerning lower bounds under the  \ETH, we proved that for any connected $\Fcal,$ \textsc{$\Fcal$-M-Deletion} cannot be solved in time $\Ostar(2^{o(\tw)}),$ even if the input graph $G$ is planar. Inspired by the reduction of Pilipczuk~\cite{Pili15}, we proved that the problem cannot be solved in time $\Ostar(2^{o(\tw \cdot \log \tw)})$  for some families of collections ${\cal F},$ for example, when all graphs in ${\cal F}$ are planar and 3-connected. In the subsequent paper~\cite{BasteST18}, we focused on small planar graphs. Namely, we classified the optimal asymptotic
%complexity of \textsc{$\{H\}$-M-Deletion} when $H$ is a connected {\sl planar} graph on at most five vertices. This classification is illustrated in \autoref{label_thoughtlessness} in \autoref{label_apreciadores}, where it should be noted that $K_5$ is {\sl not} covered by the results in~\cite{BasteST18}. To achieve that, we provided single-exponential algorithms for a number of small patterns not considered in~\cite{BasteST20-monster1} and  superexponential lower bounds for the remaining cases, this time inspired by a reduction of Bonnet et al.~\cite{BonnetBKM-IPEC17} for generalized feedback vertex set problems. Full proofs of the results in~\cite{BasteST20-monster1,BasteST18} are available at~\cite{monster-arXiv}.

\medskip
\noindent
\textbf{Our results}. In this article we present an algorithm to solve \textsc{$\Fcal$-M-Deletion} in time $\Ostar(2^{\Ocal(\tw \cdot \log \tw)})$ for {\sl every} collection $\Fcal,$ thus making a significant step towards a complete classification of the complexity of the \textsc{$\Fcal$-M-Deletion} problem parameterized by treewidth. That is, we drop the condition that $\Fcal$ contains a planar graph, which was critically needed in the algorithm presented in~\cite{BasteST20-monster1} in order to bound the treewidth of an $\Fcal$-minor-free graph. Our algorithm can be interpreted as an exponential ``collapse'' of the natural dynamic programming algorithm running in time $\Ostar\left(2^{2^{\Ocal(\tw \cdot \log \tw)}}\right)$ given in~\cite{BasteST20-monster1}. Besides largely improving our previous results~\cite{BasteST20-monster1}, this algorithm  generalizes the ones for ${\cal F}= \{K_5,K_{3,3}\}$ given by Jansen et al.~\cite{JansenLS14} %\red{by following an approach different from the generalization of
%Kociumaka and  Pilipczuk~\cite{KociumakaP17},}
and for the \textsc{Genus Vertex Deletion} problem given by Kociumaka and  Pilipczuk~\cite{KociumakaP17},
which are based on embeddings, and answers positively the open problem of Cygan et al.~\cite{CyganMPP17} for {\sl every} minor-closed graph class.
%\ig{the obstructions for Euler genus at most $g$ may be disconnected~\cite{MoharT01}, so we DON'T generalize the results of Kociumaka and  Pilipczuk~\cite{KociumakaP17}}.
Since the algorithm is quite involved, we provide an overview of it in \autoref{label_voixoderncrow}.

This algorithm in time $\Ostar(2^{\Ocal(\tw \cdot \log \tw)})$ for every collection $\Fcal,$ together with the lower bounds under the \ETH given in~\cite{BasteST20-monster3}, the single-exponential algorithms given in~\cite{BasteST20-monster2}, and the known cases $\Fcal = \{P_2\}$~\cite{ImpagliazzoP01,CyganFKLMPPS15}, $\Fcal = \{P_3\}$~\cite{P3-cover,P3-cover-improved}, and $\Fcal = \{C_3\}$~\cite{CyganNPPRW11,BodlaenderCKN15}, imply the following complexity dichotomy when $\Fcal$ consists of a single connected graph $H,$ which we suppose to have at least one edge.

\begin{theorem}\label{label_purposelessness}
	Let $H$ be a connected graph. Under the \ETH, $\{H\}$-\textsc{M-Deletion} is solvable in time\footnote{We use $n$ and $\tw$ for the number of vertices and the treewidth of the input graph, respectively.}\vspace{-2mm}

	\begin{itemize}\setlength{\itemsep}{-0.7mm}
		\item $2^{\Theta(\tw)} \cdot n^{\Ocal(1)},$ if $H$ is a contraction of the \chair or the \banner, and
		\item $2^{\Theta(\tw \cdot \log \tw)} \cdot n^{\Ocal(1)},$  otherwise.
	\end{itemize}
	%In both cases, the running time is asymptotically optimal under the \ETH.
\end{theorem}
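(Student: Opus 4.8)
The proof is an assembly: it combines the $\Ostar(2^{\Ocal(\tw\cdot\log\tw)})$ algorithm established in the present paper (see \autoref{label_voixoderncrow} for an overview), the single-exponential algorithms already known for a handful of small graphs, and the \ETH-based lower bounds of~\cite{BasteST20-monster3}. The first step is to delimit precisely the ``easy'' side of the dichotomy, namely the set of connected graphs with at least one edge that arise as a contraction of the \chair or the \banner. A routine finite case analysis, contracting edges one at a time, shows that one contraction of the \chair yields $P_4$ or the \claw (depending on the chosen edge), while one contraction of the \banner yields $C_4$ or the \paw; iterating the contractions one sees that the set in question is exactly $\{P_2,P_3,P_4,C_3,C_4\}$ together with the \claw, the \paw, the \chair and the \banner, i.e.\ the graphs displayed on the left of \autoref{label_thoughtlessness}. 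Since every connected graph with at least one edge has at least two vertices, the two items of the theorem partition all admissible graphs~$H$, leaving none unclassified.

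For the upper bounds: if $H$ is \emph{not} a contraction of the \chair or the \banner, the algorithm of this paper solves $\{H\}$-\textsc{M-Deletion} in time $\Ostar(2^{\Ocal(\tw\cdot\log\tw)})$, which is exactly the bound in the second item. If $H$ \emph{is} such a contraction, then by the previous paragraph $H$ is one of the nine graphs above, and for each of them a single-exponential algorithm is known: $\{P_2\}$-\textsc{M-Deletion} is \textsc{Vertex Cover}~\cite{ImpagliazzoP01,CyganFKLMPPS15}, $\{P_3\}$-\textsc{M-Deletion} is handled in~\cite{P3-cover,P3-cover-improved}, $\{C_3\}$-\textsc{M-Deletion} is \textsc{Feedback Vertex Set}~\cite{CyganNPPRW11,BodlaenderCKN15}, and the remaining cases $P_4$, $C_4$, the \claw, the \paw, the \chair and the \banner are treated in~\cite{BasteST20-monster2}; each runs in time $\Ostar(2^{\Ocal(\tw)})$, giving the bound in the first item.

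For the lower bounds, all under the \ETH and taken from~\cite{BasteST20-monster3}: for every connected collection $\Fcal$ all of whose members have at least two vertices, $\Fcal$-\textsc{M-Deletion} cannot be solved in time $\Ostar(2^{\smallo(\tw)})$ (even on planar inputs); applied to $\Fcal=\{H\}$ this matches the $2^{\Ocal(\tw)}$ upper bound and yields $2^{\Theta(\tw)}$ in the first case. Moreover, if $H$ is a connected graph that is not a contraction of the \chair or the \banner, then $\{H\}$-\textsc{M-Deletion} cannot be solved in time $\Ostar(2^{\smallo(\tw\cdot\log\tw)})$; together with the $2^{\Ocal(\tw\cdot\log\tw)}$ upper bound this gives $2^{\Theta(\tw\cdot\log\tw)}$ in the second case, completing the proof modulo the algorithm of the present paper.

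The only genuinely substantial ingredient is that algorithm, whose proof occupies the bulk of the paper; all remaining pieces are short citations or elementary observations. The one point requiring care is therefore the contraction analysis of the \chair and the \banner: one must verify that the family of graphs on which the single-exponential regime is known to hold coincides \emph{exactly} with the complement, within connected graphs having at least one edge, of the family for which the $2^{\smallo(\tw\cdot\log\tw)}$ lower bound of~\cite{BasteST20-monster3} applies, so that the two regimes neither overlap nor leave any connected graph~$H$ uncovered.
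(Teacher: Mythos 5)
Your proposal is correct and matches the paper's argument: Theorem~\ref{label_purposelessness} is obtained in the paper exactly by assembling the $\Ostar(2^{\Ocal(\tw\cdot\log\tw)})$ algorithm of Theorem~\ref{label_instintivamente}, the single-exponential algorithms of~\cite{BasteST20-monster2} and the known cases $P_2,P_3,C_3$, and the two \ETH lower bounds of~\cite{BasteST20-monster3}, with the contraction analysis confirming that the nine graphs of \autoref{label_thoughtlessness} are precisely the connected contractions (with at least one edge) of the \chair and the \banner. Your finite case analysis of the contractions is accurate, so the two regimes indeed partition all admissible $H$ as required.
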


%\ig{Dimitrios, it is NOT true, as we said in Montpellier,  that the ``hard'' cases are those that contain one of the graphs $\{P_5,K_{1,4},\ourdiamond\}$ as a minor; the \banner is a counterexample}

\vspace{-.05cm}

This dichotomy is depicted in \autoref{label_thoughtlessness}, containing all  connected graphs $H$ with $2 \leq |V(H)| \leq 5$; note that if $|V(H)| \geq 6,$ then $H$ is not a contraction of the \chair or  the \banner, and therefore the second item above applies. Note also that $K_4$ and the {\sf diamond} are the only graphs on at most four vertices for which the problem is solvable in time $\Ostar(2^{\Theta (\tw \cdot \log \tw)})$ and that the \chair and the \banner are the only graphs on at least  five vertices for which the problem is solvable in time $\Ostar(2^{\Theta (\tw)}).$

The crucial role played by the \chair and the \banner  in the complexity dichotomy may seem surprising at first sight. In fact, we realized a posteriori that the ``easy'' cases can be succinctly described in terms of the \chair and the \banner by taking a look at \autoref{label_thoughtlessness}. Note that the ``easy'' graphs can be equivalently characterized as those that are minors of the \banner, with the exception of $P_5.$ Nevertheless, there is some intuitive reason for which excluding the \chair or the \banner constitutes the horizon on the existence of single-exponential algorithms. Namely, focusing on the \banner, every connected component (with at least five vertices) of a graph that excludes the  \banner as a  minor is either a cycle (of any length) or a tree in which some vertices have been replaced by triangles; both such types of components can be maintained by a dynamic programming algorithm in single-exponential time \cite{BasteST20-monster2}. A similar situation occurs when excluding the \chair. It appears that if the characterization of the allowed connected components is enriched in some way, such as restricting the length of the allowed cycles or forbidding certain degrees, the problem becomes inherently more difficult, inducing a transition from time $\Ostar(2^{\Theta (\tw)})$  to $\Ostar(2^{\Theta (\tw \cdot \log \tw)}).$

\medskip
\noindent
\textbf{Organization of the paper}. In \autoref{label_voixoderncrow} we provide a high-level overview of the algorithm running in time $\Ostar(2^{\Ocal(\tw \cdot \log \tw)}).$ In \autoref{label_laterralavolpelcel} we give some preliminaries. In \autoref{label_meditatively} we deal with flat walls, in \autoref{label_donnescamente}
we apply the irrelevant vertex technique in the context of boundaried graphs, and in  \autoref{label_pertenecerles}
we use this in order to bound the size of the dynamic programming tables.
%The lower bounds are presented in \autoref{ridicule}.
We conclude the article in \autoref{label_pelopponesian} with some open questions for further research. In \autoref{label_reproductions} we present an estimation of the constants depending on the (fixed) collection $\Fcal$ in our algorithm (cf. \autoref{label_geographical}).

%Due to space limitations, most of the material has been moved to the appendices, in particular the proofs of all the results marked with `($\star$)'.
%\vspace{-2.mm}

\section{Overview of the algorithm}
\label{label_voixoderncrow}
%\vspace{-1.5mm}

In order to obtain our algorithm of time $\Ostar(2^{\Ocal(\tw \cdot \log \tw)})$ for every  collection~$\Fcal,$ our approach can be streamlined as follows. We use the machinery of boundaried graphs, equivalence relations, and representatives originating in the seminal work of Bodlaender et al.~\cite{BodlaenderFLPST16} and  subsequently used, for instance, in~\cite{GarneroPST15,F.V.Fomin:2010oq,KimLPRRSS16line,BasteST20-monster1}. Let $h$ be a constant depending only on the collection $\Fcal$ (to be defined in the formal description of the algorithm) and let $t$ be a positive integer that is at most the treewidth of the input graph plus one. Skipping several technical details, a \emph{$t$-boundaried} graph is a graph with a distinguished set of vertices --its \emph{boundary}-- labeled bijectively with integers from the set $[t].$ We say that two $t$-boundaried graphs are \emph{$h$-equivalent} if for {\sl any} other $t$-boundaried graph that we can ``glue'' to each of them, resulting in graphs $G_1$ and $G_2,$ and every graph $H$ on at most $h$ vertices, $H$ is a minor of $G_1$ if and only if it is a minor of $G_2$ (cf.~\autoref{label_laterralavolpelcel} for the precise definitions).
Let ${\cal R}_{h}^{(t)}$ be a set of {\sl minimum-sized} representatives of this equivalence relation. Since $h$-equivalent (boundaried) graphs have the same behavior in terms of eventual occurrences of minors of size up to $h,$ there is a generic dynamic programming algorithm (already used in~\cite{BasteST20-monster1}) to solve \textsc{$\Fcal$-M-Deletion} on a rooted tree decomposition of the input graph, via a typical bottom-up approach: at every bag $B$ of the tree decomposition, naturally associated with a $t$-boundaried graph $G_B,$ and for every
representative $R \in {\cal R}_{h}^{(t)},$ store the minimum size of a set $S \subseteq V(G_B)$ such that the graph $G_B \setminus S$ is $h$-equivalent to $R$ (cf.~\autoref{label_sechseckigen} for some more details\footnote{This step was the only reason for which in the conference version of this article we required the collection $\Fcal$ to be connected. As mentioned in \autoref{label_soggiugnendo}, in the full version
	of~\cite{BasteST20-monster1} we dropped the connectivity assumption, which implies that in the current article we can drop it as well.}). This yields an algorithm running in time $\Ostar(|{\cal R}_{h}^{(t)}|^2),$ and therefore it suffices to prove that $|{\cal R}_{h}^{(t)}| = 2^{\Ocal_h(t \cdot \log t)},$ where the notation `$\Ocal_h(\cdot)$' means that the hidden constants
depend only on $h.$ Since we may assume that the graphs in ${\cal R}_{h}^{(t)}$ exclude some
graph on at most $h$ vertices as a minor (as all those that do not are $h$-equivalent), hence
they have a linear number of edges~\cite{MyersT05}, it is enough to prove that, for every $R \in {\cal R}_{h}^{(t)},$ it holds that
\begin{eqnarray}\label{label_cdeefccdfliiih}
	|V(R)| = \Ocal_h(t).\label{label_privateering}
\end{eqnarray}

%\ig{Do we want to include the following sentence?}
Note that this is indeed sufficient in order to obtain an algorithm within the claimed running time, as there are at most $\binom{|V(R)|^2}{|E(R)|}=2^{\Ocal_h(|V(R)| \cdot \log |V(R)||)}$ representatives, and $t! = 2^{\Ocal(t \cdot \log t)}$ possible labelings for the vertices in the boundary. In order to prove~\autoref{label_privateering}, we combine a number of different techniques, which we proceed to discuss informally, and that are schematically summarized in \autoref{label_unprosperous}:

\begin{figure}[htb]
	\centering
	\scalebox{.8}{\begin{tikzpicture}[scale=.975]
			\tibox{init}{13,0}{ $ t \leq \tw(G) + 1$\\$h = f(\mathcal{F})$\\$R \in \mathcal{R}_h^{(t)}$};

			\tbox{le18}{0,0}{Embedding\\ with dispersed\\vertices}{\autoref{label_dispenseront}};
			\tbox{pr16}{4,0}{Confinement of\\ models inside\\ a railed annulus}{\autoref{label_interessiert}};
			\tbox{th20}{2,-3}{Collapse of \\topological minor \\models inside a wall}{\autoref{label_constitutivos}};
			\tbox{fwt}{8,0}{Flat Wall Theorem\\ \cite{RobertsonS95b,KawarabayashiTW18,accurate}}{\autoref{label_intercanvien}};
			\tbox{th14}{7,-3}{Large\\ $h$-homogeneous\\ subwall}{\autoref{label_incominciando}};
			\tbox{le21}{3,-6}{$R$ contains\\ no irrelevant vertex}{\autoref{label_pretendientes}};
			\tbox{le22}{3,-9}{${\bf p}_{h,r}(R) \leq t$}{\autoref{label_dispensaries}};
			\tbox{le24}{8,-6}{${\bf p}_{h,r}$ is\\ bidimensional}{\autoref{label_substantiality}};
			\tbox{le15}{12,-6}{${\bf p}_{h,r}$ is\\ separable}{\autoref{label_concerniente}};
			\tbox{le25}{10,-9}{$R$ has a treewidth modulator\\ of size $\bigO_h(t)$\\ containing the boundary}{\autoref{label_murmuradores}};
			\tbox{pr26}{10,-12.2}{Linear protrusion\\ decomposition of $R$}{\autoref{label_unintentional}};
			\tbox{le27}{3,-12.2}{$|V(R)| = \bigO_h(t)$}{\autoref{label_intentionnel}};
			\tbox{div}{3,-15.4}{$|\mathcal{R}_h^{(t)}| = 2^{\bigO_h(t \cdot \log t)}$}{\autoref{label_encyclopedia}};
			\tfbox{th1}{10,-15.4}{Algorithm in time\\ $\Ostar(2^{\bigO_h(\tw \cdot \log \tw)})$\\ for any collection $\mathcal{F}$}{\autoref{label_instintivamente}};

			\draw[-{Kite[{length=4mm}]},line width=2pt] (fwt) -- (th14);
			\draw[-{Kite[{length=4mm}]},line width=2pt] (le18) -- (th20);
			\draw[-{Kite[{length=4mm}]},line width=2pt] (pr16) -- (th20);
			\draw[-{Kite[{length=4mm}]},line width=2pt] (th20) -- (le21);
			\draw[-{Kite[{length=4mm}]},line width=2pt] (th14) -- (le21);
			\draw[-{Kite[{length=4mm}]},line width=2pt] (le21) -- (le22);
			\draw[-{Kite[{length=4mm}]},line width=2pt] (le22) -- (le25);
			\draw[-{Kite[{length=4mm}]},line width=2pt] (le24) -- (le25);
			\draw[-{Kite[{length=4mm}]},line width=2pt] (le15) -- (le25);
			\draw[-{Kite[{length=4mm}]},line width=2pt] (le25) edge node[right] {\cite{KimLPRRSS16line}} (pr26);
			\draw[-{Kite[{length=4mm}]},line width=2pt] (pr26) edge node[above] {\shortstack{Reduce \\ protrusions \cite{BasteST20-monster1}}} (le27);
			\draw[-{Kite[{length=4mm}]},line width=2pt] (le27) edge node[left] {\shortstack{Sparsity \\ of the\\ representatives}} (div);
			\draw[-{Kite[{length=4mm}]},line width=2pt] (div) edge node[above] {\shortstack{DP algorithm\\ from \cite{BasteST20-monster1}}} (th1);

		\end{tikzpicture}}
	\caption{Diagram of the algorithm in time $\Ostar(2^{\Ocal(\tw \cdot \log \tw)})$ for any collection $\Fcal.$}
	\label{label_unprosperous}
\end{figure}

%\begin{itemize}

%\newpage

\bigskip

\noindent$\red{\blacktriangleright}$ We use the Flat Wall Theorem of Robertson and Seymour~\cite{RobertsonS95b}, in particular a version (\autoref{label_intercanvien}) that has been recently proved in~\cite{accurate} (and is based on the framework of Kawarabayashi et al.~\cite{KawarabayashiTW18}), which
incorporates the so-called
{\sl regularity property}.
%\sed{Some cases the one and some cases the other \ig{in fact, do we ever use Kawarabayashi et al.~\cite{KawarabayashiTW18}?}}
%Chuzhoy~\cite{Chuzhoy15}.
In a nutshell, this theorem says that every $K_h$-minor-free graph $G$ has a set of vertices $A \subseteq V(G)$ --called \emph{apices}-- with $|A| = \Ocal_h(1)$ such that $G\setminus A$ contains a \emph{flat wall} of height $\Omega_h (\tw(G)).$ Here, the definition of ``flat wall'' is quite involved and is detailed in \autoref{label_meditatively}; it essentially means a subgraph that has a  bidimensional grid-like structure, separated from the rest of the graph by its perimeter, and that is ``close'' to being planar, in the sense that it can be embedded in the plane in a way that its potentially non-planar pieces, called \emph{flaps}, have a well-defined structure along larger pieces called \emph{bricks}.

\medskip
\noindent$\red{\blacktriangleright}$ A subwall of a flat wall is \emph{$h$-homogeneous}  if for every brick of the subwall, the flaps within that brick have the same variety of \emph{$h$-folios}, that is, the same sets of ``boundaried'' minors of \emph{detail} at most $h$ (the detail of a boundaried graph is the maximum between its number of edges and its number of non-boundary vertices). This notion is inspired (but is not the same) by the one defined by Robertson and Seymour in~\cite{RobertsonS95b}.
Using standard ``zooming'' arguments,  we can prove that, given a flat wall, we can find a large $h$-homogeneous subwall inside it (\autoref{label_incominciando}). Homogeneous subwalls are very useful because, as we explain below, they permit the application of the irrelevant vertex technique adapted to our purposes.

\medskip
\noindent$\red{\blacktriangleright}$ We say that a vertex set $S$ \emph{affects} a   flat wall if some vertex within the wall has a neighbor in $S$ that is {\sl not} an apex. With these definitions at hand, we define a parameter, denoted by ${\bf p}_{h,r}$ in this informal description, mapping every graph $G$ to the smallest size of a vertex set  that affects all  $h$-homogeneous  flat walls with at most $h$ apices and height at least $r$ in $G.$ It is not hard to prove that the parameter ${\bf p}_{h,r}$  has a ``{bidimensional}’’ behavior~\cite{DemaineFHT05sube,FominDHT16,F.V.Fomin:2010oq}, in the sense that its value on a flat wall depends quadratically on the height of the wall (\autoref{label_substantiality}) and it is \emph{separable}~\cite{BodlaenderFLPST16,F.V.Fomin:2010oq,FominDHT16} (\autoref{label_concerniente}).

%\sed{Declare that the minors of a representative do not go deeply in the boundary graph. We have the mention  \autoref{label_disgustingly}  here! \ig{ \autoref{label_disgustingly} is mentioned in the next item. Do you want to do it before?}}

%they allow to reroute (topological) minor models and
%, which will be crucial in order to find an \emph{irrelevant} vertex;
% that is, a vertex whose removal does not affect \red{the type (positive of negative) of our instance}\sed{This is not exactly like that!}; see \autoref{label_donnescamente} for the precise definition.

\medskip
\noindent$\red{\blacktriangleright}$ The most complicated step towards proving \autoref{label_privateering}
is to find an ``irrelevant’’ vertex inside a sufficiently large (in terms of $h$) flat wall of a boundaried graph that is {\sl not} affected by its boundary  (\autoref{label_pretendientes}).
Informally, here ``irrelevant’’ means a non-boundary vertex of $R$
that can be avoided by {\sl any} minor model of a graph on at most  $h$ vertices and edges that
traverses the boundary of $R,$ {\sl  no matter} the graph that may be glued to it
and {\sl no matter how} this model traverses the boundary of $R$; see \autoref{label_donnescamente} for the precise definition.
The {\sl  irrelevant vertex technique} originated in the seminal work of Robertson and Seymour~\cite{RobertsonSGM22,RobertsonS95b} and has become a very useful tool used in various kinds of linkage and cut problems~\cite{KratschW12,Reed95,JansenLS14,KociumakaP17,AdlerKKLST17}. Nevertheless, given the nature of our setting, it is critical
that the size of the flat wall where the irrelevant vertex appears
does {\sl not depend} on the boundary size. To the best of our knowledge, this property is not guaranteed by the existing results
on the irrelevant vertex technique (such as \cite[(10.2)]{RobertsonS95b}  and its subsequent proof in~\cite{RobertsonSGM22}).
To achieve it and, moreover, in order to make an
estimation of the parametric dependencies, we develop a self-reliant theoretical framework that uses
the following ingredients:
\begin{itemize}
	\item[$\circ$] With a flat wall $W$ we associate a bipartite graph $\tilde{W},$ which we call its \emph{leveling}  as defined in~\cite{accurate}; cf.~\autoref{label_disintegrated} for the precise definition. In particular, this graph has a vertex for every flap of the flat wall, and can be embedded in a disk in a planar way.
	      \vspace{-.12cm}

	\item[$\circ$] It turns out to be more convenient to work with \emph{topological} minor models instead of minor models; we can afford it since for every graph $H$ there are at most $f(H)$ different topological minor
	      minimal graphs  that contain $H$ as a minor (\autoref{label_daskalojannes}). The reason for this is that it is easier to deal with the branch vertices of a topological minor model in the analysis. Given a topological minor model, we say that a flap of a wall is \emph{dirty} if it contains a branch vertex of the model, or there is an edge from the flap to an apex vertex of the wall. We also define the leveling of a topological minor model, and we equip its dirty flaps with colors that encode their $h$-folios. We now proceed to explain how to reroute the colored leveling of a topological minor model.
	      \vspace{-.12cm}

	\item[$\circ$] In order to reroute (colored levelings of) topological minor models, it will be helpful to use \emph{railed annuli}, a structure introduced in~\cite{KaminskiT12} that occurs as a subgraph inside a flat wall (\autoref{label_simultaneously}) and that has the following nice property, recently proved in~\cite{GolovachST20-SODA} (\autoref{label_interessiert}): if a railed annulus is large enough compared to $h,$ every topological minor model of a graph on at most $h$ vertices traversing it can be rerouted so that the branch vertices are preserved and such that, more importantly, the intersection of the new model with a large {\sl prescribed} part of the railed annulus is \emph{confined}, in the sense that it is only allowed to use a well-defined set of paths in that part, which does not depend on the original model.
	      \vspace{-.12cm}
	      % \ig{Dimitrios, do you agree with the previous explanation?}.

	\item[$\circ$]  We also need a technical result with a graph drawing flavor (\autoref{label_dispenseront}) guaranteeing that large enough railed annuli contain topological minor models of every graph of maximum degree three
	      % \ig{check also this explanation}
	      with the property, in particular, that certain vertices are pairwise far apart in the embedding. Using this result and the one proved in~\cite{GolovachST20-SODA}  mentioned above, we can finally prove  (\autoref{label_constitutivos}) that every topological minor model of a graph $H$ inside a graph with a large flat wall $W$ can be ``collapsed'' inside the wall, in the following sense: $G$ contains another topological minor model of a graph $H',$ such that $H$ is a minor of $H',$ and such that the new model avoids the central part of the annulus; here is where the irrelevant vertex will be found.
	      %\vspace{-.12cm}

	\item[$\circ$] To conclude, it just remains to ``lift'' the constructed embedding of the colored leveling of the topological minor to an embedding of the ``original'' minor in the flat wall  (\autoref{label_pretendientes}). For that, we exploit the fact that we have rerouted the model inside an $h$-homogeneous subwall not affected by the boundary, which allows to mimic the behavior of the original minor inside the flaps of the wall, using that all bricks have the same variety of $h$-folios.
\end{itemize}

The above arguments, incorporated in the proof of \autoref{label_pretendientes},
imply that if $R \in {\cal R}_{h}^{(t)}$ is a minimum-sized representative, then
its boundary affects all large enough $h$-homogeneous  flat walls, as otherwise we could remove an irrelevant vertex and find a smaller equivalent representative. In particular, it follows that, for every $R \in {\cal R}_{h}^{(t)},$ we have ${\bf p}_{h,r}(R) \leq t$ (\autoref{label_dispensaries}).

\medskip

\noindent$\red{\blacktriangleright}$ Combining that the parameter ${\bf p}_{h,r}$ is ``bidimensional’’ and separable along with the fact that ${\bf p}_{h,r}(R) \leq t$ for every $R \in {\cal R}_{h}^{(t)},$
we prove in \autoref{label_murmuradores}  that every representative $R \in {\cal R}_{h}^{(t)}$ has a vertex subset $S$ containing its boundary, with $|S| = \Ocal_h(t),$ whose removal leaves a graph of treewidth bounded by a function of $h$; such a set is called a \emph{treewidth modulator}. (In \autoref{label_reproductions}
we provide an improved version of \autoref{label_murmuradores}, namely \autoref{label_gleichsinnigen}, by adapting the proof of  \cite[Lemma~3.6]{F.V.Fomin:2010oq}.)

\medskip
\noindent$\red{\blacktriangleright}$ Once we have a treewidth modulator of size $\Ocal_h(t)$ of a representative $R,$ all that remains is to pipeline it with known techniques to compute an appropriate \emph{protrusion decomposition}~\cite{KimLPRRSS16line} (\autoref{label_unintentional}) and to \emph{reduce protrusions} to smaller equivalent ones of size bounded by a function of $h$ --we use the version given in~\cite{BasteST20-monster1} adapted to the \textsc{$\Fcal$-M-Deletion} problem-- (\autoref{label_intentionnel}), implying that
$|V(R)| = \Ocal_h(t)$ for every every $R \in {\cal R}_{h}^{(t)}$ and concluding the proof of \autoref{label_privateering}.
%\end{itemize}

\medskip

It should be noted that all the items above do {\sl not} need to be converted into an algorithm, they are just used in the analysis: the conclusion is that if $R \in {\cal R}_{h}^{(t)}$ is a minimum-sized representative, then $|V(R)| = \Ocal_h(t),$ as otherwise some reduction rule could be applied to it (either by removing an irrelevant vertex or by protrusion replacement), thus obtaining an equivalent representative of smaller size and contradicting its minimality.
Our main result can be formally stated as follows.

\begin{theorem}\label{label_instintivamente}
	Let $\Fcal$ be a finite non-empty collection of non-empty graphs. There exists a constant $c_{\Fcal}$ such that the $\Fcal$-\textsc{M-Deletion} problem is solvable in time $c_{\Fcal}^{\tw \cdot \log \tw}\cdot n$ on $n$-vertex graphs of treewidth at most $\tw.$%\sed{The $n$ is linear here!}
\end{theorem}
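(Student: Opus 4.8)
The plan is to reduce the entire statement to the purely combinatorial bound \autoref{label_privateering}: for $h = f(\Fcal)$ the constant fixed by $\Fcal$ and every integer $t$ with $1 \le t \le \tw(G)+1$, every minimum-sized representative $R \in \mathcal{R}_h^{(t)}$ of the $h$-equivalence relation satisfies $|V(R)| = \Ocal_h(t)$. Granting this, the counting argument from the overview gives $|\mathcal{R}_h^{(t)}| = 2^{\Ocal_h(t \log t)}$: a representative excludes some graph on at most $h$ vertices as a minor, hence by \cite{MyersT05} has $\Ocal_h(|V(R)|)$ edges, so there are only $2^{\Ocal_h(|V(R)| \log |V(R)|)} = 2^{\Ocal_h(t \log t)}$ unlabeled candidates, times $t! = 2^{\Ocal(t\log t)}$ boundary labelings. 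The generic dynamic programming algorithm of \cite{BasteST20-monster1}, which processes a rooted tree decomposition bottom-up and stores at every bag $B$ and every $R \in \mathcal{R}_h^{(t)}$ the minimum size of a set $S \subseteq V(G_B)$ with $G_B \setminus S$ being $h$-equivalent to $R$, then runs in time $\Ostar(|\mathcal{R}_h^{(t)}|^2) = 2^{\Ocal_h(\tw \log \tw)} \cdot n^{\Ocal(1)}$. To replace the polynomial factor by a single $n$, I would first compute a tree decomposition of width $\Ocal(\tw)$ in time $2^{\Ocal(\tw)}\cdot n$ by a constant-factor approximation, turn it into a nice one with $\Ocal(n)$ nodes, and observe that the traversal does $|\mathcal{R}_h^{(t)}|^{\Ocal(1)}$ work per node; the total is $c_{\Fcal}^{\tw \log \tw}\cdot n$. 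I stress that the structural machinery described below is used \emph{only} to bound $|V(R)|$ in the analysis; it is never executed as a subroutine.

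It remains to prove \autoref{label_privateering}, which I would do along the chain of implications summarised in \autoref{label_unprosperous}. Fix a minimum-sized $R \in \mathcal{R}_h^{(t)}$ and suppose for contradiction that $|V(R)|$ exceeds the target bound $\Ocal_h(t)$; then we may assume $\tw(R)$ is large, since otherwise $R$ is already a protrusion of bounded width and can be shrunk by protrusion replacement. By the regular version of the Flat Wall Theorem (\autoref{label_intercanvien}, from \cite{accurate,KawarabayashiTW18,RobertsonS95b}), after deleting $\Ocal_h(1)$ apices $R$ contains a flat wall of height $\Omega_h(\tw(R))$, and by a zooming argument (\autoref{label_incominciando}) it contains an $h$-homogeneous flat subwall $W$ of height still $\Omega_h(\tw(R))$. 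The heart of the proof is \autoref{label_pretendientes}: once $W$ is large enough in terms of $h$ and its vertices have no non-apex neighbour in the boundary of $R$, $W$ contains an \emph{irrelevant} vertex, i.e.\ a non-boundary vertex of $R$ avoidable by every minor model of a graph on at most $h$ vertices that traverses the boundary, no matter which $t$-boundaried graph is glued to $R$. Deleting such a vertex yields an $h$-equivalent $t$-boundaried graph with fewer vertices, contradicting minimality; hence the boundary of $R$ affects all sufficiently large $h$-homogeneous flat walls with at most $h$ apices, that is, ${\bf p}_{h,r}(R) \le t$ (\autoref{label_dispensaries}). Since the parameter ${\bf p}_{h,r}$ is ``bidimensional'' (\autoref{label_substantiality}) and separable (\autoref{label_concerniente}), a bidimensionality argument of the type in \cite[Lemma~3.6]{F.V.Fomin:2010oq} (\autoref{label_murmuradores}, sharpened as \autoref{label_gleichsinnigen} in \autoref{label_reproductions}) produces a treewidth modulator of $R$ of size $\Ocal_h(t)$ that contains the boundary. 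Feeding it to the protrusion-decomposition construction of \cite{KimLPRRSS16line} (\autoref{label_unintentional}) and then reducing every protrusion to an equivalent one of size $\Ocal_h(1)$ via the protrusion-replacement lemma of \cite{BasteST20-monster1} (\autoref{label_intentionnel}) yields $|V(R)| = \Ocal_h(t)$, the required contradiction.

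I expect the genuine obstacle to be \autoref{label_pretendientes}, the irrelevant-vertex step: in contrast with classical results such as \cite[(10.2)]{RobertsonS95b}, the size of the flat wall that is guaranteed to contain an irrelevant vertex must depend on $h$ alone and be independent of the boundary size $t$. To obtain this I would follow the self-contained framework of \autoref{label_donnescamente}: pass from minor models to \emph{topological} minor models (legitimate because each $H$ is a minor of only $f(H)$ topological-minor-minimal graphs, \autoref{label_daskalojannes}), work with the leveling $\tilde{W}$ of $W$ and with the coloured leveling of a topological minor model --the dirty flaps being decorated with colours encoding their $h$-folios-- locate a large railed annulus inside $W$ (\autoref{label_simultaneously}), invoke the confinement/rerouting theorem of \cite{GolovachST20-SODA} (\autoref{label_interessiert}) together with the graph-drawing lemma \autoref{label_dispenseront} to ``collapse'' the model off the centre of the annulus (\autoref{label_constitutivos}), and finally ``lift'' the rerouted coloured leveling back to a genuine minor model in $R$, using $h$-homogeneity of $W$ to replicate the behaviour of the original model inside the flaps of bricks with the same variety of $h$-folios. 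Threading the parametric dependencies through all of these steps so that every threshold ends up being a function of $h$ only --and extracting the explicit constant $c_{\Fcal}$-- is the delicate part, and it is carried out in \autoref{label_reproductions}.
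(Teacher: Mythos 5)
Your proposal is correct and follows essentially the same route as the paper: reduce everything to the bound $|V(R)|=\Ocal_h(t)$ on representatives, establish it via the chain Flat Wall Theorem $\rightarrow$ homogeneous subwall $\rightarrow$ irrelevant vertex (\autoref{label_pretendientes}) $\rightarrow$ ${\bf p}_{h,r}(R)\leq t$ $\rightarrow$ bidimensionality/separability $\rightarrow$ treewidth modulator $\rightarrow$ protrusion decomposition and replacement, and then feed the resulting bound $|{\cal R}_h^{(t)}|=2^{\Ocal_h(t\log t)}$ into the dynamic programming of~\cite{BasteST20-monster1}. The only cosmetic differences are your contradiction framing of the size bound and the tree-decomposition preprocessing (the paper works with branch decompositions and bounds the per-node work by $|{\cal R}_h^{(t)}|^2\cdot|E(G)|$), neither of which changes the argument.
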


In \autoref{label_reproductions}
we provide an estimation of the constant $c_{\Fcal}$ in the above theorem based on the parametric dependencies of the Unique Linkage Theorem~\cite{KawarabayashiW2010asho,RobertsonSGM22}.

%
%\sed{Here we have to mention the lower bounds! \ig{this section is devoted only to the algorithm. I already wrote some stuff about the lower bounds in the intro}}
%Finally, it is important to point out that the sparsity of the representatives is necessary in order to prove that they have size $|V(R)| = \Ocal_h(t).$ Indeed, in \autoref{label_disillusioned} we provide \ig{TO BE DONE!} an example of a
%
%\ig{Speak about lower bounds for the size of the representatives, if we do not assume minor-freeness} We stress  that in~\autoref{label_soggiugnendo} the hypothesis that ${\bf G}$ is $K_{q}$-minor-free is necessary. Indeed, without this hypothesis there are collections ${\cal G}\subseteq {\cal B}_{h}^{(t)}$ and graphs in ${\sf intertwine}({\cal G})$ with \sed{UPDATE $t^{h}$}$2^{\Omega(t)}$ vertices. Moreover, this assumption is also necessary in~\autoref{label_thoughtlessness}: there are collections ${\cal G}\subseteq {\cal B}_{h}^{(t)}$ with $|{\sf intertwine}({\cal G})|=2^{t^h}.$
%\vspace{-3mm}

\section{Preliminaries}
\label{label_laterralavolpelcel}
%\vspace{-0mm}

%\ig{TO BE RE-ORGANIZED AND POLISHED}
%
%In this section we provide some definitions to be used throughout the article.
%In \autoref{sec_def_basic} we give the basic definitions, especially concerning boundaried graphs and operations on them, and in \autoref{label_meditatively} we formally define flat walls.

%\subsection{Basic definitions}
%\label{sec_def_basic}

\subsection{Basic definitions}
\paragraph{Sets and integers.}\label{label_circunspetta}
We denote by $\Bbb{N}$ the set of non-negative integers.
Given two integers $p,q,$ where $p\leq q,$  we denote by $[p,q]$ the set $\{p,\ldots,q\}.$
For an integer $p\geq 1,$ we set $[p]=[1,p]$ and $\Bbb{N}_{\geq p}=\mathbb{N}\setminus [0,p-1].$ Given a non-negative integer $p,$ we denote by ${\sf odd}(p)$ the minimum odd number that is not smaller than $p.$
For a set $S,$ we denote by $2^{S}$ the set of all subsets of $S$ and by ${S \choose 2}$ the set of all subsets of $S$ of size two.
If ${\cal S}$ is a collection of objects where the operation $\cup$ is defined, then we denote $\cupall {\cal S}=\bigcup_{X\in {\cal S}}X.$

\paragraph{Basic concepts on graphs.}\label{label_recommencera}
As a graph $G$ we denote any pair $(V,E)$ where $V$ is a finite set and $E\subseteq {V \choose 2},$ that is, all graphs of this paper are undirected, finite, and without loops or multiple edges.
We also define $V(G)=V$ and $E(G)=E.$
Unless stated otherwise, we denote by $n$ and $m$ the number of vertices and edges, respectively, of the graph under consideration.
%We use standard graph-theoretic notation and we refer the reader to \cite{Die10} for any undefined terminology.
We say that a pair $(L,R)\in 2^{V(G)}\times 2^{V(G)}$ is a {\em separation} of $G$ if $L\cup R=V(G)$ and there is no edge in $G$ between $L\setminus R$ and $R\setminus L.$ The \emph{order} of a separation $(L,R)$ is $|L \cap R|.$
Given a vertex $v\in V(G),$ we denote by $N_{G}(v)$ the set of vertices of $G$ that are adjacent to $v$ in $G.$
Also, given a set $S\subseteq V(G),$ we set $N_G(S) = \big(\bigcup_{v \in S}N_G(v)\big) \setminus S.$
% \ig{We need also this (used for instance in condition (iii) of tightness of a rendition): ``For a subgraph $H$ of $G$ and a set $S \subseteq V(G),$ we define $N_H(v) = N_G(v) \cap V(H)$ and $N_H(S) = \bigcup_{v \in S}N_H(v).$''}
A vertex $v \in V(G)$ is \emph{isolated} if $N_G(v) = \emptyset.$
For $S \subseteq V(G),$ we set $G[S]=(S,E\cap{S \choose 2} )$ and use  $G \setminus S$ to denote $G[V(G) \setminus S].$
Given an edge $e=\{u,v\}\in E(G),$ we define the {\em subdivision} of $e$ to be the operation of deleting $e,$ adding a new vertex $w,$ and making it adjacent to $u$ and $v.$
Given two graphs $H,G,$ we say that $H$ is a {\em subdivision} of $G$ if $H$ can be obtained from $G$ by subdividing edges.
The \emph{contraction} of an edge $e = \{u,v\}$ of a simple graph $G$ results in a simple graph $G'$ obtained from $G \setminus \{u,v\}$ by adding a new vertex $uv$ adjacent to all the vertices in the set $N_G(u) \cup N_G(v)\setminus \{u,v\}.$
A graph $G'$ is a \emph{minor} of a graph $G$ if $G'$ can be obtained from a subgraph of $G$ after a series of edge contractions. The \emph{distance} between two vertices $x$  and $y$ of a graph $G$ is the number of edges of a shortest path between $x$ and $y$ in $G.$
%Given two subgraphs $H_{1}$ and $H_{2}$
%of a graph $H$ we define the {\em distance} (in $H$) between $H_{1}$ and $H_{2}$ as the minimum distance
%between a vertex in $H_{1}$ and a vertex in $H_{2}.$ \ig{check}

\paragraph{Treewidth.}
Let $G=(V,E)$ be a graph. A {{\em tree decomposition}} of $G$ is a pair $(T,
	\mathcal{ X}=\{X_{t}\}_{t\in V(T)})$ where $T$ is a tree and ${\cal X}$ is a collection of subsets of $V$
such that
\begin{itemize}
	\setlength{\itemsep}{-0.2mm}
	\item   $\bigcup_{t\in V(T)}X_t=V,$
	\item $\forall {e=\{u,v\}\in E},~\exists {t\in V(T)} : \{u,v\}\subseteq X_{t},$   and
	\item $\forall {v\in V}, \ T[\{t\mid v\in X_{t}\}]$ is connected.
	      %\jul{With this ``non-empty'', bullet $1$ is useless.}
\end{itemize}
%\begin{itemize}
%\item $\cup_{u\in V_T}=V$
%\item $\forall_{e\in E}\ \exists_{t\in V_T} : e\subseteq X_{t}$
%\item $\forall_{v\in V}\ T[\{t\mid v\in X_{t}\}]$ is connected.
%\end{itemize}
%\vspace{-3mm}

\noindent We call the vertices of $T$ {{\em nodes}} and
the sets in $\mathcal{ X}$ {{\em bags}} of the tree decomposition $(T,{\cal X}).$
The {\em width} of $(T,{\cal X})$ is equal to $\max_{}\{|X_t|-1\mid {t\in V(T)}\}$ and the
	{{\em treewidth}} of $G$ is the
minimum width over all tree decompositions of $G.$  We denote  the treewidth of a graph $G$ by $\tw(G).$

For $t\in\Bbb{N},$ we say that a set $S\subseteq V(G)$ is a {\em $t$-treewidth modulator} of $G$ if $\tw(G\setminus S)\leq t.$

%The set ${\sf exc}_{\sf m}(\Fcal)$ is defined analogously.
%% For example $\ex(\{K_{5},K_{3,3}\})$ is the class of all planar graphs.
%
%\paragraph{Parameterized complexity.} We refer the reader to~\cite{DF13,CyganFKLMPPS15} for basic background on parameterized complexity, and we recall here only some very basic definitions.
% , with special emphasis on tools for polynomial kernelization.
%A \emph{parameterized problem} is a language $L \subseteq \Sigma^* \times \mathbb{N}.$  For an instance $I=(x,k) \in \Sigma^* \times \mathbb{N},$ $k$ is called the \emph{parameter}. %Given a classical (non-parameterized) decision problem $L_{c} \subseteq \Sigma^*$ and a function $\kappa: \Sigma^* \rightarrow \mathbb{N},$ we denote by
%% $L_{c}/\kappa = \{(x,\kappa(x)\} \mid x \in L_{c}\}$ the associated parameterized problem. \ig{really?}
%A parameterized problem is \emph{fixed-parameter tractable} ({\sf FPT}) if there exists an algorithm $\Acal,$ a computable function $f,$ and a constant $c$ such that given an instance $I=(x,k),$
%$\Acal$ (called an {\sf FPT} \emph{algorithm}) correctly decides whether $I \in L$ in time bounded by $f(k) \cdot |I|^c.$

\subsection{Formal definition of the problem}

Let $\Fcal$ be a finite non-empty collection of non-empty graphs; we call such a collection {\em proper}.
%For any proper collection ${\cal F},$ we  define
%${\sf size}({\cal F})=\max\{\{|V(H)|\mid H\in \cal F\}\cup\{|{\cal F}|\}\}.$
%A proper collection is called \red{\em connected} if all its graphs are connected.
We extend the minor relation to $\Fcal$ such that, given a graph $G,$
%\sed{Footnote about connectivity.}
$\Fcal \preceq_{\sf m} G$ if and only if there exists a graph $H \in \Fcal$ such that $H \preceq_{\sf m} G.$
We also denote ${\sf exc}_{\sf m}(\Fcal)=\{G\mid \Fcal\npreceq_{\sf m} G \},$ i.e.,
${\sf exc}_{\sf m}(\Fcal)$ is the class of graphs that do {\sl not} contain any graph in $\Fcal$ as a  minor.

Let ${\cal F}$ be a proper  collection.
We define the graph parameter ${\bf m}_{\cal F}$ as the function that maps graphs to non-negative integers as follows:
\begin{eqnarray*}
	\label{label_romantically}
	{\bf m}_{\cal F}(G) & = & \min\{|S|\mid  S\subseteq V(G)\wedge G\setminus S\in{\sf exc}_{\sf m}({\cal F})\}.
\end{eqnarray*}
%The parameter ${\bf m}_{\cal F}$ is defined analogously.
The main objective of this paper is to study the problem of computing the parameter ${\bf m}_{\cal F}$  for graphs of bounded treewidth. The corresponding decision problem is  formally defined as follows.

%\hspace{-1cm}
\begin{center}
	\begin{minipage}{.5cm}
	\end{minipage}
	\begin{minipage}{7cm}
		\paraprobl{\textsc{$\Fcal$-M-Deletion}}
		{A graph $G$ and an integer $k\in \Nbb.$}
		{Is ${\bf m}_\Fcal(G)\leq k$?}
		% {$w = \tw(G).$}
		{The treewidth of $G.$}{7}
	\end{minipage}
\end{center}

\subsection{Boundaried graphs, folios, and representatives}

\paragraph{Boundaried graphs.} Let $t\in\Bbb{N}.$
A \emph{$t$-boundaried graph} is a triple $\bound{G} = (G,B,\rho)$ where $G$ is a graph, $B \subseteq V(G),$ $|B| = t,$ and
$\rho : B \to [t]$ is a
bijection.
We  say that  two $t$-boundaried graphs ${\bf G}_1=(G_1,B_1,\rho_1)$ and ${\bf G}_{2}=(G_2,B_2,\rho_2)$
are {\em isomorphic} if there is an isomorphism from $G_{1}$ to $G_{2}$
that extends the bijection $\rho_{2}^{-1}\circ \rho_{1}.$
The triple $(G,B,\rho)$ is a {\em boundaried graph} if it is a $t$-boundaried graph for some $t\in\Bbb{N}.$ As in~\cite{RobertsonS95b}, we define the {\em detail} of a boundaried graph  $\bound{G} = (G,B,\rho)$ as  ${\sf detail}(G):=\max\{|E(G)|,|V(G)\setminus B|\}.$
%
%
%the minimum $d$ such that
%$G$ has at  most $d$ edges with at  most one endpoint in $B$ and $|V(G)\setminus B|\leq d.$
We denote by ${\cal B}^{(t)}$ the set of all (pairwise non-isomorphic)  $t$-boundaried graphs  and by ${\cal B}_{h}^{(t)}$ the set of all (pairwise non-isomorphic)  $t$-boundaried graphs with detail at most $h.$
We also set ${\cal B}=\bigcup_{t\in\Bbb{N}}{\cal B}^{(t)}.$
\vspace{-1mm}

\paragraph{Minors and topological minors of boundaried graphs.} We say that a $t$-boundaried graph ${\bf G}_{1}=(G_1,B_1,\rho_1)$ is a {\em minor} of a $t$-boundaried graph ${\bf G}_{2}=(G_2,B_2,\rho_2),$ denoted by ${\bf G} _{1}\prem{\bf G}_{2},$
if there is a sequence of  removals of non-boundary vertices, edge removals, and edge contractions in $G_2,$ disallowing  contractions of edges with both endpoints  in $B_{2},$ that  transforms ${\bf G}_{2}$ to a boundaried graph that is isomorphic to ${\bf G}_{1}$ (during edge contractions, boundary vertices prevail). Note that this extends the usual definition of minors in graphs without boundary.

We say that $(M,T)$   is a {\em {\sf tm}-pair} if $M$ is  a graph, $T\subseteq V(M),$ and  all vertices in
$V(M)\setminus T$ have degree two. We denote by ${\sf diss}(M,T)$ the graph obtained
from  $M$ by \emph{dissolving} all vertices  in $V(M)\setminus T,$ that is, for every vertex $v \in V(M)\setminus T,$ with neighbors $u$ and $w,$ we delete $v$ and, if $u$ and $w$ are not adjacent, we add the edge $\{u,w\}.$
A {\em {\sf tm}-pair} of a graph $G$  is a  {\em {\sf tm}-pair}  $(M,T)$ where
$M$ is a subgraph of $G.$

Given two graphs $H$ and $G,$ we say that a {\sf tm}-pair $(M,T)$ of $G$  is a {\em topological minor model of $H$ in $G$} if $H$ is isomorphic to ${\sf diss}(M,T).$ We denote this isomorphism by $\sigma_{M,T}: V(H)\to T.$
We call the vertices in $T$ {\em branch} vertices of $(M,T).$
We call each path in $M$ between two distinct branch vertices and with no internal branch vertices  a {\em subdivision path} of $(M,T)$ and the internal vertices of such paths, i.e., the vertices of $V(M)\setminus T,$ are the {\em subdivision} vertices of $(M,T).$
We also extend  $\sigma_{M,T}$ so to also map each $e=\{x,y\}\in E(H)$ to the subdivision
path of $M$ with endpoints $\sigma_{M,T}(x)$ and $\sigma_{M,T}(y).$
Furthermore, we extend $\sigma_{M,T}$ so to also map each subgraph $H'$ of $H$
to the subgraph of $M$ consisting of the vertices of $\sigma_{M,T}(T)$ and the paths
in $\sigma_{M,T}(e), e\in E(H').$

If ${\bf M}=(M,B,\rho)\in{\cal B}$ and   $T\subseteq V(M)$ with $B\subseteq T$ and such that all vertices in
$V(M)\setminus T$ have degree two, we  call  $({\bf M},T)$ a {\em {\sf btm}-pair}
and we  define  ${\sf diss}({\bf M},T)=({\sf diss}(M, T),B,\rho).$ Note that we do not permit dissolution of boundary vertices, as we consider all of them to be branch vertices. If ${\bf G}=(G,B,\rho)$ is a boundaried graph and $(M,T)$ is a  {\sf tm}-pair of $G$
where $B\subseteq T,$  then we say that
$({\bf M},T),$ where ${\bf M}=(M,B,\rho),$ is a   {\em {\sf btm}-pair of} ${\bf G}=(G,B,\rho).$
Let ${\bf G}_{i}=(G_{i},B_{i},\rho_{i}), i\in[2].$ We say that ${\bf G}_{1}$ is a {\em topological minor}
of ${\bf G}_{2},$ denoted by ${\bf G}_{1}\pretp{\bf G}_{2},$ if
${\bf G}_{2}$ has a {\sf btm}-pair $({\bf M},T)$
such that  ${\sf diss}({\bf M},T)$ is isomorphic to ${\bf G}_{1}.$

Given a ${\bf G}=(G,B,\rho)\in{\cal B},$ we define ${\sf ext}({\bf G})$ as the set
containing every topological-minor-minimal
boundaried graph ${\bf G'}=(G',B,\rho)$ among those that contain ${\bf G}$ as a minor.
Notice that we insist that $B$ and $\rho$ are the same for all graphs in  ${\sf ext}({\bf G}).$
{Moreover, we do not consider isomorphic boundaried graphs in ${\sf ext}({\bf G})$  as different boundaried  graphs.} %\sed{Is this correct?}
The set ${\sf ext}({\bf G})$ helps us to express the minor relation in terms
of the topological minor relation because of the following simple observation. Note that this definition extends naturally to graphs, seen as boundaried graphs with empty boundary.
%The definition of ${\sf ext}$ applies also to graphs, by treating them as boundaried graphs
%of empty boundary. Also, given a graph class ${\cal F}$ we define ${\sf ext}({\cal F})=\{{\sf ext}(H)\mid H\in {\cal F}\}.$

\begin{observation}\label{label_daskalojannes}
	%\label{label_interessiert}
	If ${\bf G}_{1},{\bf G}_{2}\in{\cal B},$ then
	${\bf G}_{1}\prem {\bf G}_{2}\iff \exists {\bf G}\in {\sf ext}({\bf G}_{2}): {\bf G}_{1}\pretp {\bf G}.$
	Moreover,
	if  ${\bf G}$ is  a boundaried graph with detail $h,$ then  every graph in ${\sf ext}({\bf G})$
	has detail at most $3h.$
\end{observation}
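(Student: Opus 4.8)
The plan is to establish the equivalence by its two implications and then to bound the ${\sf detail}$ of the members of ${\sf ext}$ separately. The reverse implication is routine: if ${\bf G}\in{\sf ext}({\bf G}_1)$ satisfies ${\bf G}\pretp{\bf G}_2$, then ${\bf G}_1\prem{\bf G}$ by the definition of ${\sf ext}$, while ${\bf G}\pretp{\bf G}_2$ gives ${\bf G}\prem{\bf G}_2$ (a boundaried topological minor is obtained by deleting edges and non-boundary vertices and dissolving non-boundary degree-two vertices, each of which is a boundaried minor operation), so transitivity of $\prem$ yields ${\bf G}_1\prem{\bf G}_2$.

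For the forward implication I would start from a boundaried minor model of ${\bf G}_1=(G_1,B_1,\rho_1)$ in ${\bf G}_2=(G_2,B_2,\rho_2)$: pairwise disjoint connected branch sets $\{\phi(v)\}_{v\in V(G_1)}$ in $G_2$ together with one \emph{realizing edge} of $G_2$ for each edge of $G_1$, where $\phi(v)$ meets $B_2$ exactly in $\rho_2^{-1}(\rho_1(v))$ when $v\in B_1$ and not at all otherwise. From this I extract a ${\sf btm}$-pair $({\bf M},T)$ of ${\bf G}_2$: inside each $\phi(v)$ keep only a \emph{minimal} subtree $T_v$ spanning the \emph{ports} of $v$ (one chosen endpoint in $\phi(v)$ of each realizing edge at $v$, together with the boundary vertex of $\phi(v)$ when $v\in B_1$); let $M$ be the union of all the $T_v$'s with all the realizing edges, and let $T=B_2\cup\{\text{all ports}\}\cup\{w\in V(M):\deg_M(w)\neq 2\}$. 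Then $V(M)\setminus T$ consists of internal, non-branch vertices of the trees $T_v$, each of degree $2$ in $M$, so $({\bf M},T)$ is a ${\sf btm}$-pair of ${\bf G}_2$; set ${\bf H}:={\sf diss}({\bf M},T)$, so that ${\bf H}\pretp{\bf G}_2$. Contracting in ${\bf H}$ each (dissolved) tree $T_v$ to a single vertex --- a legal boundaried contraction, as $T_v$ meets $B_2$ in at most one vertex --- while keeping the realizing edges recovers ${\bf G}_1$, so ${\bf G}_1\prem{\bf H}$. Finally, since $\pretp$ is well-founded on finite boundaried graphs, I would take a $\pretp$-minimal ${\bf G}$ in $\{\,{\bf G}'\mid{\bf G}_1\prem{\bf G}'\pretp{\bf H}\,\}$; by transitivity ${\bf G}$ is $\pretp$-minimal among \emph{all} boundaried graphs that contain ${\bf G}_1$ as a minor, i.e.\ ${\bf G}\in{\sf ext}({\bf G}_1)$, and ${\bf G}\pretp{\bf H}\pretp{\bf G}_2$, which finishes the equivalence.

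For the ``moreover'' statement, let ${\bf G}'=(G',B,\rho)\in{\sf ext}({\bf G})$ and $h={\sf detail}({\bf G})=\max\{|E(G)|,|V(G)\setminus B|\}$; the point is that $\pretp$-minimality of ${\bf G}'$ forces any minor model of ${\bf G}$ in ${\bf G}'$ to be very thin. I would normalise such a model so that each branch set is a minimal port-spanning tree $T_v$; then every vertex and edge of $G'$ is used (else delete it, contradicting minimality of ${\bf G}'$), every non-boundary vertex of degree two in some $T_v$ carries a realizing edge (else dissolve it in $G'$), and every non-boundary leaf of a nontrivial $T_v$ carries at least two realizing edges (else dissolve that leaf, rerouting its realizing edge to the adjacent tree vertex). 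Since a tree with $\ell$ leaves has at most $\ell-2$ vertices of degree at least three, and $v$ is incident to exactly $\deg_G(v)$ realizing edges, a short count gives $|V(T_v)|\le\deg_G(v)$ whenever $|V(T_v)|\ge 2$. As $V(G')$ is the disjoint union of the $T_v$'s, each meeting $B$ in at most one vertex, the nontrivial trees contribute at most $\sum_v\deg_G(v)=2|E(G)|$ to $|V(G')\setminus B|$ and the trivial ones (those with $|V(T_v)|=1$) at most $|V(G)\setminus B|$, so $|V(G')\setminus B|\le 2|E(G)|+|V(G)\setminus B|\le 3h$. Moreover $G'$ is the edge-disjoint union of the $T_v$'s together with one realizing edge per edge of $G$, so $|E(G')|=|E(G)|+\sum_v(|V(T_v)|-1)=|E(G)|+|V(G')\setminus B|-|V(G)\setminus B|\le 3|E(G)|\le 3h$; hence ${\sf detail}({\bf G}')\le 3h$.

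The step I expect to be the main obstacle is the bookkeeping in the ``moreover'' part: carefully accounting for the leaves, the branch vertices, and the degree-two vertices of each $T_v$, for the distribution of the realizing edges among its ports, and --- throughout the normalisation --- for the boundary vertices, so that no deletion, dissolution, or contraction ever touches a vertex of $B$; this is precisely what yields the sharp constant $3$. The forward construction of $({\bf M},T)$ needs a small amount of care to see that $T$ contains every vertex of $M$ of degree other than $2$, but is otherwise standard, and the reverse implication is immediate.
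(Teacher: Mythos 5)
The paper states this as an unproved ``simple observation,'' so there is no in-paper argument to compare against; your write-up is, as far as I can check, a correct and complete proof of it. Two remarks. First, you have (rightly) proved the statement in the form ${\bf G}_1\prem{\bf G}_2\iff\exists\,{\bf G}\in{\sf ext}({\bf G}_1):{\bf G}\pretp{\bf G}_2$, which is how the observation is actually used later (e.g.\ in the proof of \autoref{label_pretendientes}, where a minor $H$ of $G$ is replaced by a topological minor model of some $H'\in{\sf ext}(H)$); the literal placement of the subscripts in the displayed statement appears to be a typo, and your reading is the intended one. Second, the bookkeeping you flag as the delicate point does go through: for a non-trivial normalised tree $T_v$ with $\ell$ leaves, $d$ degree-two vertices and at most $\ell-2$ vertices of degree at least three, the requirement of at least two realizing edges per non-boundary leaf and at least one per non-boundary degree-two vertex gives $|V(T_v)|\le 2\ell+d-2\le\deg_G(v)$ even when one vertex of $T_v$ is a realizing-edge-free boundary vertex; summing then yields $|V(G')\setminus B|\le 2|E(G)|+|V(G)\setminus B|\le 3h$ and $|E(G')|=|E(G)|+|V(G')\setminus B|-|V(G)\setminus B|\le 3|E(G)|\le 3h$. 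The only points worth spelling out in a polished version are (i) that the $\pretp$-minimal element extracted from $\{{\bf G}'\mid{\bf G}_1\prem{\bf G}'\pretp{\bf H}\}$ is minimal among \emph{all} boundaried graphs containing ${\bf G}_1$ as a minor, which follows from transitivity of $\pretp$ exactly as you indicate, and (ii) the degenerate cases $\deg_G(v)\in\{0,1\},$ where the normalisation forces $T_v$ to be trivial and the trivial trees are absorbed into the $|V(G)\setminus B|$ term of the count.
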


\vspace{-3mm}

\paragraph{Folios.}
We define the {\em $h$-folio} of ${\bf G}=(G,B,\rho)\in {\cal B}$ as
$${h}\mbox{\sf-folio}({\bf G})=\{{\bf G}'\in {\cal  B} \mid {\bf G}'\pretp {\bf G} \mbox{~and ${\bf G}'$ has detail at most $h$}\}.$$

Using the fact that an $h$-folio is a collection of $K_{h+1}$-minor-free boundaried graphs, it follows that the   $h$-folio
of a $t$-boundaried graph has at most  $2^{\Ocal((h+t)\cdot \log (h+t)}$ elements. Therefore, the number of distinct $h$-folios of $t$-boundaried graphs is given by the following lemma (also observed in~\cite{BasteST20-monster1}).

\begin{lemma}
	\label{label_systemizations}
	There exists a function $\newfun{label_distribuendo}:\Bbb{N}^{2}\to\Bbb{N}$
	such that for every $t,h\in\Bbb{N},$ $|\{{h}\mbox{\sf -folio}({\bf G})\mid {\bf G}\in {\cal B}_{h}^{(t)}\}|\leq \funref{label_distribuendo}(t,h).$ Moreover,  $\funref{label_distribuendo}(t,h)=2^{2^{\Ocal((h+t)\cdot \log (h+t))}}.$
\end{lemma}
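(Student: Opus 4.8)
The plan is to bound, separately, two quantities: first, the number of distinct $h$-folios that a $t$-boundaried graph of detail at most $h$ can have; and second, to observe that this in turn bounds the number of distinct folios overall. Since an $h$-folio is by definition a subset of $\Bcal_h$ (the set of all boundaried graphs of detail at most $h$, over all boundary sizes), a crude bound would be $2^{|\Bcal_h|}$, but $\Bcal_h$ is infinite, so I first need to restrict attention to those boundaried graphs that can actually appear as topological minors of a $t$-boundaried graph of detail at most $h$.

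First I would argue that every element ${\bf G}' = (G',B',\rho') \in {h}\mbox{\sf-folio}({\bf G})$ for ${\bf G} = (G,B,\rho) \in \Bcal_h^{(t)}$ has bounded size. Indeed, ${\bf G}' \pretp {\bf G}$, so $B' $ maps into $B$ via the {\sf btm}-structure, hence $|B'| \le t$; and ${\sf detail}({\bf G}') \le h$ means $|E(G')| \le h$ and $|V(G') \setminus B'| \le h$. Therefore $|V(G')| \le t + h$ and $|E(G')| \le h$. The number of graphs on at most $t+h$ labeled vertices with at most $h$ edges is $2^{\Ocal((h+t)\cdot\log(h+t))}$: there are at most $\binom{\binom{t+h}{2}}{h} = 2^{\Ocal(h \cdot \log(t+h))}$ choices of edge set on a fixed vertex set, times at most $(t+h)! = 2^{\Ocal((t+h)\log(t+h))}$ choices coming from the labeling/boundary bijection $\rho'$. (Alternatively, as the paper hints, one uses that the graphs in an $h$-folio are $K_{h+1}$-minor-free, hence sparse, but the explicit count above already suffices.) Call this bound $N(t,h) = 2^{\Ocal((h+t)\cdot\log(h+t))}$; it is the number of candidate boundaried graphs that could lie in any $h$-folio of a member of $\Bcal_h^{(t)}$.

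Next, since every $h$-folio of a graph in $\Bcal_h^{(t)}$ is a subset of a fixed set of size at most $N(t,h)$, the total number of distinct such $h$-folios is at most $2^{N(t,h)} = 2^{2^{\Ocal((h+t)\cdot\log(h+t))}}$. This defines the function $\funref{label_distribuendo}(t,h) := 2^{N(t,h)}$, which is computable (all the counting above is effective), giving a function $\Nbb^2 \to \Nbb$ as required, and establishing both the qualitative bound $|\{{h}\mbox{\sf-folio}({\bf G}) \mid {\bf G} \in \Bcal_h^{(t)}\}| \le \funref{label_distribuendo}(t,h)$ and the claimed asymptotics $\funref{label_distribuendo}(t,h) = 2^{2^{\Ocal((h+t)\cdot\log(h+t))}}$.

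The only genuinely delicate point — and the step I would be most careful about — is the claim that $|B'| \le t$ for every ${\bf G}' \in {h}\mbox{\sf-folio}({\bf G})$ with ${\bf G} \in \Bcal_h^{(t)}$: one must check that the definitions of ${\sf btm}$-pair and topological minor of boundaried graphs really do force the boundary of ${\bf G}'$ to inject into the boundary of ${\bf G}$ (the boundary is never dissolved, and ${\sf diss}$ preserves $B$ and $\rho$), so that indeed every element of the folio is a $t'$-boundaried graph with $t' \le t$. Once that is pinned down, the rest is elementary counting; this is presumably why the paper states the lemma in a single line and attributes the observation to~\cite{BasteST20-monster1}.
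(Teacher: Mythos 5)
Your proposal is correct and follows essentially the same route as the paper's (one-line) justification: bound the number of possible elements of an $h$-folio by $2^{\Ocal((h+t)\cdot\log(h+t))}$ (you do this by direct counting from the detail bound, the paper via $K_{h+1}$-minor-freeness and sparsity — both give the same estimate), then take the power set. Your "delicate point" also checks out: in a {\sf btm}-pair of ${\bf G}=(G,B,\rho)$ the boundary $B$ and labeling $\rho$ are carried over unchanged to ${\sf diss}({\bf M},T)$, so every element of the folio is in fact a $t$-boundaried graph.
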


\vspace{-4mm}

\paragraph{Equivalent boundaried graphs and representatives.}
We say that two boundaried graphs ${\bf G}_{1}=(G_1,B_1,\rho_1)$ and ${\bf G}_{2}=(G_2,B_2,\rho_2)$ are {\em compatible} if $\rho_{2}^{-1}\circ \rho_{1}$  is an isomorphism from $G_{1}[B_{1}]$ to $G_{2}[B_{2}].$
Given two compatible boundaried graphs ${\bf G}_{1}=(G_1,B_1,\rho_1)$
and  ${\bf G}_{2}=(G_2,B_2,\rho_2),$  we
define ${\bf G}_{1}\oplus{\bf G}_{2}$ as the graph obtained
if we take the disjoint union of $G_{1}$ and $G_{2}$
and, for every $i\in[|B_{1}|],$ we identify vertices $\rho_{1}^{-1}(i)$ and $\rho_{2}^{-1}(i).$

Given  $h\in \Bbb{N},$ we say that two  boundaried graphs ${\bf G}_{1}$ and ${\bf G}_{2}$ are {\em $h$-equivalent}, denoted by ${\bf G}_{1}\equiv_{h} {\bf G}_{2},$
if they are compatible and, for every graph $H$ on at most $h$ vertices and $h$ edges
%\ig{we don't need to say ``of detail at most $h$'', right?}
and every boundaried graph ${\bf F}$ that is compatible  with ${\bf G}_{1}$ (hence,  with ${\bf G}_{2}$ as well), it holds that\vspace{-1mm}
\begin{eqnarray}
	H\prem{\bf F}\oplus {\bf G}_{1}\iff H\prem{\bf F}\oplus {\bf G}_{2}.\label{label_unquenchable}
\end{eqnarray}\vspace{-6mm}

%Notice that, for every $t\in\Bbb{N},$  the relation $\equiv_{h}$ partitions  ${\cal B}_{t}$
%into a bounded number of equivalence classes.

Note that  $\equiv_{h}$ is an equivalence relation on ${\cal B}.$
A minimum-sized (in terms of number of vertices) element of an  equivalence class of  $\equiv_{h}$
is called {\em representative}
of $\equiv_{h}.$ For $t\in\Bbb{N},$ a \emph{set of $t$-representatives} for $\equiv_{h}$ is a collection
containing a minimum-sized  representative for each equivalence class of $\equiv_{h}$ restricted to ${\cal B}^{(t)}.$
% that consists of  $t$-boundaried graphs.
Given $t,h\in \Bbb{N},$ we denote by  ${\cal R}_{h}^{(t)}$  a set of $t$-representatives  for $\equiv_{h}.$

At this point, we wish to stress that the folio-equivalence defined in \autoref{label_unquenchable} is related
but is {\sl not} the same as the one defined by ``having the same $h$-folio’’.
Indeed, observe first that if ${\bf G}_{1}$ and ${\bf G}_{2}$ are compatible $t$-boundaried graphs
and ${h}\mbox{\sf-folio}({\bf G}_1)={h}\mbox{\sf-folio}({\bf G}_{2})$
then ${\bf G}_{1}\equiv_{h} {\bf G}_{2},$ therefore  the folio-equivalence is a refinement of $\equiv_{h}.$
In fact, a dynamic programming procedure for  solving \textsc{$\Fcal$-M-Deletion}  can also be based on the folio-equivalence, and this has already been done in the general algorithm in~\cite{BasteST20-monster1}, which  has a double-exponential
parametric dependence due to the bound of~\autoref{label_systemizations}.
In this paper we build our dynamic programming
on   the equivalence $\equiv_{h}$  and we essentially prove that   $\equiv_{h}$
is ``coarse enough'' so to reduce the double-exponential
parametric dependence of the dynamic programming  to a single-exponential one.
In fact, this has already been done in~\cite{BasteST20-monster1} for the case where ${\cal F}$
contains some planar graph, as this structural
restriction directly implies an upper  bound on the treewidth of the
representatives. To deal with the  general case, the only
structural restriction for the (non-trivial) representatives
is the exclusion of $H$ as a minor.  All the combinatorial machinery that we
introduce in the next two sections is  intended  to deal with the structure of this
general and (more entangled) setting.

\section{Flat walls}
\label{label_meditatively}

In this section we deal with flat walls. More precisely, in Subsections~\ref{label_traditionalists},~\ref{label_conveniences}, and~\ref{label_unenforceability} we give the definition of a {\sl flat wall} in the form of a {\sl flatness pair}.
In \autoref{label_comicotragical} we define the notion of {\sl regular} flatness pair
and we give a  version of the Flat Wall Theorem of Robertson and Seymour~\cite{RobertsonS95b} that has been recently proved in~\cite{accurate}.
This version  (\autoref{label_intercanvien})  incorporates the
regularity property and  is based on the recent results and the  terminology of Kawarabayashi et al.~\cite{KawarabayashiTW18}. In \autoref{label_rencontroient}
we define a notion of homogeneity of flat walls, also introduced in \cite{accurate}, that along with  \autoref{label_intercanvien}  will be the combinatorial framework for the proofs of \autoref{label_donnescamente}. We stress that the notion of  homogeneity
that we use is different from that defined by Robertson and Seymour in~\cite{RobertsonS95b} and can serve as an alternative for further applications
based on the technology of flat walls (see e.g.~\cite{ICALP-versions,SauST21kapiII,SauST21kapiI}).
In \autoref{label_youthfulness} we define a graph parameter related to flat walls and show that it enjoys a series of properties related to Bidimensionality
(as introduced in~\cite{DemaineFHT05sube} and further developed in~\cite{F.V.Fomin:2010oq}).

\subsection{Walls and subwalls}
\label{label_traditionalists}

We first introduce some basic concepts such as partially disk-embedded graphs, walls, subwalls, tilts, and layers (for an example of all the concepts defined in this subsection, see~\autoref{label_parasitarian}).

\paragraph{Partially disk-embedded graphs.} %\label{app_pde}
A {\em closed} (resp. {\em open}) {\em disk} is a set homeomorphic to the set $\{(x,y)\in \Bbb{R}^{2}\mid x^{2}+y^{2}\leq 1\}$ (resp. $\{(x,y)\in \Bbb{R}^{2}\mid x^{2}+y^{2}< 1\}$).
Let $\Delta$ be an open or  closed disk.
We use $\bd(\Delta)$ to denote the boundary of $\Delta$ and, %\sed{Boundary? $\overline{Δ}$}
if $\Delta$ is closed, we use $\inter(\Delta)$ to denote the open disk $\Delta\setminus \bd(\Delta).$ Also, if  $\Delta$ is an open disk, we use $\overline{Δ}=Δ\cup\bd(Δ)$ for the closure of $\Delta.$
When we embed a graph $G$ in the
plane or in a disk, we treat $G$ (both its vertex and edge sets) as a set of points. This permits us to make
set operations  between graphs and sets of points.

If $\Delta$ is a closed disk, we say that a graph $G$ is {\em $\Delta$-embedded} if $G$ is embedded in $\Delta$ without crossings such that the intersection of $\bd(\Delta)$ and $G$ (seen as a set of points of $\Delta$)  is a subset of $V(G).$
We say that a graph $G$ is {\em partially disk-embedded in some closed disk $\Delta$},
if there is some $\Delta$-embedded subgraph, say  $K,$ of $G$
%and contains a cycle $D$
such that  $G\cap \Delta=K$ and $(V(G)\cap \Delta,V(G)\setminus\inter(\Delta))$
is a separation of $G.$ From now on, we use the term {\em partially $\Delta$-embedded graph $G$}
to denote that a graph $G$ is  partially disk-embedded in some closed disk $\Delta.$
%We also call the graph $K$
%{\em compass}
%of the partially $\Delta$-embedded graph $G$ and we always assume that we accompany
%a partially $\Delta$-embedded graph $G$ together with an embedding of its compass in $\Delta$ that is the set $G\cap \Delta.$

A {\em circle} of $\Delta$ is any set  homeomorphic to
$\{(x,y)\in \Bbb{R}^{2}\mid x^{2}+y^{2}= 1\}.$
Given two distinct points $x,y\in \Delta,$ an {\em $(x,y)$-arc} of $\Delta$ is any subset of $\Delta$ that is homeomorphic to the closed interval $[0,1].$

\begin{figure}[h]
	\begin{center}
		\includegraphics[width=12cm]{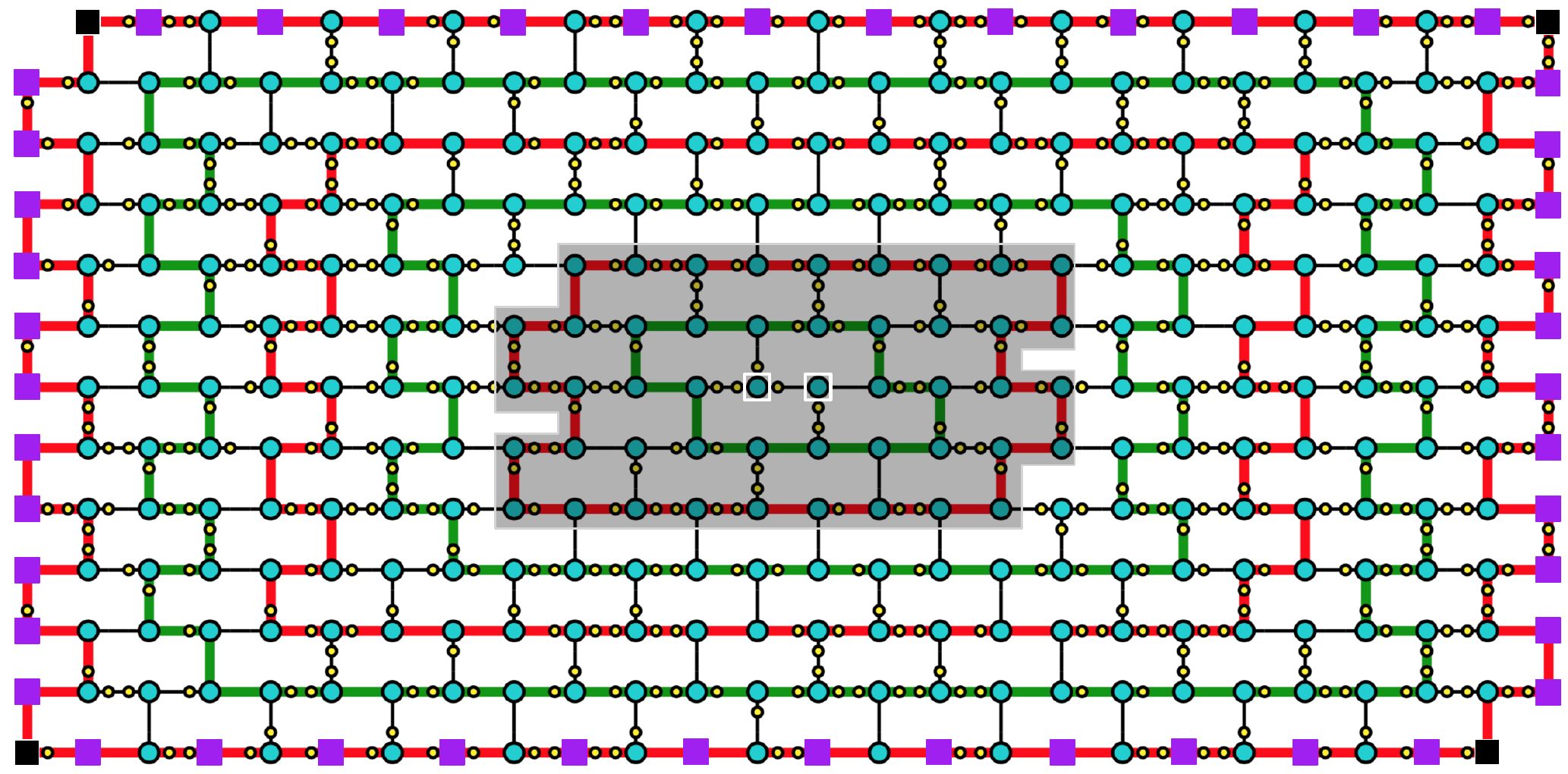}
	\end{center}
	\caption{An $13$-wall $W$ along with a choice of pegs and corners.
		The six layers of $W$ are colored alternatively in red and green and the central $5$-subwall of $W$ appears  in grey. The pegs are the squared vertices while, among them, those that are black are  the corners.
		The original vertices that are not pegs are turquoise circles while the subdivision vertices are the small yellow circles. the central vertices are the two 3-branch vertices that are surrounded by white squares. Notice that $W$ has 144 bricks and, among them, 100 are internal.
	}
	\label{label_parasitarian}
\end{figure}

\paragraph{Walls.}\label{label_totalitarians}
Let  $k,r\in\Bbb{N}.$ The
\emph{$(k\times r)$-grid} is the
graph whose vertex set is $[k]\times[r]$ and two vertices $(i,j)$ and $(i',j')$ are adjacent if $|i-i'|+|j-j'|=1.$
An  \emph{elementary $r$-wall}, for some odd integer $r\geq 3,$ is the graph obtained from a
$(2 r\times r)$-grid
% \ig{grids are defined only in \autoref{label_monosyllable}. We should either move the definition here, or to cite appropriately where the definition can be found}
with vertices $(x,y)
	\in[2r]\times[r],$
after the removal of the
``vertical'' edges $\{(x,y),(x,y+1)\}$ for odd $x+y,$ and then the removal of
all vertices of degree one.
Notice that, as $r\geq 3,$  an elementary $r$-wall is a planar graph
that has a unique (up to topological isomorphism) embedding in the plane $\Bbb{R}^{2}$
such that all its finite faces are incident to exactly six
%\ig{we write ``six'' here, but ``2'' above, we have to unify. I prefer to use letters for integers up to ten}
edges.
The {\em perimeter} of an elementary $r$-wall is the cycle bounding its infinite face, while the cycles bounding its finite faces are called {\em bricks}.
Also, the vertices
in the perimeter of an elementary $r$-wall that have degree two are called {\em pegs},
while the vertices $(1,1), (2,r), (2r-1,1),$ and $(2r,r)$ are called {\em corners} (notice that the corners are also pegs).

An {\em $r$-wall} is any graph $W$ obtained from an elementary $r$-wall $\bar{W}$
after subdividing edges. A graph $W$ is a {\em wall} if it is an $r$-wall for some odd $r\geq 3$
and we refer to $r$ as the {\em height} of $W.$ Given a graph $G,$
a {\em wall of} $G$ is a subgraph of $G$ that is a wall.
We
insist that, for every $r$-wall, the number $r$ is always odd.
%\red{: for this, whenever an $r$-wall appears with $r$ even, we agree to round it up to the next odd $r+1.$}

%\sed{Branch becomes 3-branch}
We call the vertices of degree three of a wall $W$ {\em 3-branch vertices}. The vertices that are created by subdivisions are called {\em subdivision vertices} while the rest are called {\em original} vertices of $W.$
A cycle of $W$ is a {\em brick} (resp. the {\em perimeter}) of $W$ if its 3-branch vertices are the vertices of a brick (resp. the perimeter) of $\bar{W}.$
We
%denote by ${\cal C}(W)$ the set of all cycles of $W,$ by ${\sf bricks}(W)$ the set of all the bricks of $W$ \ig{used?},
%and we
use $D(W)$ in order to denote the perimeter of the  wall $W.$
A brick of $W$ is {\em internal} if it is disjoint from $D(W).$ Note that every wall $W$ has a unique (up to homeomorphism) embedding in the plane whose infinite face is bounded by the perimeter $D(W)$ of the wall. Each time we consider a plane-embedded wall, we consider  this embedding.
%This allows to define, for each cycle $C$ of $W,$ $\Delta_C$ as the closed disk bounded by $C$ disjoint from the infinite face.

\medskip

Given two vertices $x$ and $y$ of a plane graph $G,$ we define their {\em face-distance} in $G$
as the smallest integer $i$ such that there exists an
arc of the plane  (i.e., a subset homeomorphic to the interval $[0,1]$) between $x$ and $y$ that   does not cross the infinite face of the embedding, crosses no vertices of $G,$ and {intersects at most $i$ faces of  $G.$} Note that two distinct vertices of a plane wall $W$ are within face-distance one if and only if they belong to the same brick. Given two vertex sets $X,Y$ of a plane graph $G,$ we define the face-distance between $X$ and $Y$ as the minimum face-distance between a vertex in $X$ and a vertex in $Y.$

\paragraph{Subwalls.} Given an elementary $r$-wall $\bar{W},$ some $i\in \{1,3,\ldots,2r-1\},$ and $i'=(i+1)/2,$
the {\em $i'$-th  vertical path} of $\bar{W}$  is the one whose
vertices, in order of appearance, are $(i,1),(i,2),(i+1,2),(i+1,3),
	(i,3),(i,4),(i+1,4),(i+1,5),
	(i,5),\ldots,(i,r-2),(i,r-1),(i+1,r-1),(i+1,r).$
Also, given some $j\in[2,r-1]$ the {\em $j$-th horizontal path} of $\bar{W}$
is the one whose
vertices, in order of appearance, are $(1,j),(2,j),\ldots,(2r,j).$

A \emph{vertical} (resp. \emph{horizontal}) path of $W$ is one
that is a subdivision of a  vertical (resp. horizontal) path of $\bar{W}.$
%whose 3-branch vertices are the vertices of a vertical (resp. horizontal) path of $\bar{W}.$
{Notice that the perimeter of an $r$-wall $W$ is uniquely defined regardless of the choice of the elementary $r$-wall $\bar{W}.$}
%\gstam{I think this observation is redundant. \ig{I think it is not completely redundant, I would keep it}}
A {\em subwall} of $W$ is any subgraph $W'$ of  $W$
that is an $r'$-wall, with $r' \leq r$ and such the vertical (resp. horizontal) paths of $W'$ are subpaths of the
	{vertical} (resp. {horizontal}) paths of $W.$

\paragraph{Tilts.}
The {\em interior} of a wall $W$ is the graph obtained
from $W$ if we remove from it all edges of $D(W)$ and all vertices of $D(W)$ that have degree two in $W.$ Given two walls $W$ and $\tilde{W}$ of a graph $G,$ we say that $\tilde{W}$ is a {\em tilt} of $W$ if $\tilde{W}$ and $W$ have identical interiors.

\paragraph{Layers.}
The {\em layers} of an $r$-wall $W$  are recursively defined as follows.
The first layer of $W$ is its perimeter. For $i=2,\ldots,(r-1)/2,$ the $i$-th layer of $W$ is the $(i-1)$-th layer of the subwall $W'$ obtained from $W$ after removing from $W$ its perimeter and removing recursively all occurring vertices of degree one. The {\em central vertices} of an $r$-wall are its two 3-branch vertices  that do not belong to any of its layers. See \autoref{label_parasitarian} for an illustration of the notions defined above.
Given an $r$-wall $W$ and an odd  integer $q,$ where $3\leq q\leq r,$ the {\em central $q$-subwall of $W$} is the subwall of $W$ of height $q$ whose central vertices are the central vertices of $W.$

\subsection{Paintings and renditions}
\label{label_conveniences}

Before defining flat walls, we need to introduce paintings and renditions. Here we closely follow the terminology of~\cite{KawarabayashiTW18}.

\paragraph{Paintings.}
Let $\Delta$ be a closed disk.
%, i.e., a set homeomorphic to the set $\{(x,y)\in \Bbb{R}^{2}\mid x^{2}+y^{2}\leq 1\}.$
%Given a subset $X$ of $\Delta,$ we
%denote its closure by $\overline{X}$ and its boundary by $\bd(X).$
A {\em {$\Delta$}-painting} is a pair $\Gamma=(U,N)$
%\rev{Please consider adding a picture explaining the concepts of paintings and cells. \ig{when we will include the definitions here, it will be more clear}}
where
\begin{itemize}
	\item  $N$ is a finite set of points of $\Delta,$
	\item $N \subseteq U \subseteq \Delta,$ and
	\item $U \setminus  N$ has finitely many arcwise-connected  components, called {\em cells}, where, for every cell $c,$
	      \begin{itemize}
		      \item[$\circ$] the closure $\bar{c}$ of $c$
		            is a closed disk
		            and
		      \item[$\circ$]  $|\tilde{c}|\leq 3,$ where $\tilde{c}:=\bd(c)\cap N.$
	      \end{itemize}
\end{itemize}
We use the  notation $U(\Gamma) := U,$
$N(\Gamma) := N$  and denote the set of cells of $\Gamma$
by $C(\Gamma).$
%Given a cell $c\in C(\Gamma)$  we will call the points in $\bd(c)\cap N$ {\em endpoints} of $c.$
For convenience, we may assume that each cell  of $\Gamma$ is an open disk of $\Delta.$ See \autoref{label_zahlensystem} for an example of a $\Delta$-painting.

\begin{figure}[ht]
	\begin{center}
		\includegraphics[width=11cm]{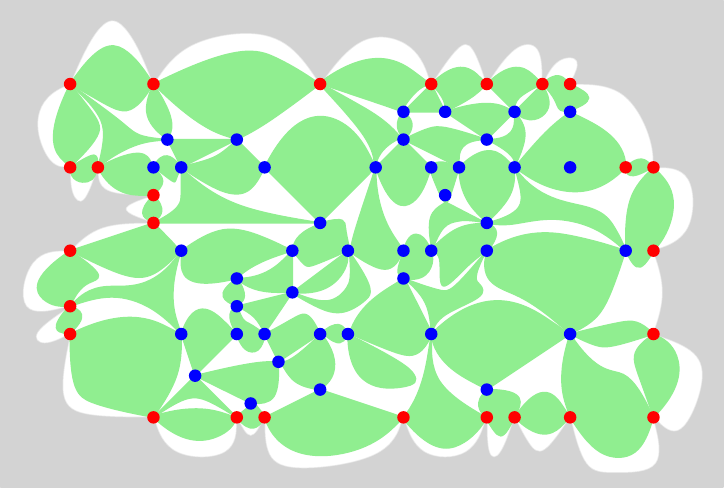}
	\end{center}
	\caption{A  $\Delta$-painting $\Gamma=(U,N).$ The red circles are the points of $N$ that are points of the boundary of $\Delta$ (whose complement is drawn in grey) and the blue circles are those that lie in the interior of $\Delta.$ The set $U \setminus  N$ is depicted in green.}
	\label{label_zahlensystem}
\end{figure}

Notice that, given a $\Delta$-painting $\Gamma,$
the pair $(N(\Gamma),\{\tilde{c}\mid c\in C(\Gamma)\})$  is a hypergraph whose hyperedges have cardinality at most three and  $\Gamma$ can be seen as a plane embedding of this hypergraph in $\Delta.$

\paragraph{Renditions.} Let $G$ be a graph, and let $\Omega$ be a cyclic permutation of a subset of $V(G)$ that we denote by $V(\Omega).$ By an {\em $\Omega$-rendition} of $G$ we mean a triple $(\Gamma, \sigma, \pi),$ where
%\ig{I have put these three properties in itemize environment}
\begin{itemize}
	\item[(a)] $\Gamma$ is a $\Delta$-painting for some closed disk $\Delta,$
	\item[(b)] $\pi: N(\Gamma)\to V(G)$ is an injection, and
	\item[(c)] $\sigma$ assigns to each cell $c \in  C(\Gamma)$ a subgraph $\sigma(c)$ of $G,$ such that
	      \begin{enumerate}
		      \item[(1)] $G=\bigcup_{c\in C(\Gamma)}\sigma(c),$
		            %\gstam{Should we use default LIPIcs enumeration here?}
		      \item[(2)]  for distinct $c, c' \in  C(\Gamma),$  $\sigma(c)$ and $\sigma(c')$  are edge-disjoint,
		      \item[(3)] for every cell $c \in  C(\Gamma),$ $\pi(\tilde{c}) \subseteq V (\sigma(c)),$
		      \item[(4)]  for every cell $c \in  C(\Gamma),$  $V(\sigma(c)) \cap \bigcup_{c' \in  C(\Gamma) \setminus  \{c\}}V(\sigma(c')) \subseteq \pi(\tilde{c}),$ and
		      \item[(5)]  $\pi(N(\Gamma)\cap \bd(\Delta))=V(\Omega),$ such that the points in $N(\Gamma)\cap \bd(\Delta)$ appear in $\bd(\Delta)$ in the same ordering as their images, via $\pi,$ in $\Omega.$
	      \end{enumerate}
\end{itemize}

Given an $\Omega$-rendition $(\Gamma, \sigma, \pi)$ of a graph $G,$ we call a cell $c$ of $\Gamma$ {\em trivial} if $\pi(\tilde{c})=V(\sigma(c)).$

\paragraph{Tight renditions.}
We say that an  {$\Omega$-rendition}  $(\Gamma, \sigma, \pi)$ of a graph $G$ is {\em tight} if the following conditions are satisfied:

\begin{enumerate}[label=(\roman*)]

	\item If there are two points $x,y$ of $N(\Gamma)$
	      such that $e=\{\pi(x),\pi(y)\}\in E(G),$ then
	      there is a cell $c\in C(\Gamma)$ such that $\sigma(c)$ is
	      the  two-vertex connected graph $(e,\{e\}),$

	\item\label{label_mantenimientos}  for every $c\in C(\Gamma),$ every two vertices in $\pi(\tilde{c})$ belong to some path of $\sigma(c),$

	\item\label{label_participates} for every $c \in  C(\Gamma)$ and every connected component $C$ of the graph
	      $\sigma(c)\setminus \pi(\tilde{c}),$  if $N_{\sigma(c)}(V(C))\neq\emptyset,$ then $N_{\sigma(c)}(V(C))=\pi(\tilde{c}),$

	\item  there are no two distinct non-trivial cells $c_{1}$ and $c_{2}$ such that  $\pi(\tilde{c_1})=\pi(\tilde{c_2}),$ and
	      %\sed{The last two have changed in relation to SODA 2020}

	\item\label{label_communication} for every $c \in  C(\Gamma)$ there are
	      $|\tilde{c}|$ vertex-disjoint paths in $G$ from $\pi(\tilde{c})$ to the set $V(\Omega).$
\end{enumerate}
%
%\ig{add appropriate labels}

As proved in~\cite{accurate}, it is possible to transform any  $\Omega$-rendition to a tight one.
For this reason, in this paper, we always assume that  $\Omega$-renditions are tight.

%
%In the rest of this paper we use only conditions (i)--(iii) of the tightness definition. However, we adopt the above, more strict, version of tightness as it will be useful in further applications.

\subsection{Flat walls and flatness pairs}
\label{label_unenforceability}

We are now in position to define the notion of a flat wall. We further
encode it into the concept of a {\sl flatness pair} of a graph.

\paragraph{Flat walls.} Let $G$ be a graph and let $W$ be an $r$-wall  of $G,$ for some odd integer $r\geq 3.$ We say that a pair $(P,C)\subseteq D(W)\times D(W)$ is a {\em choice
		of pegs and corners for $W$} if $W$ is the subdivision of an  elementary $r$-wall $\bar{W}$
where $P$ and
$C$ are the pegs and the corners of $\bar{W},$ respectively (clearly, $C\subseteq P$).
%{\gstam{I added this sentence here but maybe it is unnecessary. \ig{such sentences are really helpful}}
To get more intuition, notice that a wall $W$ can occur in several ways from the elementary wall $\bar{W},$ depending on the way the vertices in the perimeter of $\bar{W}$ are subdivided. Each of them %\sed{Height of a flatness pair}
gives a different selection $(P,C)$ of pegs and corners of $W$ (see~\autoref{label_parasitarian} for an example of a choice of pegs and conrers
$(P,C)$ in a 13-wall $W$).

We say that $W$ is a {\em flat $r$-wall}
of $G$ if there is a separation $(X,Y)$ of $G$ and a choice  $(P,C)$
of pegs and corners for $W$ such that:
\begin{itemize}
	\item $V(W)\subseteq Y,$
	\item  $P\subseteq X\cap Y\subseteq V(D(W)),$ and
	\item  if $\Omega$ is the cyclic ordering of the vertices $X\cap Y$ as they appear in $D(W),$ then there exists an $\Omega$-rendition $(\Gamma,\sigma,\pi)$ of  $G[Y].$
\end{itemize}

\begin{figure}[t]
	\begin{center}
		\includegraphics[width=11.3cm]{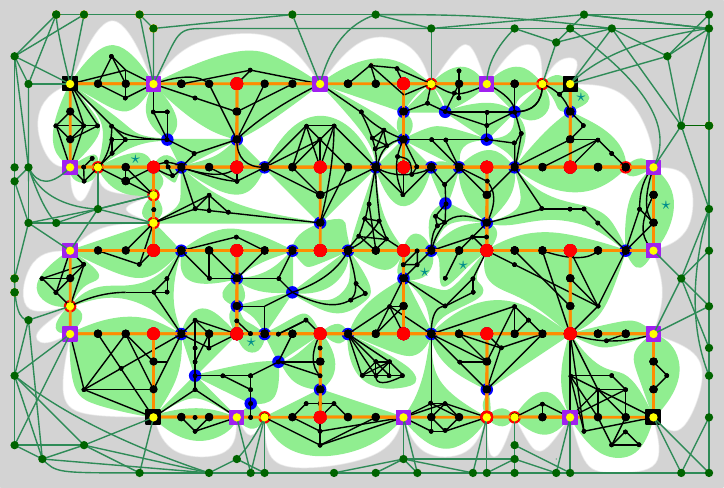}
	\end{center}
	\caption{A graph $G$ and a  flatness pair  $(W,\frak{R})$ of  $G$ where  $W$ is a 5-wall and $\frak{R}=(X,Y,P,C,\Gamma,\sigma,\pi)$ is a 7-tuple certifying the flatness of $W$ in $G.$ The edges of $W$
	are drawn in orange. In the corresponding separation $(X,Y),$
	the vertices of $X$ are green and yellow while the vertices in $Y$ are all the non-green vertices. Consequently, the yellow vertices are the vertices in $X\cap Y.$ The pegs and the corners are the squared vertices where the pegs that are not corners are purple and the the corners are black. The  $Δ$-painting of the $\Omega$-rendition $(\Gamma,\sigma,\pi)$ of $G’=G[Y]$ is the one depicted in \autoref{label_zahlensystem}.}
	\label{label_indubitadamente}
\end{figure}

\begin{figure}[ht]
	\begin{center}
		\includegraphics[width=11.3cm]{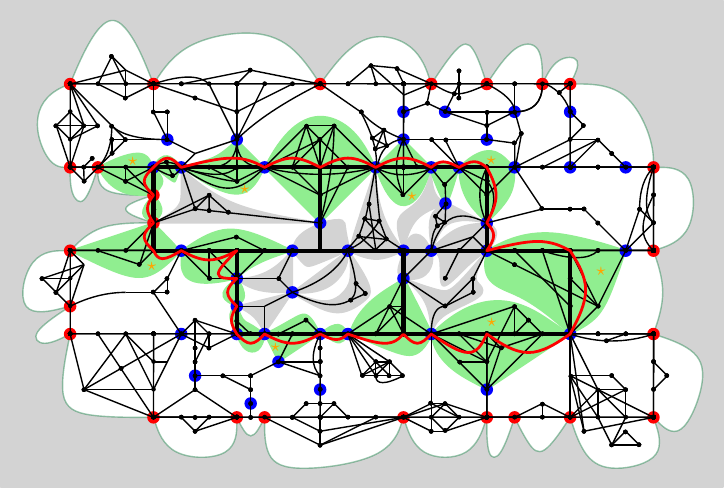}
	\end{center}
	\caption{The ${\frak{R}}$-compass of the $5$-wall $W$ for the flatness pair $(W,\frak{R})$ depicted in \autoref{label_indubitadamente}, and a subwall $W'$ of $W$ whose edges are depicted in bold. The red curve is the curve $K_{W'}.$ The $W'$-internal cells are depicted in grey while the $W'$-perimetric cells are depicted in green.  $W'$-marginal cells are marked with orange stars. The set ${\sf influence}_{\frak{R}}(W')$ contains all the flaps that are drawn inside the grey or the green cells (the $W'$-external cells are not depicted).}
	\label{label_disagguaglianza}
\end{figure}

\paragraph{Flatness pairs.}
Given the above, we  say that  the choice of the 7-tuple $\frak{R}=(X,Y,P,C,\Gamma,\sigma,\pi)$  {\em certifies 	that $W$ is a flat wall of $G$}. We call the pair $(W,\frak{R})$ a {\em flatness pair} of $G$ and define the {\em height} of the pair $(W,\frak{R})$ to be the height of $W.$
%	\red{If $(\Gamma,\sigma,\pi)$ is tight then we also say that $\frak{R}$ (and also $(W,\frak{R})$) is {\em tight}.}\gstam{This extra definition is not used anywhere.}
We use the term {\em cell of} $\frak{R}$ in order to refer to the cells of $\Gamma$ (see \autoref{label_indubitadamente} for an example of a  flatness pair  $(W,\frak{R})$ of  a graph).

We call the graph $G[Y]$ the {\em $\frak{R}$-compass} of $W$ in $G,$ denoted by ${\sf compass}_{\frak{R}}(W)$ (see \autoref{label_disagguaglianza} for the  $\frak{R}$-compass of $W,$ corresponding to the flatness pair $(W,\frak{R})$ of  \autoref{label_indubitadamente}). We define the  {\em flaps} of the wall $W$ in $\frak{R}$ as ${\sf flaps}_{\frak{R}}(W):=\{\sigma(c)\mid c\in C(\Gamma)\}.$  Given a flap $F\in {\sf flaps}_{\frak{R}}(W),$ we define its {\em base} as $\partial F:=V(F)\cap \pi(N(\Gamma)).$
A flap $F\in {\sf flaps}_{\frak{R}}(W)$ is {\em trivial}   if  $|\partial F|=2$ and $F$ consists of one edge between the two vertices in $\partial F.$
We call the edges of the trivial flaps {\em short edges of ${\sf compass}_{\frak{R}}(W)$}.
A  cell $c$ of ${\frR}$ is {\em untidy} if  $\pi(\tilde{c})$ contains a vertex
$x$ of ${W}$ such that two of the edges of ${W}$ that are incident to $x$ are edges of $\sigma(c).$ Notice that if $c$ is untidy then  $|\tilde{c}|=3.$
A cell is {\em tidy} if it is not untidy (in \autoref{label_indubitadamente} untidy cells are marked by green stars).

%
%In \red{\autoref{label_unreliability}} we depict a flat wall $W$ in a graph $G$ as well as the $\frak{R}$-compass of $W$ in $G,$  for some rendition $\frak{R}$ certifying its flatness.
%Notice that there is a unique subwall $W'$ of $W$
%that is disjoint from $D(W)$ and has height five.
%Interestingly, the subwall $W'$  is {\sl not} a flat wall of $G,$ however
%there is a tilt $\tilde{W}'$ of $W'$ that is a flat wall of $G.$
%$\tilde{W}'$ is \red{depicted in \autoref{label_unreliability}}  and the
%rendition certifying its flatness is \red{depicted in \autoref{label_frecuentemente}.}

\subsection{Influence, regularity,  and tilts of  flatness pairs}
\label{label_comicotragical}

We now introduce a classification of  the cells of a flatness pair $(W,\frak{R}).$ This classification will be used in order to define the concepts of regularity and $W'$-tilts of flatness pairs that will be important for our proofs.

\paragraph{Cell classification.}
Given a cycle $C$ of ${\sf compass}_{\frak{R}}(W),$ we say that
$C$ is {\em $\frak{R}$-normal} if it is not a subgraph of a flap $F\in {\sf flaps}_{\frak{R}}(W).$
Given an $\frak{R}$-normal cycle $C$ of ${\sf compass}_{\frak{R}}(W),$
we call a cell $c$ of $\frak{R}$ {\em $C$-perimetric} if   $\sigma(c)$ contains some edge of $C.$ Notice that if $c$ is $C$-perimetric, then $\pi(\tilde{c})$ contains two points $p,q\in N(\Gamma)$
such that  $\pi(p)$ and $\pi(q)$ are vertices of $C$ where one, say $P_{c}^{\rm in},$ of the two $(\pi(p),\pi(q))$-subpaths of $C$ is a subgraph of $\sigma(c)$ and the other, denoted by $P_{c}^{\rm out},$  $(\pi(p),\pi(q))$-subpath contains at most one internal vertex of $\sigma(c),$ which should be the (unique) vertex $z$ in $\partial\sigma(c)\setminus\{\pi(p),\pi(q)\}.$
We pick a $(p,q)$-arc $A_{c}$ in $\hat{c}:={c}\cup\tilde{c}$ such that  $\pi^{-1}(z)\in A_{c}$ if and only if $P_{c}^{\rm in}$ contains
the vertex $z$ as an internal vertex.

We consider the circle  $K_{C}=\cupall\{A_{c}\mid \mbox{$c$ is a $C$-perimetric cell of $\frak{R}$}\}$
and we denote by $\Delta_{C}$ the closed disk bounded by $K_{C}$  that is contained in  $\Delta.$
A cell $c$ of $\frak{R}$ is called {\em $C$-internal} if $c\subseteq \Delta_{C}$
and is called {\em $C$-external} if $\Delta_{C}\cap c=\emptyset.$
Notice that  the cells of $\frak{R}$ are partitioned into  $C$-internal,  $C$-perimetric, and  $C$-external cells.

Let $c$ be a tidy $C$-perimetric cell of $\frak{R}$ where $|\tilde{c}|=3.$ Notice that $c\setminus A_{c}$ has two arcwise-connected components and one of them is an open disk $D_{c}$ that is a subset of $\Delta_{C}.$
If the closure $\overline{D}_{c}$  of $D_{c}$ contains only two points of $\tilde{c}$ then we call the cell $c$ {\em $C$-marginal}.

%A cell of $\frak{R}$ is called {\em internal/marginal/external} if it is   $W$-internal/marginal/external.
\paragraph{Influence.}
For every $\frak{R}$-normal cycle $C$ of ${\sf compass}_{\frak{R}}(W)$ we define the set
$${\sf influence}_{\frak{R}}(C)=\{\sigma(c)\mid \mbox{$c$ is a cell of $\frak{R}$ that is not $C$-external}\}.$$
%
%
%graph $${\sf influence}_{\frak{R}}(W')=\cupall\{\pi(c)\mid \mbox{$c$ is a cell of $\frak{R}$ that is not $W'$-external}\}.$$

A wall $W'$  of ${\sf compass}_{\frak{R}}(W)$  is \emph{$\frak{R}$-normal} if $D(W')$ is  $\frak{R}$-normal.
Notice that every wall of $W$ (and hence every subwall of $W$) is an $\frak{R}$-normal wall of ${\sf compass}_{\frak{R}}(W).$ We denote by ${\cal S}_{\frak{R}}(W)$ the set of all $\frak{R}$-normal walls of ${\sf compass}_{\frak{R}}(W).$ Given a $W'\in {\cal S}_{\frak{R}}(W)$ and a cell $c$ of $\frak{R}$
we say that $c$ is {\em $W'$-perimetric/internal/external/marginal} if $c$ is  $D(W')$-perimetric/internal/external/marginal (see  \autoref{label_disagguaglianza} for an example). We also use $K_{W'},$ $\Delta_{W'},$ ${\sf influence}_{\frak{R}}(W')$ as shortcuts
for $K_{D(W')},$ $\Delta_{D(W')},$ ${\sf influence}_{\frak{R}}(D(W')).$

\paragraph{Regularity.}
Let $(W,\frak{R})$ be a flatness pair of a graph $G.$
We call a  flatness pair $(W,\frak{R})$ of a graph $G$ {\em regular}
if none of its cells is $W$-external, $W$-marginal, or untidy.
Notice that the flatness pair  of \autoref{label_indubitadamente}
is not regular (for an example of a regular flatness pair of a graph that is a  modification of the one in \autoref{label_indubitadamente}, see \autoref{label_grandiloquents}).
The notion of regularity has been defined in \cite{accurate}
and will be useful in \autoref{label_disintegrated}. In fact, regularity  permits the definition of
a ``well-allinged''  $\Delta$-embedded representation of the $\frak{R}$-compass
that will be valuable in the   proofs
of \autoref{label_preoccupation}.  The precise definition of the notion of well-allinged flatness pairs is  given in \autoref{label_disintegrated}.
\medskip

The next result has been proved in~\cite{accurate}. It can be seen as a version  of the Flat Wall Theorem incorporating the concept of regularity, which is necessary for our proofs.

%In \autoref{label_disagguaglianza} we depict the $W''$-marginal and untidy
%as examples, see the cells in that are marked with stars).

\begin{theorem}\label{label_intercanvien}
	%\sed{Make it non-algoritrhmic -- DONE!}
	There exist two functions $\newfun{label_congiugnersi},\newfun{label_scheinbeziehungen}:\Bbb{N}\to \Bbb{N}$  such that for every
	graph $G,$ every odd integer $r\geq 3,$ and every  $q\in\Bbb{N}_{\geq 1},$ one of the following is true:
	\begin{itemize}
		\item $K_{q}$ is a minor of $G,$
		\item $\tw(G) \leq \funref{label_congiugnersi}(q)\cdot r,$ or
		\item there exist a set $A\subseteq V(G),$  where $|A|\leq \funref{label_scheinbeziehungen}(q),$ and a regular flatness pair $(W,\frak{R})$ of $G\setminus A$ of height $r.$
		      %		      and a tree decomposition of the $\frak{R}$-compass of $W$ of width at most $\funref{label_congiugnersi}(q)\cdot r.$  Moreover, $\funref{label_congiugnersi}(q)=2^{\Ocal(q^2 \log q)}$ and $\funref{label_scheinbeziehungen}(q)= \Ocal(q^{24}).$
	\end{itemize}
	Moreover, $\funref{label_congiugnersi}(q)=2^{\Ocal(q^2 \log q)}$ and $\funref{label_scheinbeziehungen}(q)= \Ocal(q^{24}).$
\end{theorem}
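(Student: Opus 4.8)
The plan is to derive \autoref{label_intercanvien} from the original Flat Wall Theorem of Robertson and Seymour \cite{RobertsonS95b} (in the refined form of \cite{KawarabayashiTW18}) by a post-processing step that converts an arbitrary flatness pair into a regular one without essentially decreasing its height, while only slightly enlarging the apex set. Concretely, I would first invoke the version of the Flat Wall Theorem from \cite{KawarabayashiTW18}: there are functions $a_1,a_2$ such that for every graph $G$, every odd $r'\geq 3$, and every $q$, either $K_q$ is a minor of $G$, or $\tw(G)\leq a_1(q)\cdot r'$, or there is a set $A_0\subseteq V(G)$ with $|A_0|\leq a_2(q)$ and a flatness pair $(W_0,\frak{R}_0)$ of $G\setminus A_0$ of height $r'$. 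The goal is then to turn $(W_0,\frak{R}_0)$ into a \emph{regular} flatness pair, i.e.\ one with no $W$-external, $W$-marginal, or untidy cells, at the cost of replacing $W_0$ by a subwall of somewhat smaller (but still linear in $r'$) height and adding $\Ocal_q(1)$ further vertices to the apex set.

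The key steps, in order, are as follows. First, \emph{untidiness}: an untidy cell $c$ has $|\tilde c|=3$ and carries two wall-edges incident to a common vertex $x$ of $W_0$; the standard remedy (following \cite{accurate}) is to note that only $\Ocal(r')$ cells can be untidy and that each forces a single ``problematic'' 3-branch vertex of the wall, so by passing to a subwall that avoids a bounded-density set of bricks — a routine ``zooming'' argument on the wall grid — one obtains a subwall $W_1$ of height $\Omega(r')$ none of whose perimetric or internal cells is untidy, updating $\frak{R}_0$ to the induced rendition. Second, \emph{$W$-marginality}: a marginal cell has its inner disk touching only two points of $\tilde c$, which again can be charged to boundary structure of the chosen wall; one more round of subwall selection removes all marginal cells. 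Third, \emph{$W$-externality}: after fixing a subwall $W'$, the cells outside $\Delta_{W'}$ are exactly the $W'$-external ones, and by the definition of the $\frak{R}$-compass one simply restricts attention to $G[Y']$ where $Y'$ is the part of $G[Y]$ inside $\Delta_{W'}$ together with $W'$; the separation $(X,Y)$ of the original flatness certificate restricts correctly, and the vertices of $X\cap Y$ that become irrelevant are either dropped or, where necessary for the rendition to remain valid, absorbed into the apex set — this is where the $\Ocal(q^{24})$ bound on $|A|$ (as opposed to $|A_0|$) comes from. Throughout, one tracks the three requirements of a rendition and of tightness (using that any rendition can be made tight, as recalled in \autoref{label_conveniences}) to verify that the restricted triple $(\Gamma',\sigma',\pi')$ is again an $\Omega'$-rendition of the restricted graph.

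Assembling these, one sets $r' = c\cdot r$ for a suitable constant $c=c(q)$ so that after the (constantly many) subwall-shrinking steps the remaining wall still has height at least $r$; the middle alternative's bound becomes $\tw(G)\leq a_1(q)\cdot r' = \funref{label_congiugnersi}(q)\cdot r$ with $\funref{label_congiugnersi}(q) = a_1(q)\cdot c(q)$, and the apex bound $\funref{label_scheinbeziehungen}(q) = a_2(q) + \Ocal_q(1)$. The asymptotics $\funref{label_congiugnersi}(q)=2^{\Ocal(q^2\log q)}$ and $\funref{label_scheinbeziehungen}(q)=\Ocal(q^{24})$ are then inherited from the quantitative version of the Flat Wall Theorem in \cite{KawarabayashiTW18} (the $q^{24}$ being the state-of-the-art bound on the number of apices), together with the fact that each cleaning step multiplies the wall height only by a $q$-dependent constant and adds only $\Ocal_q(1)$ apices.

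The main obstacle is the third step, \emph{eliminating $W$-external cells while keeping the apex set small and the rendition valid}. Naively discarding everything outside $\Delta_{W'}$ can destroy paths that the tightness condition \ref{label_communication} needs (the $|\tilde c|$ vertex-disjoint paths from $\pi(\tilde c)$ to $V(\Omega)$), and it can attach stray components of $G\setminus A_0$ to the perimeter of the surviving wall in a way that breaks flatness; controlling this without blowing up $|A|$ beyond $\Ocal(q^{24})$ is the delicate part, and is precisely the content of the argument in \cite{accurate} that \autoref{label_intercanvien} packages. I would therefore present the proof as a careful reduction to \cite[the relevant statement]{accurate} (equivalently \cite{KawarabayashiTW18}), spelling out the zooming estimates for untidy and marginal cells in detail and citing \cite{accurate} for the externality-removal step and the precise apex accounting.
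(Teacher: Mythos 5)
The paper does not actually prove \autoref{label_intercanvien}: it is imported verbatim from~\cite{accurate}, introduced only by the sentence ``The next result has been proved in~\cite{accurate}.'' There is therefore no in-paper argument to match your proposal against, and since your write-up ultimately presents itself as ``a careful reduction to~\cite{accurate}'' for the delicate externality-removal step, it is consistent with how the paper treats the statement --- both you and the authors delegate the substance to that reference.

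Two points in the parts you do sketch should be corrected before you present this as a proof. First, the claim that ``only $\Ocal(r')$ cells can be untidy'' is unjustified: an untidy cell is one whose image contains two wall-edges incident to a common vertex of $W_0$, and there are $\Theta(r'^2)$ candidate $3$-branch vertices, so a priori there may be quadratically many untidy cells. A zooming argument can still succeed if the \emph{density} of bad bricks is controlled, but that density bound is precisely what would need to be proved, and the linear count as stated is wrong. Second, you attribute the $\Ocal(q^{24})$ apex bound to the externality-removal step (``this is where the $\Ocal(q^{24})$ bound on $|A|$ comes from''), which contradicts your own closing sentence, where you correctly say it is inherited from the quantitative Flat Wall Theorem of~\cite{KawarabayashiTW18}. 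In~\cite{accurate}, the elimination of $W$-external cells is handled by the tilt machinery that this very paper sets up (see the definition of $W'$-tilts, \autoref{label_satisfacerse}, and \autoref{label_contradizion}): a $W'$-tilt by definition has no $\tilde{W}'$-external cells and does not enlarge the apex set. If your sketch is meant to describe the proof in~\cite{accurate}, these two steps should be fixed or explicitly flagged as your own unverified reconstruction; otherwise the cleanest course --- and the one the paper itself takes --- is to cite \autoref{label_intercanvien} as a black box.
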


\paragraph{Tilts of flatness pairs.} Let $(W,\frak{R})$ and $(\tilde{W}',\tilde{\frak{R}}')$  be two flatness pairs of a graph $G$ and let $W'\in {\cal S}_{\frak{R}}(W).$ We also assume that ${\frak{R}}=(X,Y,P,C,\Gamma,\sigma,\pi)$ and $\tilde{\frak{R}}'=(X',Y',P',C',\Gamma',\sigma',\pi').$
We say that   $(\tilde{W}',\tilde{\frak{R}}')$   is a {\em $W'$-tilt}
of $(W,\frak{R})$ if \begin{itemize}
	\item $\tilde{\frak{R}}'$ does not have $\tilde{W}'$-external cells,
	\item  $\tilde{W}'$ is a tilt of $W',$
	\item  the set of $\tilde{W}'$-internal  cells of  $\tilde{\frak{R}}'$ is the same as the set of $W'$-internal cells of ${\frak{R}}$ and their images via $\sigma'$ and ${\sigma}$ are also the same,
	\item ${\sf compass}_{\tilde{\frak{R}}'}(\tilde{W}')$ is a subgraph of $\cupall{\sf influence}_{{\frak{R}}}(W'),$ and
	\item if $c$ is a cell in $C(\Gamma') \setminus C(\Gamma),$ then $|\tilde{c}| \leq 2.$
\end{itemize}

The next observation follows from the definitions of regular flatness pairs and tilts.
\begin{observation}\label{label_satisfacerse}
	If $(W,\frak{R})$ is a regular flatness pair, then for every $W'\in {\cal S}_{\frak{R}}(W),$ every $W'$-tilt of $(W,\frak{R})$ is also regular.
\end{observation}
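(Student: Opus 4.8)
The plan is to verify directly the three requirements in the definition of a regular flatness pair for $(\tilde{W}',\tilde{\frak{R}}')$: that none of its cells is $\tilde{W}'$-external, $\tilde{W}'$-marginal, or untidy. The first requirement is immediate, since ``$\tilde{\frak{R}}'$ does not have $\tilde{W}'$-external cells'' is literally the first item in the definition of a $W'$-tilt. So the real content is to exclude $\tilde{W}'$-marginal cells and untidy cells of $\tilde{\frak{R}}'$.

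The key elementary observation is that a cell $c$ can be $\tilde{W}'$-marginal or untidy only when $|\tilde{c}|=3$: for untidiness this is recorded right after its definition, and for marginality it is built into the definition, which only speaks of tidy $D(\tilde W')$-perimetric cells with $|\tilde{c}|=3$. Hence it suffices to inspect the cells $c$ of $\tilde{\frak{R}}'$ with $|\tilde{c}|=3$. By the last item in the definition of a $W'$-tilt, every cell in $C(\Gamma')\setminus C(\Gamma)$ has $|\tilde{c}|\le 2$; therefore every cell of $\tilde{\frak{R}}'$ with $|\tilde{c}|=3$ lies in $C(\Gamma)\cap C(\Gamma')$. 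Since $\tilde{\frak{R}}'$ has no $\tilde{W}'$-external cells, each such surviving triangular cell is either $\tilde{W}'$-internal or $\tilde{W}'$-perimetric, and I would treat these two subcases separately.

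For a $\tilde{W}'$-internal cell $c$: by the third item in the definition of a $W'$-tilt, $c$ is a $W'$-internal cell of $\frak{R}$ with $\sigma'(c)=\sigma(c)$ (and, one checks, $\pi'(\tilde c)=\pi(\tilde c)=\partial\sigma(c)$). Using that $\tilde{W}'$ is a tilt of $W'$, so that $\tilde{W}'$ and $W'$ have identical interiors, and that a $W'$-internal cell meets the wall only in its interior, the wall edges that enter the untidiness test for $c$ with respect to $\tilde{W}'$ are precisely the interior edges of $W'$ incident to $\partial\sigma(c)$, which are edges of ${\sf compass}_{\frak{R}}(W)$; hence $c$ being untidy would already witness non-regularity of $(W,\frak{R})$, a contradiction (and such a $c$ cannot be $\tilde{W}'$-marginal at all, marginality forcing $c$ to be perimetric). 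For a $\tilde{W}'$-perimetric cell $c$ in $C(\Gamma)\cap C(\Gamma')$: combining the fourth item (${\sf compass}_{\tilde{\frak{R}}'}(\tilde{W}')\subseteq\cupall {\sf influence}_{\frak{R}}(W')$, which forces $\sigma'(c)$, hence $c$, not to be $W'$-external in $\frak{R}$) with the third item (which forbids $c$ from being $W'$-internal in $\frak{R}$, as it is $\tilde{W}'$-perimetric) one obtains that $c$ was a $W'$-perimetric cell of $\frak{R}$; then, arguing that the tilting operation only re-routes the boundary through the old perimetric region while leaving the interaction of $c$ with $\cupall {\sf influence}_{\frak{R}}(W')$ a faithful copy of its interaction with $W$ in $\frak{R}$, untidiness or $\tilde{W}'$-marginality of $c$ would be inherited by $(W,\frak{R})$, contradicting its regularity.

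I expect the main obstacle to be exactly this perimetric subcase: the five items defining a $W'$-tilt pin $\tilde{\frak{R}}'$ down completely only on its $\tilde{W}'$-internal cells, so for the $\tilde{W}'$-perimetric ones one must argue carefully — leaning on the ``identical interiors'' property of wall tilts and on the containment ${\sf compass}_{\tilde{\frak{R}}'}(\tilde{W}')\subseteq\cupall {\sf influence}_{\frak{R}}(W')$ — that the local picture of such a cell relative to $\tilde{W}'$ reproduces its local picture relative to $W$ in $\frak{R}$, so that ``no untidy, no marginal'' transfers verbatim. Everything else is a direct unwinding of the definitions together with the bookkeeping of which cells survive, and with what $\tilde{c}$-size, the passage to the tilt.
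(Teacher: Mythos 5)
The paper offers no proof of this observation at all: it is introduced with the single sentence ``The next observation follows from the definitions of regular flatness pairs and tilts,'' so there is no detailed argument to compare yours against. Your skeleton is the natural (and presumably intended) one, and its load-bearing steps are right: the external condition is item one of the tilt definition verbatim; untidiness and marginality both force $|\tilde{c}|=3$; and the fifth tilt condition ($|\tilde{c}|\le 2$ for cells in $C(\Gamma')\setminus C(\Gamma)$) is exactly the structural fact that confines any potential offender to a surviving cell of $\Gamma$. Up to that point your unwinding is correct and already more explicit than the paper.

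Where you stop short is where the actual content sits, and you say so yourself. Two things remain genuinely unargued. First, untidiness for the tilt is measured against vertices and edges of $\tilde{W}'$, whereas regularity of $(W,\frak{R})$ only forbids untidiness measured against edges of $W$; since ${\cal S}_{\frak{R}}(W)$ contains \emph{all} $\frak{R}$-normal walls of the compass (not only subwalls of $W$), edges of $\tilde{W}'$ need not be edges of $W$, so the transfer ``untidy for $(\tilde{W}',\tilde{\frak{R}}')$ $\Rightarrow$ untidy for $(W,\frak{R})$'' is not automatic even in your internal subcase. Second, for a surviving $\tilde{W}'$-perimetric cell with $|\tilde{c}|=3$ the tilt definition pins down neither $\sigma'(c)$ nor the arc $A_c$, so your claim that its ``local picture'' relative to $\tilde{W}'$ reproduces its picture relative to $W$ is an assertion, not a deduction from the five listed conditions; verifying it requires the explicit tilt construction of the cited companion paper. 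So your proof has a real gap in the perimetric (and, strictly, the internal) transfer step — but it is the same gap the paper leaves open by declaring the statement an observation, and you have at least localized it correctly.
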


We need one more observation, which follows from the third item above and the fact that the cells corresponding to flaps containing a central vertex of $W'$ are all internal (recall that the height of a wall is always at least three).

\begin{observation}\label{label_tranquillity}
	The central vertices of $W'$ belong to every $W'$-tilt of $(W,\frak{R}).$
\end{observation}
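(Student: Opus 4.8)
\textbf{Proof plan for \autoref{label_tranquillity}.}

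The claim is that the two central vertices of a subwall $W'\in\mathcal{S}_{\frak{R}}(W)$ survive, unchanged, in every $W'$-tilt $(\tilde W',\tilde{\frak R}')$ of a flatness pair $(W,\frak{R})$. The strategy is to trace back the definitions and pin down which cells ``carry'' the central vertices, then invoke the rigidity built into the definition of a $W'$-tilt.

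First I would recall that the central vertices of $W'$ are, by definition, the two $3$-branch vertices of $W'$ that belong to no layer of $W'$; in particular they are interior vertices of $W'$, lying at face-distance more than one from $D(W')$ inside the plane-embedded wall. Next, each vertex of $W'$ lies in $\pi(\tilde c)$ for some cell $c$ (since $G[Y]=\bigcup_{c\in C(\Gamma)}\sigma(c)$ and the bases of the flaps cover the ``shared'' vertices, and a branch vertex of the wall is shared between the flaps meeting at it); I would then argue, using the placement of $K_{W'}$ and $\Delta_{W'}$ relative to $D(W')$, that any cell $c$ whose flap $\sigma(c)$ contains a central vertex of $W'$ must be $W'$-internal, i.e.\ $c\subseteq\Delta_{W'}$. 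The point is that a central vertex sits strictly ``inside'' the disk bounded by $K_{W'}$: the curve $K_{W'}$ hugs the perimeter $D(W')$, so cells meeting the perimeter region are $W'$-perimetric or $W'$-external, while a cell touching the central vertices cannot reach $K_{W'}$ without crossing the wall structure, hence it is $C$-internal for $C=D(W')$. (This is where the hypothesis $r\geq 3$, invoked in the statement, is used: for $r\geq 3$ the central vertices genuinely lie strictly inside the first layer, so such a cell exists and is internal.)

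Then I would apply the third bullet in the definition of a $W'$-tilt: the set of $\tilde W'$-internal cells of $\tilde{\frak R}'$ equals the set of $W'$-internal cells of $\frak{R}$, and moreover their images under $\sigma'$ and $\sigma$ coincide. Since every cell $c$ whose flap contains a central vertex of $W'$ is $W'$-internal in $\frak{R}$, the same cell $c$ is $\tilde W'$-internal in $\tilde{\frak R}'$ with $\sigma'(c)=\sigma(c)$; in particular the central vertex, being a vertex of $\sigma(c)=\sigma'(c)$, belongs to ${\sf compass}_{\tilde{\frak R}'}(\tilde W')$, and since $\tilde W'$ is a tilt of $W'$ it has the same interior as $W'$, so the central vertices of $\tilde W'$ are exactly the central vertices of $W'$. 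This yields the conclusion.

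The only delicate point — and the step I expect to be the main obstacle — is the topological argument that a cell carrying a central vertex of $W'$ is necessarily $W'$-internal, rather than $W'$-perimetric: one must be careful that a flap touching a central vertex does not ``stretch out'' to share a vertex with $D(W')$ and thereby become perimetric. This is controlled by property~\ref{label_communication} of tight renditions and by the separation properties of flat walls (the flaps have small bases, $|\partial F|\le 3$, and the wall $W'$ is $\frak{R}$-normal), but spelling it out cleanly is the part that needs the most care; everything else is bookkeeping with the tilt definition.
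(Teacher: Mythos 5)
Your proposal is correct and follows essentially the same route as the paper: the paper justifies the observation in one sentence by noting that the cells corresponding to flaps containing a central vertex of $W'$ are all $W'$-internal (using that the height of a wall is at least three) and then invoking the third item of the definition of a $W'$-tilt, which is exactly the structure of your argument. The only difference is that you spell out the topological step the paper treats as immediate, which is fine but not a different proof.
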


The need to define $W'$-tilts of flatness pairs emerges from the fact that
not every subwall $W'$ of a flat wall $W$ is necessarily flat, recently observed in \cite{accurate}.
The next proposition, proved in \cite{accurate}, suggests that
there is always a {\sl slight modification} of $W'$ in the $\frak{R}$-compass
of $W$ that is indeed a flat wall. This ``tilt'' preserves the internal cells, and therefore the ``essential'' part of the influence of $W'.$ That way, it permits us to define
a notion of compass relative to a subwall of a flat wall.

\begin{proposition}
	\label{label_contradizion}
	For every   flatness pair $({W},{\frak{R}})$ of a graph $G$ and every $W'\in {\cal S}_{\frak{R}}(W),$ there exists a  flatness pair  $(\tilde{W}',\tilde{\frak{R}}')$ of $G$  that is a $W'$-tilt of $({W},{\frak{R}}).$
\end{proposition}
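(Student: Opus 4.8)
The plan is to construct the required $W'$-tilt $(\tilde{W}',\tilde{\frak{R}}')$ by hand, by ``restricting'' the ambient flatness pair $(W,\frak{R})$, where $\frak{R}=(X,Y,P,C,\Gamma,\sigma,\pi)$, to the closed disk $\Delta_{W'}$ bounded by the circle $K_{W'}$. Geometrically, $K_{W'}$ is glued together from the arcs $A_{c}$ over the $W'$-perimetric cells $c$ of $\frak{R}$, and it splits the $\Delta$-painting $\Gamma$ of ${\sf compass}_{\frak{R}}(W)$ into three parts: the $W'$-internal cells (inside $\Delta_{W'}$), the $W'$-external cells (outside), and the $W'$-perimetric cells (straddling $K_{W'}$). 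The strategy is to keep the $W'$-internal cells and their flaps untouched — which is exactly what the third and fourth bullets in the definition of a $W'$-tilt demand — to discard the $W'$-external cells, and to trim each $W'$-perimetric cell along $A_{c}$, keeping only its $\Delta_{W'}$-side. Along the way the subwall $W'$ must be slightly rerouted inside the perimetric cells so that its perimeter hugs $K_{W'}$; this rerouted wall is $\tilde{W}'$, and this is where the word ``tilt'' becomes necessary, since $W'$ itself need not be flat.

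First I would fix $\tilde{W}'$. I would leave the interior of $W'$ intact and, inside each $W'$-perimetric cell $c$, replace the subpath $P_{c}^{\rm in}$ of $D(W')$ lying in $\sigma(c)$ by a $(\pi(p),\pi(q))$-subpath of $\sigma(c)$ whose trace runs along $A_{c}$ and passes through exactly those base vertices of $\sigma(c)$ (among $\pi(p)$, $\pi(q)$, and the possible third base vertex $z\in\partial\sigma(c)\setminus\{\pi(p),\pi(q)\}$) that lie on $K_{W'}$; such a subpath exists by the tightness condition~\ref{label_mantenimientos} and the defining property of $A_{c}$ that $\pi^{-1}(z)\in A_{c}$ iff $z$ is an internal vertex of $P_{c}^{\rm in}$. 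Since only perimeter edges are touched, $\tilde{W}'$ has the same interior as $W'$, hence is a tilt of $W'$, and one can choose pegs and corners $(P',C')$ for it. Next I would let $Y'$ be the union of the vertex sets of the $W'$-internal flaps together with the $\Delta_{W'}$-sides of the $W'$-perimetric flaps, let $S:=\pi(N(\Gamma)\cap K_{W'})$, and set $X':=(V(G)\setminus Y')\cup S$. Using rendition conditions~(1)--(4), together with the fact that a vertex shared between a $W'$-external flap and a non-$W'$-external flap must lie on $K_{W'}$, I would check that $(X',Y')$ is a separation of $G$ with $V(\tilde{W}')\subseteq Y'$ and $P'\subseteq X'\cap Y'=S\subseteq V(D(\tilde{W}'))$, and that $G[Y']$ is a subgraph of $\cupall{\sf influence}_{\frak{R}}(W')$.

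Then I would assemble the rendition. Put $\Delta':=\Delta_{W'}$ and let $\Gamma'$ keep each $W'$-internal cell of $\frak{R}$ verbatim, drop each $W'$-external cell, and include, for each $W'$-perimetric cell $c$, the arcwise-connected component $D_{c}$ of $c\setminus A_{c}$ contained in $\Delta_{W'}$, subdividing $D_{c}$ further, if necessary, into pieces each carrying at most two points of $N(\Gamma')$ on their boundary (whether this is needed is dictated by whether $c$ is $W'$-marginal and by the position of $z$ relative to $A_{c}$). Set $\pi':=\pi\restr{N(\Gamma')}$; let $\sigma'$ equal $\sigma$ on the surviving internal cells and, on each new cell coming from a perimetric $c$, equal the $\Delta_{W'}$-side of $\sigma(c)$ (the subpath $P_{c}^{\rm in}$ of $D(\tilde{W}')$ together with the pieces of $\sigma(c)$ hanging off it from inside); and let $\Omega'$ be the cyclic ordering of $S$ along $D(\tilde{W}')$. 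That $(\Gamma',\sigma',\pi')$ is an $\Omega'$-rendition of $G[Y']$ follows by inheriting conditions~(2)--(4) from $\frak{R}$ (restriction and trimming along arcs in $\hat{c}$ preserve edge-disjointness and the property that flaps interact only along $\pi(\tilde{c})$), by condition~(1) since the surviving flaps cover exactly $G[Y']$, and by condition~(5) from the choice of $\Omega'$; if it is not tight, tighten it via the transformation of~\cite{accurate}, which changes neither the internal cells nor the images $\pi'(\tilde{c})$ of non-trivial cells. With $(P',C')$ as above, the 7-tuple $\tilde{\frak{R}}'=(X',Y',P',C',\Gamma',\sigma',\pi')$ certifies flatness of $\tilde{W}'$, and the five defining properties of a $W'$-tilt then hold by construction: $\tilde{\frak{R}}'$ has no $\tilde{W}'$-external cell, $\tilde{W}'$ is a tilt of $W'$, the $\tilde{W}'$-internal cells of $\tilde{\frak{R}}'$ coincide with the $W'$-internal cells of $\frak{R}$ with identical $\sigma$-images, ${\sf compass}_{\tilde{\frak{R}}'}(\tilde{W}')=G[Y']$ is a subgraph of $\cupall{\sf influence}_{\frak{R}}(W')$, and every cell of $\Gamma'$ not already in $\Gamma$ is one of the (possibly subdivided) $D_{c}$, which carry at most two boundary points.

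The step I expect to be the main obstacle is the trimming of the $W'$-perimetric cells: showing that each surviving piece $D_{c}$ is a bona fide cell of a $\Delta'$-painting, that the rerouted perimeter inside $\sigma(c)$ really yields a wall $\tilde{W}'$ with the same interior as $W'$, and that the modified flaps $\sigma'$ together with the unchanged ones still satisfy rendition conditions~(2)--(4) and the tightness conditions — most delicately, that no two distinct non-trivial cells of $\Gamma'$ become identified under $\pi'$. This requires the full case analysis of $W'$-perimetric cells according to $|\tilde{c}|\in\{2,3\}$, whether $c$ is $W'$-marginal, and whether the third base vertex $z$ lies on $P_{c}^{\rm in}$, on $P_{c}^{\rm out}$, or off $D(W')$ altogether; once this is settled, the remaining verifications are routine bookkeeping with the rendition and tightness axioms.
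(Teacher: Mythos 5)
The paper itself does not prove this proposition: it is imported from \cite{accurate} without argument, so there is no in-paper proof to compare against. Your restriction-and-trim construction --- keeping the $W'$-internal cells and their flaps verbatim, discarding the $W'$-external cells, cutting each $W'$-perimetric cell along $A_{c}$ and keeping only its $\Delta_{W'}$-side, and rerouting $D(W')$ through the perimetric flaps so that the new perimeter follows $K_{W'}$ --- is exactly the strategy of the proof in the cited source, and you have correctly located the only genuinely delicate point, namely the case analysis of the perimetric cells needed to ensure that every cell of $\Gamma'$ not inherited from $\Gamma$ has $|\tilde{c}|\leq 2$ while the rendition and tightness axioms survive the trimming.
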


\subsection{Homogeneous walls}\label{label_rencontroient}

Homogeneous walls were a basic ingredient of the seminal  algorithm of Robertson and Seymour for the {\sc Disjoint Paths} problem in\cite{RobertsonS95b}. This algorithm introduced the {\sl Irrelevant Vertex Technique} that consisted in the identification of a vertex in an instance of  the {\sc Disjoint Paths} problem that is {\em irrelevant} in the sense that
its removal does not change the
	{\sc Yes}/{\sc No}-status of the instance.
The notion of wall homogeneity was given in
\cite{RobertsonS95b} and was based on the concept of the {\sl vision} of an ``internal'' flap of a flat wall.
It was proved in \cite{RobertsonSGM22}
that the central vertices of a sufficiently big homogenous flat wall are indeed {\sl irrelevant} with respect to the {\sc Disjoint Paths}  problem and therefore they could safely discarded.
Our results are following the same technique. However, we need an alternative notion of homogeneity that we introduce in this subsection.
\medskip

Let $G$ be a graph, let $A\subseteq V(G),$ and let $(W,\frak{R})$ be a  flatness pair of  $G\setminus A,$
% Let $G$ be a graph and let $(W,\frak{R})$ be a flatness pair of $G,$
where $\frak{R}=(X,Y,P,C,\Gamma,\sigma,\pi)$ and $(\Gamma,\sigma,\pi)$ is an $\Omega$-rendition of $G[Y].$ Recall that $\Gamma=(U,N)$ is a $\Delta$-painting of the closed disk $\Delta.$

\paragraph{Augmented flaps.}
For each flap $F\in {\sf flaps}_{\frak{R}}(W)$ we consider a labeling  $\ell_{F}: \partial F\rightarrow\{1,2,3\}$ such that
the set of labels assigned by $\ell_{F}$ to $\partial F$ is  one of $\{1\},$ $\{1,2\},$ $\{1,2,3\}.$
We also consider a bijection  $\rho_{A}: A\to [a],$ where $a = |A|.$
The labelings in ${\cal L}=\{\ell_{F} \mid F\in  {\sf flaps}_{\frak{R}}(W)\}$ and the labeling $\rho_{A}$ will be useful for defining a set of boundaried graphs that we will call augmented flaps. We first need some more definitions.

Given a flap $F\in{\sf flaps}_{\frak{R}}(W),$ we define an ordering
$\Omega(F)=(x_{1},\ldots,x_{q}),$ with $q\leq 3,$ of the vertices of $\partial{F}$
so that
\begin{itemize}
	\item $(x_{1},\ldots,x_{q})$ is a  counter-clockwise cyclic ordering of the vertices of $\partial F$ as they appear in the corresponding cell of $C(\Gamma).$ Notice that this cyclic ordering is significant  only when $|\partial F|=3,$
	      in the sense that $(x_{1},x_{2},x_{3})$ remains invariant under shifting, i.e., $(x_{1},x_{2},x_{3})$ is the same as $ (x_{2},x_{3},x_{1})$ but not  under inversion, i.e.,   $(x_{1},x_{2},x_{3})$ is not the same as $(x_{3},x_{2},x_{1}),$ and
	\item   for $i\in[q],$ $\ell_{F}(x_{i})=i.$
\end{itemize}
Notice that the second condition is necessary for completing the definition of the ordering $\Omega(F),$
and this is the reason why we set up the labelings in ${\cal L}.$\medskip\medskip

For each $F\in {\sf flaps}_{\frak{R}}(W)$ with $t_{F}=|\partial F|,$
we fix $\rho_{F}: \partial F\to [a+1,a+t_F]$ such that
$(\rho^{-1}_{F}(a+1),\ldots,\rho^{-1}_{F}(a+t_F))= \Omega(F).$
%Moreover, for each flap $F\in {\sf flaps}_{\frak{R}}(W),$
Also, we define the boundaried graph
\begin{equation}\label{label_disconvenevole}
	\textbf{F}^{A}=(G[A\cup F],A\cup \partial F,\rho_A\cup \rho_F)
\end{equation}
and we denote by $F^{A}$ the underlying graph of $\textbf{F}^{A}.$ We call $\textbf{F}^{A}$ an {\em augmented flap} of the flatness pair $(W,\frak{R})$ of $G\setminus A$
in $G.$

\paragraph{Palettes and homogeneity.}
For each cycle $C$ of $W,$ we define $(A,\ell)\mbox{\sf -palette}(C)=\{{\ell}\mbox{\sf -folio}({\bf F}^{A})\mid F\in {\sf  influence}_{\frak{R}}(C)\}.$
We say that the  flatness pair   $(W,\frak{R})$ of $G\setminus A$
is  {\em $\ell$-homogeneous  with respect to the pair $(G,A)$}  if  every  {\sl internal} brick $B$ of ${W}$   (seen as a cycle of ${W}$) has the {\sl same} $(A,\ell)$\mbox{\sf -palette}.

\paragraph{Apex-wall triples.}
Let $G$ be a graph, let $A \subseteq V(G)$ with $|A| \leq a,$ and let  $(W,\frak{R})$
be a regular flatness pair of $G \setminus A$ such that $W$ has height $r$ and is $\ell$-homogenous with respect to $(G,A)$ for some $\ell \in \Bbb{N}.$ We call such a triple $(A,W,\frak{R})$ an \emph{$(a,r,\ell)$-apex-wall triple of $G$}.
%\sed{Correct the use  of homogenous in triple}
\medskip

The next proposition, proved in~\cite{accurate}, implies  that
it is possible to find an $\ell$-homogeneous flat wall inside the compass of a sufficiently big flat wall.

\begin{proposition}
	\label{label_incominciando}
	There exists a function $\newfun{label_vulgairement}: \Bbb{N}^{2}\to\Bbb{N}$
	such that if $r\in \Bbb{N}_{\geq 3},$
	$G$ is a graph,  $A\subseteq V(G),$ and $(W,\frak{R})$  is a flatness pair  of  $G\setminus A$ of height $\funref{label_vulgairement}(r,w),$ where $w=\funref{label_distribuendo}(|A|+3,\ell),$ then $W$ contains some subwall $W'$ of height $r$ such that every $W'$-tilt of $(W,\frak{R})$ is $\ell$-homogeneous  with respect to $(G,A).$
	Moreover, $\funref{label_vulgairement}(r,w)={\cal O}(r^w).$
	%and ${\sf compass}_{\tilde{\frak{R}}'}(\tilde{W}')$ is a subgraph of ${\sf compass}_{\frak{R}}(W).$
\end{proposition}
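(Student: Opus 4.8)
The plan is to prove this by an iterated ``zooming'' argument, driven entirely by the fact that there are only boundedly many relevant $\ell$-folios. Concretely, for every flap $F$ the augmented flap ${\bf F}^{A}=(G[A\cup F],A\cup\partial F,\rho_{A}\cup\rho_{F})$ has boundary of size $|A|+|\partial F|\le|A|+3$, so its $\ell$-folio is one of at most $w=\funref{label_distribuendo}(|A|+3,\ell)$ possibilities (this is the bound underlying \autoref{label_systemizations}). Given an $\frak{R}$-normal subwall $V$ of $W$, I would write $\Phi(V)=\{\ell\mbox{\sf-folio}({\bf F}^{A})\mid F\in{\sf influence}_{\frak{R}}(V)\}$; thus $\Phi(V)=(A,\ell)\mbox{\sf -palette}(D(V))$, $|\Phi(V)|\le w$, and --- this is the key monotonicity --- if $V'$ is a subwall of $V$ then ${\sf influence}_{\frak{R}}(V')\subseteq{\sf influence}_{\frak{R}}(V)$, hence $\Phi(V')\subseteq\Phi(V)$: passing to a more central subwall can only lose folios from the global palette.

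First I would build a nested family $W=W_{0}\supseteq W_{1}\supseteq\cdots$ of subwalls of geometrically decreasing heights, processing the folios of $\Phi(W)$ one by one and maintaining a set $D$ of folios already certified to be \emph{ubiquitous} at some scale $s$, meaning that every $s$-subwall of the current $W_{i}$ realizes each folio of $D$ inside the influence of one of its bricks. At a generic step, with current subwall $V$ and an unprocessed folio $\phi$, there are two cases: either every $s$-subwall of $V$ realizes $\phi$ in one of its bricks, in which case $\phi$ is added to $D$ and one descends to a smaller central subwall (to leave room for the remaining steps); or some $s$-subwall $V'$ of $V$ realizes $\phi$ in no brick, in which case $\phi\notin\Phi(V')$, so $\Phi(V')\subsetneq\Phi(V)$, and one recurses into $V'$. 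Since $\Phi$ is monotone along the chain and takes at most $w$ values, the second alternative occurs at most $w$ times and the first at most $w$ times, so after $\Ocal(w)$ steps one reaches a subwall $W^{\ast}$ of height still at least, say, $3rs$, together with a scale $s$, such that $D=\Phi(W^{\ast})$ and every $s$-subwall of $W^{\ast}$ realizes all of $\Phi(W^{\ast})$ in the influence of one of its bricks. Choosing the scales and the intermediate heights as suitable powers of $r$ is exactly what yields total height $\Ocal(r^{w})$.

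Then I would take $W'$ to be a central $r$-subwall of $W^{\ast}$ chosen so that each of its bricks $B$ contains the branch vertices of an $s$-subwall of $W^{\ast}$ (possible since $W^{\ast}$ has height at least $3rs$). For every internal brick $B$ of $W'$ one has $\Phi(W^{\ast})\subseteq(A,\ell)\mbox{\sf -palette}(B)\subseteq\Phi(W')\subseteq\Phi(W^{\ast})$: the first inclusion is ubiquity, together with the fact that the $s$-subwall inside $B$ lies in the disk bounded by $B$, so its flaps belong to ${\sf influence}_{\frak{R}}(B)$; the remaining inclusions are monotonicity. Hence all internal bricks of $W'$ share the palette $\Phi(W^{\ast})$. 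Finally, for an arbitrary $W'$-tilt $(\tilde{W}',\tilde{\frak{R}}')$ of $(W,\frak{R})$, the $W'$-internal cells of $\frak{R}$ and their $\sigma$-images are left intact, while for an internal brick $B$ of $W'$ every cell in ${\sf influence}_{\tilde{\frak{R}}'}(B)$ is $\tilde{W}'$-internal; hence the augmented flaps inside such a brick, and therefore $(A,\ell)\mbox{\sf -palette}(B)$, are the same in $\tilde{\frak{R}}'$ as in $\frak{R}$, so $(\tilde{W}',\tilde{\frak{R}}')$ is $\ell$-homogeneous with respect to $(G,A)$.

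I expect the main obstacle to be the middle step: the scale at which a folio is declared ubiquitous has to be chosen so that it survives all later zoom-ins and is still large enough to host the final $r$-subwall with fat bricks; reconciling these requirements forces the scales to form a decreasing chain of powers of $r$ and is precisely what determines the bound $\funref{label_vulgairement}(r,w)=\Ocal(r^{w})$. A secondary point to get right is verifying that ``realizing a folio inside a brick'' is genuinely invariant under $W'$-tilts, which relies on the precise clause of the definition of a $W'$-tilt stating that the internal cells and their $\sigma$-images are preserved.
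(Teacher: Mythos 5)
First, a point of reference: the paper does not prove \autoref{label_incominciando} itself; it imports the statement from~\cite{accurate}, where it is established by exactly the kind of folio-counting ``zooming'' argument you propose. So your overall strategy (the monotonicity of $\Phi$ under passing to subwalls, the dichotomy ``either $\phi$ appears in every window at the current scale, or some window omits it and the global palette strictly shrinks'', the bound of $w$ on the number of zoom-ins, and the tilt-invariance via preservation of the $W'$-internal cells and their $\sigma$-images) is the right one.

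However, the bookkeeping in your middle step has a genuine gap, and you have correctly sensed where it is. You maintain a set $D$ of folios ``certified ubiquitous at some scale $s$'' across both kinds of steps, with the scales forming a decreasing chain. But a certification does not survive a decrease in scale: if every $s$-subwall of $V$ realizes $\phi$, an $s'$-subwall with $s'<s$ may well omit $\phi$, so after a zoom-in (which forces all later checks to happen at scales strictly below the height of the new current wall) the folios already in $D$ are no longer certified. Conversely, for the certifications to be honored at the end, each internal brick of the final $r$-subwall $W'$ must contain a full $s_j$-subwall for \emph{every} certification scale $s_j$, including the largest one $s_1$; since $W'$ lives inside the wall obtained after all zoom-ins, whose height is already below $s_1$, this is impossible whenever at least one zoom-in occurs. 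The fix is to restructure the recursion so that all certifications happen at a single scale at the final level: for the current wall $V$ of height $M$, test \emph{all} folios of $\Phi(V)$ against windows of size $M/r$; if some folio fails, discard $D$ entirely, zoom into the violating window (whose palette is strictly smaller), and restart; otherwise every folio of $\Phi(V)$ is ubiquitous at the single scale $M/r$ and the $r\times r$ coarsening works. Since each zoom-in removes at least one folio from the palette, there are at most $w$ zoom-ins, each costing a factor $r$ in height, which recovers the bound $r^{\Ocal(w)}$ (note that your count of ``$2w$ shrinking steps'' would in any case give $r^{2w}$, which is not literally $\Ocal(r^w)$, though it suffices for every application in the paper). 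The remaining ingredients of your sketch --- the inclusion $\Phi(V')\subseteq\Phi(V)$, the identity $\Phi(V')=\bigcup_B(A,\ell)\mbox{\sf -palette}(B)$ over the bricks $B$ of $V'$, and the tilt-invariance of the bricks' palettes --- are sound as stated.
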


\subsection{A parameter for affecting  flat walls}
\label{label_youthfulness}

We proceed to define a graph parameter that will be useful for our proofs. We prove that it satisfies
some properties related to Bidimensionality theory~\cite{F.V.Fomin:2010oq,DemaineFHT05sube,FominDHT16}
that will be used later in \autoref{label_sechseckigen}.\medskip

Let $G$ be a graph and let $(A,W,\frak{R})$ be an $(a,r,\ell)$-apex-wall triple of $G.$ We say that $S$ {\em affects}  $(A,W,\frak{R})$
if $N_{G}[V({\sf compass}_{\frak{R}}(W))]\cap (S \setminus A) \neq \emptyset.$ For $a,r,\ell\in\Bbb{N},$ we define
\begin{eqnarray*}
	{\bf p}_{a,r,\ell}(G) & = & \min\{k\mid \exists S\subseteq V(G): |S|\leq k \ \wedge\ \mbox{$S$ affects every $(a,r,\ell)$-apex-wall triple  of $G$}\}.
\end{eqnarray*}

Using \autoref{label_intercanvien}, \autoref{label_contradizion}, and
\autoref{label_incominciando}, we prove that the above parameter grows {\sl quadratically} with its treewidth.

\begin{lemma}%[{\sf BIDIMENSIONALITY}]
	\label{label_substantiality}
	There is a function $\newfun{label_superintendent}:\Bbb{N}^{3}\to\Bbb{N}$ such that  if $q,r,\ell \in \Bbb{N}_{>0},$ and  $G$ is a $K_{q}$-minor-free  graph,
	then  $\tw(G)\leq \funref{label_superintendent}(q,r,\ell)\cdot\max\big\{1,\sqrt{{\bf p}_{\funref{label_scheinbeziehungen}(q),r,\ell}(G)}\big\}.$
	In particular, one may choose $\funref{label_superintendent}(q,r,\ell)=\Ocal(\funref{label_congiugnersi}(q)\cdot r^w),$ where $w=\funref{label_distribuendo}(\funref{label_scheinbeziehungen}(q)+3,\ell)=2^{2^{\Ocal((q^{24}+\ell)\cdot \log (q^{24}+\ell))}}.$
	%\sed{Creo que esto es linear en $r,$ pero lo vermos despues de escribir la demo \ig{Sí!}}
	%\sed{Attention to the $\max$}
\end{lemma}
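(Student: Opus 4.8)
The plan is a Bidimensionality-style argument. If $G$ is $K_{q}$-minor-free and has large treewidth, then by the Flat Wall Theorem (\autoref{label_intercanvien}) it contains, after deleting at most $\funref{label_scheinbeziehungen}(q)$ apices, a large regular flat wall. Inside that wall I would locate \emph{quadratically many} pairwise ``far apart'' small homogeneous flat walls of height $r$, each producing an apex-wall triple of height $r$ with the \emph{same} apex set. Since, as I will argue, no non-apex vertex can affect more than a bounded number of such spread-out triples, any vertex set that affects all of them must be large, so ${\bf p}_{a,r,\ell}(G)$ is at least quadratic in the height of the big flat wall; inverting this relation gives the claimed bound on $\tw(G)$.

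Concretely, set $a=\funref{label_scheinbeziehungen}(q)$, $w=\funref{label_distribuendo}(a+3,\ell)$, and $R_{0}=\funref{label_vulgairement}(r,w)=\Ocal(r^{w})$, the height required by \autoref{label_incominciando} in order to extract an $\ell$-homogeneous subwall of height $r$. Let $p={\bf p}_{a,r,\ell}(G)$, fix an absolute constant $c_{1}$ (to be specified later), and let $R$ be the least odd integer with $R\ge c_{1}\cdot R_{0}\cdot\max\{1,\sqrt{p}\}$. I would show that $\tw(G)\le\funref{label_congiugnersi}(q)\cdot R$; granting this, the lemma follows with $\funref{label_superintendent}(q,r,\ell)=\Ocal(\funref{label_congiugnersi}(q)\cdot R_{0})=\Ocal(\funref{label_congiugnersi}(q)\cdot r^{w})$. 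So assume for contradiction that $\tw(G)>\funref{label_congiugnersi}(q)\cdot R$. As $G$ is $K_{q}$-minor-free, \autoref{label_intercanvien} applied with $q$ and wall height $R$ yields a set $A\subseteq V(G)$ with $|A|\le a$ and a regular flatness pair $(W,\frak{R})$ of $G\setminus A$ of height $R$.

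Next, by a routine packing argument in the grid-like wall $W$, one can select $M$ pairwise vertex-disjoint subwalls $W_{1},\dots,W_{M}$ of $W$ (each lying in ${\cal S}_{\frak{R}}(W)$), each of height $R_{0}$, all far from the perimeter $D(W)$ and pairwise far apart in $W$, with $M\ge c_{2}\cdot(R/R_{0})^{2}$ for an absolute constant $c_{2}>0$ (the ``far apart'' requirement only serving to keep the associated disks of the painting pairwise disjoint; making $c_{1}$ large enough ensures $R$ is big enough for this packing to exist). For each $i$, \autoref{label_contradizion} provides a $W_{i}$-tilt $(\tilde{W}_{i},\tilde{\frak{R}}_{i})$ of $(W,\frak{R})$, which is a regular flatness pair of $G\setminus A$ of height $R_{0}$ by \autoref{label_satisfacerse}; applying \autoref{label_incominciando} to it gives a subwall $W_{i}'$ of height $r$ all of whose tilts are $\ell$-homogeneous with respect to $(G,A)$, and I let $(\hat{W}_{i},\hat{\frak{R}}_{i})$ be such a tilt (\autoref{label_contradizion}), again regular (\autoref{label_satisfacerse}). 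Then $(A,\hat{W}_{i},\hat{\frak{R}}_{i})$ is an $(a,r,\ell)$-apex-wall triple of $G$, and unwinding the definition of tilts gives ${\sf compass}_{\hat{\frak{R}}_{i}}(\hat{W}_{i})\subseteq\cupall{\sf influence}_{\frak{R}}(W_{i})$.

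Finally, let $S\subseteq V(G)$ be an arbitrary set affecting every $(a,r,\ell)$-apex-wall triple of $G$; in particular it affects each $(A,\hat{W}_{i},\hat{\frak{R}}_{i})$, so we may pick $v_{i}\in(S\setminus A)\cap N_{G}[V({\sf compass}_{\hat{\frak{R}}_{i}}(\hat{W}_{i}))]$. The crux of the proof --- and the step I expect to be the most delicate --- is the claim that each vertex of $V(G)\setminus A$ equals $v_{i}$ for at most an absolute constant number $c_{3}$ of indices $i$. This should follow from the separation structure of flat walls: because the $W_{i}$ are pairwise far apart in the interior of $W$, the closed disks $\Delta_{W_{i}}$ are pairwise disjoint, so, using edge-disjointness of flaps, the fact that a vertex's neighbours inside the compass lie in a single local cluster of cells around it, and the fact that interior subwalls never reach $D(W)$, the subgraphs $\cupall{\sf influence}_{\frak{R}}(W_{i})$ are pairwise vertex-disjoint and every non-apex vertex of $G$ has all of its edges that reach some ${\sf compass}_{\hat{\frak{R}}_{i}}(\hat{W}_{i})$ inside exactly one of them. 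Granting the claim, $|S|\ge|S\setminus A|\ge M/c_{3}\ge(c_{2}/c_{3})\cdot(R/R_{0})^{2}$, hence $p\ge(c_{2}/c_{3})(R/R_{0})^{2}\ge(c_{2}c_{1}^{2}/c_{3})\cdot\max\{1,p\}$; choosing $c_{1}$ with $c_{2}c_{1}^{2}/c_{3}>1$ makes this impossible (if $p\ge1$ it contradicts $p\ge(c_{2}c_{1}^{2}/c_{3})p$, and if $p=0$ it contradicts $0\ge(c_{2}c_{1}^{2}/c_{3})>0$). This contradiction proves $\tw(G)\le\funref{label_congiugnersi}(q)\cdot R$, and substituting $R=\Ocal(R_{0}\cdot\max\{1,\sqrt{p}\})$ together with $R_{0}=\Ocal(r^{w})$ and $w=\funref{label_distribuendo}(\funref{label_scheinbeziehungen}(q)+3,\ell)=2^{2^{\Ocal((q^{24}+\ell)\cdot\log(q^{24}+\ell))}}$ yields the stated bound on $\funref{label_superintendent}$.
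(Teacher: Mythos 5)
Your proposal is correct and follows essentially the same route as the paper: apply the Flat Wall Theorem (\autoref{label_intercanvien}) to get an apex set and a large regular flat wall, pack quadratically many pairwise far-apart subwalls of height $\funref{label_vulgairement}(r,w)$, pass to regular tilts and then to $\ell$-homogeneous height-$r$ subwalls via \autoref{label_contradizion}, \autoref{label_satisfacerse}, and \autoref{label_incominciando}, and conclude that the closed neighbourhoods of the resulting compasses meet only in $A$, so any affecting set must be large. The only cosmetic difference is that the paper phrases this as a direct contrapositive with $p+1$ subwalls (so your constant $c_{3}$ is in fact $1$, guaranteed by the face-distance buffer), whereas you argue by contradiction with unspecified absolute constants.
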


%\red{First \autoref{label_intercanvien}, then  \autoref{label_contradizion} to each subwall, and then \autoref{label_incominciando} to find an homogenous one}

%There exists an algorithm that given a graph $G,$ a (regular) flatness pair $({W},{\frak{R}})$ of $G,$ and a wall $W'\in {\cal S}_{\frak{R}}(W),$ outputs  a (regular) $W'$-tilt of $({W},{\frak{R}})$ in  ${\cal O}(n+m)$ time.
\begin{proof}
	The lemma follows easily if we prove that, for every positive integer $p,$   $\tw(G)> \funref{label_congiugnersi}(q)\cdot \big((r^w+2) \cdot \lceil\sqrt{p+1}\rceil+2\big)$ implies that  ${\bf p}_{\funref{label_scheinbeziehungen}(q),r,\ell}(G) > p.$

	By  \autoref{label_intercanvien}, it follows that, for any $q,r,\ell \in \Bbb{N}_{>0},$ if $G$ is $K_q$-minor-free and
	$\tw(G) > \funref{label_congiugnersi}(q)\cdot \big((r^w+2) \cdot \lceil\sqrt{p+1}\rceil+2\big),$ then	$G$ contains some vertex set $A$ such that  $|A|\leq \funref{label_scheinbeziehungen}(q)$ and $G\setminus A$ has a regular flatness pair $(W,\frak{R})$ of  height $(r^w+2) \cdot \lceil\sqrt{p+1}\rceil+2.$ Let $\hat{W}_{1},\ldots,\hat{W}_{p+1}'\in {\cal S}_{\frak{R}}(W)$ be a collection of $p+1$ pairwise disjoint subwalls of $W,$ each of height $r^w,$ such that there are no two vertices $w_{1},w_{2}$  in $W$ of face-distance at most one such that $w_{1}$ belongs to some $\hat{W}_{i}$ and $w_{2}$ belongs either in $D(W)$ or in some other $\hat{W}_{i'},$ $i'\neq i.$

	By \autoref{label_contradizion}, for every $i\in[p+1],$ there is a $\hat{W}_{i}$-tilt of $(W,\frak{R})$ that we denote by    $(W_{i},\frak{R}_{i}).$ Since $(W,\frak{R})$ is regular,
	\autoref{label_satisfacerse} implies that $(W_{i},\frak{R}_{i})$ is also regular for every $i\in[p+1].$  Moreover,  by the definition of a $\hat{W}_{i}$-tilt, it follows that,
	for every two distinct $i,i'\in[p+1],$\ $N_{G}[V({\sf compass}_{\frak{R}_{i}}(W_i))] \cap N_{G}[V({\sf compass}_{\frak{R}_{i}}(W_{i'}))] \subseteq A.$
	Note that this is correct because the face-distance demand  leaves a ``buffer'' among the
	flat walls and the perimeter of $W$ to guarantee that the neighborhoods of their compasses do not intersect, except possibly at apex vertices. Let $w=\funref{label_distribuendo}(|A|+3,\ell).$
	From \autoref{label_incominciando}, for each regular flatness pair $(W_{i},\frak{R}_{i})$ of $G\setminus A$ there is a subwall $\hat{W}_{i}'$ of height $r$ such
	that every regular $\hat{W}_{i}'$-tilt of $(W,\frak{R})$ is $\ell$-homogeneous  with respect to $(G,A).$ We denote this $\hat{W}_{i}'$-tilt by $(W_i',\frak{R}_i')$ and we conclude that  $(A,W_i',\frak{R}_i')$ is an $(a,r,\ell)$-apex-wall triple of $G,$ for every $i\in[p+1].$
	As before,  $N_{G}[V({\sf compass}_{\frak{R}_{i}'}(W_i'))] \cap N_{G}[V({\sf compass}_{\frak{R}_{i}'}(W_{i'}'))] \subseteq A$ for every $i,i' \in [p+1], i \neq i'.$ Therefore, every set $S \subseteq V(G)$ affecting  every $(\funref{label_scheinbeziehungen}(q),r,\ell)$-apex-wall triple of $G$ needs to contain at least one vertex from each of the sets $\{N_{G}[V({\sf compass}_{\frak{R}}(W_i))] \mid i \in [p]\},$ implying that ${\bf p}_{\funref{label_scheinbeziehungen}(q),r,\ell}(G) > p.$
	%
	%
	%Summarizing, we have proved that if $\tw(G)\geq \funref{label_quiriniennes}(q)\cdot \conref{sadfsdfsdfafdsf} \cdot (q \cdot (2r \cdot \sqrt{p} + q)),$ then ${\bf p}_{q-5,r}(G) \geq p.$
	%Therefore, the lemma follows, in particular, with $\funref{label_superintendent}(q,r)=2r \cdot \funref{label_quiriniennes}(q)\cdot \conref{sadfsdfsdfafdsf} \cdot q^2 = r\cdot 2^{\Ocal(q^2\log q)}.$ Note that we have used ``$\max\big\{1,\sqrt{{\bf p}_{q-5,r}(G)}\big\}$'' in the statement of the lemma in order to cover the case where ${\bf p}_{q-5,r}(G) = 0.$
\end{proof}

We now prove that the parameter ${\bf p}_{a,r,\ell}$ is \emph{separable}, that is, that when considering a separation of a graph, the value of the parameter is ``evenly'' split along both sides of the separation, possibly with an offset bounded by the order of the separation.

\begin{lemma}%[{\sf SEPARATION}]
	\label{label_concerniente}
	Let  $a,r,\ell\in \Bbb{N},$ let $G$ be a graph,  and let $S\subseteq V(G)$ such that $S$ affects  every $(a,r,\ell)$-apex-wall triple of $G.$
	Then, for every separation $(L,R)$ of $S$ in  $G,$ the set
	$L\cap (R \cup S)$ affects  every $(a,r,\ell)$-apex-wall triple of $G[L].$
	%\sed{Dangerous!!!! \ig{why?}}
	%\sed{Is homogeneity and regularity an issue here?}

\end{lemma}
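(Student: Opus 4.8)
The plan is to argue by contradiction: suppose $L \cap (R \cup S)$ does \emph{not} affect some $(a,r,\ell)$-apex-wall triple $(A', W', \frak{R}')$ of $G[L]$, and then promote this triple to an apex-wall triple of $G$ that $S$ does not affect, contradicting the hypothesis. So fix such a triple $(A', W', \frak{R}')$ of $G[L]$. By definition of ``affects'', we have $N_{G[L]}[V({\sf compass}_{\frak{R}'}(W'))] \cap \big((L \cap (R \cup S)) \setminus A'\big) = \emptyset$.

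\medskip

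\noindent\textbf{Step 1: The compass lives inside $L \setminus R$, away from $S$.} I would first observe that all the relevant vertices of the apex-wall triple --- namely $V({\sf compass}_{\frak{R}'}(W'))$ together with its $G[L]$-neighbourhood --- avoid $R \setminus L$ and avoid $S$, apart from the apex set $A'$. Indeed, since $(L,R)$ is a separation of $G$, there is no edge of $G$ between $L \setminus R$ and $R \setminus L$; and the failure of ``affects'' gives that $N_{G[L]}[V({\sf compass}_{\frak{R}'}(W'))]$ meets $L \cap (R \cup S)$ only inside $A'$. Combining these, the closed neighbourhood (taken now in the full graph $G$, not just $G[L]$) of $V({\sf compass}_{\frak{R}'}(W'))$ is contained in $L$ and meets $R \cup S$ only in $A'$. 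The key point is that ${\sf compass}_{\frak{R}'}(W') = G[L][Y']$ for the separation $(X', Y')$ certified by $\frak{R}'$, and because no vertex of this compass has a $G$-neighbour in $R \setminus L$, we get ${\sf compass}_{\frak{R}'}(W') = G[Y']$ as well (the induced subgraph is the same whether taken in $G[L]$ or in $G$, since $Y' \subseteq L$ and no additional edges appear).

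\medskip

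\noindent\textbf{Step 2: $(A', W', \frak{R}')$ is still an apex-wall triple of $G$.} Now I would check that the very same 7-tuple $\frak{R}' = (X', Y', P', C', \Gamma', \sigma', \pi')$ certifies that $W'$ is a flat wall of $G \setminus A'$, not merely of $G[L] \setminus A'$. Set $\widehat{X} = X' \cup (V(G) \setminus L)$ and $\widehat{Y} = Y'$. Then $(\widehat{X}, \widehat{Y})$ is a separation of $G \setminus A'$ (any edge of $G$ with one endpoint in $\widehat{Y} \setminus \widehat{X} = Y' \setminus X' \subseteq L \setminus R$ has its other endpoint in $L$, hence was already an edge of $G[L]$, hence cannot leave $\widehat{Y}$), the inclusions $V(W') \subseteq \widehat{Y}$ and $P' \subseteq \widehat{X} \cap \widehat{Y} \subseteq V(D(W'))$ are inherited, $\widehat{X} \cap \widehat{Y} = X' \cap Y'$, and the same $\Omega$-rendition $(\Gamma', \sigma', \pi')$ of $G[\widehat{Y}] = G[Y'] = {\sf compass}_{\frak{R}'}(W')$ works verbatim (it is an $\Omega$-rendition of exactly the same graph, by Step 1). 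Regularity is a property of the cell structure of $\frak{R}'$ alone, hence preserved, and $\ell$-homogeneity of $(W', \frak{R}')$ with respect to $(G[L], A')$ transfers to $(G, A')$ because each augmented flap $\mathbf{F}^{A'} = (G[L][A' \cup F], \ldots)$ coincides with $(G[A' \cup F], \ldots)$ --- again because $F \cup A' \subseteq L$ and no edges of $G$ are missing from $G[L]$ on this vertex set --- so all $(A', \ell)$-palettes are unchanged. Therefore $(A', W', \frak{R}')$ is an $(a, r, \ell)$-apex-wall triple of $G$.

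\medskip

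\noindent\textbf{Step 3: $S$ fails to affect it, contradiction.} Finally, by Step 1 the closed $G$-neighbourhood of $V({\sf compass}_{\frak{R}'}(W'))$ meets $R \cup S$ only inside $A'$, so in particular $N_G[V({\sf compass}_{\frak{R}'}(W'))] \cap (S \setminus A') = \emptyset$, i.e.\ $S$ does not affect the $(a,r,\ell)$-apex-wall triple $(A', W', \frak{R}')$ of $G$ --- contradicting the hypothesis that $S$ affects every such triple. This completes the proof. The only mildly delicate point, and the one I would write out most carefully, is Step 2: verifying that recycling the separation and the rendition genuinely yields a valid (regular, homogeneous) flatness pair of the larger graph $G \setminus A'$; everything hinges on the separation property of $(L,R)$ ensuring that no new edges or neighbours intrude on the compass or on the augmented flaps when we pass from $G[L]$ to $G$.
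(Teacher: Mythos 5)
Your proof is correct and follows essentially the same route as the paper's: assume a triple of $G[L]$ unaffected by $L\cap(R\cup S)$, observe its compass and closed neighbourhood sit in $L\setminus R$, and derive a contradiction with the separation property of $(L,R)$. The only difference is that you explicitly verify (your Step 2) that the triple lifts to an $(a,r,\ell)$-apex-wall triple of $G$ by extending the separation and reusing the rendition verbatim — a step the paper's two-line proof treats as implicit — which is a worthwhile addition rather than a deviation.
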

\begin{proof}
	Suppose for contradiction that $(A,W,\frak{R})$ is an $(a,r,\ell)$-apex-wall triple of $G[L]$ that is not affected by $L\cap (R \cup S).$ In particular, it holds that $V({\sf compass}_{\frak{R}}(W)) \subseteq L \setminus R.$ Since by assumption $(A,W,\frak{R})$ is affected by $S$ but not by $L\cap (R \cup S),$ there should exist a vertex $v \in S \cap (R \setminus L)$ with a neighbor in $V({\sf compass}_{\frak{R}}(W)) \subseteq L \setminus R,$ contracting the hypothesis that $(L,R)$ is a separation of $G.$
\end{proof}

\section{Finding an irrelevant vertex}
\label{label_donnescamente}

%\ig{Dimitrios, please check if you like the following introduction to this section}
In this section we show how to find inside a sufficiently large flat wall of a boundaried graph ${\bf G}=(G,B,ρ)$
a flat subwall whose compass
is ``irrelevant'' with respect to the presence of a graph $H$ in ${\cal F}$ as a minor.
Here the term ``irrelevant'' is not only related to $G$ but  to {\sl every} graph ${\bf K}\oplus {\bf G}$ that can be obtained
by gluing ${\bf G}$ with another boundaried graph ${\bf K}.$
For this we need a stronger notion of irrelevancy, defined in \autoref{label_preoccupation}, that takes into account
only the ``essential part'' of a topological minor model $(M,T)$ of  $H$ that is ``invading'' ${\bf G}.$

We start in  \autoref{label_generalmente}, by detecting in every wall  a {\sl railed annulus}. This
structure, introduced  in~\cite{KaminskiT12} and reused later in~\cite{GolovachKMT17,GolovachST20-SODA},
turns out to be quite handy in order to guarantee a ``taming property'' of topological minor models (cf. \autoref{label_interessiert}). In \autoref{label_biologically}
we first use graph drawing tools to prove that we can assume that our model is embedded ``nicely'' inside a railed annulus, in the sense that certain vertices are sufficiently pairwise far apart (cf.~\autoref{label_dispenseront}); this will be helpful in order to reroute the model of every ``invading'' topological minor model $(M,T)$ of $H.$ With the help of \autoref{label_interessiert}, we prove (cf.~\autoref{label_constitutivos}) that,
given a partially disk-embedded graph that contains a railed annulus, the topological minor model $(M,T)$ of a graph $H$ can be rerouted so to obtain  another topological minor model that can be contracted back to $H$ and such that a ``large enough'' central region of the railed annulus is avoided.
The rerouting of $(M,T)$ will be done so
that a prescribed subset of degree-$3$ vertices of the original model will not be affected by contractions.

Once we have all the above  ingredients, we consider in \autoref{label_preoccupation}
a boundaried graph ${\bf G}=(G,B,ρ)$ and an apex wall triple $(A,W,\frak{R})$ that is not affected by $B$
and we show, in \autoref{label_pretendientes}, how  every topological minor model $(M,T)$ of a graph $H$ in $G$ can be rerouted
away from the compass of the central
subwall $W'$ of $W.$ This will permit us later to declare the whole compass of $W'$ {\sl irrelevant}
and rule out  the possibility that $W$ has size exceeding some function depending on the ``intrusion'' of $H$ in ${\bf G}.$
The proof of \autoref{label_pretendientes} is the most technical part of this paper.
For this, we define an appropriate ``flat'' representation of the $\frak{R}$-compass of $W,$
called its \emph{leveling}, and a representation of the wall $W$ in the leveling that is ``well-aligned''. This well-alignment property, defined in \autoref{label_disintegrated}, emerges from the regularity of the flatness pair $(W,\frak{R})$ and
permits the representation of $(M,T)$ by a topological minor model $(\tilde{M},\tilde{T})$  of the leveling, accompanied with a suitable encoding of the parts of  $(M,T)$ that have been suppressed by the leveling.
This will permit us to obtain, using  \autoref{label_constitutivos}, a rerouting $(\hat{M},\hat{T})$ of $(\tilde{M},\tilde{T})$ inside the leveling. Finally, using the   homogeneity property,  we will translate back $(\hat{M},\hat{T})$ to a rerouting of $(M,T)$ that will avoid the compass of the central subwall $W'.$

\subsection{A lemma for model taming}
\label{label_generalmente}

We introduce the concept of a railed annulus and   present the main combinatorial result of~\cite{GolovachST20-SODA}.

%\removed{
\paragraph{Railed annuli.}
Let $G$ be a  partially $\Delta$-embedded graph and let ${\cal C}=[C_{1},\ldots,C_{r}],$ $r\geq 2,$
be a collection of  vertex-disjoint cycles of the compass of $G.$ We say that the sequence ${\cal C}$ is a {\em $\Delta$-nested sequence of  cycles} of $G$
if every $C_{i}$ is the boundary of an open  disk $D_{i}$ of $\Delta$ such that  $\Delta\supseteq D_{1}\supseteq\cdots \supseteq D_{r}.$ From now on,
each $\Delta$-nested sequence ${\cal C}=[C_{1},\ldots,C_{r}]$ will be accompanied
with the sequence $[D_{1},\ldots,D_{r}]$ of the corresponding open disks
as well as
the sequence $[\overline{D}_{1},\ldots,\overline{D}_{r}]$ of their closures.
Given $x,y\in[r]$ with $x\leq y,$
we call the set  $\overline{D}_{x}\setminus D_{y}$ {\em $(x,y)$-annulus} of ${\cal C}$
and we denote it by $\ann({\cal C},x,y).$ Finally we say that $\ann({\cal C},1,r)$
is the {\em annulus} of ${\cal C}$ and we denote it by $\ann({\cal C}).$
%\marg{$\ann({\cal C},x,y),$ $\ann({\cal C})$}

Let $r\in\Bbb{N}_{\geq 3}$ and $q\in \Bbb{N}_{\geq 3}$ with $r$ odd.
An {\em $(r,q)$-railed annulus} of a $\Delta$-partially-embedded graph\ $G$ is a pair ${\cal A}=({\cal C},{\cal P})$ where
${\cal C}=[C_{1},\ldots,C_{r}]$  is a $\Delta$-nested collection of cycles of $G$ and ${\cal P}=[P_{1},\ldots,P_{q}]$ is a  collection of pairwise vertex-disjoint
paths in $G,$ called {\em rails},
%SED> rails are not used...
% \ig{say that these paths are called \emph{rails}, as in~\cite{KaminskiT12}?}
such that
\begin{itemize}
	\item for every $j\in[q],$\ $P_{j}\subseteq \ann({\cal C}),$ and

	\item for every $(i,j)\in[r]\times[q],$   $C_{i}\cap P_{j}$ is  a non-empty path that we denote by $P_{i,j}.$%\marg{ $P_{i,j}.$}
\end{itemize}
%
%\ig{Dimitrios, please put some figure of a railed annulus}
%

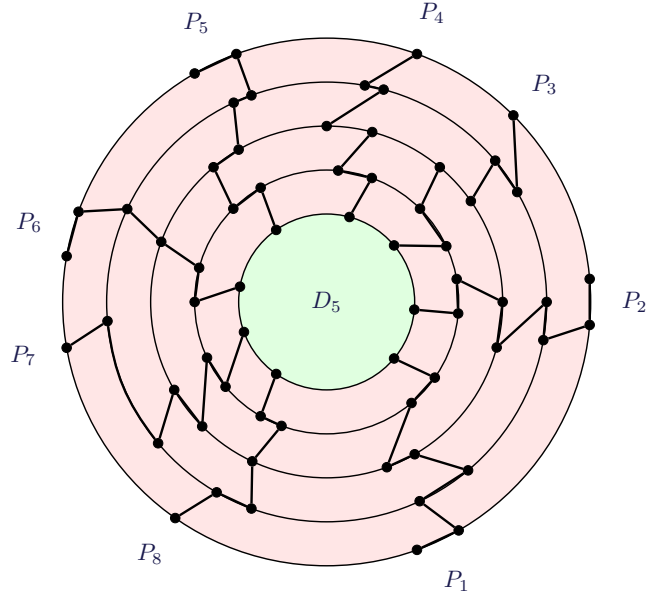
\begin{figure}
	\begin{center}
		\scalebox{.9}{
			\begin{tikzpicture}[scale=.65]
				\foreach \x in {2,...,6}{
						\draw[line width =0.6pt] (0,0) circle (\x cm);
					}
				\begin{scope}[on background layer]
					\fill[red!10!white] (0,0) circle (6 cm);
					\fill[green!12!white] (0,0) circle (2 cm);
				\end{scope}

				\node (P3) at (45:7) {$P_{3}$};
				\node[black node] (P11) at (45:6) {};
				\node[black node] (P21a) at (30:5) {};
				\node[black node] (P21b) at (40:5) {};
				\node[black node] (P31a) at (35:4) {};
				\node[black node] (P31b) at (50:4) {};
				\node[black node] (P41a) at (45:3) {};
				\node[black node] (P41b) at (25:3) {};
				\node[black node] (P51) at (40:2) {};
				\draw[line width=1pt] (P11) -- (P21a) -- (P21b) -- (P31a)  (P31b) -- (P41a) -- (P41b) -- (P51);

				\node (P4) at (70:7) {$P_{4}$};
				\node[black node] (P12) at (70:6) {};
				\node[black node] (P22a) at (80:5) {};
				\node[black node] (P22b) at (75:5) {};
				\node[black node] (P32a) at (90:4) {};
				\node[black node] (P32b) at (75:4) {};
				\node[black node] (P42a) at (85:3) {};
				\node[black node] (P42b) at (70:3) {};
				\node[black node] (P52) at (75:2) {};
				\draw[line width=1pt] (P12) -- (P22a) -- (P22b) -- (P32a) (P32b) -- (P42a) -- (P42b) -- (P52);

				\node (P5) at (115:7) {$P_{5}$};
				\node[black node] (P13a) at (120:6) {};
				\node[black node] (P13b) at (110:6) {};
				\node[black node] (P23a) at (110:5) {};
				\node[black node] (P23b) at (115:5) {};
				\node[black node] (P33a) at (120:4) {};
				\node[black node] (P33b) at (130:4) {};
				\node[black node] (P43a) at (135:3) {};
				\node[black node] (P43b) at (120:3) {};
				\node[black node] (P53) at (125:2) {};
				\draw[line width=1pt] (P13a) -- (P13b) -- (P23a) -- (P23b) -- (P33a) -- (P33b) -- (P43a) -- (P43b) -- (P53);

				\node (P6) at (165:7) {$P_{6}$};
				\node[black node] (P14a) at (170:6) {};
				\node[black node] (P14b) at (160:6) {};
				\node[black node] (P24) at (155:5) {};
				\node[black node] (P34) at (160:4) {};
				\node[black node] (P44a) at (165:3) {};
				\node[black node] (P44b) at (180:3) {};
				\node[black node] (P54) at (170:2) {};
				\draw[line width=1pt] (P14a) -- (P14b) -- (P24) -- (P34) -- (P44a) -- (P44b) -- (P54);

				\node (P7) at (190:7) {$P_{7}$};
				\node[black node] (P18a) at (190:6) {};
				\node[black node] (P28a) at (185:5) {};
				\node[black node] (P28b) at (220:5) {};
				\node[black node] (P38a) at (210:4) {};
				\node[black node] (P38b) at (225:4) {};
				\node[black node] (P48a) at (205:3) {};
				\node[black node] (P48b) at (220:3) {};
				\node[black node] (P58) at (200:2) {};
				\draw[line width=1pt] (P18a) -- (P28a) to [bend right=15]  (P28b);
				\draw[line width=1pt] (P28b) --  (P38a) -- (P38b) -- (P48a) -- (P48b) -- (P58);

				\node (P8) at (235:7) {$P_{8}$};
				\node[black node] (P15) at (235:6) {};
				\node[black node] (P25a) at (240:5) {};
				\node[black node] (P25b) at (250:5) {};
				\node[black node] (P35b) at (245:4) {};
				\node[black node] (P45a) at (250:3) {};
				\node[black node] (P45b) at (240:3) {};
				\node[black node] (P55) at (235:2) {};
				\draw[line width=1pt] (P15) -- (P25a)  -- (P25b) --  (P35b) -- (P45a) -- (P45b) -- (P55);

				\node (P1) at (295:7) {$P_{1}$};
				\node[black node] (P16a) at (290:6) {};
				\node[black node] (P16b) at (300:6) {};
				\node[black node] (P26a) at (295:5) {};
				\node[black node] (P26b) at (310:5) {};
				\node[black node] (P36a) at (300:4) {};
				\node[black node] (P36b) at (290:4) {};
				\node[black node] (P46a) at (310:3) {};
				\node[black node] (P46b) at (325:3) {};
				\node[black node] (P56) at (320:2) {};
				\draw[line width=1pt] (P16a) -- (P16b) -- (P26a) -- (P26b) -- (P36a) -- (P36b) -- (P46a) -- (P46b) -- (P56);

				\node (P2) at (0:7) {$P_{2}$};
				\node[black node] (P17a) at (5:6) {};
				\node[black node] (P17b) at (-5:6) {};
				\node[black node] (P27a) at (0:5) {};
				\node[black node] (P27b) at (-10:5) {};
				\node[black node] (P37a) at (-15:4) {};
				\node[black node] (P37b) at (0:4) {};
				\node[black node] (P47a) at (10:3) {};
				\node[black node] (P47b) at (-5:3) {};
				\node[black node] (P57) at (-5:2) {};
				\draw[line width=1pt] (P17a) -- (P17b) -- (P27b)  -- (P27a) -- (P37a) -- (P37b) -- (P47a) -- (P47b) -- (P57);

				\node (C5) at (0:0) {$D_{5}$};

			\end{tikzpicture}
		}

	\end{center}

	\caption{An example of a $(5,8)$-railed annulus and its inner disk $D_{5}.$}
	\label{label_expectatives}
\end{figure}

See \autoref{label_expectatives} for an example of a $(5,8)$-railed annulus. The following proposition states that large railed annuli can be found inside a modestly larger wall  and will be used in the next section. A similar (but less precise) statement can be found in~\cite{KaminskiT12}.

\begin{figure}[h]
	\begin{center}%\vspace{-.1cm}
		\includegraphics[width=.57\textwidth]{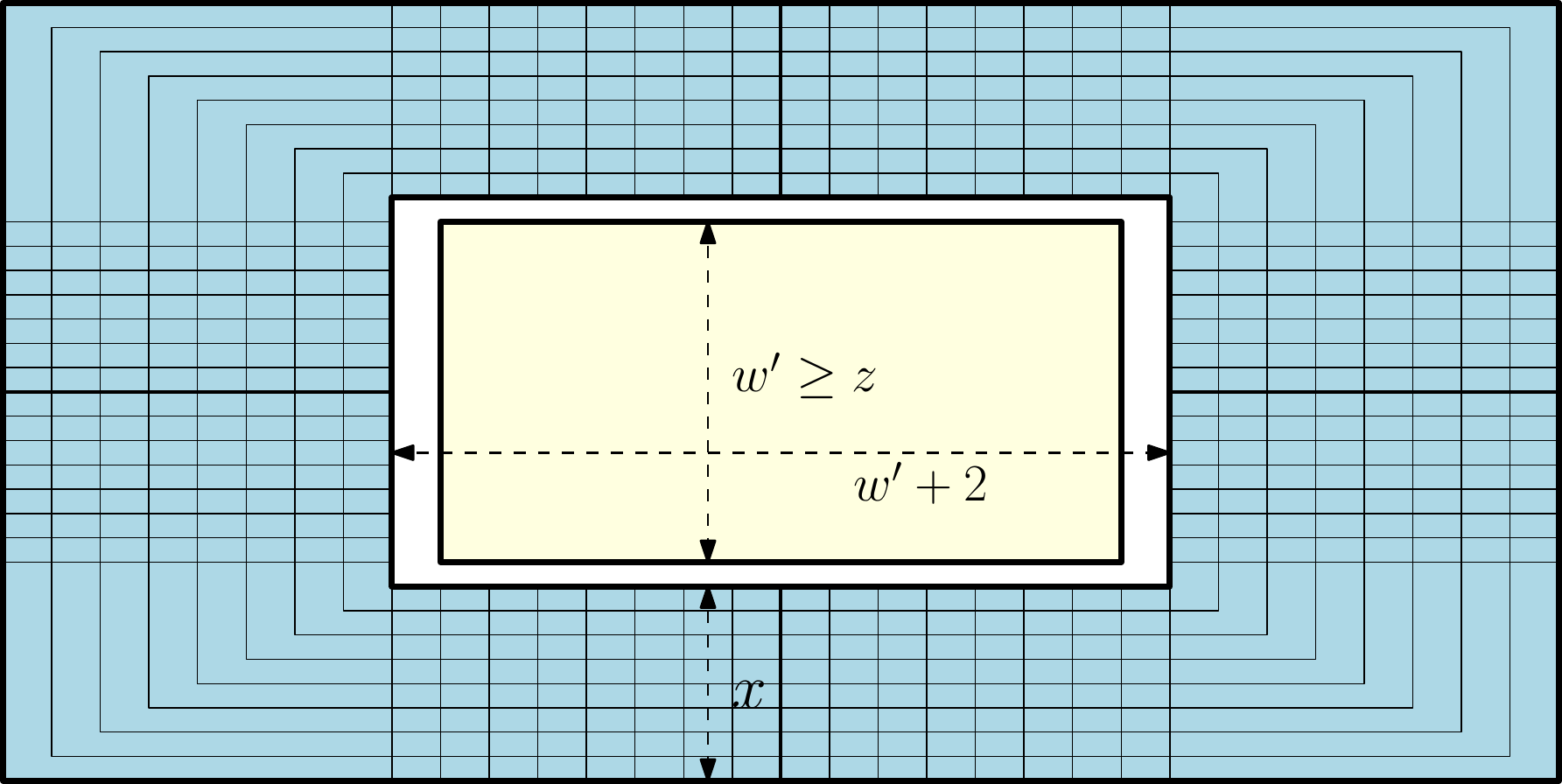}
	\end{center}\vspace{-.25cm}
	\caption{A visualization of the proof of \autoref{label_simultaneously} in the $33$-wall
		where $x=9,$ $y=64,$ and $z= 15.$ For simplicity, the paths in $\{P_{i,j}\mid  (i,j)\in[33]^2\}$
		are depicted as vertices.}
	\label{label_importunarla}
	\vspace{.35cm}
\end{figure}

\begin{proposition}
	\label{label_simultaneously}
	If $x,z \geq 3$ are odd integers, $y\geq 1,$ and $W$  is an   ${\sf odd}(2x+\max\{z,\frac{y}{4}-1\})$-wall,  then
	\begin{itemize}
		\item there is a collection ${\cal P}$ of $y$ paths in $W$ such that if ${\cal C}$ is
		      the collection of the first $x$ layers of $W,$ then $({\cal C},{\cal P})$
		      is an $(x,y)$-railed annulus of $W$ where the first cycle of ${\cal C}$ is the perimeter of $W,$ and
		\item the open disk defined by the $x$-th cycle of ${\cal C}$ contains the vertices of the compass of the central $z$-subwall of $W.$
	\end{itemize}
	%\sed{Alert: Discs $D_{i}$ are OPEN!}
\end{proposition}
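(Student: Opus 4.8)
The plan is to exhibit the two required objects explicitly. Set $r={\sf odd}(2x+\max\{z,\frac{y}{4}-1\})$ and fix an elementary $r$-wall $\bar W$, drawn on the grid $[2r]\times[r]$, of which $W$ is a subdivision; I describe everything in the coordinates of $\bar W$, the passage to $W$ being routine (each path of $\bar W$ used below lifts to the corresponding path of $W$, carrying along the relevant subdivision vertices). For the cycles, take $\mathcal{C}=[C_1,\dots,C_x]$ where $C_i$ is the $i$-th layer of $W$, i.e.\ the perimeter $D(W_i)$ of the central $(r-2(i-1))$-subwall $W_i$ of $W$; in particular $C_1=D(W)$. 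This makes sense since $W$ has $(r-1)/2\ge x+1$ layers, because $r\ge 2x+z\ge 2x+3$. As the subwalls satisfy $W_1\supseteq W_2\supseteq\cdots$ with $W_{i+1}$ obtained from $W_i$ by deleting its perimeter and the resulting degree-one vertices, in the standard plane embedding of $W$ (with $D(W)$ bounding the outer face) each $C_{i+1}$ lies in the open disk $D_i$ bounded by $C_i$; hence $\mathcal{C}$ is a $\Delta$-nested sequence of cycles of $W$ with $\ann(\mathcal{C})=\overline{D}_1\setminus D_x$ and first cycle $D(W)$.

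The second bullet follows at once: $C_x=D(W_x)$ with $\mathrm{height}(W_x)=r-2x+2$, so $W_{x+1}$ has height $r-2x\ge z$; the central $z$-subwall $W'$ of $W$ (which shares its central vertices with $W$) is thus a subwall of $W_{x+1}$, and since $V(W_{x+1})\subseteq V(W_x)\setminus V(C_x)$ lies in the open disk $D_x$, so does every vertex of (the compass of) $W'$.

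For the rails I use four families of subpaths. For every $i'\in[x,r-x+1]$, let $Q^{\mathrm t}_{i'}$ (resp.\ $Q^{\mathrm b}_{i'}$) be the subpath of the $i'$-th vertical path of $\bar W$ spanning rows $1$ to $x$ (resp.\ rows $r-x+1$ to $r$); for every $j\in[x+1,r-x]$, let $Q^{\mathrm l}_j$ (resp.\ $Q^{\mathrm r}_j$) be the subpath of the $j$-th horizontal path of $\bar W$ spanning columns $1$ to $2x$ (resp.\ columns $2r-2x+1$ to $2r$). A direct computation of layer indices gives: each of these is a path contained in $\overline{D}_1\setminus D_x$, and each meets every $C_i$, $i\in[x]$, in a non-empty path on at most two vertices. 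For $Q^{\mathrm t}_{i'}$, say, this is because the $i'$-th vertical path visits rows $1,2,\dots$ in order, its two columns $2i'-1,2i'$ lie in $[2x-1,2r-2x+2]\subseteq[2i-1,2r-2i+2]$ for all $i\le x$, so its one or two vertices on row $i$ lie on the top side of $C_i$, and it uses the corresponding horizontal edge of $C_i$ there; the other three families are symmetric. Vertex-disjointness is forced by the index ranges: the four families occupy, respectively, rows $\le x$; rows $\ge r-x+1$; columns $\le 2x$ with rows in $[x+1,r-x]$; and columns $\ge 2r-2x+1$ with rows in $[x+1,r-x]$, and these four regions are pairwise disjoint since $r\ge 2x+3$, while members of a single family use pairwise disjoint columns (vertical families) or rows (horizontal families). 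The number of rails obtained is $2(r-2x+2)+2(r-2x)=4(r-2x+1)$, and $r\ge 2x+\max\{z,\frac{y}{4}-1\}\ge 2x+\frac{y}{4}-1$ gives $4(r-2x+1)\ge y$; choosing any $y$ of these subpaths as $\mathcal{P}$, the pair $(\mathcal{C},\mathcal{P})$ is an $(x,y)$-railed annulus of $W$.

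The main obstacle — really the only delicate point — is the four corners of the wall: taking the full range $j\in[x,r-x+1]$ for the horizontal families as well would make, e.g., $Q^{\mathrm t}_{x}$ and $Q^{\mathrm l}_{x}$ share the vertex $(2x-1,x)$. Restricting the horizontal families to $j\in[x+1,r-x]$ avoids this at the cost of two rails per horizontal side, keeping the count at the tight value $4(r-2x+1)\ge y$ (for the parameters of \autoref{label_importunarla} this is $4\cdot 16=64=y$); after the restriction one still has $j\in[i,r-i+1]$ for every $i\le x$, so the horizontal rails keep crossing all of $C_1,\dots,C_x$. The remaining checks — that each rail stays inside $\overline{D}_1\setminus D_x$ and meets each $C_i$ in a connected set — are straightforward from the explicit description of the layers of a wall and of its vertical and horizontal paths, and I would present them alongside \autoref{label_importunarla}.
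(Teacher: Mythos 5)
Your proof is correct and follows essentially the same route as the paper's: the cycles are the first $x$ layers, the inner disk contains the central $z$-subwall because $r-2x\geq z$, and the rails are the $2(r-2x+2)+2(r-2x)=4(r-2x)+4\geq y$ radial segments of vertical and horizontal paths crossing the annulus (the paper states exactly this count and defers the construction to \autoref{label_importunarla}, whereas you spell it out, including the corner adjustment). No discrepancies to report.
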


\begin{proof}

	Let ${\cal C}=\{C_1, \ldots, C_x\}$ be the collection of the first $x$ layers of $W.$ Notice that  the open disk defined by $C_x$ contains the vertices of the compass of the central $w'$-subwall, where $w':=w-2x={\sf odd}(\max\{z,\frac{y}{4}-1\})\geq z.$ On the other hand, $W$ contains a collection ${\cal P}$ of at least $2w' + 2(w'+2)$ pairwise vertex-disjoint paths from $C_1$ to $C_x.$ Since $2w' + 2(w'+2)=4w'+4 \geq y,$
	the pair $({\cal C},{\cal P})$	is an $(x,y)$-railed annulus of $W$ where the first cycle of ${\cal C}$ is the perimeter of $W$ (see \autoref{label_importunarla}).
\end{proof}

We define  the {\em annulus} of ${\cal A}=({\cal C}, {\cal P})$ as the annulus of ${\cal C}.$ We call $C_{1}$ and $C_{r}$ the \emph{outer} and the \emph{inner} cycle of ${\cal A},$ respectively.
Also, if $(i,i')\in[r]^2$ with $i<i'$ then we define ${\cal A}_{i,i'}=([C_{i},\cdots,C_{i'}],{\cal P}\cap \ann({\cal C},i,i')).$

The {\em union-graph} of an $(r,q)$-railed annulus ${\cal A}=({\cal C}, {\cal P})$ is defined as $G({\cal A}):=(\bigcup_{i\in[r]}C_{i})\cup(\bigcup_{i\in[q]}P_{i}).$ Clearly, $G({\cal A})$ is a planar graph and we always assume that its infinite face is the one whose boundary is the first cycle of ${\cal C}.$

Let ${\cal A}$ be a $(r,q)$-railed annulus of a partially $\Delta$-embedded graph $G.$
Let  $r=2t+1,$ for some $t\geq 0.$ Let also $s\in[r]$ where $s=2t'+1,$ for some $0\leq t'\leq t.$
Given some $I\subseteq [q],$ we say that a subgraph  $M$  of $G$ is {\em $(s,I)$-confined in ${\cal A}$} if
$$M\cap \ann({\cal C},t-t',t+t')\subseteq \bigcup_{i\in I}P_{i}.$$

The following proposition has been recently proved by Golovach et al.~\cite[Theorem 2.1]{GolovachST20-SODA}, where it has been dubbed as the ``Model Taming Lemma''.

\begin{proposition}%[{\sf MODEL TAMING}]
	\label{label_interessiert}
	There exist two functions   $\newfun{label_inevitability}, \newfun{label_enlightening}:\Bbb{N}_{\geq  0}\to\Bbb{N}_{\geq  0}$ {such that the images of $\funref{label_enlightening}$ are even} and such that
	if
	\begin{itemize}
		\item $s$ is a positive odd integer,
		\item $H$ is a graph on at most $\ell$ edges,
		      %\ig{what about isolated vertices?}\sed{Es correcto porque los isolated sone terminales, pues están fuera del annulus},
		\item $G$ is a $\Delta$-partially-embedded graph,
		\item  ${\cal A}=({\cal C},{\cal P})$ is an  $(r,q)$-railed
		      annulus of $G,$ where  $r =  \funref{label_enlightening}(\ell)+2+s$ and $q\geq  5/2 \cdot \funref{label_inevitability}(\ell),$
		\item $({M},{T})$ is a  topological minor model of $H$ in $G$ such that ${T}\cap {\sf ann}({\cal A})=\emptyset,$ and
		\item $I\subseteq [q]$ where $|I|> \funref{label_inevitability}(\ell),$
	\end{itemize} then
	$G$ contains a topological minor model $(\tilde{M},\tilde{T})$ of $H$ in $G$ such that
	%\sed{Fix (actually simplify when this is possible) topological minor model terminology}
	\begin{enumerate}
		\item $\tilde{T}=T,$
		      %	\item ${T'}\cap {\sf ann}({\cal A})=\emptyset$
		\item   $\tilde{M}$ is $(s,I)$-confined in ${\cal A},$ and
		\item $\tilde{M}\setminus \ann({\cal A})\subseteq {M}\setminus \ann({\cal A}).$
	\end{enumerate}
	Moreover $\funref{label_enlightening}(\ell)=\Ocal((\funref{label_inevitability}(\ell))^2).$
\end{proposition}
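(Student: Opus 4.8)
The plan is to build $(\tilde M,\tilde T)$ from $(M,T)$ by reroutings carried out entirely inside $\ann({\cal A})$, exploiting that the union-graph $G({\cal A})$ is planar and drawn in the annular region between $C_1$ and $C_r$, with the cycles of ${\cal C}$ as nested ``horizontal'' circles and the rails of ${\cal P}$ as ``vertical'' paths crossing all of them. Write $r=2t+1$ and $s=2t'+1$. Since $r=\funref{label_enlightening}(\ell)+2+s$ we have $t-t'\ge 1$; the central band $\ann({\cal C},t-t',t+t')$ consists of the $s$ cycles $C_{t-t'},\dots,C_{t+t'}$, and the remaining $r-s=\funref{label_enlightening}(\ell)+2$ cycles split into an \emph{outer maneuvering band} (the cycles $C_1,\dots,C_{t-t'-1}$) and an \emph{inner} one (the cycles $C_{t+t'+1},\dots,C_r$), each with roughly $\funref{label_enlightening}(\ell)/2$ cycles.

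\emph{Cleanup of the model.} Since $T\cap\ann({\cal A})=\emptyset$, the set $M\cap\ann({\cal A})$ is a disjoint union of subpaths of the at most $\ell$ subdivision paths of $(M,T)$, each with both endpoints on $C_1\cup C_r$. First I would replace $(M,T)$ by a topological minor model of $H$ with the same branch set $T$ and with $M\setminus\ann({\cal A})$ contained in the original one, chosen so as to minimize $|E(M)\cap E(\ann({\cal A}))|$. Repeatedly shortcutting ``useless'' excursions along sub-arcs of $C_1$ and of $C_r$ (each excursion together with a suitable boundary sub-arc bounds a disk inside $\overline{D}_1$ whose interior can be cleared of $M$ without increasing the objective, using planarity of $G({\cal A})$ and that no branch vertex lies on $C_1\cup C_r$) one reaches a model in which each subdivision path meets $\ann({\cal A})$ in at most one subpath, and that subpath crosses monotonically from $C_1$ to $C_r$, meeting each intermediate $C_i$ in a single subpath. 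Now $M\cap\ann({\cal A})$ is a family of $p\le\funref{label_inevitability}(\ell)$ pairwise vertex-disjoint ``crossing arcs'' $Q_1,\dots,Q_p$, where $\funref{label_inevitability}$ is linear in $\ell$ (each subdivision path contributes at most one such arc); in particular $p<|I|$, and the hypothesis $q\ge 5/2\cdot\funref{label_inevitability}(\ell)$ supplies the extra rails needed for the bookkeeping below.

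\emph{Combing the arcs onto rails.} Fix an injection $j\mapsto r_j$ from $\{1,\dots,p\}$ into $I$. The aim is to reroute the arcs one at a time so that each $Q_j$, inside the central band, is a subpath of the rail $P_{r_j}$, while the arcs stay pairwise vertex-disjoint and disjoint from $M\setminus\ann({\cal A})$. When processing $Q_j$, it meets the outer boundary cycle $C_{t-t'}$ of the central band in a single subpath and the inner boundary cycle $C_{t+t'}$ in a single subpath. I would dedicate to $Q_j$ one so-far-unused cycle $C_{\alpha_j}$ of the outer maneuvering band and one $C_{\beta_j}$ of the inner one, chosen so that the cycles assigned to distinct arcs are nested consistently with the cyclic order in which the $Q_j$ meet $C_1$; then reroute the head of $Q_j$ to descend to $C_{\alpha_j}$, travel along $C_{\alpha_j}$ and the rail portions of $G({\cal A})$ in the outer band to reach $P_{r_j}$, ride $P_{r_j}$ straight through the central band, and exit through $C_{\beta_j}$ down to $C_r$ symmetrically. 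The consistent nesting makes all reroutings pairwise disjoint, and they are disjoint from $M\setminus\ann({\cal A})$ since they live inside $\ann({\cal A})$ except at their (unchanged) endpoints on $C_1\cup C_r$. Charging $\Ocal(p)$ cycles of room to each arc (to route around the earlier ones) forces each maneuvering band to have at least $\Ocal(p^2)$ cycles, giving $\funref{label_enlightening}(\ell)=\Ocal(\funref{label_inevitability}(\ell)^2)$, with even values. The resulting $(\tilde M,\tilde T)$ satisfies the three conclusions: $\tilde T=T$ (branch vertices were never moved); $\tilde M$ is $(s,I)$-confined, as inside $\ann({\cal C},t-t',t+t')$ it is a union of subpaths of $\{P_i\mid i\in I\}$; and $\tilde M\setminus\ann({\cal A})\subseteq M\setminus\ann({\cal A})$, since every edge added in the process lies in $\ann({\cal A})$ and the cleanup only ever deleted edges outside it. Finally ${\sf diss}(\tilde M,T)\cong H$: each subdivision path was replaced by another path with the same endpoints, and all of them remain internally disjoint.

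\emph{Main obstacle.} The crux is the simultaneous disjointness bookkeeping in the cleanup and, above all, in the combing: one must choose the shortcut arcs along $C_1,C_r$ and the ``highway'' cycles $C_{\alpha_j},C_{\beta_j}$ for all $p$ arcs at once so that nothing collides, and prove, using only the planar structure of $G({\cal A})$, that this leaves enough room — this is exactly what forces the quadratic bound $\funref{label_enlightening}(\ell)=\Ocal(\funref{label_inevitability}(\ell)^2)$ and the factor $5/2$ on $q$. A minor technical point is that $G$ is only partially $\Delta$-embedded, so $M$ may leave $\Delta$ altogether; this is harmless since such pieces of $M$ lie outside $\ann({\cal A})$ and are only deleted, never created.
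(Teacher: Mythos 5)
There is no proof of this statement in the paper to compare against: \autoref{label_interessiert} is imported verbatim from Golovach et al.\ (the ``Model Taming Lemma'', \cite[Theorem 2.1]{GolovachST20-SODA}), and the paper explicitly records (\autoref{label_constitueraient} and \autoref{label_reconstituirse}) that its proof rests on the Unique Linkage Theorem of Robertson--Seymour/Kawarabayashi--Wollan, with $\funref{label_inevitability}(\ell)=2^{\Omega(\ell)}$ by the lower bound of Adler and Krause. Your proposal is therefore not ``a different route to the same result'' but an attempt to replace a deep external theorem by elementary planar bookkeeping, and the gap sits exactly where you yourself flag ``the main obstacle''.

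Concretely, two steps fail. First, the cleanup: you claim that minimizing $|E(M)\cap E(\ann({\cal A}))|$ and shortcutting excursions along sub-arcs of $C_1$ and $C_r$ yields a model in which every subdivision path meets the annulus in a single arc crossing each $C_i$ monotonically. Clearing the disk bounded by an excursion and a boundary sub-arc is not free: that disk may contain other crossing arcs, or pieces of subdivision paths attached to branch vertices in $D_r$, which must be preserved or rerouted in turn; nothing guarantees the process terminates in your normal form, and subdivision paths with both branch vertices inside $D_r$ (or both outside $\overline{D}_1$) may be forced to dip into the annulus with no shortcut available in $G$. Second, and more fundamentally, the combing: the attachment points of the crossing arcs to $M\setminus\ann({\cal A})$ are essentially fixed (conclusion~3 only permits deletions outside the annulus), so the arcs form a linkage in the planar annulus with a prescribed pattern, and rerouting that linkage onto the prescribed rails while preserving the pattern is precisely the vital-linkage problem: the arcs may wind around the annulus, no local modification reduces winding, and the number of layers needed to find an equivalent, confined linkage is controlled by the treewidth bound of the Unique Linkage Theorem --- this is why $\funref{label_inevitability}$ is exponential in $\ell$ rather than linear as your count of crossing arcs suggests, and why $r$ and $q$ are calibrated against $\funref{label_inevitability}(\ell)$ rather than against $\ell$. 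If $\funref{label_inevitability}$ could be taken linear by your argument, \autoref{label_constitueraient} would be false.
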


\begin{remark}\label{label_constitueraient}
	It is worth mentioning here that the function $\funref{label_inevitability}(\ell)$ of 	\autoref{label_interessiert}
	depends on the constants involved in the Unique Linkage Theorem~\cite{KawarabayashiW2010asho,RobertsonSGM22}; see
	\autoref{label_reconstituirse} for a more detailed discussion. At this point we just remark that, according to the results of Adler and Krause~\cite{AdlerK19}, we have that $\funref{label_inevitability}(\ell) = 2^{\Omega(\ell)}.$ This permits us to henceforth make the (generous) assumption that $\ell = \Ocal(\funref{label_inevitability}(\ell)).$
\end{remark}

%We say that a linkage $L$ of a graph $G$ is {\em vital} if $V(L)=V(G)$ and there is no other linkage of $G$ that is equivalent to $L.$

%}%REMOVE

\subsection{Model rerouting in partially disk-embedded graphs}
\label{label_biologically}

%\ig{write small intro for this subsection, or not?}

%
%Let $G$ be a plane graph and $x\in V(G).$  We define ${\sf clf}(x)$ as the closure of the finite faces of $G$ that are incident to $x.$ Given a $A\subseteq V(G),$ we say that $A$ is {\em well-scattered} in $G$
%if for every two distinct vertices $x,y\in A,$ ${\sf clf}(x)\cap {\sf clf}(y)=\emptyset.$
%
%
%

%\remove{
Using classic  results on how to optimally draw planar graphs
of maximum degree three into grids (see e.g.,~\cite{Kant93PhDThesis}) one may easily derive the following.

\begin{proposition}
	\label{label_unrecognized}
	There exists a function  $\newfun{label_slaughtering}: \Bbb{N}\to\Bbb{N}$
	such that for every $\ell$-vertex  planar graph $H$ with maximum degree three
	there is a {\sf tm}-pair $(M,T)$  of the $(\funref{label_slaughtering}(\ell)\times \funref{label_slaughtering}(\ell))$-grid, denoted by $\Gamma,$ that is a topological minor model of $H$ in $\Gamma.$ Moreover, it holds that  $\funref{label_slaughtering}(\ell)=\Ocal(\ell).$
\end{proposition}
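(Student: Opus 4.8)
The plan is to deduce this proposition from the classical theory of \emph{planar orthogonal grid drawings}. Recall that a planar orthogonal grid drawing of a graph $H$ of maximum degree at most four maps each vertex of $H$ to a distinct point of the integer grid and each edge of $H$ to a path in the grid made of axis-parallel segments joining the images of its endpoints, in such a way that the images of distinct edges are internally disjoint and no edge-path passes through the image of a vertex other than its two endpoints. Since $H$ is planar of maximum degree three (so in particular at most four, the maximum number of grid edges incident with a vertex of the grid), such a drawing exists, and, as shown by Kant (see~\cite{Kant93PhDThesis} and the references therein), every $\ell$-vertex planar graph of maximum degree three admits one inside a grid whose two side lengths are both bounded by $c\cdot\ell$ for some universal constant $c$. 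I would therefore set $\funref{label_slaughtering}(\ell)=c\cdot\ell$, let $\Gamma$ be the $(\funref{label_slaughtering}(\ell)\times\funref{label_slaughtering}(\ell))$-grid, and fix such a drawing of $H$ inside $\Gamma$ (legitimate, since a grid with smaller side lengths is a subgraph of $\Gamma$). The bound $\funref{label_slaughtering}(\ell)=\Ocal(\ell)$ then comes for free from the linear area bound.

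Next I would read the required {\sf tm}-pair straight off the drawing. Let $T\subseteq V(\Gamma)$ be the set of grid points that are images of vertices of $H$, and let $M$ be the subgraph of $\Gamma$ formed by all grid vertices and grid edges lying on some edge-path of the drawing. Because distinct edge-paths are internally disjoint and avoid all vertex images, every vertex of $V(M)\setminus T$ is an internal point of exactly one edge-path and hence has degree two in $M$; thus $(M,T)$ is a {\sf tm}-pair of $\Gamma$. It then remains to check that ${\sf diss}(M,T)$ is isomorphic to $H$: each edge $e=\{u,v\}$ of $H$ contributes to $M$ precisely the path $Q_e$ between the images of $u$ and $v$, and dissolving the internal (degree-two) vertices of $Q_e$ one after another collapses $Q_e$ to a single edge between those two images. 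I would argue that no edge of $H$ is lost in this process: the ${\sf diss}$ operation only refrains from re-adding an edge when the two neighbours of the dissolved vertex are already adjacent, but the two $M$-neighbours of an internal vertex of $Q_e$ again lie on $Q_e$, and since $H$ is simple there is no second edge-path between the images of $u$ and $v$, so this never happens. Combined with the internal disjointness of the edge-paths, this shows that ${\sf diss}(M,T)$ has the images of $V(H)$ as its vertex set and exactly one edge per edge of $H$, so the natural bijection $T\to V(H)$ is an isomorphism; hence $(M,T)$ is a topological minor model of $H$ in $\Gamma$.

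I do not expect a genuine obstacle here: the whole argument is a translation between two essentially equivalent notions (``planar orthogonal grid drawing'' and ``topological minor model inside a grid''), with all the real content packed into the cited linear-area drawing result. The two points I would be careful about are (i) that the degree-three bound on $H$ is what makes the vertex images usable as branch vertices of the correct degree in $\Gamma$, and (ii) the small amount of bookkeeping with ${\sf diss}$ above, which crucially uses the planarity of the drawing (internal disjointness of the edge-paths and their avoidance of vertex images) together with the simplicity of $H$. If one wants an explicit value of $\funref{label_slaughtering}(\ell)$ it can be extracted from Kant's construction, but for the statement only $\funref{label_slaughtering}(\ell)=\Ocal(\ell)$ is needed.
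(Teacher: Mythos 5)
Your proposal is correct and follows exactly the route the paper intends: the paper gives no detailed proof, merely observing that the statement is "easily derived" from classic orthogonal grid-drawing results for degree-three planar graphs (citing Kant), which is precisely the reduction you carry out, with the translation from drawing to {\sf tm}-pair filled in correctly.
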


Let $\Gamma$ be an $(r\times r)$-grid for some $r\geq 3.$ We  see a  $\Gamma$-grid as the union of $r$ horizontal paths and $r$ vertical paths.
% Clearly we can embedd $G$ in the plane $\Bbb{R}^{2}$ such that the  the infinite face is the unique
%face, called {\em external face} of $\Gamma,$ denoted by ${\sf ext}(\Gamma)$ that contains more than 4 vertices in its boundary. We also call this boundary {\em perimeter} of $\Gamma.$
Given an $i\in\lfloor\frac{r}{2}\rfloor,$ we define the {\em $i$-th layer} of $\Gamma$
recursively as follows: the first layer of $\Gamma$ is its perimeter, while, if $i\geq 2,$ the
$i$-th layer of $\Gamma$ is the perimeter of the $(r-2(i-1)\times r-2(i-1))$-grid created if we remove from $\Gamma$ its $i-1$ first layers. When we deal with a  $(r\times r)$-grid $\Gamma,$ we always consider its embedding where the infinite face is bounded by the first layer of $\Gamma.$

\paragraph{Safely arranged  models.}

Let $G$ be a plane graph.
Given two subgraphs of $G,$ we define their {\em face-distance} as the minimum face-distance
between two of their vertices.
We denote by ${\bf F}^{(i)}_{G}(x)$ the set of all vertices of $G$ that are within face-distance at most $i$ from vertex $x.$

Given a $c\geq 0$ and   a {\sf tm}-pair $(M,T)$ of $G,$  we say that $(M,T)$ is {\em safely $c$-dispersed in $G$} if

\begin{itemize}
	\item every two distinct vertices $t,t'\in T$ are within face-distance at least $2c+1$ in $G,$ and
	\item for every $t\in T$ of degree $d$ in $M,$  the graph $M[{\bf F}_{G}^{(c)}(t)\cap V(M)]$
	      consists of $d$ paths with $t$ as a unique common endpoint.
\end{itemize}

\medskip

With \autoref{label_unrecognized} at hand, we can prove the following useful lemma.

\begin{lemma}%[{\sf GRAPH DRAWING}]
	\label{label_dispenseront}
	There exists a  function $\newfun{label_endeavouring}: \Bbb{N}^3\to\Bbb{N}$ such that
	the following holds. Let $c,r,r',\ell\in \Bbb{N},$  $r'\leq r,$
	$H$ be a $D$-embedded $(\ell+r')$-vertex
	graph, and $Z:=\{z_1,\ldots,z_{r'}\}\subseteq V(H)$ such that
	\begin{itemize}
		\item the vertices of $H$ have degree at most three,
		\item $Z$ is an independent set of $H,$
		\item all vertices of $Z$ have degree one in $H,$
		      %\item all vertices of $Z$ are points of the boundary of $D$\sed{Exactly the points of the boundary},
		\item $\bor(D)\cap H=Z,$ and
		\item  $(z_1,\ldots,z_{r'})$
		      is the cyclic ordering of the vertices of $Z$ as they appear in the boundary of $D.$
		      %\sed{difference ${D}$ and $\Delta$}
	\end{itemize}
	Let also $G$ be a $\Delta$-embedded  graph,  ${\cal A}=({\cal C},{\cal P})$ be an  $(x,y)$-railed annulus of $G,$ where $x,y$ are integers such that $\min\{x,y\} \geq  \funref{label_endeavouring}(c,r,\ell) \geq r,$ and where  ${\cal C}=[C_{1},\ldots,C_{x}]$ and ${\cal P}=[P_{1},\ldots, P_y],$ $w_{i}$ be the    endpoint of $P_{i}$ that is contained  in $C_{1},$ for $i\in[r],$ and $I:=\{i_{1},\ldots,i_{r'}\}\subseteq [r].$
	Then the union-graph $G({\cal A})$ of ${\cal A}$  contains a {\sf tm}-pair $(M,T)$ that is a topological minor model of $H$ in $G({\cal A})$ such that
	\begin{itemize}

		\item
		      for each $j\in[r'],$  $\sigma_{M,T}(z_j)=w_{i_{j}},$
		      %\item $C_{\funref{label_endeavouring}(r,\ell)}\cap M$ is null \ig{null? do you mean empty?}, and
		\item the {\sf tm}-pair $(M,T)$ is safely $c$-dispersed in the union graph $G({\cal A}),$ and
		\item none of the vertices of $T\setminus\{w_{i_1},\ldots,w_{i_{r'}}\}$ is within face-distance less than $c$
		      from some vertex in $C_{1}$ or in $C_{r}.$
	\end{itemize}
	Moreover, it holds that  $\funref{label_endeavouring}(c,r,\ell)=\Ocal(cr(\ell+r)).$
\end{lemma}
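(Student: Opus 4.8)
The plan is to produce $(M,T)$ in three stages: (a) realize $H$ inside an abstract grid while forcing $Z$ onto the grid's outer boundary in the prescribed cyclic order; (b) blow up this grid so that branch vertices become pairwise far apart and locally tidy -- which is exactly what ``safely $c$-dispersed'' asks for; and (c) transplant the blown-up grid into the interior of the railed annulus, far from $C_1$ and $C_x$, and connect the images of $Z$ to $w_{i_1},\dots,w_{i_{r'}}$ through the rails. The only external ingredient needed for the combinatorial core is \autoref{label_unrecognized}.

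\emph{Stages (a) and (b).} Since the vertices of $Z$ are pairwise non-adjacent, of degree one in $H$, and appear on $\bor(D)$ in the order $(z_1,\dots,z_{r'})$, I would first add to $H$ a cycle through $z_1,\dots,z_{r'}$ in this cyclic order (subdividing it once between consecutive $z_j$'s so that the graph stays simple), drawn along $\bor(D)$; the resulting graph $H^{+}$ is planar, has maximum degree three, has $\ell+2r'=\Ocal(\ell+r)$ vertices, and has a plane embedding in which the new cycle bounds a face incident with no other vertex of $H^{+}$. Applying \autoref{label_unrecognized} to $H^{+}$, with this cycle chosen as the outer face, yields a topological minor model of $H^{+}$ in a grid of side $\Ocal(\ell+r)$; deleting the subdivision paths of the added cycle, re-embedding the remaining model into a slightly larger grid with all branch vertices pushed into its central layers, and then extending each $z_j$ outwards through the now-free annular region to a distinct vertex of the first layer (respecting the cyclic order inherited from $\bor(D)$), one obtains a topological minor model $(M_0,T_0)$ of $H$ in a grid $\Gamma_0$ of side $p=\Ocal(\ell+r)$ such that the branch vertices realizing $z_1,\dots,z_{r'}$ lie on the first layer of $\Gamma_0$ in cyclic order while every other branch vertex lies at grid-distance $\Omega(p)$ from the first layer. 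Next I would replace $\Gamma_0$ by the grid $\Gamma_1$ of side $q=\Theta(cp)=\Ocal(c(\ell+r))$ and embed $(M_0,T_0)$ into $\Gamma_1$ by the scaling that multiplies all grid coordinates by $\Theta(c)$ and turns each model edge into a monotone path of that length; in the resulting model $(M_1,T_1)$ of $H$ in $\Gamma_1$, any two branch vertices are at face-distance at least $2c+1$, around every branch vertex of degree $d$ the model induces exactly $d$ paths that meet only at that vertex within face-distance $c$, the branch vertices realizing $Z$ lie on the first layer, and all others lie at face-distance $\Omega(cp)\ge c$ from it. Thus $(M_1,T_1)$ is safely $c$-dispersed in $\Gamma_1$.

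\emph{Stage (c).} I would choose $a=\Theta(c)$ and $b$ with $b-a=\Theta(q)$, together with a block of $\Theta(q)$ consecutive rails of $\mathcal{P}$; cutting the sub-annulus $\ann(\mathcal{C},a,b)$ open along one rail of the block exhibits, as a subgraph of the union-graph $G(\mathcal{A})$, a subdivision of a grid of side $\Theta(q)$ whose cells are faces of $G(\mathcal{A})$ and which lies at face-distance at least $c$ from $C_1$ and -- provided $x$ is large enough relative to $b$ -- at face-distance at least $c$ from $C_x$ as well. Identify $\Gamma_1$ with a grid of side $q$ inside this subdivided grid so that its first layer faces the outer cycle $C_1$; this places $(M_1,T_1)$ inside $G(\mathcal{A})$ with all non-$Z$ branch vertices at face-distance at least $c$ from both $C_1$ and $C_x$. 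For each $j\in[r']$ I would then route a path $Q_j$ from $w_{i_j}\in C_1$ inwards to the branch vertex of $(M_1,T_1)$ realizing $z_j$: follow the rail $P_{i_j}$ from $w_{i_j}$ towards the centre and then shift along a cycle of $\mathcal{C}$ to reach the prescribed vertex of the first layer of $\Gamma_1$. As the cyclic order of $(w_{i_1},\dots,w_{i_{r'}})$ around $C_1$ matches the cyclic order of the corresponding targets on the first layer of $\Gamma_1$, the paths $Q_1,\dots,Q_{r'}$ can be chosen pairwise vertex-disjoint and internally disjoint from $\Gamma_1$. Redefining, for each $j$, the branch vertex realizing $z_j$ to be $w_{i_j}$ and absorbing $Q_j$ together with the former pendant path of $z_j$ into the subdivision path of the unique edge incident with $z_j$ yields a topological minor model $(M,T)$ of $H$ in $G(\mathcal{A})$ with $\sigma_{M,T}(z_j)=w_{i_j}$ for every $j$.

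Finally I would collect the constraints on the annulus and check safe $c$-dispersal globally: $x$ must exceed the $\Theta(c)$ margin near $C_1$, the $\Theta(q)=\Theta(c(\ell+r))$ height of $\Gamma_1$, and the length of the fan-in paths, while $y$ must accommodate the $\Theta(q)$ rails occupied by $\Gamma_1$ together with the rails $P_1,\dots,P_r$ and the separation that the $w_{i_j}$ and the starts of the $Q_j$ require; these together give $\funref{label_endeavouring}(c,r,\ell)=\Ocal(cr(\ell+r))$. The main obstacle is precisely Stage (c): the fan-in paths must simultaneously be vertex-disjoint, honour the cyclic correspondence with $(z_1,\dots,z_{r'})$, keep every non-$Z$ branch vertex at face-distance at least $c$ from $C_1$, and be laid out so that the whole model $(M,T)$ -- including the forced branch vertices $w_{i_j}$ on $C_1$ and their incident subdivision paths -- remains safely $c$-dispersed in $G(\mathcal{A})$; it is this bookkeeping near the outer cycle, and the spacing it imposes among the rails used there, that is responsible for the factor $r$ in the bound.
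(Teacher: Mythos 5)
Your overall route is the same as the paper's: augment $H$ with boundary structure so that the $z_j$'s are forced into the prescribed cyclic order, realize the result in a grid via \autoref{label_unrecognized}, subdivide/scale to obtain safe $c$-dispersal, and then transplant into the railed annulus by viewing $G(\Acal)$ (cut open / with the paths $P_{i,j}$ contracted) as a subdivided grid. The dispersal-by-subdivision step, the face-distance bookkeeping near $C_1$ and $C_x$, and the final uncontraction argument all match the paper, and your accounting for $\funref{label_endeavouring}(c,r,\ell)=\Ocal(cr(\ell+r))$ is consistent with it.

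The one step I do not think survives as written is Stage (a). You add a \emph{single} cycle through $z_1,\dots,z_{r'}$, apply \autoref{label_unrecognized} ``with this cycle chosen as the outer face'', then delete the cycle and re-embed. But \autoref{label_unrecognized} only guarantees \emph{some} topological minor model of the input graph in a grid; it gives you no control over which combinatorial embedding of $H^{+}$ the grid drawing realizes, and $H^{+}$ is in general far from $3$-connected. Consequently, in the grid drawing the image of your added cycle need not bound a disk containing the whole model of $H$: components of $H\setminus Z$ may land on either side of it, or be nested inside faces of one another, in which case the images of the $z_j$'s are not simultaneously accessible from the outer face in the prescribed cyclic order, and the ``extend each $z_j$ outwards through the now-free annular region'' step --- on which the entire fan-in routing of Stage (c) rests --- can fail. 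This is precisely why the paper builds the doubled graph $H'$ (two mirrored copies of $H$, each attached to its own cycle, with the two cycles joined by a matching through subdivision vertices): that rigid structure forces, in \emph{every} planar drawing and hence in the grid model, the cycle through $\hat z_1,\dots,\hat z_{r'}$ to bound a disk containing the entire model of $H$ with the $\hat z_j$'s in the correct cyclic order, and moreover (after the $\overline{r}$-fold subdivision) pairwise far apart on the boundary of that disk, which is exactly what makes the disjoint connecting paths to $w_{i_1},\dots,w_{i_{r'}}$ exist. You would need either this doubling trick or an explicit argument that your single-cycle augmentation admits no ``bad'' embedding.
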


\begin{proof}
	Using $H,$ we construct
	a new graph $H'$ as follows: consider a copy $\tilde{H}$ of $H$
	where the copy of $z_{i}$ in $\tilde{H}$ is denoted by $\tilde{z}_{i},$
	for each  $i\in [r'].$ We take the disjoint union of $H$ and $\tilde{H},$
	add the edges
	$\{z_{1},z_2\},\ldots,\{z_{r'},z_{1}\},$ forming a cycle $C,$ subdivide the edges in $C,$
	add  the edges  $\{\tilde{z}_{1},\tilde{z}_{2}\},\ldots,\{\tilde{z}_{r},\tilde{z}_{1}\},$ forming a cycle $\tilde{C},$
	subdivide the edges in $\tilde{C},$ and, given that, for $i\in[r'],$  $x_{i}$ (resp. $\tilde{x}_{i}$)
	is the vertex created after the subdivision of $\{z_{i},z_{i+1}\}$ (resp. $\{\tilde{z}_{i},\tilde{z}_{i+1}\}$) (here $r'+1$ is interpreted as $1$), add the edges
	$\{x_{1},\tilde{x}_1\},\ldots,\{x_{r},\tilde{x}_{r'}\}.$
	The resulting graph $H'$ has $2(\ell+r')\leq 2(\ell+r)$ vertices, is planar, and has maximum degree three. Let $s=\funref{label_slaughtering}(2(\ell+r)).$
	By \autoref{label_unrecognized}, there is a  ${\sf tm}$-pair $(M',T')$ of the $(s\times s)$-grid $\Gamma$ that is a topological minor
	model of $H'$ in $\Gamma.$
	We now subdivide $\overline{r}=2(c+1)r$ times  each of the edges of $\Gamma$  and see the resulting graph $\Gamma'$
	as a subgraph of a $((\overline{r}+1)(s+2)\times (\overline{r}+1)(s+2))$-grid $\Gamma''$
	in a way that none of the $\overline{r}$ first layers of $\Gamma''$ intersects $\Gamma'.$ By  also subdividing $\overline{r}$ times each of the edges of $M'$
	we construct a  ${\sf tm}$-pair $(M'',T')$ of $\Gamma''$
	that is a  a topological minor  model of $H'$ in $\Gamma''.$

	Let $w_{1},\ldots,w_{r}$ be the first $r$ vertices of the lower path of $\Gamma''.$
	Recall that $H$ is a subgraph of $H',$ therefore we can define  $M=\sigma_{M',T'}(H).$
	Let also $T=\sigma_{M',T'}(V(H)).$
	Notice that $(M,T)$
	is a {\sf tm}-pair of $\Gamma''$  that is a topological minor model
	of $H.$ Let $\hat{z}_{i}=\sigma_{M,T}(z_{i}), i\in[r'].$
	We make two observations about the position of these vertices in $\Gamma''.$
	The first is that, because of the construction of $H',$ $\hat{z}_{1},\ldots,\hat{z}_{r'}$ appear, in this ordering, on a cycle of $\Gamma'$ bounding a closed disk, say $\Delta,$ that  contains the whole $M.$
	The second is that, as $M$ is a subgraph of $\Gamma',$
	each pair $\hat{z}_{i},\hat{z}_{j}, i\neq j,$
	is at distance at least $\overline{r}+1$ in the graph $\Gamma''':=\Gamma''\setminus \inter(\Delta).$
	It is now easy to observe that the two previous
	observations permit to
	find in $\Gamma'''$ pairwise disjoint paths joining $\hat{z}_{j}$ with $w_{i_{j}},$ for $j\in[r'].$
	%SED> we need a figure for this
	By adding these paths  in $M$ and including in  $T$  the set $\{w_{i_1},\ldots,w_{i_{r'}}\},$
	we construct a {\sf tm}-pair $(M,T)$
	that is a topological minor model of $H$ in $\Gamma'$ such that
	for each $j\in[r'],$ the function $\sigma_{M,T}$
	maps the vertex $z_{{j}}$ to $w_{i_{j}}'$ and the intersection of $V(M)$ and the upper path of $\Gamma''$ is empty.  Moreover, as we applied at least $\overline{r}=(c+2)r$ subdivisions, it also holds that
	the set $T$ is safely  $2c$-dispersed in $\Gamma''.$
	Moreover, it is easy to observe that none of the vertices of $T\setminus\{w_{i_1},\ldots,w_{i_{r'}}\}$ is within face-distance less than $c$
	from some vertex in the perimeter of $\Gamma''.$

	Consider now a  $(x,y)$-railed annulus $({\cal C},{\cal P})$ of some $\Delta$-embedded graph $G,$ with $\min\{x,y\} \geq q,$ where $q=(\overline{r}+1)(s+2).$ Let ${\cal C}=[C_{1},\ldots,C_{x}]$ and ${\cal P}=[P_{1},\ldots, P_y]$ as in the statement of the lemma. 	Let also $\tilde{\Gamma}=G({\cal A}).$
	For every $i\in[q],$ we define  $F^{(i)}_{\cal A}$ as the edge set of the unique
	path in $C_{i}$  with one endpoint in $P_{i,q}$ and the other in $P_{i,1},$ that does not contain
	internal vertices of the paths $P_{i,q}$ or $P_{i,1},$ and does not contain any vertex from $P_{2}.$ We denote by $F^{\sf e}_{\cal A}$ (resp. $F^{\sf v}_{\cal A}$) the set of
	all edges (resp. internal vertices) of the paths $F^{(i)}_{\cal A}, i\in[q].$
	Notice  that  the grid $\Gamma''$ occurs from $(\tilde{\Gamma}\setminus F_{\cal A}^{\sf e})\setminus F_{\cal A}^{\sf v}$ if, for every $(i,j)\in [q]^2,$
	we contract the path $P_{i,j}$ defined by the intersection of the $i$-th horizontal path and the $j$-th vertical path of $W.$ It is easy to see that
	if in $\tilde{\Gamma}$ we uncontract each vertex, say  $(i,j)$ of $M$ to the path $P_{i,j},$
	one can transform $(M,T)$ to a ${\sf tm}$-pair of $W$ that is a topological minor model of $H$
	in $W$ and additionally,
	for each $i\in[r'],$ the function $\sigma_{M,T}$
	maps the vertex $z_{{j}}$ to $w_{i_{j}}.$ This implies the first condition of the lemma.
	%The  second condition follows  by the fact that the intersection of $V(M)$ and the upper path of $\Gamma''$ is empty.
	The second condition follows directly from the fact that    the pair $(M,T)$ was already safely %\ig{just ``safely'', right?}
	$2c$-dispersed  before applying the uncontractions and  such uncontractions  cannot reduce the distance to more than half of it. The third condition is also an obvious consequence of the uncontraction procedure.
	Therefore, the lemma holds
	if  we set  $\funref{label_endeavouring}(c,r,\ell):=q=\Ocal(cr(\ell+r)).$
\end{proof}

Let $G$ be a partially $\Delta$-embedded graph and let ${\cal C}=[C_{1},\ldots,C_{r}$]  be a {\em $\Delta$-nested sequence of  cycles} of $G$  and let $[D_{1},\ldots,D_{r}]$
(resp. $[\overline{D}_{1},\ldots,\overline{D}_{r}]$)
be the sequences of the corresponding open (resp. closed) disks.\medskip

Let also $(M,T)$ be a {\sf tm}-pair of $G$ and $p\in [r].$
We define the {\em $p$-crop} of $(M,T)$ in ${\cal C},$ denoted by $(M,T)\Cap \overline{D}_{p},$
as the {\sf tm}-pair $(M',T')$ where $M'=M\cap \overline{D}_{p}$ and $T'=(T\cap \overline{D}_{p})\cup(V(C_{p}\cap M)).$

Given a graph $H$ a set $Q\subseteq V(H)$ and a graph $G,$ we say that $\phi: V(H)\to 2^{V(G)}$
is a {\em $Q$-respecting contraction-mapping of $H$ to $G$} if
\begin{itemize}
	\item $\bigcup_{x\in V(H)}\phi(x)=V(G),$
	\item $\forall x,y\in V(H),$ if $x\neq y$ then $\phi(x)\cap \phi(y)=\emptyset,$
	\item $\forall x\in V(H),$ $G[\phi(x)]$ is connected,
	\item $\forall \{x,y\}\in E(H),\ $ $G[\phi(x)\cup \phi(y)]$ is connected, and
	\item $\forall x\in Q, |\phi(x)|=1.$
\end{itemize}

The critical point in the above definitions is that
vertices in $Q$ are not ``uncontracted’’
when transforming $H$ to $G.$
%Given a non-negative integer $x,$ we denote by ${\sf odd}(x)$ the minimum odd number that is not smaller than $x.$

\paragraph{Intrusion of a topological minor model.}
Let $G$ be a graph, let $S\subseteq V(G),$ and let  $(M,T)$ be a {\sf tm}-pair
of $G.$  We define the {\em $S$-intrusion} of $(M,T)$ in $G$
as the maximum value between $|S\cap T|$
and  the number of subdivision paths of $(M,T)$ that contain vertices
of $S.$ It is important to notice that $S$ can intersect many times a subdivision path of $(M,T),$ however
the value of the $S$-intrusion counts each such a path only once.
\medskip

Using \autoref{label_simultaneously}, \autoref{label_interessiert}, and \autoref{label_dispenseront}  we prove the following.

\begin{theorem}%[{\sf COLLAPSE}]
	\label{label_constitutivos}
	There exist three functions $\newfun{label_tourterelles}:\Bbb{N}^2\to\Bbb{N},$ $\newfun{label_quitablement}:\Bbb{N}^2\to\Bbb{N},$ and $\newfun{label_presuntuosos}:\Bbb{N}^3\to\Bbb{N}$
	such that the following holds.
	Let  $c,\ell\in\Bbb{N},$ $z \geq 3$ be an odd integer, and $G$ be a partially $\Delta$-embedded graph,
	whose compass contains an $\funref{label_presuntuosos}(c,z,\ell)$-wall $W$
	with $\bor(\Delta)$ as perimeter. Let also ${\cal C} = [C_{1},\ldots,C_{\funref{label_tourterelles}(c,\ell)}]$ be the first $\funref{label_tourterelles}(c,\ell)$-layers of $W$ and $D_{1},\ldots,D_{\funref{label_tourterelles}(c,\ell)}$ be the open disks of $\Delta$ that they define. If  $(M,T)$
	is a {\sf tm}-pair of $G$ whose $\Delta\cap V(G)$-intrusion in $G$  is at most $\ell$ and $Q$ is a subset of $T$
	containing vertices of degree at most three in $M,$ then there is  a  {\sf tm}-pair $(\hat{M},\hat{T})$  of $G$  and an integer $b\in [\funref{label_tourterelles}(c,\ell)]$ such that

	\begin{enumerate}
		\item    $\hat{M}\setminus D_{b}$ is a subgraph of ${M}\setminus D_{b},$
		\item ${\sf ann}({\cal C}, b,b+\funref{label_quitablement}(c,\ell)-1)\cap (T \cup \hat{T})=\emptyset,$
		\item   $(\hat{M},\hat{T})\Cap \overline{D}_{b+\funref{label_quitablement}(c,\ell)}$ is a {\sf tm}-pair of $W$ that is  safely $c$-dispersed in $W$
		      and none of the vertices of $\hat{T}\cap \overline{D}_{b+\funref{label_quitablement}(c,\ell)}$ is within face-distance less than $c$  in $W$
		      from some vertex of $C_{b+\funref{label_quitablement}(c,\ell)}\cup C_{\funref{label_tourterelles}(c,\ell)},$
		\item $\hat{M}\cap D_{\funref{label_tourterelles}(c,\ell)}=\emptyset,$
		      %\item $\hat{M}$ does not intersect the vertices of the compass of the central $z$-subwall of $W.$
		\item the compass of the central $z$-subwall of $W$ is a subset of $D_{\funref{label_tourterelles}(c,\ell)},$ and
		\item there is a $Q$-respecting contraction-mapping of ${\sf diss}(M,T)$ to ${\sf diss}(\hat{M},\hat{T}).$
	\end{enumerate}
	Moreover, it holds that
	$\funref{label_tourterelles}(c,\ell) = \Ocal( c \cdot (\funref{label_inevitability}(\ell))^3),$
	$\funref{label_quitablement}(c,\ell) =\Ocal( c \cdot (\funref{label_inevitability}(\ell))^2),$ and
	$\funref{label_presuntuosos}(c,z,\ell) =  \Ocal( c \cdot (\funref{label_inevitability}(\ell))^3   + z).$
\end{theorem}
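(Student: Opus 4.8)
The plan is to combine the three preceding tools in the natural ``concentric'' order dictated by the layer structure of $W$. First I would fix the parameters by propagating the bounds backwards from the conclusion: set $\funref{label_quitablement}(c,\ell)$ to be (essentially) the even function $\funref{label_enlightening}(\ell)$ from the Model Taming Lemma plus the dispersion buffer coming from \autoref{label_dispenseront}, then set $\funref{label_tourterelles}(c,\ell)$ large enough to accommodate a railed annulus of the width required by \autoref{label_interessiert} together with a central region of width $\funref{label_quitablement}(c,\ell)$ plus the ``safely $c$-dispersed'' slack, and finally set $\funref{label_presuntuosos}(c,z,\ell)$ via \autoref{label_simultaneously} so that the first $\funref{label_tourterelles}(c,\ell)$ layers of $W$ carry an $(\funref{label_tourterelles}(c,\ell),q)$-railed annulus ${\cal A}=({\cal C},{\cal P})$ with $q\geq \tfrac{5}{2}\funref{label_inevitability}(\ell)$ rails while the central $z$-subwall still fits inside the innermost disk $D_{\funref{label_tourterelles}(c,\ell)}$ — this immediately gives items~4 and~5. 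The arithmetic here is routine once one is careful that $z$ enters only additively, matching the stated $\funref{label_presuntuosos}(c,z,\ell)=\Ocal(c\cdot(\funref{label_inevitability}(\ell))^3+z)$.

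Next I would massage the given model $(M,T)$ so that the taming lemma applies. Since the $\Delta\cap V(G)$-intrusion of $(M,T)$ is at most $\ell$, the part of $(M,T)$ entering $\Delta$ can be captured by a graph $H$ on $\Ocal(\ell)$ edges obtained by taking the subdivision paths meeting $\Delta$ together with the branch vertices of $T$ inside $\Delta$, and by ``cutting'' the annulus-crossing subdivision paths on the outer layer $C_1$; the resulting crossing endpoints play the role of the degree-one vertices $z_1,\dots,z_{r'}$ sitting on $\bor(\Delta)$ in \autoref{label_dispenseront}. Then I apply \autoref{label_interessiert} to a suitable sub-railed-annulus of ${\cal A}$ (with $T$ avoiding its annulus, which can be arranged by choosing the sub-annulus inside the ``clean'' middle layers where $T$ does not reach — this is where the intrusion bound is used) to obtain a rerouted model $(\tilde M,\tilde T)$ with $\tilde T=T$, confined to a prescribed set $I$ of $\le\funref{label_inevitability}(\ell)$ rails in a central sub-annulus, and agreeing with $M$ outside the annulus. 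Then I invoke \autoref{label_dispenseront} (with $r'=|I|$, the $z_j$'s being the rail-endpoints on $C_1$) to re-draw the ``entering'' piece of the model inside the union-graph $G({\cal A})$ so that it is safely $c$-dispersed and its branch vertices stay $\ge c$ away from the relevant boundary cycles; gluing this re-drawing back to $\tilde M\setminus\ann({\cal A})$ yields $(\hat M,\hat T)$, and the integer $b$ is the index of the outer layer of the sub-annulus where the re-drawing lives. Items~1, 2, and~3 then read off from the three conclusions of \autoref{label_interessiert} and \autoref{label_dispenseront} respectively, after one checks that the dispersion is not destroyed by the contraction/uncontraction of grid vertices to rail-paths $P_{i,j}$ (exactly as in the proof of \autoref{label_dispenseront}).

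The main obstacle I anticipate is item~6: producing a $Q$-respecting contraction-mapping of ${\sf diss}(M,T)$ to ${\sf diss}(\hat M,\hat T)$. The difficulty is that the rerouting of \autoref{label_interessiert} only promises $\tilde T=T$ and agreement outside the annulus, and the re-drawing of \autoref{label_dispenseront} introduces a genuinely new topological structure inside the annulus, so $\hat M$ need not contain $M$ as a topological minor on the nose — only ``up to contracting the part swallowed by the annulus''. To handle this I would track, for each branch vertex and each subdivision path of the original $(M,T)$, which connected region of $G({\cal A})$ it gets mapped into after the double rerouting, define $\phi$ by assigning to each vertex $x$ of ${\sf diss}(M,T)$ the set of vertices of ${\sf diss}(\hat M,\hat T)$ lying in that region, and verify the five bullet conditions of a $Q$-respecting contraction-mapping; the singleton condition for $x\in Q$ holds because $Q\subseteq T$ consists of degree-$\le 3$ branch vertices, which both \autoref{label_interessiert} (via $\tilde T=T$) and the arguments of \autoref{label_dispenseront} keep as honest branch vertices rather than collapsing them. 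Making this bookkeeping precise — in particular checking connectivity of $G[\phi(x)\cup\phi(y)]$ for edges $\{x,y\}$ whose subdivision path crosses the annulus several times — is the technically delicate part, and it is exactly why the statement is phrased with a contraction-mapping to $H'$ with $H\preceq_{\sf m}H'$ rather than with $H$ itself.
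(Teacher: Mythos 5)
Your proposal follows essentially the same route as the paper's proof: fix the parameters via \autoref{label_simultaneously} (giving items 4 and 5), use the intrusion bound and a pigeonhole over $(\ell+1)$ disjoint sub-annuli to find one whose annulus avoids $T$, confine the crossing part of the model there via \autoref{label_interessiert}, redraw the inner piece with \autoref{label_dispenseront} and glue it back along the confined rails (the $z_j$'s being the endpoints of those rails on the outer cycle of the \emph{inner} sub-annulus rather than on $C_1$). The one point worth making explicit for item 6 is the precise source of the contraction-mapping: since \autoref{label_dispenseront} only applies to graphs of maximum degree three, the dissolved inner graph $\tilde{H}^{\rm in}$ must first be replaced by a degree-$\leq 3$ graph $\hat{H}^{\rm in}$ of which it is a contraction not touching its degree-$\leq 3$ vertices (hence not touching $Q$); this uncontraction \emph{is} the $Q$-respecting contraction-mapping, and it makes your more laborious ``region-tracking'' definition of $\phi$ unnecessary.
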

%$\funref{label_inevitability}(\ell)$

%$\funref{label_enlightening}(\ell)$

%$\funref{label_endeavouring}(c,\funref{label_inevitability}(\ell)+1,3\ell+\funref{label_inevitability}(\ell)+1)$

See \autoref{label_palsanquienne} for an illustration of the conditions guaranteed by \autoref{label_constitutivos}.

\begin{figure}[h!]
	\begin{center}%\vspace{-.1cm}
		\includegraphics[width=.63\textwidth]{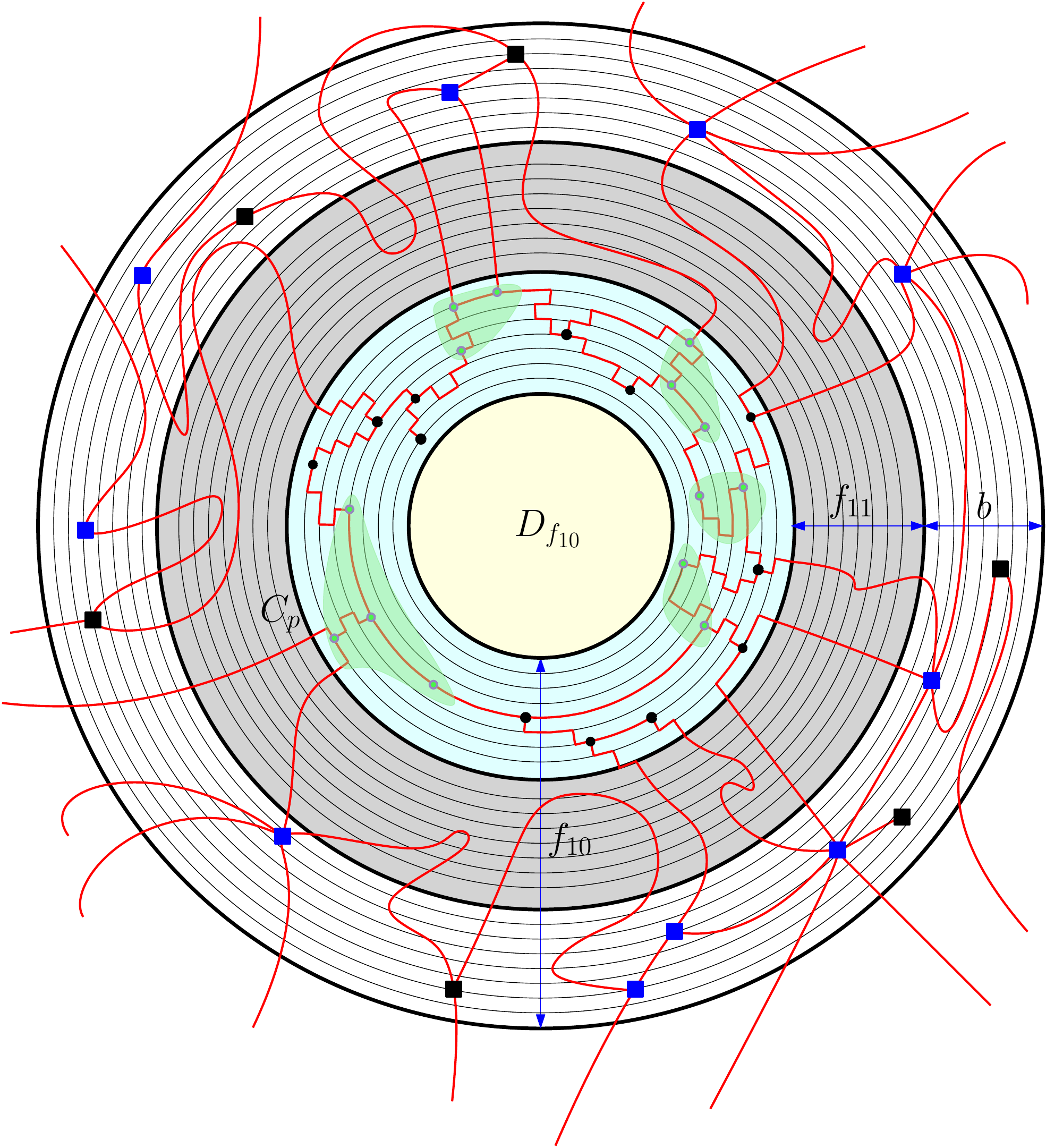}
	\end{center}\vspace{-.25cm}
	\caption{A visualization of how a  {\sf tm}-pair $(M,T)$  is rearranged to a new   {\sf tm}-pair $(\hat{M},\hat{T})$ as in \autoref{label_constitutivos}. The figure depicts in red the part of the {\sf tm}-pair $(\hat{M},\hat{T})$ that intersects the disk $\Delta.$ The cycles
		correspond to the first $\funref{label_tourterelles}(c,\ell)$
		layers of $W.$ The black vertices are the vertices in $Q,$ while the circled vertices inside the turquoise  area are the ``new'' branch vertices of $\tilde{T}$ that are vertices of $W.$ The ``green clouds'' are the non-singleton images of the $Q$-respecting contraction-mapping of ${\sf diss}(M,T)$ to ${\sf diss}(\hat{M},\hat{T}).$ We stress that in this picture, the way the model enters the turquoise area does not reflect the fact $\tilde{M}$ that is $(s,I)$-confined in ${\cal A}',$ as it is argued in the proof. We opted not to reflect this fact in the figure as we prioritized the visualization of other, more important, aspects of the proof.}
	\label{label_palsanquienne}
	\vspace{.35cm}
\end{figure}

\begin{proof}% \ig{in the whole proof, replace (carefully) ``$l$'' by ``$\ell$''}
	%\sed{But these two are different variables!}
	Let  $r=\funref{label_inevitability}({\ell})+1,$  $s={\sf odd}(\funref{label_endeavouring}(c,r,{3\ell}+r)),$
	%\blue{OLD: $x={\sf odd}(\max\{\funref{label_enlightening}(\ell)+2+s, \lceil 5/2\cdot  \funref{label_inevitability}(h)\rceil\}),$}
	$x'={\sf odd}(\funref{label_enlightening}(\ell)+2+s),$ $y=\max\{s, \lceil 5/2\cdot  \funref{label_inevitability}(\ell)\rceil\},$ $x={\sf odd}((\ell+1)\cdot x'),$ and  $
		w={\sf odd}(2x+\max\{z,\frac{y}{4}-1\}).$ We will prove the theorem for $\funref{label_tourterelles}(c,\ell)=x,$
	$\funref{label_quitablement}(\ell)=\frac{x'-s}{2}, $ and $\funref{label_presuntuosos}(c,z,\ell)=w.$
	Let  $G$ be a partially $\Delta$-embedded graph,
	whose compass contains a $w$-wall $W$
	with $\bor(\Delta)$ as perimeter. Let also ${\cal C}=[C_{1},\ldots,C_{x}]$ be the first $x$ layers of $W$ and  let $[D_{1},\ldots,D_{x}]$
	(resp. $[\overline{D}_{1},\ldots,\overline{D}_{x}]$)
	be the sequences of the corresponding open (resp. closed) disks  of $\Delta$ bounded by the cycles in ${\cal C}.$
	From \autoref{label_simultaneously}
	there is a collection ${\cal P}=\{P_{1},\ldots,P_{y}\}$ of paths in $W$ such that ${\cal A}=({\cal C},{\cal P})$
	is an $(x,y)$-railed annulus of $W$ where the outer cycle of $C$ is the perimeter of $W$
	and such that the vertices of the compass of the central $z$-subwall of $W$ belong to $D_{x},$ and Property~\emph{5} follows.

	Let $\breve{M}$ be the union of all subdivision paths of $(M,T)$ that intersect $\Delta\cap V(G)$
	and let $\breve{T}$ be the endpoints of these paths. Moreover, we denote $\breve{H}={\sf diss}(\breve{M},\breve{T})$ and observe that $\breve{H}$ is a subgraph of $H.$ Intuitively,
	$\breve{H}$ is the subgraph of $H$
	whose topological minor
	model $(\breve{M},\breve{T})$ is the part
	of $(M,T)$ that intersects the closed disk $\Delta.$
	As the $\Delta\cap V(G)$-intrusion of $(M,T)$
	in $G$  is at most $\ell,$ the same bound
	applies to the $\Delta\cap V(G)$-intrusion of $(\breve{M},\breve{T})$ in $G.$
	This in turn implies that $|\breve{T}\cap \Delta|\leq \ell$ and that $|E(\breve{H})|\leq \ell.$

	Since, $x= (\ell+1)\cdot x',$ there is a $b\leq \ell\cdot x'+1\leq x$ such that $\Bbb{A}:={\sf ann}({\cal C}, b,b+x'-1)$ does not contain any vertex of $T.$
	We define $T^{\rm out}=\breve{T}\setminus \overline{D}_{b}$ and $T^{\rm in}=\breve{T}\cap D_{b+x'-1}.$
	Clearly, $\{T^{\rm out},T^{\rm in}\}$ is a partition of $\breve{T}.$

	We set ${\cal A'}=([C_{b}, \ldots, C_{b+x'-1}],{\cal P}\cap \Bbb{A}).$
	By applying \autoref{label_interessiert} on  $s,$ $\breve{H},$ $g:=\ell,$  the $\Delta$-boundaried graph $G,$  the $(x',y)$-railed annulus ${\cal A}',$ the {\sf tm}-pair $(\breve{M},\breve{T}),$   and the set $I=[r],$ we have that
	$G$ contains a topological minor model $(\tilde{M},\breve{T})$ of $\breve{H}$ in $G$ such that
	$\tilde{M}$ is $(s,I)$-confined in ${\cal A}'$ and  $\tilde{M}\setminus \ann({\cal A}')\subseteq {\breve{M}}\setminus \ann({\cal A}').$
	We enhance $\tilde{M}$ by adding to it
	all subdivision paths of $(M,T)$ that are not intersecting $\Delta.$ That way, we have that
	$(\tilde{M},T)$ is a topological minor model of $H$ in $G$ 	such that
	$\tilde{M}$ is $(s,I)$-confined in ${\cal A}'$ and  $\tilde{M}\setminus \ann({\cal A}')\subseteq {{M}}\setminus \ann({\cal A}').$

	Let $p=b+\frac{x'-s}{2}$ and $q=b+\frac{x'+s}{2}-1$ and notice that $q\leq x.$
	We set $\Bbb{A}':={\sf ann}({\cal C}, p,q)$
	and we define ${\cal A}'':=([C_{p},\ldots C_q],{\cal P}')$  where ${\cal P}'={\cal P}\cap \Bbb{A}'.$
	Let ${\cal P}’=\{P_{1}’,\ldots,P_{y}'\}.$
	Observe that, from the second property of \autoref{label_interessiert},  the connected components of $\tilde{M} \cap \Bbb{A}'$ are some of the first $r$ paths in ${\cal P}'.$ This means that there is a subset of indices $\{i_{1},\ldots,i_{r'}\}\subseteq I$ such that $\tilde{M}\cap \Bbb{A}'=P'_{i_{1}}\cup\cdots\cup P_{i_{r'}}'.$
	Let $Z=\{z_{i_{1}},\ldots,z_{i_{r'}}\}$ be the set of endpoints of the paths $P'_{i_{1}},\ldots, P_{i_{r'}}'$ that are contained in  $C_{p}.$

	Let $\tilde{M}^{\rm in}=\tilde{M}\cap \overline{D}_{p},$ $\tilde{M}^{\rm out}=(\tilde{M}\setminus D_{p})\setminus E(C_{p}),$
	and  observe that $\tilde{M}=\tilde{M}^{\rm in}\cup \tilde{M}^{\rm out}$
	and that $Z=V(\tilde{M}^{\rm in})\cap V(\tilde{M}^{\rm out}).$
	%Notice also that $Z':=\{z_{i_{1}},\ldots,z_{i_{r'}}\}$
	Moreover, all vertices of $Z$  have  degree one in both $\tilde{M}^{\rm in}$ and $\tilde{M}^{\rm out}.$
	Let $\tilde{H}^{\rm in}$ (resp. $\tilde{H}^{\rm out}$) be the
	graph obtained from $\tilde{M}^{\rm in}$  (resp. $\tilde{M}^{\rm out}$) by
	dissolving all vertices except from those in $T^{\rm in} \cup Z$  (resp.~$T^{\rm out} \cup Z$).
	Note that $(\tilde{M}^{\rm in},T^{\rm in}\cup Z)$  (resp. $(\tilde{M}^{\rm out},T^{\rm out}\cup Z)$) is a topological minor model of $\tilde{H}^{\rm in}$ (resp.~$\tilde{H}^{\rm out}$).

	Notice that $\tilde{H}^{\rm in}$ has vertex set $T^{\rm in}\cup Z$ and
	can
	be seen as a $D$-embedded graph, for some closed disk $D,$ on at most $\ell+r$ edges where $\bor(D)\cap H=Z$ and $(z_{i_{1}},\ldots,z_{i_{r'}})$ is the ordering of the
	vertices of $Z$ as they appear in $C_{p}.$
	Observe now that
	$\tilde{H}^{\rm in}$ can be seen as the contraction of another $D$-embedded graph $\hat{H}^{\rm in}$ with detail at most $3\ell+r$ that has maximum degree at most three. Moreover, we can assume
	that the vertices of $\tilde{H}^{\rm in}$ that have degree at most three are also vertices of $\hat{H}^{\rm in}$ that are not affected by the contractions while transforming $\hat{H}^{\rm out}$ to $\tilde{H}^{\rm out}.$ This implies that there is a $Q$-respecting contraction-mapping  of $\tilde{H}^{\rm out}$ to $\hat{H}^{\rm out}.$
	Again, in the embedding of $\hat{H}^{\rm in}$ in $D,$ $(z_{i_{1}}\ldots,z_{i_{r'}})$ is the ordering of the
	vertices of $Z$ as they appear in $\bor(D).$

	Keep in mind that $\tilde{H}^+:=\tilde{H}^{\rm out}\cup \tilde{H}^{\rm in}$ is a minor of
	$\hat{H}^{+}:=\tilde{H}^{\rm out}\cup \hat{H}^{\rm in}$
	and that, if we dissolve in $\tilde{H}^+$ all the vertices in $Z,$ we obtain $H.$
	Also let $\hat{H}$ be the graph obtained if we dissolve in $\hat{H}^{+}$ all the vertices in $Z.$ Clearly, $\hat{H}$ is a minor of $H.$

	We now apply \autoref{label_dispenseront} for $c,r,r',3\ell,$ the $D$-embedded  graph $\hat{H}^{\rm in},$ the set $Z,$
	and the $(s,y)$-railed annulus ${\cal A}''$ of the $\overline{D}_{p}$-disk embedded graph $G\cap \overline{D}_{p}$ and obtain
	a {\sf tm}-pair $(\hat{M}^{\rm in},\hat{T}^{\rm in})$ of $G({\cal A}'')$ that is a topological minor model of $\hat{H}^{\rm in}$ and such that
	for each $j\in[r'],$ the function $\sigma_{\hat{M}^{\rm in},\hat{T}^{\rm in}}$
	maps  vertex $z_{i_{j}}$ to itself.
	%and  $\hat{T}^{\rm in}$ is well-scattered in $G({\cal A}'').$
	Notice that $G({\cal A}'')$ is a subgraph of $W\cap {\sf ann}({\cal C},p,q).$
	%therefore, $\hat{M}^{\rm in}\subseteq W\cap {\sf ann}({\cal C}_{b+\funref{label_quitablement}(c,\ell),l})$ (recall that ${\sf ann}({\cal C}_{p,q})\subseteq {\sf ann}({\cal C}_{b+\funref{label_quitablement}(c,\ell),l})$).
	%and $(\hat{M},\hat{T})\Cap D_{b+\funref{label_quitablement}(c,\ell)}$ is a {\sf tm}-pair of $W$
	From the second property of  \autoref{label_dispenseront},  $(\hat{M}^{\rm in},\hat{T}^{\rm in})$ is  safely $c$-dispersed in $W\cap {\sf ann}({\cal C},p,q).$
	From the third property of \autoref{label_dispenseront}, it follows that
	none of the vertices of $\hat{T}^{\rm in}\setminus\{w_{i_1},\ldots,w_{i_{r'}}\}$ is within face-distance less than $c$
	from some vertex of $C_{p}\cup C_{q}$ in $W\cap {\sf ann}({\cal C},p,q).$

	We now consider the graph $\hat{M}=\hat{M}^{\rm in}\cup \tilde{M}^{\rm out}.$ Property {\it 3} follows by the conclusions of the previous paragraph.
	Moreover,
	$\hat{M}$ does not intersect $D_{q}$ and, as $q\leq x,$  it neither intersects $D_{x},$
	%Therefore $\hat{M}$ does not intersect the vertices of the compass of the central $z$-wall of $W,$
	hence  Property {\it 4} holds.
	Notice also that  $\tilde{M}\setminus \ann({\cal A}')\subseteq {M}\setminus \ann({\cal A}')$ implies $\tilde{M}\setminus \overline{D}_{b}\subseteq {M}\setminus \overline{D}_{b}.$ This along with the fact that  $\hat{M}\setminus \overline{D}_{b}=\tilde{M}\setminus \overline{D}_{b},$ yield Property {\it 1}.
	% and, moreover,  $\hat{T}$ is well-scattered in $G({\cal A}'').$

	Observe that $(\hat{M}, \hat{T}^{\rm in}\cup T^{\rm out}\cup Z)$ is a topological minor model
	of $\hat{H}^+,$ which in turn implies that  $(\hat{M}, \hat{T}^{\rm in}\cup T^{\rm out})$ is a topological minor model
	of $\hat{H}.$ We now set $\hat{T}=\hat{T}^{\rm in}\cup T^{\rm out}.$ As  there is a $Q$-respecting contraction-mapping  of $\tilde{H}^{\rm out}$ to $\hat{H}^{\rm out},$ we also have that
	there is a $Q$-respecting contraction-mapping  of $H={\sf diss}(M,T)$ to $\hat{H}={\sf diss}(\hat{M},\hat{T})$ and Property {\it 6} holds.
	As  $\hat{T}^{\rm in}\subseteq \inter({\sf ann}({\cal A}'')\subseteq D_{p} =D_{b+\funref{label_quitablement}(\ell)}$ and $T^{\rm out}\subseteq G\setminus \overline{D}_{b},$ we deduce that $\hat{T}\in G\setminus{\sf ann}({\cal C},b,b+\funref{label_quitablement}(\ell) -1),$ which together with the fact that ${\sf ann}({\cal C},b,b+x'-1)$) does not contain any vertex of $T,$ yield Property {\it 2}.

	To conclude the proof, let us provide upper bounds on the claimed functions. By definition, it holds that
	\begin{eqnarray*}
		\funref{label_tourterelles}(c,\ell) & =& \Ocal( \ell \cdot (\funref{label_enlightening}(\ell)+\funref{label_endeavouring}(c,\funref{label_inevitability}(\ell)+1,3\ell+\funref{label_inevitability}(\ell)+1))),\\
		\funref{label_quitablement}(c,\ell) &=& \Ocal(\funref{label_enlightening}(\ell)
		+\funref{label_endeavouring}(c,\funref{label_inevitability}(\ell)+1,3\ell+\funref{label_inevitability}(\ell)+1)), \text{ and}\\
		\funref{label_presuntuosos}(c,z,\ell) &= &\Ocal(\ell \cdot \big(\funref{label_enlightening}(\ell) + \funref{label_endeavouring}(c,\funref{label_inevitability}(\ell)+1,3\ell+\funref{label_inevitability}(\ell)+1)\big) +\\
		& & z + \funref{label_inevitability}(\ell) + \funref{label_endeavouring}(c,\funref{label_inevitability}(\ell)+1,3\ell+\funref{label_inevitability}(\ell)+1)).
	\end{eqnarray*}
	%\red{
	%\item $\funref{label_presuntuosos}(c,z,\ell)= \Ocal(\ell \cdot \big(\funref{label_enlightening}(\ell) + \funref{label_endeavouring}(c,\funref{label_enlightening}(\ell)+1,3\ell+\funref{label_enlightening}(\ell)+1)\big) +z ).$
	%}
	Since by \autoref{label_interessiert} we have that $\funref{label_enlightening}(\ell)=\Ocal((\funref{label_inevitability}(\ell))^2),$ by \autoref{label_dispenseront} we have that $\funref{label_endeavouring}(c,r,\ell)=\Ocal(cr(\ell+r)),$ and by \autoref{label_constitueraient}
	we may assume that $\ell = \Ocal(\funref{label_inevitability}(\ell)),$ the above can be simplified to
	\begin{eqnarray*}
		\funref{label_tourterelles}(c,\ell) &=&\Ocal( \ell \cdot \big((\funref{label_inevitability}(\ell))^2 + c  \cdot  (\funref{label_inevitability}(\ell))^2\big))\\
		&=& \Ocal( c \cdot (\funref{label_inevitability}(\ell))^3),\\
		\funref{label_quitablement}(c,\ell) &=&\Ocal( c \cdot (\funref{label_inevitability}(\ell))^2), \text{ and}\\
		\funref{label_presuntuosos}(c,z,\ell) &=&  \Ocal(\ell \cdot \big((\funref{label_inevitability}(\ell))^2 +  c \cdot   (\funref{label_inevitability}(\ell))^2 \big)+ z)\\
		&=& \Ocal( c \cdot (\funref{label_inevitability}(\ell))^3   + z),
	\end{eqnarray*}
	and the theorem follows.
\end{proof}

\subsection{Levelings and well-aligned flatness pairs}
\label{label_disintegrated}

Let $G$ be a graph and let $(W,\frak{R})$ be a flatness pair of $G.$
Let also  $\frak{R}=(X,Y,P,C,\Gamma,\sigma,\pi),$  where $(\Gamma,\sigma,\pi)$ is an $\Omega$-rendition of $G[Y]$ and $\Gamma=(U,N)$ is a $\Delta$-painting.
%We call $(X,Y)$ the {\em separation certifying the flat wall $W$} and $X\cap Y$ is called the  {\em frontier} of $W.$
The {\em ground set} of $W$ in ${\frak{R}}$ is ${\sf ground}_{\frak{R}}(W):=\pi(N(\Gamma))$ and we refer to the vertices of this set as the {\em ground vertices} of the $\frak{R}$-compass of $W$ in $G.$
Notice  that ${\sf ground}_{\frak{R}}(W)$ may  contain vertices
of ${\sf compass}_{\frak{R}}(W)$ that are not necessarily vertices of $W.$

In the flatness pairs of  \autoref{label_grandiloquents}   and
\autoref{label_indubitadamente} the ground vertices are the vertices on the boundaries of
the green cells. (Notice also that the  flatness pair in \autoref{label_grandiloquents} is regular, while the one in \autoref{label_indubitadamente}  is not.)

\begin{figure}[t]
	\begin{center}
		\includegraphics[width=11cm]{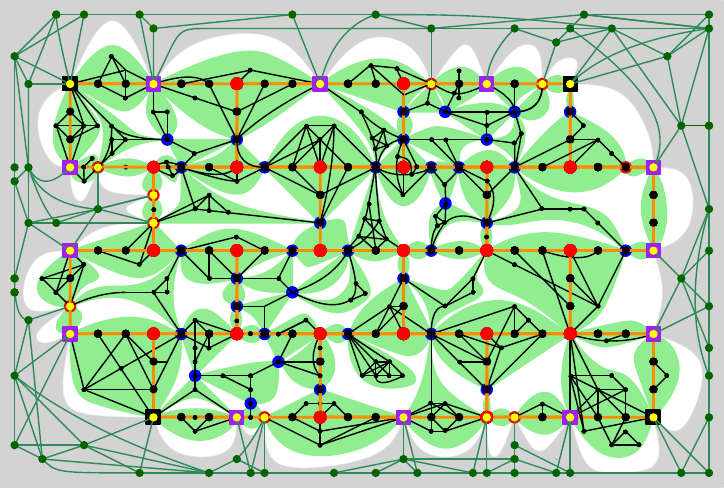}
	\end{center}
	\caption{A graph $G$ and a  regular flatness pair  $(W,\frak{R})$ of  $G.$}
	\label{label_grandiloquents}
\end{figure}

\paragraph{Levelings.}
We define  the $\frak{R}$-{\em leveling}  of $W$ in $G,$
denoted by ${W}_{\frak{R}},$ as the bipartite graph
where  one part is the ground set of $W$ in $\frak{R},$ the  other part is a set ${\sf vflaps}_{\frak{R}}(W)=\{v_{F}\mid F\in {\sf flaps}_{\frak{R}}(W)\}$ containing one new vertex $v_{F}$ for each flap  $F$ of $W$ in $\frak{R},$
and, given  a pair $(x,F)\in {\sf ground}_{\frak{R}}(W)\times {\sf flaps}_{\frak{R}}(W),$   the set $\{x,v_F\}$ is an edge of ${W}_{\frak{R}}$ if and only if
$x\in \partial F.$ We call the vertices of ${\sf ground}_{\frak{R}}(W)$ (resp. ${\sf vflaps}_{\frak{R}}(W)$) {\em ground-vertices} (resp. {\em flap-vertices}) of ${W}_{\frak{R}}.$
%Again, keep in mind that ${W}_{\frak{R}}$ may contain (many) vertices that are not in $W.$
Notice that the incidence graph of the plane hypergraph $(N(\Gamma),\{\tilde{c}\mid c\in C(\Gamma)\})$ is isomorphic to ${W}_{\frak{R}}$
via an isomorphism that extends  $\pi$ and, moreover, bijectively corresponds cells to flap-vertices.
This permits us to treat ${W}_{\frak{R}}$ as a $\Delta$-embedded graph where  $\bd(\Delta)\cap {W}_{\frak{R}}$ is the set $X\cap Y.$
As an example,  see \autoref{label_outbuildings} for the $\frak{R}$-leveling corresponding to the flatness pair $(W,\frak{R})$ in \autoref{label_grandiloquents}.
%The graph $W^\bullet$ can be seen as a ``slight variant'' of $W$  that will be important at the end of this subsection for representing
%$W$ by a wall in $W_{\frak{R}}.$

The following observation is a consequence of the definition of leveling and condition
\ref{label_communication}
of the tightness property of a rendition.
\begin{observation}\label{label_subdivisiones}
	Let $G$ be a graph, let $(W,\frak{R})$ be a flatness pair of $G,$ and let ${W}_{\frak{R}}$ be the leveling of $W$ in $G.$ For every $v_F \in {\sf vflaps}_{\frak{R}}(W)$ of degree $r$ in ${W}_{\frak{R}},$ there exist $r$ internally vertex-disjoint paths in ${W}_{\frak{R}}$ from $v_F$ to $r$ distinct ground-vertices of ${W}_{\frak{R}}$ that belong to the perimeter of $W.$
\end{observation}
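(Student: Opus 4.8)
The plan is to lift condition \ref{label_communication} of the tightness property of the rendition $(\Gamma,\sigma,\pi)$ from ${\sf compass}_{\frak{R}}(W)$ to the leveling $W_{\frak{R}}$, by taking a family of vertex-disjoint paths guaranteed by that condition and replacing, in each of them, every maximal stretch lying inside a single flap by a length-$2$ detour through the corresponding flap-vertex. Write $\frak{R}=(X,Y,P,C,\Gamma,\sigma,\pi)$ with $\Gamma=(U,N)$, and let $c\in C(\Gamma)$ be the cell with $\sigma(c)=F$. Since $W_{\frak{R}}$ is isomorphic (via an extension of $\pi$) to the incidence graph of the hypergraph $(N(\Gamma),\{\tilde{c}\mid c\in C(\Gamma)\})$, the $r=\deg_{W_{\frak{R}}}(v_F)$ neighbours of $v_F$ in $W_{\frak{R}}$ are precisely the elements of $\pi(\tilde{c})=\partial F$, which form an $r$-set with $r=|\tilde{c}|\le 3$; write $\pi(\tilde{c})=\{x_1,\dots,x_r\}$. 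Applying condition \ref{label_communication} to the cell $c$ yields $r$ pairwise vertex-disjoint paths $P_1,\dots,P_r$ in $G[Y]={\sf compass}_{\frak{R}}(W)$ from $\pi(\tilde{c})$ to $V(\Omega)$; since they are disjoint and hit all $r$ vertices of $\pi(\tilde{c})$, we may index them so that $P_j$ runs from $x_j$ to some $y_j\in V(\Omega)$.

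The re-routing rests on one elementary fact, which I would prove first from the defining properties of an $\Omega$-rendition (in particular properties (1), (2) and (4)): every non-ground vertex of $G[Y]$ lies in a unique flap and has all its incident edges inside that flap. Consequently, on any path $P$ of $G[Y]$, every maximal subpath between two consecutive ground-vertices $u,u'$ lies entirely inside a single flap $\sigma(c')$, and $u,u'\in\partial\sigma(c')=\pi(\tilde{c'})$. Replacing each such subpath $u-\cdots-u'$ of $P_j$ with the walk $u-v_{\sigma(c')}-u'$ in $W_{\frak{R}}$ turns $P_j$ into a walk $\pi_j$ from $x_j$ to $y_j$. This walk is in fact a simple path: its ground-vertices all lie on $P_j$, hence are distinct, and its flap-vertices are distinct because $P_j$ meets any flap $\sigma(c')$ in at most one maximal ``run'' of edges — a run has two distinct endpoints, each of which is a ground-vertex of $\sigma(c')$ and hence lies in $\pi(\tilde{c'})$, so two runs would force $|\tilde{c'}|\ge 4$, contradicting $|\tilde{c'}|\le 3$.

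Next I would prepend $v_F$ to obtain a walk $\pi_j'=v_F\,x_j\,\pi_j$ from $v_F$ to $y_j\in V(\Omega)$, and check that it is again a simple path. If $P_j$ is the trivial path $\{x_j\}$, then $x_j=y_j\in V(\Omega)$ and $\pi_j'=v_F x_j$; otherwise $P_j$ has no run inside $F=\sigma(c)$, since such a run would have both endpoints in $\pi(\tilde{c})=\{x_1,\dots,x_r\}$, but no $x_m$ is an internal vertex of $P_j$ (for $m\neq j$ it is off $P_j$ by disjointness, for $m=j$ it is the starting endpoint), so the run would have to be all of $P_j$, making $P_j$ a closed walk, which is impossible. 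Hence $v_F$ does not reappear inside $\pi_j'$. Finally, $\pi_1',\dots,\pi_r'$ are internally vertex-disjoint: they share $v_F$; their internal ground-vertices lie on the pairwise disjoint paths $P_1,\dots,P_r$; and a flap-vertex $v_{\sigma(c')}$ shared by $\pi_i'$ and $\pi_j'$ with $i\neq j$ would produce a run of $P_i$ and a run of $P_j$ inside $\sigma(c')$, contributing four distinct vertices to $\pi(\tilde{c'})$, again impossible. Since $V(\Omega)=\pi(N(\Gamma)\cap\bd(\Delta))\subseteq\pi(N(\Gamma))\cap V(D(W))$ by property (5) of the rendition together with the definition of a flat wall, the vertices $y_1,\dots,y_r$ are $r$ distinct ground-vertices on the perimeter of $W$, and the observation follows.

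I expect the main obstacle to be the middle argument: one must ensure that the naive flap-by-flap detour yields genuine internally vertex-disjoint \emph{simple} paths rather than mere walks. Every possible failure — a flap-vertex repeated within one path, a flap-vertex shared by two of the paths, or $v_F$ recurring inside some $\pi_j'$ — is ruled out by a counting argument against the cardinality bound $|\tilde{c}|\le 3$ for cells of a $\Delta$-painting, which caps how often the vertex-disjoint paths $P_1,\dots,P_r$ can enter and leave any single flap.
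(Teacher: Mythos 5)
Your route is exactly the one the paper intends: \autoref{label_subdivisiones} is stated without a written proof, as ``a consequence of the definition of leveling and condition~\ref{label_communication} of the tightness property of a rendition'', and your write-up is precisely that derivation carried out in detail (lift the $|\tilde{c}|$ disjoint paths from $\pi(\tilde{c})$ to $V(\Omega)$ into the leveling by detouring each flap-traversal through the corresponding flap-vertex). There is, however, one spot where your argument as written does not quite close. To show that each walk $\pi_j$ is a simple path you argue that $P_j$ meets any flap $\sigma(c')$ in at most one maximal run, since two runs would contribute four distinct vertices to $\pi(\tilde{c'})$. That count is correct for two vertex-disjoint runs, but two \emph{consecutive} runs of $P_j$ — sharing their middle ground-vertex $u'$ — can both lie in the same flap while contributing only three distinct vertices of $\pi(\tilde{c'})$, which is perfectly compatible with $|\tilde{c'}|\le 3$; in that case your detour produces the walk $u - v_{\sigma(c')} - u' - v_{\sigma(c')} - u'',$ which repeats the flap-vertex $v_{\sigma(c')}$. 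The fix is immediate: merge consecutive runs lying in the same flap into the single detour $u - v_{\sigma(c')} - u''$ (equivalently, pass to the simple path contained in the walk $\pi_j'$). This patch does not disturb your cross-path disjointness argument, which only uses that runs of two \emph{distinct} (hence vertex-disjoint) paths $P_i,P_j$ in a common flap would force four boundary vertices; with it, the proof is correct.
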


\begin{figure}[t]
	\begin{center}
		\includegraphics[width=11cm]{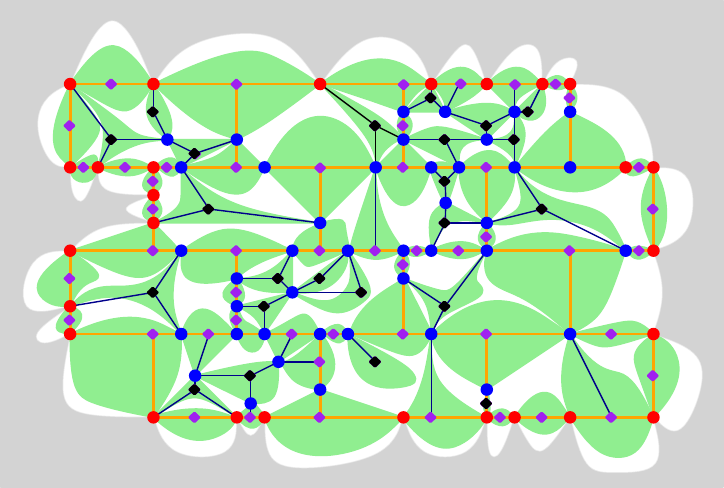}
	\end{center}
	\caption{The leveling $W_{\frak{R}}$ corresponding to the  regular flatness pair $(W,\frak{R})$ in \autoref{label_grandiloquents}. The ground-vertices of $W_{\frak{R}}$
	are the circled vertices while the flap-vertices are the rhombic vertices. The representation $R_{W}$ of $W$ in $W_{\frak{R}}$
	is obtained from $W_{\frak{R}}$ after removing the black squared vertices.
	The ground vertices in $\bd(\Delta)\cap {W}_{\frak{R}}=\bd(\Delta)\cap {W}_{\frak{R}}$ are depicted in red.}
	\label{label_outbuildings}
\end{figure}

\paragraph{Well-aligned flatness pairs.}
{We denote by $W^{\bullet}$ the  graph obtained from $W$ if we subdivide {\em once} every
	edge of $W$ that is {short} in ${\sf compass}_{\frak{R}}(W).$}
The graph $W^\bullet$  is  a ``slightly richer variant'' of $W$  that is necessary for our definitions and  proofs, namely to be able to associate  every flap-vertex of  an appropriate subgraph of $W_{\frak{R}}$ (that we will denote by $R_{W}$) with  a non-empty path of $W^\bullet,$ as we proceed to formalize.
We say that $(W,\frak{R})$ is {\em well-aligned} if the following holds:
\begin{quote}
	$W_{\frak{R}}$ contains as a subgraph an $r$-wall $R_{W}$
	where ${D(R_{W})}=D({W}_{\frak{R}})$ and $W^{\bullet}$  is isomorphic to some subdivision of  $R_{W}$
	via an isomorphism that maps each ground vertex to itself.
\end{quote}
Suppose now that the flatness pair $(W,\frak{R})$ is well-aligned.
We call the wall  $R_{W}$ in the above condition  a {\em representation} of $W$ in $W_{\frak{R}}.$ Note that, as $R_{W}$ is a subgraph of $W_{\frak{R}},$ it is bipartite as well.  The above property gives us a way to represent a flat wall by a wall of its leveling
in a way that ground vertices are not altered.
The following proposition, proved in~\cite{accurate}, indicates that such
a representation is yielded by regularity.

\begin{proposition}
	\label{label_esaminadores}
	Every  regular  flatness pair $(W,\frak{R})$ of a graph $G$  is  well-aligned.
	%Moreover, there is an {$\Ocal(n)$} time algorithm that, given $G$ and such a $({W},\frak{{R}}),$ outputs a representation $R_{W}$
	%of $W$ in $W_{\frak{R}}.$
\end{proposition}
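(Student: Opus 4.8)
The plan is to build the wall $R_{W}$ explicitly as the ``trace'' of $W$ inside the leveling $W_{\frak{R}}$, and then to verify that passing to the leveling turns $W^{\bullet}$ into a subdivision of $R_{W}$ fixing the ground vertices. Throughout I write $\frak{R}=(X,Y,P,C,\Gamma,\sigma,\pi)$ with $\Gamma=(U,N)$ a $\Delta$-painting, and I use that the flaps $\{\sigma(c):c\in C(\Gamma)\}$ are pairwise edge-disjoint, that their union is ${\sf compass}_{\frak{R}}(W)=G[Y]$, and that two distinct flaps meet only in ground vertices (condition~(4) of a rendition). Call a flap $F$ a \emph{wall-flap} if it contains an edge of $W$, and for such $F$ let $W_{F}:=F\cap W$ be the trace of $W$ in $F$. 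I will take for $R_{W}$ the subgraph of $W_{\frak{R}}$ whose vertices are the ground vertices lying on $W$ together with the flap-vertices $v_{F}$ of the wall-flaps, and whose edges are the pairs $\{x,v_{F}\}$ with $x$ an endpoint of a component of $W_{F}$.

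The heart of the matter, and the step I expect to be the main obstacle, is a precise description of the traces. I would prove that regularity forces, for every wall-flap $F$, exactly one of the following: (i) $F$ is trivial and $W_{F}$ is a single short edge joining the two vertices of $\partial F$; (ii) $|\partial F|=2$, say $\partial F=\{u,v\}$, and $W_{F}$ is a $u$--$v$ path of length at least two whose internal vertices are non-ground degree-$2$ vertices of $W$; or (iii) $|\partial F|=3$, say $\partial F=\{u,v,w\}$, and $W_{F}$ is a subdivided tripod whose centre is a non-ground degree-$3$ vertex $x_{F}$ of $W$ and whose leaves are $u,v,w$, the remaining internal vertices being non-ground degree-$2$ vertices of $W$. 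The ingredients are: since $\partial F$ separates $V(F)\setminus\partial F$ from $W\setminus V(F)$ and $W$ has minimum degree $2$, the trace $W_{F}$ is connected with all its attachment points inside $\partial F$, and no cycle of $W$ can lie inside a single flap since every cycle of $W$ is attached to the rest of $W$ at more than three vertices, so $W_{F}$ is a tree with at most three leaves; tightness of the rendition (its condition~(i)) makes every edge of $G$ between two ground vertices its own trivial flap, so no non-trivial wall-flap contains such an edge and all trace paths in (ii)--(iii) have length at least two; and tidiness of $F$ forbids two wall-edges incident to a common ground vertex from both lying in $F$, so each vertex of $\partial F\cap V(W)$ is an endpoint of exactly one edge of $W_{F}$ while each non-ground vertex of $W_{F}$ has all its incident wall-edges inside $F$. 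The residual degenerate configurations, which all occur along $D(W)$, are precisely those excluded by non-marginality and non-externality of $\frak{R}$ (for instance a $W$-perimetric cell whose interior side sees only two of its three ground vertices); verifying these perimeter cases carefully in the $\Delta$-embedding of $W_{\frak{R}}$ is the delicate part, and it is here that all three regularity conditions, together with tightness and the structure of walls, are needed.

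Granting this structure, the rest is bookkeeping. From the trace description, in $R_{W}$ a flap-vertex $v_{F}$ has as neighbours exactly the endpoints of $W_{F}$ -- two of them in cases (i)--(ii), three in case (iii) -- while a ground vertex $x\in V(W)$ has exactly $\deg_{W}(x)$ neighbours, one $v_{F}$ for each wall-edge at $x$, all distinct by tidiness; hence $R_{W}$ has minimum degree $2$, its degree-$3$ vertices are precisely the flap-vertices of case-(iii) wall-flaps together with the $3$-branch vertices of $W$ that are ground vertices, and all other vertices have degree $2$. I would then define a map $\phi$ on $V(W^{\bullet})$ by sending a ground $3$-branch vertex of $W$ to itself, the centre $x_{F}$ of a case-(iii) wall-flap to $v_{F}$, the vertex subdividing a short edge $e$ (its own trivial flap $F_{e}$) to $v_{F_{e}}$, and every remaining vertex of $W^{\bullet}$ -- necessarily an internal vertex of some trace path -- to an interior point of the corresponding edge of $R_{W}$, with one internal vertex of each length-at-least-two trace path chosen as the preimage of the degree-$2$ vertex $v_{F}$. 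Since the trace paths are internally disjoint, cover $W^{\bullet}$, and are in bijection with the edges of $R_{W}$, this $\phi$ is an isomorphism from $W^{\bullet}$ onto a subdivision of $R_{W}$, and it fixes every ground vertex by construction.

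Finally I would observe that $R_{W}$ is an $r$-wall: $W^{\bullet}$ is a subdivision of the $r$-wall $W$, so its $3$-branch vertices and the internally disjoint paths joining them realize an elementary $r$-wall, and $\phi$ transports this incidence pattern verbatim onto the degree-$3$ vertices and joining paths of $R_{W}$, so $R_{W}$ too is a subdivision of an elementary $r$-wall. For the perimeter identity $D(R_{W})=D(W_{\frak{R}})$ I would use that, since $\frak{R}$ has no $W$-external and no $W$-marginal cell, the perimetric wall-flaps together with the perimeter ground vertices of $W$ bound exactly the disk $\Delta$ in the $\Delta$-embedding of $W_{\frak{R}}$, i.e.\ they form its boundary cycle, which is $D(W_{\frak{R}})$ and which under $\phi$ is the image of $D(W^{\bullet})=D(W)$. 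Putting these together shows that $(W,\frak{R})$ is well-aligned, as required.
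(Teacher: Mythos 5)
First, a point of order: the paper does not actually prove this proposition — it is imported from the cited reference \cite{accurate} ("The following proposition, proved in~\cite{accurate}, \dots"), so there is no in-paper proof to compare yours against. I can only assess your attempt on its own terms. Your overall strategy — taking $R_W$ to be the trace of $W$ in the leveling, using tidiness to control how wall-edges meet ground vertices, tightness to handle short edges, and then lifting $W^{\bullet}$ along the edge-partition of $W$ into the traces $W_F$ — is the natural construction and is structurally the right way to prove the statement.

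There is, however, a genuine gap in exactly the step you flag as the heart of the matter. The trichotomy (i)--(iii) is not exhaustive, and the escape clause you attach to it — that every residual configuration is perimetric and is excluded by non-marginality and non-externality — is false. A non-trivial wall-flap $F$ with $\partial F=\{u,v,w\}$ can perfectly well have $W_F$ equal to a $u$--$v$ path of length at least two that never touches $w$: the wall simply crosses the cell without using its third boundary point. Such cells can sit arbitrarily deep in the interior of the wall; tidiness, tightness, and non-marginality (which is only even defined for perimetric cells) say nothing about them. The case happens to be benign — $v_F$ just becomes a degree-two subdivision vertex of $R_W$ — but your proof asserts a false dichotomy and then derives the degree profile of $R_W$ ("three neighbours exactly in case (iii)") from it, so the argument as written does not go through without repairing the case analysis. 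Two further, smaller issues: your description of $\phi$ omits the ground vertices of degree two in $W$, which are shared endpoints of two trace paths rather than internal vertices of one, so they are not covered by your catch-all clause even though they are vertices of $R_W$ that must be fixed; and the identity $D(R_W)=D(W_{\frak{R}})$ — the only place where $W$-externality and $W$-marginality actually do any work, and hence the part of the statement that genuinely needs regularity rather than just tidiness — is asserted in a single sentence rather than verified. Since that perimeter analysis is precisely what separates regular flatness pairs from arbitrary ones, it needs the same level of care you (rightly) promise for the interior cells.
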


Notice that both $W_{\frak{R}}$ and its subgraph $R_{W}$ can be seen as $\Delta$-embedded graphs where $\bd(\Delta)\cap {W}_{\frak{R}}=\bd(\Delta)\cap R_{W}\subseteq V(D(W_{\frak{R}}))=V(D(R_{W})).$
This establishes a bijection from the set of cycles of $W$  to the set of cycles of $R_W,$
%${\sf rep}_{\frak{R}}(C)$ of $R_{W},$ and observe the following.
%\red{ and from each subwall ${W}'$ of $W$ to a subwall ${\sf rep}_{\frak{R}}({W}')$ of $R_{W}.$}
which allows to reinterpret the homogeneity property of a regular flatness pair in terms of its representation, as stated in the following observation. This translation will be used in the proof of \autoref{label_pretendientes}. Given the $\Delta$-embedded graph $R_{W},$ we define,  for every brick $B$ of $R_W,$  ${\sf vflaps}_{R_W}(B)$ as the flap-vertices of the leveling ${W}_{\frak{R}}$ that belong to the closed disk of the plane bounded by $B$ disjoint from the infinite face. (Recall \autoref{label_disconvenevole} for the definition of the augmented flap ${\bf F}^{A}$ corresponding to a flap-vertex $v_F$ of the leveling ${W}_{\frak{R}}.$)

\begin{observation}\label{label_comprometida}
	If $(A,W,\frak{R})$ is an $(a,r,\ell)$-apex-wall triple of a graph $G$ and $R_W$ is the representation of $W$ in $W_{\frak{R}},$ then for any two internal bricks $B,B'$ of $R_W,$ it holds that
	\begin{equation*}\label{label_commercially}
		\{{\ell}\mbox{\sf -folio}({\bf F}^{A})\mid v_F\in {\sf vflaps}_{R_W}(B) \} = \{{\ell}\mbox{\sf -folio}({\bf F}^{A})\mid v_F\in {\sf vflaps}_{R_W}(B')\}.
	\end{equation*}
\end{observation}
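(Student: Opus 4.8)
The plan is to transport the $\ell$-homogeneity of $(W,\frak{R})$ with respect to $(G,A)$, which is a statement about the internal bricks of $W$ and the flaps influencing them, to the representation $R_W$, using the bijection between the cycles of $W$ and the cycles of $R_W$ described just above the observation. This bijection is available precisely because $(W,\frak{R})$ is regular, hence well-aligned by \autoref{label_esaminadores}; write $\beta$ for it. Recall that it is induced by the isomorphism between $W^{\bullet}$ and a subdivision of $R_W$ that fixes every ground vertex, and note that this isomorphism matches $D(W^{\bullet})$ (the subdivision of $D(W)$) with $D(R_W)=D(W_{\frak{R}})$; hence $\beta$ restricts to a bijection between the bricks of $W$ and the bricks of $R_W$ and, among these, between the internal ones (a brick being internal exactly when it is disjoint from the perimeter). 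Denote by $B\mapsto B_{R}:=\beta(B)$ the induced bijection on internal bricks.

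The core step is to verify that, for every internal brick $B$ of $W$, the correspondence $F\mapsto v_{F}$ between flaps and flap-vertices restricts to a bijection from ${\sf influence}_{\frak{R}}(B)$ onto ${\sf vflaps}_{R_W}(B_{R})$. Here one uses that $W_{\frak{R}}$ is realised as the $\Delta$-embedded incidence graph of the hypergraph $(N(\Gamma),\{\tilde{c}\mid c\in C(\Gamma)\})$, so that each flap-vertex $v_{F}$ lies inside the cell $c$ with $\sigma(c)=F$: a $B$-perimetric cell has some edge of $B$ inside $\sigma(c)$, which through the isomorphism places $v_{F}$ on the cycle $B_{R}$, hence into the closed disk $\Delta_{B_{R}}$ bounded by $B_{R}$ and avoiding the infinite face; a $B$-internal cell lies in the region enclosed by $B_{R}$; a $B$-external cell lies outside it; and conversely every flap-vertex inside $\Delta_{B_{R}}$ comes from a cell that is not $B$-external. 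Consequently ${\sf vflaps}_{R_W}(B_{R})=\{v_{F}\mid F\in{\sf influence}_{\frak{R}}(B)\}$, and therefore, recalling \autoref{label_disconvenevole} for the definition of ${\bf F}^{A}$,
\[
\{{\ell}\mbox{\sf -folio}({\bf F}^{A})\mid v_{F}\in {\sf vflaps}_{R_W}(B_{R})\}\ =\ \{{\ell}\mbox{\sf -folio}({\bf F}^{A})\mid F\in {\sf influence}_{\frak{R}}(B)\}\ =\ (A,\ell)\mbox{\sf -palette}(B).
\]

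To conclude, take two internal bricks $B,B'$ of $R_W$ and let $B^{*},B'^{*}$ be the internal bricks of $W$ with $\beta(B^{*})=B$ and $\beta(B'^{*})=B'$. Since $(A,W,\frak{R})$ is an $(a,r,\ell)$-apex-wall triple, $(W,\frak{R})$ is $\ell$-homogeneous with respect to $(G,A)$, so $(A,\ell)\mbox{\sf -palette}(B^{*})=(A,\ell)\mbox{\sf -palette}(B'^{*})$; combined with the displayed identity applied to $B^{*}$ and to $B'^{*}$, this yields the claimed equality. I expect the only genuinely delicate part to be the core step, namely pinning down exactly which cells are enclosed by $B_{R}$, i.e.\ matching the $\Delta$-embedded cell structure of $\frak{R}$ with the $\Delta$-embedding of $R_W\subseteq W_{\frak{R}}$; as in the proof of \autoref{label_esaminadores}, it is regularity that makes this matching clean, guaranteeing that $R_W$ ``traces out'' the perimetric cells of each brick of $W$ in the same way as the circle $K_{B}$ does.
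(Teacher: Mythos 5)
Your proposal is correct and follows exactly the route the paper intends: the paper states this as an observation with no separate proof, justifying it by the preceding remark that well-alignment (via regularity and \autoref{label_esaminadores}) induces a bijection between the cycles of $W$ and those of $R_W$, under which ${\sf vflaps}_{R_W}(\beta(B))$ corresponds to ${\sf influence}_{\frak{R}}(B)$ and the equality becomes a restatement of $\ell$-homogeneity via the $(A,\ell)$-palettes. Your write-up simply makes explicit the cell-matching step that the paper leaves implicit.
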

Note that, in the above equation, a flap $F$ is notationally associated with both ${\bf F}^{A}$ and $v_F.$

\subsection{Rerouting minors of small intrusion}
\label{label_preoccupation}

Let $W$ be a plane-embedded $r$-wall and $c\geq 1.$
We call a cycle $C$ of $W$ {\em $c$-internal} if $V(C)$ and $V(D(W))$ are within face-distance at least $c.$
Given a $1$-internal cycle $C$ of $W,$ we define its {\em internal pegs} (resp.  {\em external pegs})
as its vertices that are incident to edges of $W$ that belong to the interior (resp. exterior) of $C$ with respect to the embedding of the wall (we see edges as open sets).
Notice that each vertex of $C$ is either an internal or an external peg.

\begin{observation}
	\label{label_deliberadament}
	%There exists a constant $\newcon{sadfdsfdsf}$ such that the following holds.
	Let $W$ be an $r$-wall and let $C_{1}$ and $C_{2}$ be four cycles of $W$ within face-distance at least \red{four} and such that
	$C_{2}$ is a subset of the closed disk bounded by $C_{1}.$ Let $y \in [3],$ let $p_1,\ldots,p_y$  be internal pegs of $C_{1}$ and $\bar{p}_1,\ldots,\bar{p}_y$ be
	external pegs of $C_{2},$ assuming that both these sets of vertices are ordered as they appear in their
	corresponding cycles
	in counter-clockwise order.
	%If the face-distance, in $W,$  between $C_{1}$ and $C_{2}$  is at least $\conref{sadfdsfdsf},$
	Then there are $y$ pairwise vertex-disjoint paths $\hat{P}_1, \ldots,\hat{P}_y$ such that, for $i \in [y],$ $\hat{P}_i$ joins $p_{i}$ with $\bar{p}_{i},$ $V(\hat{P}_i) \cap V(C_1) = \{ p_i \},$ and $V(\hat{P}_i) \cap V(C_2) = \{\bar{p}_i\}.$
\end{observation}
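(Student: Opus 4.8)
The plan is to reduce the whole statement to routing $y\le 3$ vertex-disjoint paths through the ``annular subwall'' lying between $C_{1}$ and $C_{2}$, where the grid-like structure of a wall together with the topology of the annulus makes the argument essentially combinatorial. Let $\Delta_{1}\supseteq\Delta_{2}$ be the closed disks bounded by $C_{1}$ and $C_{2}$ (disjoint cycles, since their face-distance is positive), let $A:=\Delta_{1}\setminus\inter(\Delta_{2})$ be the corresponding closed annulus, and let $W_{A}$ be the subgraph of $W$ drawn inside $A$, so that $C_{1}$ and $C_{2}$ are respectively the outer and the inner boundary cycle of the plane graph $W_{A}$. For each internal peg $p_{i}$ of $C_{1}$ the unique edge witnessing it enters $\inter(\Delta_{1})$ and, as the face-distance between $C_{1}$ and $C_{2}$ is at least four, its other endpoint $p_{i}'$ lies in $A$ and on neither $C_{1}$ nor $C_{2}$; symmetrically each external peg $\bar p_{i}$ of $C_{2}$ has a witnessing edge to some $\bar p_{i}'\in V(W_{A})\setminus(V(C_{1})\cup V(C_{2}))$. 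It then suffices to find, inside $W_{A}^{\circ}:=W_{A}-V(C_{1})-V(C_{2})$, pairwise vertex-disjoint paths $P_{1}',\dots,P_{y}'$ with $P_{i}'$ joining $p_{i}'$ to $\bar p_{i}'$ and realizing the ``right'' matching; prepending and appending the two witnessing edges to each $P_{i}'$ then yields $\hat P_{i}$, which meets $C_{1}$ only at $p_{i}$ and $C_{2}$ only at $\bar p_{i}$.

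To route inside $W_{A}^{\circ}$ I would first fix the topological target. Since $p_{1},\dots,p_{y}$ occur in counterclockwise order along the outer boundary of $A$ and $\bar p_{1},\dots,\bar p_{y}$ in counterclockwise order along the inner boundary, there exist $y$ pairwise disjoint arcs $\gamma_{1},\dots,\gamma_{y}$ in $A$ with $\gamma_{i}$ joining $p_{i}$ to $\bar p_{i}$: pass to the universal cover of $A$ (an infinite strip whose two boundary lines cover the two boundary circles), join a consistent choice of lifts of $p_{i}$ and of $\bar p_{i}$ by straight segments, and observe that all integer translates of these $y$ segments are pairwise disjoint, hence so are their projections. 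These $y$ arcs cut $A$ into $y$ closed topological quadrilaterals. Now the face-distance hypothesis supplies the room: strictly between $C_{1}$ and $C_{2}$ the wall $W_{A}^{\circ}$ contains a (subdivided) cylindrical-grid structure $\mathbb{G}\cong C_{m}\times P_{h}$, with $m$ at least the circumference of the annular wall and with $h$ concentric layers $L_{1},\dots,L_{h}$ (the number of layers being at least the face-distance between $C_{1}$ and $C_{2}$ minus one), together with the ``spokes'' of $W$ joining consecutive layers; moreover $p_{i}'$ and $\bar p_{i}'$ attach to $L_{1}$ and to $L_{h}$, respectively, through $y$ pairwise disjoint short paths of the thin collars between $C_{1}$ and $L_{1}$ and between $L_{h}$ and $C_{2}$, in a way that preserves the counterclockwise cyclic order of the attachment points.

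It remains to route $y\le 3$ vertex-disjoint paths in $\mathbb{G}$ from the attachment points on $L_{1}$ to those on $L_{h}$ realizing the cyclic-order-preserving matching. I would do this by following the arcs $\gamma_{i}$ through $\mathbb{G}$: each $P_{i}'$ performs its radial travel on private spokes and its circumferential adjustment on a private concentric layer, and since $y\le 3$ and the number of layers is large enough, the $\le 3$ routings can be kept pairwise vertex-disjoint while, by construction, the matching they realize is exactly $p_{i}\leftrightarrow\bar p_{i}$. Concatenating these routings in $\mathbb{G}$ with the collar paths and the two witnessing edges gives $\hat P_{1},\dots,\hat P_{y}$ with the three required properties.

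The main obstacle is the structural input of the second and third paragraphs: that the closed disk bounded by a $1$-internal cycle of $W$ still carries the grid-like structure of the wall, and that ``face-distance at least four'' between $C_{1}$ and $C_{2}$ genuinely yields enough pairwise disjoint separating cycles (with their spokes) strictly between them to disjointly route a cyclic-order-preserving linkage of size $y\le 3$, together with the collars needed to hook up the pegs. Making this precise is a disjoint-linkage-with-prescribed-homotopy statement in the annular wall, and it is where the exact value of the constant is pinned down (the \red{four} in the statement being tentative). One must also be slightly careful that the prescribed internal pegs are pairwise far enough apart on $C_1$ (and likewise the external pegs on $C_2$) that their inward-neighbours $p_{i}'$ are distinct, as otherwise two of the $\hat P_i$ would be forced to share a second vertex; in the intended applications this holds because the pegs arise from a safely dispersed model. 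Everything else — the annular non-crossing argument and the disjointness bookkeeping for at most three paths — is routine once the cylindrical-grid structure is available.
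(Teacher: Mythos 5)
The paper states this as an \emph{Observation} and gives no proof at all, so there is nothing to compare your argument against; your annulus-routing sketch (reduce to the annular subwall between $C_1$ and $C_2$, note that the matching $p_i\leftrightarrow\bar p_i$ is topologically realizable because the two cyclic orders agree, and use the face-distance hypothesis to get enough concentric layers to route $y\le 3$ disjoint paths, one layer per path for the circumferential travel) is exactly the routine argument the authors are implicitly invoking. Your two caveats are fair but harmless here: in the paper's only use of this observation (the rerouting step in the proof of \autoref{label_pretendientes}) the cycles are $C_w$ and the perimeter of a brick-neighborhood, the pegs come from pairwise vertex-disjoint paths of a safely dispersed model, and the degenerate configurations you worry about (coinciding inward neighbours, chordal witnessing edges) cannot occur.
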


%\ig{the constant  ``sadfdsfdsf'' is always 2!!!}
%\begin{proof}
%\ig{MISSING PROOF -- Menger?}
%\end{proof}

Given a $1$-internal brick $B$ of $W,$  one can see the union of all bricks of $W$ that have
a common vertex with $B,$ as a subdivision of the graph in the left part of  \autoref{label_incominciommi}.
We call this subgraph $X$ of $W$ the {\em brick-neighborhood} of $B$ in $W.$ The \emph{perimeter}  of a brick-neighborhood
is defined in the obvious way.% (see \autoref{label_incominciommi}).

\begin{figure}[h!]
	\begin{center}%\vspace{-.1cm}
		\includegraphics[width=.85\textwidth]{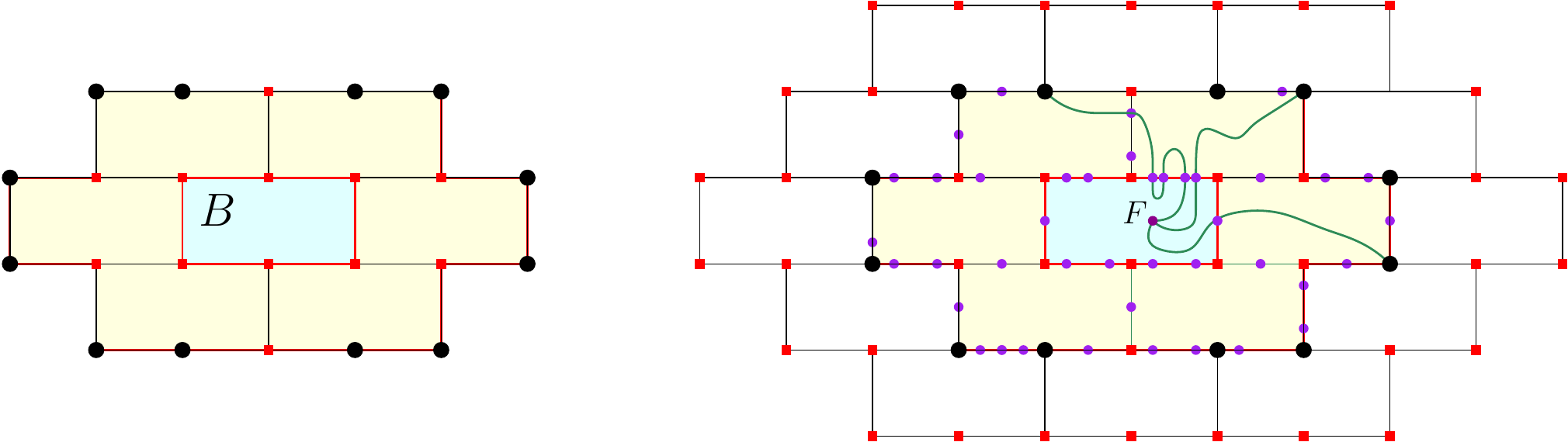}
	\end{center}\vspace{-.25cm}
	\caption{On the left: the base graph for the definition of a brick-neighborhood -- the external pegs of the perimeter  of $X$ are the black round vertices. On the right we depict  a $2$-internal brick $B$ of an $r$-wall $W,$ $r\geq 6,$
		contained as a subgraph in a plane graph $G$ along with  three  internally vertex-disjoint paths from a vertex $F$ of $G$  to the external pegs of the perimeter of the  brick-neighborhood of $B.$}
	\label{label_incominciommi}
	%  \vspace{.35cm}
\end{figure}

The next lemma is based on~\autoref{label_subdivisiones}.
\begin{lemma}
	\label{label_ofrecimiento}
	Let $(W,\frak{R})$ be a well-aligned flatness pair of a graph $G$ and let $R_{W}$  be its representation in the leveling ${W}_{\frak{R}}$ of $W.$
	For every 2-internal brick $B$ of $R_{W}$ and every flap vertex $v_F\in {\sf vflaps}_{\frak{R}}(B),$ $R_{W}$ contains  $|\partial F|$ internally vertex-disjoint paths from $v_F$ to the external pegs of the perimeter of the brick-neighborhood of $B$ in $R_{W}.$ Moreover, these paths belong to the closed
	disk  bounded by the perimeter of the brick-neighborhood of $B$ in $R_W.$
\end{lemma}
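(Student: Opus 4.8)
The plan is to exploit the structural correspondence between the leveling $W_{\frak{R}}$, its wall-representation $R_W$, and the original flat wall $W$, and to reduce the statement to a local application of Observation~\ref{label_subdivisiones} combined with the planar rerouting machinery of Observation~\ref{label_deliberadament}. First I would fix the $2$-internal brick $B$ of $R_W$ and the flap vertex $v_F \in {\sf vflaps}_{\frak{R}}(B)$, and let $d = |\partial F|$, which is the degree of $v_F$ in $W_{\frak{R}}$ (recall that $v_F$ is adjacent precisely to the ground vertices in $\partial F$, and $W_{\frak{R}}$ is bipartite with ground vertices on one side and flap vertices on the other). By Observation~\ref{label_subdivisiones}, there exist $d$ internally vertex-disjoint paths $Q_1,\dots,Q_d$ in $W_{\frak{R}}$ from $v_F$ to $d$ distinct ground vertices on the perimeter $D(W_{\frak{R}}) = D(R_W)$. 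However, these paths live in the whole of $W_{\frak{R}}$ and reach all the way to the outer face, so they are far too ``long''; the point of the lemma is to confine the relevant portion of such paths to the closed disk $\Delta_B$ bounded by the perimeter of the brick-neighborhood of $B$.

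The key step is a disk-confinement argument. Since $v_F \in {\sf vflaps}_{\frak{R}}(B)$, the flap-vertex $v_F$ lies in the closed disk of the plane bounded by $B$ that avoids the infinite face; in particular $v_F$ lies in the interior of $\Delta_B$. Each path $Q_i$ starts at $v_F$ inside $\Delta_B$ and must eventually leave $\Delta_B$ to reach $D(R_W)$; let $\bar{Q}_i$ be the initial segment of $Q_i$ from $v_F$ up to (and including) the first vertex $u_i$ of $Q_i$ that lies on the perimeter of the brick-neighborhood of $B$. Because the $Q_i$ are internally vertex-disjoint and $\Delta_B$ is a disk whose boundary is a cycle of $R_W$, the $\bar{Q}_i$ are still internally vertex-disjoint, lie entirely inside $\overline{\Delta_B}$, and meet the perimeter of the brick-neighborhood only in their endpoints $u_i$. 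It remains to argue that the endpoints $u_i$ can be taken to be, or rerouted to, the external pegs of that perimeter: since $R_W$ is (up to subdivision) an $r$-wall and $B$ is $2$-internal, the perimeter of the brick-neighborhood is a subdivided copy of the fixed base graph in Figure~\ref{label_incominciommi} (left), whose external pegs are exactly the degree-$2$ perimeter vertices that carry the edges leaving $\overline{\Delta_B}$; hence every path from the interior that exits $\overline{\Delta_B}$ must do so through an external peg, so we may in fact take each $u_i$ to be an external peg (choosing the first such vertex along $Q_i$). This gives the desired $d$ internally vertex-disjoint paths from $v_F$ to distinct external pegs of the perimeter of the brick-neighborhood, all contained in $\overline{\Delta_B}$.

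I expect the main obstacle to be the bookkeeping that the initial segments $\bar Q_i$ genuinely reach \emph{distinct} external pegs and that no $\bar Q_i$ is forced to revisit $\overline{\Delta_B}$ after leaving it in a way that breaks disjointness; this is handled by taking first crossings and invoking the Jordan-curve structure of the disk $\Delta_B$ together with the planarity of $W_{\frak{R}}$, but it requires care because the paths of Observation~\ref{label_subdivisiones} are only asserted to be internally vertex-disjoint, not edge-structured relative to $\Delta_B$. A secondary subtlety is that $2$-internality (rather than $1$-internality) of $B$ is needed precisely so that the brick-neighborhood of $B$ is itself $1$-internal in $R_W$ and hence all of its perimeter vertices are honest internal/external pegs in the sense of the preceding definitions; I would make this dependence explicit when setting up $\Delta_B$. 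Modulo these planar-topology verifications, the proof is a direct composition of Observation~\ref{label_subdivisiones} (to get the global paths), a first-crossing truncation (to confine them), and the peg classification of the brick-neighborhood (to land on external pegs).
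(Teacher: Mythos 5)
Your truncation step is fine and matches what the paper does implicitly (one may assume the $r$ paths from \autoref{label_subdivisiones} meet the perimeter $P$ of the brick-neighborhood only in their endpoints $v_1,\ldots,v_r$), but the claim that ``every path from the interior that exits $\overline{\Delta_B}$ must do so through an external peg'' is false, and it is exactly where the real difficulty of the lemma lives. First, the paths of \autoref{label_subdivisiones} live in $W_{\frak{R}}$, not in $R_W$: a ground vertex of $P$ that is a mere subdivision vertex of the wall can still be adjacent in $W_{\frak{R}}$ to a flap-vertex embedded outside the disk, so a path can cross $P$ at a vertex that is not an external peg. Second, even after truncating at the first hit of $P$, the endpoints $v_1,\ldots,v_r$ are arbitrary vertices of $P$ and must still be \emph{extended along $P$} to \emph{distinct} external pegs. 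When $|\partial F|=3$ and all three landing points lie on the same frontier-path of $P$ (the subpath between two consecutive external pegs), only the two outermost paths can be extended to the two endpoint pegs of that frontier-path; the middle one is blocked, and no amount of first-crossing bookkeeping resolves this.

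The paper's proof is devoted precisely to this case: for each frontier-path $Q$ it constructs a path $\hat{Q}$ through the interior of the brick-neighborhood that separates $v_F$ from $Q$, takes the first vertex $z_Q$ of $P_1\cup P_2\cup P_3$ encountered along $\hat{Q}$, truncates the corresponding path at $z_Q$, and reroutes it along $\hat{Q}$ to reach a third, distinct external peg. This separator-and-reroute idea is absent from your proposal, so as written the argument does not close; you would need to add the case analysis on where the $v_i$ land on $P$ and supply the interior rerouting for the clustered case.
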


\begin{proof}
	%\ig{MISSING PROOF}
	Let $r=|\partial F|,$ $X$ be the brick-neighborhood of $B$ in $R_{W},$  and $P$ be the perimeter of $X$ in $R_{W}.$
	%, and $D$ be the closed 	disk of the plane bounded by $P.$
	We call {\em frontier-path} of $X$ a subpath of $P$ that joins two external pegs and does not contain any other external peg. Notice that $P$ is the union of the frontier-paths of $X$ and that there are exactly 12 such paths.
	Notice that for every  frontier-path $Q$ of $X$ there is a path $\hat{Q}$ of $X$ such that
	\begin{itemize}
		%\item $\hat{Q}$ is a subset of $D,$
		\item its endpoints are in $P$ but not in $Q,$
		\item it does not contain any internal vertex in $P,$ and
		\item every path in ${W}_{\frak{R}}$ from $v_F$ to a vertex of $Q$ intersects some vertex of $\hat{Q}.$
	\end{itemize}
	See \autoref{label_correspondingly} for two indicative examples of the above correspondence.

	\begin{figure}[h!]
		\begin{center}%\vspace{-.1cm}
			\includegraphics[width=.55\textwidth]{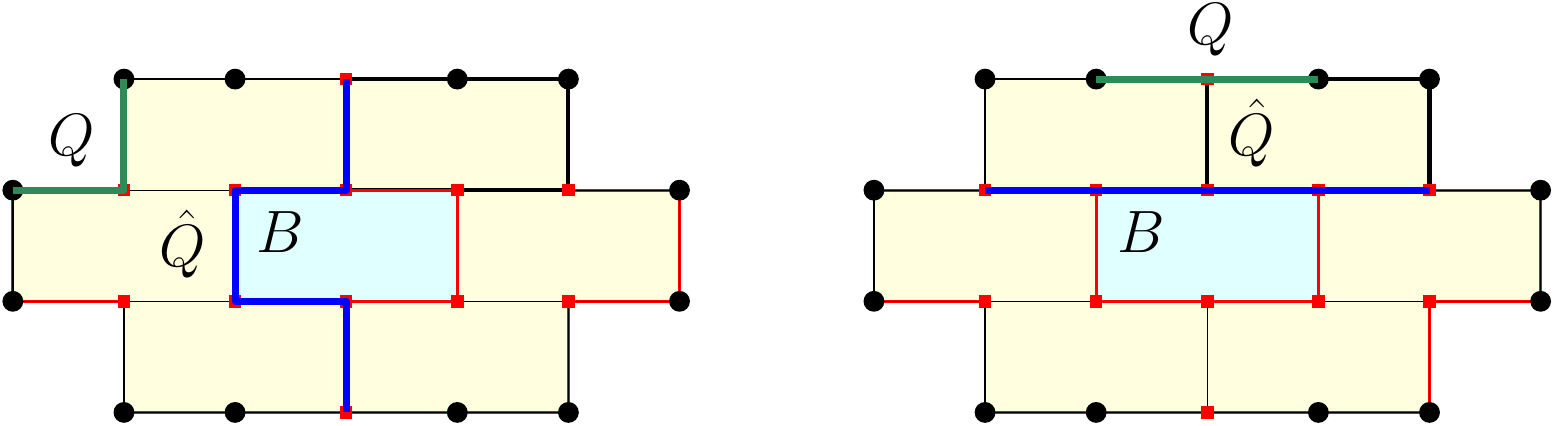}
		\end{center}\vspace{-.25cm}
		\caption{Two indicative examples for the choice of $\hat{Q}$ (in blue), depending on the choice of $Q$ (in green).}
		\label{label_correspondingly}
		%  \vspace{.35cm}
	\end{figure}

	From \autoref{label_subdivisiones}, there are $r$ internally vertex-disjoint paths $P_{1},\ldots,P_{r}$ in ${W}_{\frak{R}}$ starting from $v_F$
	and finishing in vertices $v_{1},\ldots,v_{r}$ of $P.$ Moreover, we can assume that
	$(P_{1}\cup\cdots\cup P_{r})\cap P=\{v_{1},\ldots,v_{r}\}.$ The lemma is trivial in case $r<3$ as the paths  $P_{1},\ldots,P_{r}$ can easily be extended so to finish in external pegs of $P.$
	Moreover, the lemma also follows easily if $r=3$ and $v_{1},v_{2},v_{3}$ do not all belong to some of the frontier-paths of $X.$
	It remains to examine the case where $v_{1},v_{2},v_{3}$ are vertices of some frontier-path $Q$ of $X.$
	Let $\hat{Q}$ be as defined above and let $\hat{q}$ be one of its endpoints.  Let $z_{Q}$ be the first vertex of $P_{1}\cup P_2 \cup P_3$ that is met while following $\hat{Q}$ starting from $\hat{q}$ and moving towards its other
	endpoint. The vertex  $z_{Q}$ exists because of the definition of $\hat{Q}.$
	W.l.o.g., we assume that $z_{Q}$ is an internal vertex of $P_{1}.$
	We now define $P_{1}'$ by first removing from $P_1$ all vertices of its subpath
	from $z_{Q}$ to $v_{i},$ except $z_{Q}$ and then taking the union of the
	resulting path with the subpath of $\hat{Q}$ between $\hat{q}$ and $z_{Q}.$
	It is now easy to see that $P_{1}$ can be extended to some external peg that is
	different from the endpoints of $P_{2}$ and $P_{3},$ while $P_{2}$ and $P_{3}$
	can be extended towards the external pegs that are endpoints of $\hat{Q}.$ This completes the proof of the lemma.
\end{proof}

\paragraph{Irrelevant vertices in boundaried graphs.}

Let $G$ be a graph, $H$ be a minor of $G,$ and $S\subseteq V(G).$
We define the {\em $S$-minor-intrusion of $H$} in $G$ as the minimum  $S$-intrusion in $G$
over all  {\sf tm}-pairs $(M,T)$ of $G$ such that $(M,T)$ is a topological minor model
in $G$ where ${\sf diss}(M,T)\in {\sf ext}{(H)}.$

Let ${\bf Z}=(Z,B,\rho)$ be a $t$-boundaried graph and let $\ell\in \Bbb{N}.$ We say that a vertex set $S \subseteq V(Z)\setminus B$ is $\ell$-{\em irrelevant}
if for  every boundaried graph ${\bf K}=(K,B,\rho)$
that is compatible with ${\bf Z},$ every
minor   of ${\bf K}\oplus {\bf Z}$ with $(V(Z)\setminus B)$-minor-intrusion
at most $\ell,$ is also a minor of  ${\bf K}\oplus (Z\setminus S, B,\rho).$ Informally, an $\ell$-irrelevant set of vertices can be removed without affecting the occurrences of any minor of  minor-intrusion at most $\ell,$ where the intrusion is defined without taking into account the terminal vertices in the boundary.

%graph  $H$ and every boundaried graph ${\bf C}=(C,B,\rho)$ that
%is compatible with ${\bf Z},$  if $(T,M)$ is a {\sf tm}-pair in ${\bf C}\oplus {\bf Z}$ that is a topological minor model of a graph in ${\sf ext}(H)$ where $|T\setminus V(C)|\leq \ell,$ then  $H$ is a minor of ${\bf C}\oplus (Z\setminus v, B,\rho).$
\medskip

Using~\autoref{label_constitutivos}, we can finally prove the main result of this section.

%\ig{why was Giannos bothered about the ``$a$'' in the next theorem? $\funref{label_encompassing}(a,s,\ell)$ or $\funref{label_encompassing}(s,\ell)$?}

%\ig{I changed the following lemma to a theorem}

\begin{theorem}\label{label_pretendientes}
	There exist two functions $\newfun{label_encompassing}: \Bbb{N}^3\to\Bbb{N}$ and $\newfun{label_intervention}: \Bbb{N}^2\to\Bbb{N}$ such that,
	for every $a,z,\ell\in\Bbb{N}$ and  every boundaried graph ${\bf Z}=(Z,B,
		\rho),$ if  $(A,W,\frak{R})$ is  an %$(\ell,a)$-homogeneous
	$(a,\funref{label_encompassing}(a,z,\ell),\funref{label_intervention}(a,\ell))$-apex-wall triple of $Z$ that is not affected by $B,$
	%\sed{I removed the text: «where $W$ is a wall that is $\ell$-homogeneous with respect to $(G,A)$»}
	then the vertex set of the compass of every $W'$-tilt of $(W,\frak{R}),$ where $W'$ is the
	central $z$-subwall of $W,$ is $\ell$-irrelevant.
	Moreover, it holds that $\funref{label_encompassing}(a,z,\ell)=\Ocal( (\funref{label_inevitability}(16a+12\ell))^3   + z)$
	and $\funref{label_intervention}(a,\ell)=a+3+\ell.$
\end{theorem}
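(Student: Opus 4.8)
The plan is to take an arbitrary minor of $\mathbf{K}\oplus\mathbf{Z}$ whose $(V(Z)\setminus B)$-minor-intrusion is at most $\ell$ and show that it survives the deletion of the compass of the $W'$-tilt, by ``rerouting'' it away from that compass using \autoref{label_constitutivos}. First I would fix a target minor $H$ and, by definition of $S$-minor-intrusion together with \autoref{label_daskalojannes}, pass to a graph $H'\in{\sf ext}(H)$ and a topological minor model $(M,T)$ of $H'$ in $G'=\mathbf{K}\oplus\mathbf{Z}$ whose $(V(Z)\setminus B)$-intrusion is at most $\ell$; note $H'$ has detail at most $3\ell$, so $|E(H')|\le 3\ell$ and at most $3\ell$ non-boundary branch vertices. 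The key observation is that, since $(A,W,\frak R)$ is not affected by $B$, no vertex of ${\sf compass}_{\frak R}(W)$ has a neighbour in $B\setminus A$; hence any subdivision path of $(M,T)$ that enters the compass must do so through the apex set $A$ or through the perimeter of the wall, and at most $\ell$ such paths plus the $\le\ell$ branch vertices in the compass are ``relevant''. The idea is to treat the part of $(M,T)$ inside the $\frak R$-compass as a {\sf tm}-pair of modest intrusion in a partially disk-embedded graph, then apply \autoref{label_constitutivos} to collapse it into the wall, avoiding the compass of the central $z$-subwall.

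The main technical bridge is the leveling: I would work in the $\frak R$-leveling $W_{\frak R}$ of $W$, which is $\Delta$-embedded, together with the representation $R_W$ (which exists and equals $D(W_{\frak R})$ on the boundary, by regularity and \autoref{label_esaminadores}). The $\Delta\cap V(G')$-part of $(M,T)$, after suppressing each flap $F$ to its flap-vertex $v_F$ and recording its $\ell$-folio as a colour, becomes a {\sf tm}-pair $(\tilde M,\tilde T)$ in (a mild enrichment of) $W_{\frak R}$, of intrusion $\Ocal(a+\ell)$ — the $+a$ accounting for edges to apices on dirty flaps. Here is where \autoref{label_ofrecimiento} is used: for every $2$-internal brick $B$ of $R_W$ and every flap-vertex in ${\sf vflaps}_{\frak R}(B)$, we have $|\partial F|$ internally vertex-disjoint paths from $v_F$ to the external pegs of $B$'s brick-neighbourhood, confined to that neighbourhood; this lets us ``localise'' each dirty flap to its brick and feed a clean {\sf tm}-pair into the wall. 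I would then invoke \autoref{label_constitutivos} with $c$ chosen large enough that $c$-dispersed branch vertices sit in pairwise-distinct bricks with room to spare (so their $\ell$-folios can be realised independently inside the corresponding flaps), with $\ell$ replaced by the intrusion bound $\Ocal(a+\ell)$ — this is where the claimed $\funref{label_intervention}(a,\ell)=a+3+\ell$ and the height $\funref{label_encompassing}(a,z,\ell)=\Ocal((\funref{label_inevitability}(16a+12\ell))^3+z)$ come from, after unwinding $\funref{label_presuntuosos}$ and the detail-tripling of \autoref{label_daskalojannes} (the $16a+12\ell$ being a generous bound on the argument of $\funref{label_inevitability}$). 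The output is a {\sf tm}-pair $(\hat M,\hat T)$ and an index $b$ such that $\hat M\cap D_{\funref{label_tourterelles}}=\emptyset$, the central $z$-subwall's compass lies inside $D_{\funref{label_tourterelles}}$, and ${\sf diss}(M,T)$ is a $Q$-respecting contraction of ${\sf diss}(\hat M,\hat T)$, where $Q$ is the set of branch vertices of degree $\le 3$ — crucially including all the branch vertices we need to keep rigid to recover $H'$.

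The last step is the ``lift'' back from the leveling to $G'$: having a rerouted model $(\hat M,\hat T)$ that avoids the central bricks, I would reconstruct an honest model of $H'$ (hence a minor $H$) in $\mathbf{K}\oplus(Z\setminus S,B,\rho)$ by replacing each flap-vertex used by $\hat M$ with an actual realisation inside the corresponding flap of $W$. This is exactly where $\ell$-homogeneity is indispensable: since $(W,\frak R)$ is $\ell$-homogeneous with respect to $(G',A)$ (and by \autoref{label_comprometida} this transfers to bricks of $R_W$), every internal brick offers the same variety of $\ell$-folios of augmented flaps, so whatever behaviour the original model $(M,T)$ extracted from a flap in or near the deleted compass can be reproduced inside a flap of some brick outside $D_{\funref{label_tourterelles}}$; the apex-to-flap edges are handled because the augmented flap ${\bf F}^A$ already bundles $A$ with $F$. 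Combined with \autoref{label_observation} that the central vertices survive every $W'$-tilt (\autoref{label_tranquillity}), and that the $W'$-tilt's compass is a subgraph of $\cupall{\sf influence}_{\frak R}(W')\subseteq D_{\funref{label_tourterelles}}$-region, the new model avoids $S$ entirely, proving $H\prem\mathbf K\oplus(Z\setminus S,B,\rho)$. The hard part is the bookkeeping of this lift — making precise, via homogeneity and the confinement guarantees of \autoref{label_constitutivos} and \autoref{label_ofrecimiento}, that the colours (folios) carried by dirty flap-vertices in $(\tilde M,\tilde T)$ can be simultaneously realised inside available bricks without collisions, which is why the dispersion parameter $c$ and the $h$-homogeneity must be pushed through every intermediate {\sf tm}-pair.
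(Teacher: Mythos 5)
Your proposal follows essentially the same route as the paper's proof: pass to $H'\in{\sf ext}(H)$ via \autoref{label_daskalojannes}, form the colored leveling of the model with dirty flaps encoded by their folios, collapse it with \autoref{label_constitutivos} (the paper takes $c=6$), repair the displaced branch flap-vertices via \autoref{label_comprometida}, \autoref{label_ofrecimiento}, and \autoref{label_deliberadament}, and lift back using homogeneity — with the same constants $\tilde{\ell}=16a+12\ell$ and $\funref{label_intervention}(a,\ell)=a+3+\ell$. The plan is correct and matches the paper's argument.
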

%\ig{forward to Giannos}
\begin{proof}
	Let $\tilde{\ell}=16a +12\ell,$ $\hat{\ell} = a +3 + \ell ,$ $r = \funref{label_presuntuosos}({6},z,\tilde{\ell}),$ and $\tilde{r} = \funref{label_tourterelles}({6},\tilde{\ell}).$ We prove the theorem for $\funref{label_encompassing}(a,z,\ell)=r$ and $\funref{label_intervention}(a,\ell)=\hat{\ell}.$ Note that, by \autoref{label_constitutivos}, $\funref{label_encompassing}(a,z,\ell) = \Ocal( (\funref{label_inevitability}(\tilde{\ell}))^3   + z).$
	%\ig{CAREFUL! the function $\funref{label_quitablement}$ from  \autoref{label_constitutivos} is NOT used in this proof!}
	Let ${\bf Z}=(Z,B,\rho)$ and let $(A,W,\frak{R})$ be an  $(a,r,\hat{\ell})$-apex-wall triple  of the graph $Z$ that is not affected by $B.$
	%where $a=|A|$ and $W$ is an $\ell$-homogenous with respect to $(G,A)$
	%$(\ell,a)$-homogeneous
	%wall of height $r.$
	Let $W'$ be the central $z$-subwall of $W,$ and let $(\tilde{W}',\tilde{\frak{R}}')$  be a flatness pair of $Z \setminus A$ that is a $W'$-tilt of $(W,\frak{R}).$ Our objective is to prove that $V({\sf compass}_{\tilde{\frak{R}}'}(\tilde{W}'))$ is $\ell$-irrelevant.

	Let ${\bf K}=(K,B,\rho)$ be a boundaried graph  compatible with ${\bf Z}.$
	As  $(A,W,\frak{R})$ is not affected by $B,$ we have that $(A,W,\frak{R})$ is an  $(a,r,\hat{\ell})$-apex-wall triple  of the graph $G:={\bf K}\oplus {\bf Z}$ as well.
	Let now $H$ be a minor of $G$
	whose $(V(Z)\setminus B)$-minor-intrusion in $G$ is at most $\ell.$ Let also
	$(M,T)$ be a {\sf tm}-pair in $G$
	that is a topological minor model of a graph $H'\in {\sf ext}(H),$ such that $|T\setminus V(K)|=|T\cap (V(Z)\setminus B)|\leq \ell$
	and with at most $\ell$ subdivision paths intersecting $V(Z)\setminus B.$
	%(in \autoref{label_disadvantages}, $H'$ can be seen a the graph consisting of the green and the red edges).
	Our purpose is to find a  {\sf tm}-pair $(\overline{M},\overline{T})$ of
	$G$ where $V({\sf compass}_{\tilde{\frak{R}}'}(\tilde{W}'))\cap  V(\overline{M})=\emptyset$ and such that $H' = {\sf diss}(M,T)$ is a minor of ${\sf diss}(\overline{M},\overline{T}),$ which in turn implies  that $H$ is a minor of $G\setminus V({\sf compass}_{\tilde{\frak{R}}'}(\tilde{W}'))={\bf K}\oplus (Z\setminus V({\sf compass}_{\tilde{\frak{R}}'}(\tilde{W}')), B,\rho),$ as required.\medskip

	\begin{figure}[h!]
		\begin{center}%\vspace{-.1cm}
			\includegraphics[width=.67\textwidth]{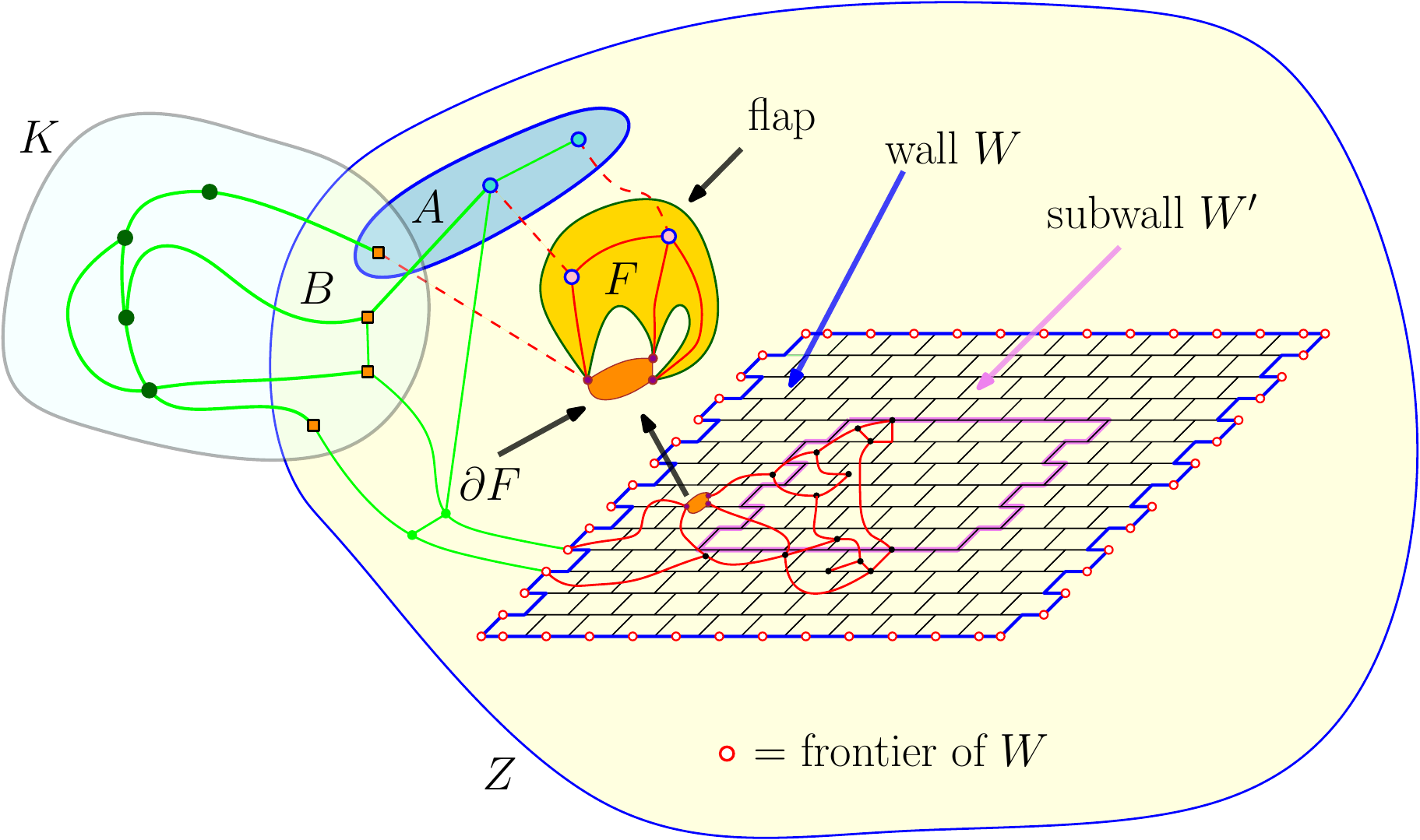}
		\end{center}\vspace{-.25cm}
		\caption{An illustration of the graph $G={\bf K}\oplus {\bf Z}$ and a {\sf tm}-model $(M,T)$ in it.
		The apex set $A$ is  cyan. We also draw a single flap $F$ and its magnification.
		%\ig{change $C$ to $K$} \ig{check all names}
		}
		%\ig{I really like this figure!}}
		\label{label_disadvantages}
		\vspace{.35cm}
	\end{figure}

	We proceed with the definition of a series of auxiliary graphs that will permit
	to work on a partially planarized version of $G.$ Suppose that $\frak{R}=(X,Y,P,C,\Gamma,\sigma,\pi).$
	We  first define  the graph ${G}_{\rm out}$
	%\ig{it should be ${G}_{\rm out},$ right?}
	as the graph obtained from $G$ if we remove
	all the vertices of ${\sf compass}_{\frak{R}}(W),$ except from those in $X \cap Y.$ %\red{and then
	%remove all edges between $A$ and ${\sf compass}_{\frak{R}}(W)$} \ig{shall we remove this sentence?}.
	We then define
	$\tilde{G}:=G_{\rm out}\cup {W}_{\frak{R}}$ where ${W}_{\frak{R}}$ is the
	leveling of $W.$
	%Notice that $\tilde{G}$ is a partially $\Delta$-embedded graph  whose compass is $\tilde{W}$ and where $\bor(\Delta)$   is the perimeter of $W.$
	%
	Since $(W,\frak{R})$ is regular, it is also well-aligned by \autoref{label_esaminadores}, hence ${W}_{\frak{R}}$ contains a representation $R_W$ of $W,$ which is
	a subgraph  of ${W}_{\frak{R}}$ that is isomorphic to some subdivision of $W^{\bullet}$
	via an isomorphism that maps each ground vertex to itself (recall that $W^{\bullet}$ is the  graph obtained from $W$ if we subdivide once every short edge in $W$).
	%	Similarly, the subwall $\breve{W}$ of $W$ corresponds to a subwall $\breve{W}’$ of the  $d$-wall $W’$ and we call $\breve{W}’$ {\em representation} of $\breve{W}$ in $\tilde{W}.$
	Notice that $\tilde{G}$ is a partially $\Delta$-embedded graph whose compass is ${W}_{\frak{R}}$ that, in turn,
	is the compass in $\tilde{G}$ of  the  $d$-wall $R_W.$
	%	Let now $\breve{\Delta}$ be the closed disk in $\Delta$ that is
	%	bounded by the perimeter of $\breve{W}.
	%	$ The graph  $\tilde{G}$ can also be seen as  a partially $\breve{\Delta}$-embedded
	%	graph whose compass is $\tilde{W}\cap \breve{\Delta}.$
	Recall that each flap $F$ of $W$ corresponds to a flap-vertex $v_F$ of ${W}_{\frak{R}}.$
	%	 and therefore of $\tilde{W}\cap \breve{\Delta}$ as well.
	\medskip

	We  enhance $(M,T)$ by defining another {\sf tm}-pair  $(M_A,T_A),$ where  $M_A=(V(M)\cup A,E(M))$ and $T_A=T\cup A,$ i.e., $(M_A,T_A)$ is obtained from $(M,T)$ by including all the apices in $A.$ We set $H_A={\sf diss}(M_A,T_A)$ and
	observe that $H$ is a (topological) minor of $H_A.$
	We set $T_{\rm in}=A\cup(T_A\cap {\sf compass}_{\frak{R}}(W))$ and $T_{\rm out}=A\cup(T_A\setminus  ({\sf compass}_{\frak{R}}(W)\setminus (X \cap Y))).$
	%	(in \autoref{label_disadvantages}, the vertices in $T_{\rm in}$ are those incident to the red edges, while the vertices in $T_{\rm out}$ are those incident to the green edges).
	As $(A,W,\frak{R})$  is not affected by $B,$ it follows that
	$T_{\rm in}\cap {\sf compass}_{\frak{R}}(W)\subseteq T\setminus (V(K) \cup B),$  hence  $|T_{\rm in}\cap {\sf compass}_{\frak{R}}(W)|\leq \ell.$
	%This in turn implies that  the $V({\sf compass}_{\frak{R}}(W))$-intrusion of the new $(M,T)$ in $G$ is at most $\ell.$

	%We proceed with a series of definitions.
	% $G:={\bf C}\oplus{\bf Z}.$
	%Let also $(\Gamma,\sigma,\pi)$ be a {rendition}  of the compass of $W$ in $G.$
	%We denote by  $\partial A$
	%all the vertices of $A$ that have neighbors outside $A\cup {\sf compass}_{\frak{R}}(W).$

	%Let now $(M,T)$ be a {\sf tm}-pair of  $G$ with $A\subseteq T.$\sed{We include $A$ in $T$ for reasons as this simplifies the description later.}  We also set $T_{\rm in}=A\cup(T\cap {\sf compass}_{\frak{R}}(W)).$
	We call a vertex $v\in {\sf compass}_{\frak{R}}(W)$ an {\em apex-jump} vertex if there exists an edge $\{v,w\}$ of $M_A$ with $w\in A.$
	%	 (in~\autoref{label_disadvantages} and \autoref{label_lestrygonians}, the lower endpoints of the three dashed red edges are apex-jumps).
	%Observe that there are at most   apex-jump-vertices in ${\sf compass}_{\frak{R}}(W).$
	Notice that there are at most {$a+\ell$} apex-jump vertices.
	%p30 line 4: Maybe we can develop why there is at most $a+\ell$ apex-jump vertices.
	We define a set of {\em $(M_A,T_A)$-dirty} flaps of the flatness pair $(W,\frak{R})$  by applying the following definition: first we declare as  $(M_A,T_A)$-dirty every
	flap $F$ such that $V(F)\setminus \partial F$ contains an apex-jump vertex or a vertex in $T_{\rm in}.$ Second, for every
	apex-jump vertex $v\in {\sf ground}(W)$ that is not in the boundary of some $(M_A,T_A)$-dirty
	flap,  we arbitrarily pick a flap $F$ with $v\in \partial F$ and we declare it  {\em $(M_A,T_A)$-dirty}.
	% (note that $F$ may have been already declared dirty in the first step).
	Observe that $(W,\frak{R})$ has at most $a + \ell + |T_{\rm in}\cap {\sf compass}_{\frak{R}}(W)|\leq {a + 2\ell}$
	$(M_A,T_A)$-dirty flaps.

	\begin{figure}[htb]
		\begin{center}%\vspace{-.1cm}
			\includegraphics[width=.72\textwidth]{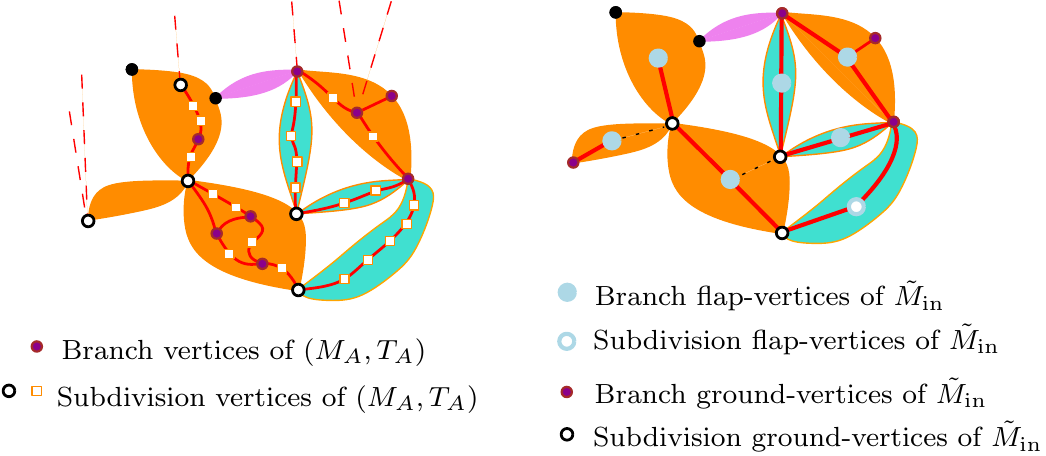}
		\end{center}\vspace{-.25cm}
		\caption{An example of how part of the  model $(M_A,T_A)$ (in red edges) traverses the flaps of ${W}$ along with the model $(\tilde{M}_{\rm in},\tilde{T}_{\rm in})$ of $\tilde{W}$ (i.e., the leveling of $(M_A,T_A)$). The dashed red edges are edges pointing to apices. The orange flaps are the $(M_A,T_A)$-dirty flaps. The black vertices are the ground vertices that do not belong to $M.$}
		\label{label_lestrygonians}
		\vspace{.35cm}
	\end{figure}

	We define   $\tilde{M}_{\rm in}$ as the subgraph of the leveling ${W}_{\frak{R}}$ induced
	%\sed{Explain some sugar...}
	by the vertices in ${\sf ground}(W)\cap V(M_{A})$ and all the flap-vertices $v_F$ of ${W}_{\frak{R}}$ such that $F$
	is either $(M_A,T_A)$-dirty or  contains an edge of $M$ (in \autoref{label_lestrygonians}, these latter flaps are turquoise).
	%If some of the induced edges in $\tilde{M}_{\rm in}$ has one endpoint in $V(M)\setminus T$ and the other
	%is an $(M_A,T_A)$-dirty flap, then we remove this edge from $\tilde{M}_{w}.$%\sed{I prefer it is beter to discard here!}
	Note that the induced edges may increase the degree of some, say $w,$ of the degree-2 ground vertices in $V(M_A)\setminus T_A$ because of some edge $e=\{v_F,w\}$ between
	a flap-vertex $v_F$ of ${W}_{\frak{R}}$ and $w$; we then remove from $\tilde{M}_{\rm in}$ all such edges
	(in \autoref{label_lestrygonians},  these edges are the black dotted edges). Actually this last modification could be avoided, however it facilitates  the presentation of the last
	part of the proof.

	Notice that if a flap $F$ is not $(M_A,T_A)$-dirty and contains an edge $e$ of $M,$ then this edge should belong
	to  a subpath $P$ of a subdivision path of $M$ such that the  endpoints of $P$ are two of the vertices of $\partial F.$
	% and, moreover, these endpoints do not belong to  $\tilde{M}_{\rm in}.$
	We call such flap-vertices of $\tilde{M}_{\rm in}$ {\em subdivision flap-vertices} of $\tilde{M}_{\rm in}.$
	If a flap-vertex of $\tilde{M}_{\rm in}$ is not a  subdivision flap-vertex, then we call it   {\em branch  flap-vertex} of $\tilde{M}_{\rm in}.$ We denote by $Q$ the set of all the branch  flap-vertices of $\tilde{M}_{\rm in},$ and as each  branch flap-vertex corresponds to a $(M_A,T_A)$-dirty flap, we have that  $|Q|\leq {a+2\ell}.$
	Notice that all vertices of $Q$ have degree at most three in ${W}_{\frak{R}}$ (and therefore in $\tilde{M}_{\rm in}$ as well).

	%This means that if in $G$ we replace ${\bf F}^{A}$ by ${\bf F}'^A,$ then $H$ should also be a minor of the resulting graph.

	%\medskip

	The ground-vertices of $\tilde{M}_{\rm in}$ that are apex-jump vertices or belong to $T_{\rm in} \setminus A$ are called {\em branch ground-vertices} of $\tilde{M}_{\rm in},$
	while the rest of the ground vertices  of $\tilde{M}_{\rm in}$ are called  {\em subdivision ground-vertices} of $\tilde{M}_{\rm in}.$ Notice that there are at most ${a+2\ell}$ branch ground-vertices in  $\tilde{M}_{\rm in}.$
	% which, from now on, denote by $Q.$
	Let   $M_{\rm out}=M_A\cap G_{\rm out}$ and $\tilde{M}=M_{\rm out}\cup \tilde{M}_{\rm in}.$
	%	(in~\autoref{label_disadvantages}, $M_{\rm out}$ consists of the green edges -- however only the edges in $T_{\rm out}$ are depicted in this figure).
	We  define
	$\tilde{T}_{\rm in}$ as the union  of the set of branch ground-vertices of $\tilde{M}_{\rm in}$ and the
	set $Q$ of the branch  flap-vertices of  $\tilde{M}_{\rm in}.$ Observe that  $|\tilde{T}_{\rm in}|\leq {2a +4\ell}.$ Moreover,   we set $T_{\rm out}=T_A\cap V(M_{\rm out})$
	and  $\tilde{T}=T_{\rm out}\cup \tilde{T}_{\rm in}.$ %\ig{should we try to unify ``$=$'' and ``$:=$''?}

	Notice that
	$(\tilde{M},\tilde{T})$ is a ${\sf tm}$-pair   of $\tilde{G}$; we refer to  it
	%\sed{Is this verbal definition necessary?}
	as the
		{\em leveling of the ${\sf tm}$-pair  $(M_A,T_A)$ of $G$ with respect to $(A,W,\frak{R})$}.
	Clearly, $\tilde{M}$ is a partially $\Delta$-embedded graph whose compass is $\tilde{M}_{\rm in},$ i.e., $\tilde{M}_{\rm in}=\Delta\cap \tilde{M}.$ Also, as $|\tilde{T}_{\rm in}|\leq {2a +4\ell},$ the $\Delta\cap V(G)$-intrusion of $(\tilde{M},\tilde{T})$ in $\tilde{G}$ can be bounded, using the fact that $\tilde{M}_{\rm in}$ is a subgraph of the planar graph  $\tilde{G}_{\rm in},$  by {$3(2a +4\ell) = \tilde{\ell}$}.

	\medskip

	Let us now give some intuition on the definition of $(\tilde{M},\tilde{T}).$
	We see $\tilde{G}$ and its {\sf tm}-pair $(\tilde{M},\tilde{T})$ as a
	``projection'' of the graph $G$ and its  {\sf tm}-pair $({M},{T}),$ respectively,
	in what concerns the leveling ${W}_{\frak{R}}$ of $W.$ This projection
	is loosing some of the information of  $(\tilde{M},\tilde{T}),$ however it will be
	valuable as now both $\tilde{G}$ and  $\tilde{M}$ are partially $\Delta$-embedded graphs.
	The lost information is encoded, for every branch flap-vertex $v_F$ in $Q,$
	by ${\ell}\mbox{\sf -folio}({\bf F}^{A}).$ Recall that ${\bf F}^{A}:=(G[A\cup F],A\cup \partial F,\rho_A\cup \rho_F).$
	%the equivalence class {\sf $(\ell,a)$-color}$(F),$ that is, the equivalence class of $\sim_{A,\ell}$ to which ${\bf F}^{A}$ belongs (recall that ${\bf F}^{A}:=(G[A\cup F],A\cup \partial F,\rho_A\cup \rho_F)$ and $F$ is the underlying graph of ${\bf F}^{A}$).\sed{$\sim$ now died!}
	In what follows, we
	will use \autoref{label_constitutivos} in order to  draw in $\tilde{G}$ a modification $(\hat{M},\hat{T})$ of   $(\tilde{M},\tilde{T})$ so that $(\hat{M},\hat{T})$ does not go ``too deeply'' in the representation $R_W$ of $W$ in $W_{\frak{R}},$ and the parts of  $(\hat{M},\hat{T})$
	that are different from $(\tilde{M},\tilde{T})$  are routed through $R_W$  in a ``dispersed enough'' way. Moreover, again from \autoref{label_constitutivos},
	$(\tilde{M},\tilde{T})$ can be seen as a contraction of $(\hat{M},\hat{T})$
	that does not identify any of the vertices of $Q.$
	Maintaining the vertices in
	$Q$ intact, while performing contractions in  $(\hat{M},\hat{T}),$ is important
	as we need to keep the information given by $\hat{\ell}\mbox{\sf -folio}({\bf F}^{A})$ for each branch flap-vertex $v_F\in Q.$
	%for some  suitable choice of $\hat{\ell}$ that we explain below.

	\medskip

	For each branch  flap-vertex $v_F$ of $Q$ we define ${\bf M}_{F}=(G[A\cup V(F)]\cap  M_A,A\cup \partial F,\rho_{A}\cup\rho_{F}),$ where the functions $\rho_{A}$ and $\rho_{F}$ are defined as explained in the beginning of \autoref{label_rencontroient},  and
	$T_{F}=A\cup \partial F\cup (T_A\cap F).$ Notice that  $({\bf M}_{F},T_{F})$ is a {\sf btm}-pair of ${\bf F}^A$ such
	{that ${\sf diss}({\bf M}_{F},T_{F})\in {\hat{\ell}}\mbox{-}{\sf folio}({\bf F}^{A}),$ given that $\hat{\ell} =  a +3 +\ell.$} This means that if $v_{\bar{F}}$
	is another  flap-vertex  of $W_{\frak{R}}$ with ${\hat{\ell}}\mbox{\sf -folio}({\bf F}^{A}) = {\hat{\ell}}\mbox{\sf -folio}({\bf \bar{F}}^{A}),$
	then there is a  {\sf btm}-pair $({\bf M}_{\bar{F}},T_{\bar{F}})$ of ${\bf \bar{F}}^A$  such that ${\sf diss}({\bf M}_{F},T_{F})={\sf diss}({\bf M}_{\bar{F}},T_{\bar{F}}).$

	\medskip

	Recall  that  ${\tilde{G}}$ is a partially $\Delta$-embedded
	graph whose compass is $\tilde{G}\cap {\Delta} = {W}_{\frak{R}}$ and that  $\tilde{M}$ can be seen as  a  partially ${\Delta}$-embedded graph whose compass is $ \tilde{M} \cap {\Delta}.$ Recall also that the $\Delta\cap V(G)$-intrusion of $(\tilde{M},\tilde{T})$  in $\tilde{G}$ is at most {$\tilde{\ell}$}.

	Let $C_{1},\ldots,C_{\tilde{r}}$ be the first $\tilde{r}$ layers of $R_W$ and  let $[D_{1},\ldots,D_{\tilde{r}}]$
	(resp. $[\overline{D}_{1},\ldots,\overline{D}_{\tilde{r}}]$)
	be the sequences of the corresponding open (resp. closed) disks.
	We can now apply~\autoref{label_constitutivos} with input ${6},$ $\tilde{\ell},$  $z,$ $\tilde{G},$ the  $r$-wall $R_W,$ the {\sf tm}-pair $(\tilde{M},\tilde{T})$ of $\tilde{G},$ and the set $Q$ of branch flap-vertices defined above. Let $d=\funref{label_quitablement}({6},\tilde{\ell}).$
	%\sed{Check in all this section the consistency with the fact that, in general, we consider the disks to be open while we use $\overline{D}_{i}$ for their closures.}
	\autoref{label_constitutivos} guarantees the existence of
	a   {\sf tm}-pair $(\hat{M},\hat{T})$  of $\tilde{G}$  and a $b\in [\tilde{r}]$ such that

	\begin{enumerate}
		\item    $\hat{M}\setminus D_{b}$ is a subgraph of $\tilde{M}\setminus D_{b},$
		\item ${\sf ann}({\cal C}_{b,b+d-1})\cap (\tilde{T} \cup\hat{T})=\emptyset,$
		\item   $(\hat{M},\hat{T})\Cap \overline{D}_{b+d}$ is a {\sf tm}-pair of $R_W$ that is  safely ${6}$-dispersed in $R_W$
		      and none of the vertices of $\hat{T}\cap \overline{D}_{b+d}$ is within face-distance less than {six}  in $R_W$
		      from some vertex of $C_{b+d}\cup C_{\tilde{r}},$
		      %\item $\hat{M}\cap \inter(D_{\funref{label_tourterelles}(c,\ell)})=\emptyset,$
		\item $\hat{M}\cap D_{\tilde{r}}=\emptyset,$
		      %\item $\hat{M}$ does not intersect the vertices of the compass of the central $z$-subwall of $W.$
		\item the compass of the central $z$-subwall of $W$ is a subset of $D_{\tilde{r}},$ and
		\item there is a $Q$-respecting contraction-mapping $\phi$ of ${\sf diss}(\tilde{M},\tilde{T})$ to ${\sf diss}(\hat{M},\hat{T}).$

	\end{enumerate}

	Our next step is to modify the  {\sf tm}-pair $(\hat{M},\hat{T})$ to obtain another
		{\sf tm}-pair $(\hat{M}^+,\hat{T}^+)$ so that, in addition to Property~1 and Property~4 above (where $\hat{M}$ is replaced by $\hat{M}^+$), $(\hat{M}^+,\hat{T}^+)$ satisfies the
	following stronger version of Property~6:

	\begin{enumerate}
		\item[$6^+.$] there is a $Q$-respecting contraction-mapping $\phi^+$ of ${\sf diss}(\tilde{M},\tilde{T})$ to ${\sf diss}(\hat{M}^+,\hat{T}^+)$ such that for every branch  flap-vertex $v_F\in Q,$  if $\phi^+(v_F)=\{v_{\hat{F}}\},$ then
		      ${\hat{\ell}}\mbox{\sf -folio}({\bf F}^{A}) = {\hat{\ell}}\mbox{\sf -folio}({\bf \hat{F}}^{A}).$
	\end{enumerate}
	%
	%\newpage

	To force Property~$6^+,$ we will make strong use of the fact that
	the flatness pair $(W,\frak{R})$ is  $\hat{\ell}$-homogeneous  with respect to  $(G,A).$ This will permit us
	to  modify the subdivision paths of $(\hat{M},\hat{T})$ by pointing to images of flaps in $Q$ that have been displaced after the application of~\autoref{label_constitutivos}.\medskip

	We
	proceed as follows: let $v_F \in Q$ be  some flap-vertex, and let $\phi(v_F)=\{w\}$ be its image given by the function $\phi$ guaranteed by Property~6.
	%and let $R$ be a brick of $W'$ whose closed disk contains $F.$
	%Clearly, $\ell\mbox{-}{\sf color}(F)\in (A,\ell)\mbox{-}{\sf palette}(R).$
	From Property~2,  $w$ cannot belong to ${\sf ann}({\cal C}_{b,b+d-1}).$
	Also, from Property~4,  $w$ cannot belong to $D_{\tilde{r}}.$
	If   $w\in {\sf ann}({\cal C}_{1,b}),$ it follows from
	Property~1 that  $w={v_F}$ and in this case Property $6^+$ already holds trivially.
	The only remaining case is when  $w\in {\sf ann}({\cal C}_{b+d,\tilde{r}}).$
	In this case, from Property~3, $w$ is a vertex of $R_W\cap {\sf ann}({\cal C}_{b+d,\tilde{r}})$
	that is within face-distance at least {six} from both $C_{b+d}$ and $C_{\tilde{r}}.$

	\begin{figure}[h!]
		\begin{center}%\vspace{-.1cm}
			\includegraphics[width=.6\textwidth]{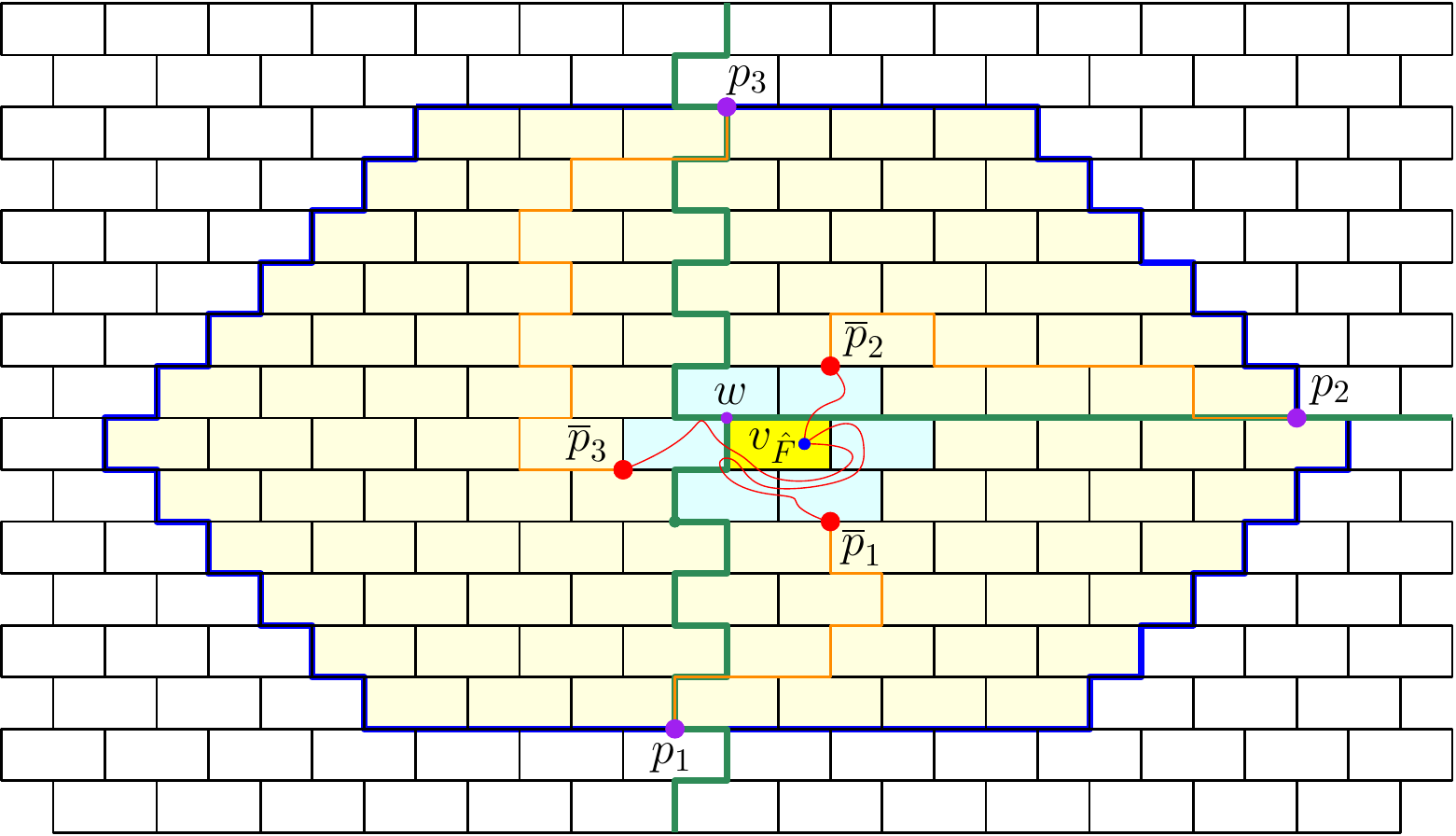}
		\end{center}\vspace{-.25cm}
		\caption{Some part of $R_W\cap {\sf ann}({\cal C}_{b+d,\tilde{r}})$ around a vertex $w$ of $\hat{T}.$ The interior of the brick $\hat{R}$ is depicted in yellow. The cycle $C_w$   is depicted in blue. }
		%\ig{replace $F'$ by $v_{F'}$ in the figure}}
		%\ig{put the yellow and green paths thicker} \ig{face-distance is 6!}}
		%\ig{is the central part of \autoref{label_disposiciones} visible enough?}}
		\label{label_disposiciones}
		\vspace{.35cm}
	\end{figure}

	Let $R$ and $\hat{R}$ be bricks of $R_W$ such that $v_F \in {\sf vflaps}_{R_W}(R)$ and $w \in {\sf vflaps}_{R_W}(\hat{R}),$ respectively. Certainly, both $R$ and $\hat{R}$ are internal bricks of $R_W.$ Moreover, $w$ is a vertex of the cycle  $\hat{R}$ (bounding the disk depicted in bright yellow in \autoref{label_disposiciones}).
	By  \autoref{label_comprometida}, there exists a vertex $v_{\hat{F}}$ in $\hat{R}$ such that  ${\hat{\ell}}\mbox{\sf -folio}({\bf F}^{A}) = {\hat{\ell}}\mbox{\sf -folio}({\bf \hat{F}}^{A}).$
	%As both $R$ and $\hat{R}$ are bricks of ${W}$  and  ${W}$ is an
	%	$(\bar{\ell},a)$-homogeneous wall,  it follows that  ${\sf palette}(R)={\sf palette}({R}'),$ therefore
	%there is a branch  flap-vertex $F'$  in the closed disk bounded by $\hat{R}’$
	%	such that $(A,\ell)\mbox{\sf-color}(F)=(A,\ell)\mbox{\sf-color}(F').$
	Let $y$ be the degree of $w$ in $\hat{M}$ (which equals the degree of $v_F$ in $\tilde{M}$) and let  $P_{1},\ldots,P_{y}$ be the subdivided paths
	of the model $(\hat{M},\hat{T})$ that have $w$ as an endpoint, depicted as fat green lines in \autoref{label_disposiciones}. Note than $y\leq 3.$

	Let $C_{w}$ be the cycle of $R_{W}$ induced by the vertices that are within face-distance exactly {six}  from $w$ (in \autoref{label_disposiciones}, this cycle is depicted with thick blue edges).
	From Property 3, $C_w$ is within face-distance at least $(2\cdot {6} + 1)- {6} = {7}$ from any other vertex in $\hat{T}\cap \overline{D}_{b+d}$ in $R_{W}.$ For $i \in [y],$ we define $P_{i}'$ as the unique path, among the paths obtained from $P_{i}$ by the removal the open disk bounded by $C_{w},$ that has only one endpoint in the cycle $C_w.$
	%Notice now that if we remove from    $P_{1}\cup \cdots\cup P_{y}$
	%	the open disk bounded by $C_{w},$ then we obtain $y$ paths $P_{1}',\ldots,P_{y}'.$ \ig{only the end of the paths!!}
	For $i\in[y],$ we denote by $p_{i}$ the endpoint of $P_{i}'$ that is not an endpoint of $P_{i}$ (depicted in purple  in \autoref{label_disposiciones}). Notice that $p_{1},\ldots,p_{y}$
	are internal pegs of $C$ ordered as they appear in the perimeter in counter-clockwise order.

	Consider now the brick-neighborhood, say $X,$ of $\hat{R}.$
	Notice that the face-distance between the perimeter of $X$ and $C_{w}$ is at least {four}.
	From~\autoref{label_ofrecimiento}, there are $y$ internally vertex-disjoint paths $\bar{P}_{1},\ldots,\bar{P}_{y}$ from $v_{\hat{F}}$ to the external pegs of the perimeter $C_{X}$ of $X$ (depicted in red lines in \autoref{label_disposiciones}). Let these pegs be $\bar{p}_{1},\ldots,\bar{p}_{y}$ ordered as they appear in the perimeter in counter-clockwise order.
	We are now in position to apply \autoref{label_deliberadament} to $C_{w},$ $C_{X},$ $p_{1},\ldots,p_{y},$ and $\bar{p}_{1},\ldots,\bar{p}_{y}$ and
	find $y$ pairwise vertex-disjoint paths $\hat{P}_{1},\ldots,\hat{P}_{y}$ each joining
	$p_{i}$ with $\bar{p}_{i}$  (depicted in orange lines in \autoref{label_disposiciones}). We now remove from $\hat{M}$ the open disk bounded  by $C_{w}$
	and we add the graph $\bar{P}_{1}\cup\cdots\cup \bar{P}_{y}\cup \hat{P}_{1}\cup\cdots\cup \hat{P}_{y}.$ We also modify $\hat{T}$ by substituting $w$ by $v_{\hat{F}}.$ By repeating this procedure for every flap-vertex of $Q,$ and defining $(\hat{M}^+,\hat{T}^+)$ as the updated {\sf tm}-pair obtained when this algorithm terminates, we enforce Property~$6^+.$ Namely the function $\phi^+$ is defined from modifying $\phi$ so that, whenever $\phi(v_F) \neq \{v_F\},$ we set $\phi^+(v_F) = \{v_{\hat{F}}\}$ as above.
	Note that $(\hat{M}^+,\hat{T}^+)$ indeed satisfies, besides Property~$6^+,$ Properties~1 and~4.

	\medskip

	Our next (and last) step is to further  modify the {\sf tm}-pair $(\hat{M}^+,\hat{T}^+)$ of $\tilde{G}$ so to obtain a {\sf tm}-pair $(\overline{M}^{},\overline{T}^{})$ of the original graph $G$ so that ${\sf diss}({M},{T})$ is a minor of ${\sf diss}(\overline{M}^{},\overline{T}^{}).$
	Notice that each edge of $\hat{M}^+$
	that is an edge of $R_W$ (resp. ${W}_{\frak{R}}$) has one endpoint that is a ground-vertex of $R_W$ (resp. ${W}_{\frak{R}}$) and another one that is a flap-vertex of $R_W$ (resp. ${W}_{\frak{R}}$).

	Let $v_{\hat{F}}$ be a flap-vertex of $\hat{M}^+.$ We modify $(\hat{T}^+,\hat{M}^+)$ by distinguishing the following cases:

	\begin{itemize}
		\item $v_{\hat{F}}\not\in \hat{T}^+.$ Then the neighbors of $v_{\hat{F}}$ in $R_W$
		      are two ground-vertices $g$ and $g'.$
		      %, and at most one of them belongs to $\hat{T}^+.$
		      From tightness property~\ref{label_mantenimientos} of a rendition, there is a path  $P_{\hat{F}}$ in $\hat{F}$ with $g$ and $g'$ as
		      endpoints. We substitute the edges  $\{g,\hat{F}\}$ and $\{\hat{F},g'\}$ of $R_W$ with $P_{\hat{F}}.$ Notice that  $P_{\hat{F}}$ is a path of ${\sf compass}_{\frak{R}}(W).$

		\item $v_{\hat{F}}\in \hat{T}^+$ and $v_{\hat{F}} \in \phi^+(x)$ for some $x\not\in Q.$ In this case, from tightness property~\ref{label_participates} of a rendition, $\hat{F}$ has $y\in[3]$ neighbors $v_{1},\ldots,v_{y},$
		      and a vertex $z_{\hat{F}}$ that is connected to $v_{1},\ldots,v_{y}$ via $y$ internally vertex-disjoint paths.  We substitute the edges $\{z_{\hat{F}},v_{1}\},\ldots,\{z_{\hat{F}},v_{y}\}$ and the vertex $z_{\hat{F}}$ of $R_W$ with the union of these $y$ internally vertex-disjoint paths. Notice that these paths are also paths of ${\sf compass}_{\frak{R}}(W).$ We also update $\hat{T}^+:=\hat{T}^+\setminus\{v_{\hat{F}}\}\cup\{z_{\hat{F}}\}$ and we call the vertex $z_{\hat{F}}$ {\em replacement} of $v_{\hat{F}}.$

		\item $v_{\hat{F}}\in \hat{T}^+$ and $\{v_{\hat{F}}\}=\phi^+(v_{F})$ for some $v_{F}\in Q.$ Let ${\bf M}_{F}=({F}\cap M,\partial F,\rho_{A}\cup\rho_{F})$ and let $T_F = T \cap F.$
		      From Property~$6^+,$ ${\hat{\ell}}\mbox{\sf -folio}({\bf \hat{F}}^{A}) = {\hat{\ell}}\mbox{\sf -folio}({\bf F}^{A}),$ which implies  that  there is a {\em {\sf btm}-pair} $({\bf M}_{\hat{F}},{T}_{\hat{F}})$ of ${\bf \hat{F}}^A$
		      % that is a topological minor of ${\bf F}'^A$ and
		      such that ${\sf diss}({\bf M}_{\hat{F}},{T}_{\hat{F}})={\sf diss}({\bf M}_{F},T_{F}).$ We denote by $M_{\hat{F}}$ the underlying graph of
		      ${\bf M}_{\hat{F}}$ and we substitute the vertex $v_{\hat{F}}$ and its incident edges in $W_{\frak{R}}$
		      with the graph $M_{\hat{F}}$ (that is a subgraph of ${\sf compass}_{\frak{R}}(W)$), by identifying the boundary vertices according to the functions $\rho_{A}\cup\rho_{F}.$ We also update $\hat{T}^+:=\hat{T}^+\setminus\{v_{\hat{F}}\}\cup {T}_{\hat{F}}.$
	\end{itemize}

	The above operations create a {\sf tm}-pair of $G$ that  we denote henceforth by  $(\overline{M},\overline{T}).$ Since $(\overline{M},\overline{T})$ satisfies Property~4 and $\tilde{r}\geq z,$ $\overline{M} \cap \cupall{\sf influence}_{{\frak{R}}}(W') = \emptyset,$ and therefore $(\overline{M},\overline{T})$
	is a {\sf tm}-pair of $G\setminus V({\sf compass}_{\frak{R}}(\tilde{W}')).$ It is worth mentioning that Property~1 implies the strong property that, throughout the rerouting procedure, the part of the topological minor model outside of the  $\frak{R}$-compass of $W$ in $G \setminus A$ can only be reduced; formally, $\overline{M} \cap X \subseteq M \cap X$ (see~\autoref{label_enthusiastically}  after the end of  this proof).

	\begin{figure}[h]
		\begin{center}
			\includegraphics[width=.33\textwidth]{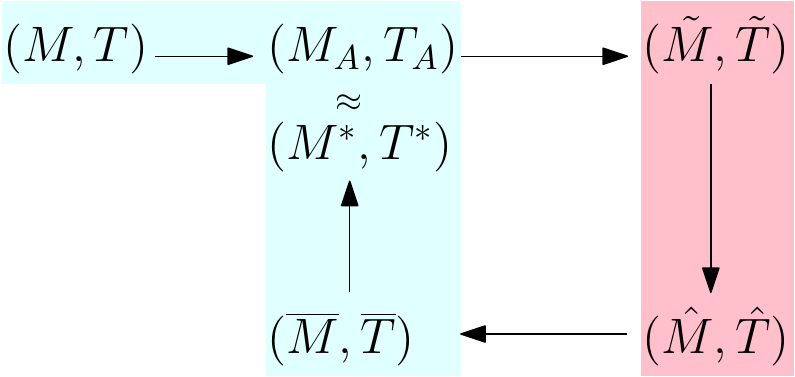}
		\end{center}
		\caption{The ${\sf tm}$-pairs  considered in the proof of \autoref{label_pretendientes}.
			The cyan-shadowed pairs are {\sf tm}-pairs of $G$ and the pink-shadowed pairs are {\sf tm}-pairs of  $\tilde{G}.$}
		\label{label_irresponsibility}
	\end{figure}

	As  $H$ is a minor of $H'={\sf diss}(M,T),$ hence a minor of ${\sf diss}(M_A,T_A)$ as well, and $G\setminus V({\sf compass}_{\frak{R}}(\tilde{W}'))={\bf K}\oplus (Z\setminus V({\sf compass}_{\frak{R}}(\tilde{W}')), B,\rho),$ it remains to verify  that ${\sf diss}(\overline{M},\overline{T})$ contains ${\sf diss}(M_A,T_A)$ as a minor.
	For this, consider every  $x\in \tilde{T}\setminus Q$ and
	construct the set $T_{x}$ by taking $T_{x}:=\phi^+(x)$ and substituting each branch flap-vertex $v_{\hat{F}}\in T_{x}$ with its replacement $z_{\hat{F}}$ defined as in the second case of the above case analysis. Notice  that each set $T_{x}$ is a subset of $\overline{T}.$
	We now construct $({M}^{\star},{T}^{\star})$ as follows: $\tilde{M}^{\star}$ is obtained by contracting, for each $x\in \tilde{T}\setminus Q,$ all subdivision paths in $\overline{M}$ that have as endpoints two
	vertices in $T_{x}$ to a single vertex,  which we again call $x.$ This operation identifies, for each $x\in \tilde{T}\setminus Q,$ all vertices of $T_{x}$ into $x,$ thus
	we set ${T}^{\star}=(\overline{T}\setminus (\bigcup_{x\in \tilde{T}\setminus Q}T_{x}))\cup (\tilde{T}\setminus Q).$ That way $({M}^{\star},{T}^{\star})$ can be seen as
	the {\sf tm}-pair $(\tilde{M},\tilde{T})$ where each of its branch flap-vertices
	has been substituted  as in the third case of the above case analysis.
	This permits us to verify that ${\sf diss}({M}_A,{T}_A)$ and ${\sf diss}({M}^{\star},{T}^{\star})$ are isomorphic (see \autoref{label_irresponsibility} for the relation between the models that we have defined). As  ${\sf diss}({M}^{\star},{T}^{\star})$  is a minor of   ${\sf diss}(\overline{M},\overline{T}),$  the theorem follows.
\end{proof}

%\ig{OBSERVATION: in the rerouting, the part outside the compass of $W$ does NOT change}

\begin{remark}\label{label_enthusiastically}
	In the above proof,  the rerouted topological minor model $(\overline{M},\overline{T})$ obtained from the original model $(M,T)$ satisfies the following property: the part of the topological minor model outside of the $\frak{R}$-compass of $W$ in $G \setminus A$ can only be reduced with respect to the original one; formally, $\overline{M} \cap X \subseteq M \cap X,$ where $X$ is the ``external'' set corresponding to the separation of the considered apex-wall triple $(A,W,\frak{R}).$
\end{remark}

By definition of the set ${\cal R}_{h}^{(t)},$ its elements are of minimum size, and therefore a boundaried graph ${\bf G}=(G,B,
	\rho)\in {\cal R}_{h}^{(t)}$
does not contain any $3h$-irrelevant vertex. To see this,
recall that in~\autoref{label_unquenchable} the equivalence is defined in terms of graphs $H$ with detail at most $h$ (i.e., with at most $h$ vertices and  at most  $h$ edges), and that by \autoref{label_daskalojannes}  every graph in ${\sf ext}(H)$ has detail at  most $3h.$ On the other hand, \autoref{label_tranquillity} implies that, in the setting of \autoref{label_pretendientes}, there is a vertex belonging to the compass of every $W'$-tilt of $(W,\frak{R}),$ where $W'$ is the
central $z$-subwall of $W.$ Thus, from~\autoref{label_pretendientes} for the particular case $z=3$ and $\ell=3h,$ $B$
should affect every $(a,\funref{label_encompassing}(a,3,3h),\funref{label_intervention}(a,3h))$-apex-wall triple  of $G,$ for every value of $a.$ As $|B| = t,$ we conclude the following.
\begin{corollary}%[{\sf BOUNDARY AFFECTS  ALL WALLS}]
	\label{label_dispensaries}
	If $t,h,a\in \Bbb{N}$  and ${\bf G}=(G,B,
		\rho)$ is a  boundaried graph in ${\cal R}_{h}^{(t)},$ then  $B$ affects every $(a,\funref{label_encompassing}(a,3,3h),\funref{label_intervention}(a,3h))$-apex-wall triple  of $G,$ in particular, ${\bf p}_{a,\funref{label_encompassing}(a,3,3h),\funref{label_intervention}(a,3h)}(G)\leq t.$
\end{corollary}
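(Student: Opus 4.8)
The plan is to derive this from \autoref{label_pretendientes} together with the minimality built into the definition of ${\cal R}_{h}^{(t)}$, arguing by contradiction; this expands the discussion preceding the statement. Fix $t,h,a\in\Bbb{N}$ and ${\bf G}=(G,B,\rho)\in{\cal R}_{h}^{(t)}$, and suppose that $B$ does not affect some $(a,\funref{label_encompassing}(a,3,3h),\funref{label_intervention}(a,3h))$-apex-wall triple $(A,W,\frak{R})$ of $G$. Applying \autoref{label_pretendientes} with $z=3$ and $\ell=3h$ (for these choices $(a,\funref{label_encompassing}(a,3,3h),\funref{label_intervention}(a,3h))$ is precisely $(a,\funref{label_encompassing}(a,z,\ell),\funref{label_intervention}(a,\ell))$), the vertex set $C$ of the compass of any $W'$-tilt of $(W,\frak{R})$, where $W'$ is the central $3$-subwall of $W$, is $3h$-irrelevant. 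By \autoref{label_tranquillity} the central vertices of $W'$ lie in every $W'$-tilt, so $C\neq\emptyset$; moreover, since $(W,\frak{R})$ is a flatness pair of $G\setminus A$ and $B$ does not affect $(A,W,\frak{R})$ (so $B\cap N_{G}[V({\sf compass}_{\frak{R}}(W))]\subseteq A$), no vertex of $C$ lies in $A\cup B$. Hence I can fix $v\in C\setminus B\subseteq V(G)\setminus B$.

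The next step is to show that ${\bf G}':=(G\setminus\{v\},B,\rho)$ is $h$-equivalent to ${\bf G}$, which contradicts the fact that ${\bf G}$ is a minimum-sized representative of its $\equiv_{h}$-class, as $|V({\bf G}')|=|V({\bf G})|-1$. The two boundaried graphs are compatible (same boundary and labeling, and $v\notin B$), and one implication is immediate: ${\bf K}\oplus{\bf G}'$ is a subgraph of ${\bf K}\oplus{\bf G}$ for every compatible ${\bf K}$, so $H\prem{\bf K}\oplus{\bf G}'$ implies $H\prem{\bf K}\oplus{\bf G}$. For the converse, take any graph $H$ on at most $h$ vertices and $h$ edges with $H\prem{\bf K}\oplus{\bf G}$. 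By \autoref{label_daskalojannes} this relation is witnessed by a topological minor model in ${\bf K}\oplus{\bf G}$ of some graph in ${\sf ext}(H)$, and every graph in ${\sf ext}(H)$ has detail at most $3h$; therefore every topological minor model of a member of ${\sf ext}(H)$ in ${\bf K}\oplus{\bf G}$ has at most $3h$ branch vertices and at most $3h$ subdivision paths, so the $(V(G)\setminus B)$-minor-intrusion of $H$ in ${\bf K}\oplus{\bf G}$ is at most $3h$. Since $C$ is $3h$-irrelevant, $H$ is then a minor of ${\bf K}\oplus(G\setminus C,B,\rho)$, which is a subgraph of ${\bf K}\oplus(G\setminus\{v\},B,\rho)={\bf K}\oplus{\bf G}'$ because $v\in C$ and $v\notin B$; hence $H\prem{\bf K}\oplus{\bf G}'$. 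This gives ${\bf G}\equiv_{h}{\bf G}'$ and the desired contradiction, so $B$ affects every $(a,\funref{label_encompassing}(a,3,3h),\funref{label_intervention}(a,3h))$-apex-wall triple of $G$. Since $|B|=t$, the definition of ${\bf p}_{a,r,\ell}$ then yields ${\bf p}_{a,\funref{label_encompassing}(a,3,3h),\funref{label_intervention}(a,3h)}(G)\leq t$.

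Once \autoref{label_pretendientes} is available there is no genuine combinatorial obstacle here; the content is careful bookkeeping. The most delicate point is the choice of slack: one must invoke \autoref{label_pretendientes} with $\ell=3h$ precisely because $\equiv_{h}$ ranges over graphs $H$ of detail at most $h$, so that members of ${\sf ext}(H)$ have detail at most $3h$ by \autoref{label_daskalojannes} and their topological minor models have minor-intrusion bounded by $3h$, matching the hypothesis of the irrelevance property. The step that benefits most from being written out in full is the equivalence ${\bf G}\equiv_{h}{\bf G}'$, and within it the passage from ``$H$ is a minor of ${\bf K}\oplus{\bf G}$'' to ``$H$ is a minor of ${\bf K}\oplus{\bf G}$ of $(V(G)\setminus B)$-minor-intrusion at most $3h$'', which is exactly where \autoref{label_daskalojannes} and the definition of minor-intrusion are used; the remaining ingredients (compatibility, monotonicity of $\oplus$ under taking subgraphs, and the fact that deleting $C$ is a special case of deleting $v$) are routine.
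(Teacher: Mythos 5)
Your proof is correct and follows the same route as the paper: the paper's (much terser) justification, given in the paragraph preceding the corollary, likewise combines \autoref{label_pretendientes} with $z=3$ and $\ell=3h$, \autoref{label_tranquillity} to get a nonempty irrelevant set, \autoref{label_daskalojannes} to bound the minor-intrusion by $3h$, and the minimality of representatives in ${\cal R}_{h}^{(t)}$. Your write-up simply makes explicit the bookkeeping (the $h$-equivalence of ${\bf G}$ and ${\bf G}\setminus\{v\}$, and why the compass avoids $A\cup B$) that the paper leaves implicit.
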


%
%\end{document}
\section{Bounding the size of the representatives}
\label{label_pertenecerles}

In this section we use the results obtained in the previous sections to prove that every representative in ${\cal R}_{h}^{(t)}$ has size linear in $t.$ For this, we first prove in \autoref{label_intolerantly} that every representative in ${\cal R}_{h}^{(t)}$ has a set of $\Ocal_h(t)$ vertices containing its boundary whose removal leaves a graph with treewidth bounded by a constant depending only on the collection $\Fcal$; such a set is called a \emph{treewidth modulator}.

Once we have the treewidth modulator, we can use known results from the protrusion machinery to achieve our goal. Namely, in \autoref{label_sechseckigen} we show how to obtain a linear protrusion decomposition of a representative, and we reduce each of the linearly many protrusions in the decomposition to an equivalent protrusion of constant size. Once we have this, a dynamic programming algorithm similar to that of~\cite{BasteST20-monster1} yields \autoref{label_desgobernada}.

%Finally,
%in \autoref{label_reconstituirse} we give upper bounds on the constants depending on the collection $\Fcal$ involved in our algorithm. These upper bounds depend explicitly on the
%parametric dependencies of the Unique Linkage Theorem~\cite{KawarabayashiW2010asho,RobertsonSGM22}.

\subsection{Finding a treewidth modulator of linear size}
\label{label_intolerantly}

%\paragraph{Balanced separations.} %Let $G$ be  a graph. We say that a pair $(L,R)\in 2^{V(G)}\times 2^{V(G)}$ is a {\em separation} of $G$ if $L \cup R = V(G)$ and there is no edge between $L\setminus R$ and $R\setminus L.$
Given a graph $G$ and a set $S\subseteq V(G),$  we say that a separation  $(L,R)$ of $G$ is a {\em $2/3$-balanced separation} of $S$ in $G$  if $|(L\setminus R)\cap S|, |(R\setminus L)\cap S|\leq \frac{2}{3}|S|.$
%Balanced separators have been extensively studied in the context of graph algorithms (see e.g.~\cite{LiptonT79,AST90,LiptonT80,AlonST94}).
We need the following well-known property of graphs of bounded treewidth (see e.g.  \cite{Bodlaender98,CyganFKLMPPS15}).

\begin{lemma}
	\label{label_instituciones}
	Let $G$ be a graph and let $S\subseteq V(G).$ There is a $2/3$-balanced separation $(L,R)$ of $S$ in $G$ of order at most $\tw(G)+1.$
\end{lemma}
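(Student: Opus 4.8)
\textbf{Proof plan for Lemma~\ref{label_instituciones}.}
The plan is to reduce the existence of a $2/3$-balanced separation to the classical fact that a tree decomposition of width $w=\tw(G)$ always contains a bag whose removal splits the rest of the decomposition into pieces each carrying at most a $2/3$-fraction of any prescribed vertex set. First I would fix an arbitrary (rooted) tree decomposition $(T,\mathcal{X})$ of $G$ of width $\tw(G)$, so every bag has size at most $\tw(G)+1$. For a node $t\in V(T)$ and a child edge of $t$, let $V_e$ denote the union of the bags in the component of $T-e$ not containing $t$; a standard counting argument (orient each edge of $T$ towards the side containing more than $\frac13|S|$ of $S$, if any, and observe this orientation has no sink at a leaf, hence points towards some node $t^\star$ from all incident edges or stabilises at $t^\star$) produces a node $t^\star$ such that every component $C$ of $T-t^\star$ satisfies $|(\,\bigcup_{u\in C}X_u\,)\cap S\setminus X_{t^\star}|\le\frac23|S|$. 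This is the weighted version of the well-known ``centroid bag'' lemma; I would either cite it from \cite{Bodlaender98,CyganFKLMPPS15} or include the two-line orientation argument.

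Given such a node $t^\star$, I would build the separation explicitly. Let $X_{t^\star}$ be its bag and let $T_1,\dots,T_m$ be the components of $T-t^\star$. Group these components arbitrarily into two classes so that each class contributes at most $\frac23|S|$ vertices of $S\setminus X_{t^\star}$ — this is a trivial bin-packing step since each individual component already contributes at most $\frac23|S|$, so we can greedily assign components to the two classes keeping each class below $\frac23|S|$ (if one component alone exceeds $\frac13|S|$ it goes alone on one side; the remaining components, each at most $\frac13|S|$, are packed into the other side, possibly splitting them further across the two sides is unnecessary). Define $L=X_{t^\star}\cup\bigcup_{\text{class }1}\bigcup_{u\in T_i}X_u$ and $R=X_{t^\star}\cup\bigcup_{\text{class }2}\bigcup_{u\in T_i}X_u$. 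By the connectivity axiom of tree decompositions, any vertex appearing in a class-1 bag and a class-2 bag must appear in $X_{t^\star}$, so $L\cap R=X_{t^\star}$, giving order $|X_{t^\star}|\le\tw(G)+1$; and any edge of $G$ lies inside some bag, hence inside one class or in $X_{t^\star}$, so there is no edge between $L\setminus R$ and $R\setminus L$, i.e.\ $(L,R)$ is a separation. Finally $|(L\setminus R)\cap S|=|(L\cap S)\setminus X_{t^\star}|\le\frac23|S|$ by the choice of the grouping, and symmetrically for $R$, so $(L,R)$ is $2/3$-balanced.

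The only mild subtlety — and the ``main obstacle'', such as it is — is the packing step: one must be careful that after isolating a possibly-large single component on one side, the remaining components can all be placed on the other side without that side exceeding $\frac23|S|$. This works because the total weight is $|S\setminus X_{t^\star}|\le|S|$, so if one side holds a component of weight $>\frac13|S|$, the complement has weight $<\frac23|S|$ and fits; and if no single component exceeds $\frac13|S|$, a straightforward greedy (first-fit) assignment keeps both sides below $\frac23|S|$ since whenever a side is still below $\frac13|S|$ we may add any remaining component to it. I would spell this out in one sentence and then conclude. No heavier machinery (separators from \cite{RobertsonSGM22}, flat walls, etc.) is needed; this is purely a property of tree decompositions.
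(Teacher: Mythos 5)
The paper does not prove this lemma at all: it is stated as a well-known property of tree decompositions and delegated to the references \cite{Bodlaender98,CyganFKLMPPS15}. Your argument is precisely the standard proof those references give --- find a centroid bag $X_{t^\star}$ and greedily pack the subtrees of $T-t^\star$ into two sides --- and it is correct, so there is nothing to compare against.

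One small caveat on your orientation rule: orienting an edge ``towards the side containing more than $\frac{1}{3}|S|$'' is not well defined, since both sides of an edge can simultaneously carry more than $\frac{1}{3}|S|$, and then the ``no cycling back'' argument for reaching a sink breaks. The standard fix is to orient each edge towards the side carrying \emph{strictly more than half} of $S$ (leaving it unoriented otherwise); this is unambiguous, a sink node exists by following oriented edges, and it gives the even stronger guarantee that each component of $T-t^\star$ carries at most $\frac{1}{2}|S|$ of $S\setminus X_{t^\star}$. Your subsequent packing step is correct as written and in fact only needs the weaker per-component bound of $\frac{2}{3}|S|$ together with the total weight being at most $|S|$, so the rest of the proof goes through unchanged.
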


Using \autoref{label_substantiality}, \autoref{label_concerniente}, \autoref{label_dispensaries}, and \autoref{label_instituciones} we prove the following result, whose proof uses Akra-Bazzi Theorem~\cite{AkraB98}, in particular its extended  provided by Leighton~\cite{LeightonAB}. We stress that ${\bf p}$ is not a bidimensional parameter in the precise way that is defined in~\cite{F.V.Fomin:2010oq,DemaineFHT05sube,FominDHT16}, therefore
\autoref{label_murmuradores} cannot be derived by directly
applying the results of \cite{F.V.Fomin:2010oq}.

%\ig{REMOVE $q$ (almost) everywhere}

%\ig{Dimitrios, in the theorem below, do we really need to distinguish between $h$ and $q$?? In fact, by the way that the equivalence relation $\equiv_{h}$ is defined, we have that $h=q$ here!!!}

\begin{lemma}%[{\sf MODULATOR$\star$}]
	\label{label_murmuradores}
	There exist two functions $\newfun{label_eavesdropper},\newfun{label_confrontation}: \Bbb{N}^2\to\Bbb{N}$ such that
	if $t,q,h\in \Bbb{N}$  and ${\bf G}=(G,B,\rho)$ is
	a $K_{q}$-minor-free boundaried graph  in ${\cal R}_{h}^{(t)},$ then
	$G$ contains an $\funref{label_eavesdropper}(q,h)$-treewidth modulator that contains $B$
	and has  at most $\funref{label_confrontation}(q,h) \cdot t$ vertices.
	Moreover, it holds that $\funref{label_eavesdropper}(q,h)=\Ocal((\funref{label_superintendent}(q,\funref{label_encompassing}(\funref{label_scheinbeziehungen}(q),3,3h),\funref{label_intervention}(\funref{label_scheinbeziehungen}(q),3h)))^2).$
\end{lemma}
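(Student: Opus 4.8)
The plan is to combine the three structural facts established earlier --- that the parameter $\mathbf{p}_{a,r,\ell}$ grows quadratically with treewidth on $K_q$-minor-free graphs (\autoref{label_substantiality}), that it is separable (\autoref{label_concerniente}), and that for any representative $\mathbf{G}=(G,B,\rho)\in\mathcal{R}_h^{(t)}$ the boundary $B$ affects every relevant apex-wall triple, so in particular $\mathbf{p}_{a,r,\ell}(G)\le t$ (\autoref{label_dispensaries}) --- together with the balanced-separator lemma (\autoref{label_instituciones}), in a recursive decomposition argument à la \cite[Lemma~3.6]{F.V.Fomin:2010oq}. Set $a=\funref{label_scheinbeziehungen}(q)$, $r=\funref{label_encompassing}(a,3,3h)$, $\ell=\funref{label_intervention}(a,3h)$, and abbreviate $\mathbf{p}=\mathbf{p}_{a,r,\ell}$; these are exactly the parameters for which \autoref{label_dispensaries} gives $\mathbf{p}(G)\le t$, and for which \autoref{label_substantiality} gives $\tw(G')\le \funref{label_superintendent}(q,r,\ell)\cdot\max\{1,\sqrt{\mathbf{p}(G')}\}$ for every $K_q$-minor-free $G'$. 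Put $\funref{label_eavesdropper}(q,h):=c\cdot(\funref{label_superintendent}(q,r,\ell))^2$ for a suitable absolute constant $c$, so that a graph $G'$ with $\mathbf{p}(G')\le$ (a small constant) has $\tw(G')$ bounded by a function of $q,h$ only; this will be the target treewidth for the modulator.

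The construction proceeds recursively. I would maintain a graph $G'$ obtained from $G$ by repeatedly splitting along balanced separators, always carrying along the ``marked'' vertex set $S'$ consisting of $B$ together with previously chosen separator vertices, and I would always keep track of a witness set $W'\subseteq V(G')$ of size $\mathbf{p}(G')$ that affects every $(a,r,\ell)$-apex-wall triple of $G'$. Since $G'$ is a subgraph of $G$ it is still $K_q$-minor-free, so \autoref{label_substantiality} applies; if $\mathbf{p}(G')$ is already below the constant threshold we stop and declare $G'$ a ``base'' piece whose treewidth is $\le\funref{label_eavesdropper}(q,h)$. Otherwise $\tw(G')=\Omega(\sqrt{\mathbf{p}(G')}\cdot\funref{label_superintendent}(q,r,\ell))$ is large, in particular large relative to $|S'|$ once $|S'|$ is controlled; we invoke \autoref{label_instituciones} to get a $2/3$-balanced separation $(L,R)$ of $S'$ of order at most $\tw(G')+1$, add $L\cap R$ to the modulator-in-progress, and recurse on $G'[L]$ and $G'[R]$ with updated marked sets. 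The key inequality driving termination is that $\mathbf{p}$ is \emph{separable}: by \autoref{label_concerniente}, the set $L\cap(R\cup S')$ affects every apex-wall triple of $G'[L]$ (and symmetrically for $R$), and since the separation is balanced on $S'$ one derives that $\mathbf{p}(G'[L])$ and $\mathbf{p}(G'[R])$ are each at most roughly $\mathbf{p}(G')/2 + O(\tw(G')/\funref{label_superintendent}(q,r,\ell))$, i.e., essentially a constant fraction of $\mathbf{p}(G')$ plus an additive term controlled by the current separator order. Writing out the resulting recurrence for the total number of separator vertices added, of the form $f(p)\le 2f(p/2+\text{lower order}) + O(\sqrt p)$ with the extra additive $\tw$-terms absorbed, the Akra--Bazzi/Leighton theorem \cite{AkraB98,LeightonAB} yields $f(p)=O(p)$, hence (using $\mathbf{p}(G)\le t$ from \autoref{label_dispensaries}) the total set $S$ of chosen separator vertices together with $B$ has size $O_h(t)$, and by construction $\tw(G\setminus S)\le\funref{label_eavesdropper}(q,h)$; set $\funref{label_confrontation}(q,h)$ to be the hidden constant.

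The delicate point, and the main obstacle, is bookkeeping the interplay between the growing marked set $S'$ (which includes $B$, of size $t$, and accumulates separator vertices) and the balanced-separation lemma, which bounds the separator order by $\tw(G')+1$ rather than by anything relative to $|S'|$. The resolution --- and this is precisely the subtlety that makes the naive application of \cite{F.V.Fomin:2010oq} fail, as flagged in the remark preceding the statement --- is that $\mathbf{p}$ is not bidimensional in the textbook sense, so one must run the recursion on the parameter $\mathbf{p}$ directly (which genuinely halves) rather than on $\tw$, while using \autoref{label_substantiality} only to convert ``$\mathbf{p}$ small'' into ``$\tw$ small'' at the leaves and to certify that the separator order is $O(\sqrt{\mathbf{p}(G')}\cdot\funref{label_superintendent})$ at internal nodes; one then checks that the additive $O(\tw)$ terms in the separability bound sum to $O(\mathbf{p}(G))$ over the whole recursion tree. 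A secondary point requiring care is that when we pass to the subgraph $G'[L]$, an apex-wall triple of $G'[L]$ might use apices outside $L\cap R$; \autoref{label_concerniente} is stated precisely to handle this (the affecting set $L\cap(R\cup S')$ includes all of $R\cup S'$ restricted to $L$), so no new idea is needed, only a faithful application. Once these estimates are assembled, the lemma follows with the claimed bound $\funref{label_eavesdropper}(q,h)=\Ocal\big((\funref{label_superintendent}(q,\funref{label_encompassing}(\funref{label_scheinbeziehungen}(q),3,3h),\funref{label_intervention}(\funref{label_scheinbeziehungen}(q),3h)))^2\big)$.
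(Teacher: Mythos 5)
Your proposal follows essentially the same route as the paper's proof: seed the recursion with Corollary~\ref{label_dispensaries} (so the boundary is an affecting set of size at most $t$), use \autoref{label_substantiality} to bound the order of the balanced separator from \autoref{label_instituciones} by $O_s(\sqrt{\cdot})$, use \autoref{label_concerniente} to propagate the affecting property to each side, and solve the resulting recurrence with the Akra--Bazzi/Leighton theorem to get a linear-size modulator with $\funref{label_eavesdropper}(q,h)=O(s^2)$. The only (harmless) difference is presentational: the paper runs the recurrence on the size of the affecting set $B$ (which is also the set the separation is balanced on), whereas you phrase the decrease in terms of ${\bf p}(G')$ itself while balancing on the marked set $S'$ --- these coincide once one notes that $S'$ is itself always an affecting set, so the recursion variable is really $|S'|$, exactly as in the paper.
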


%$\funref{label_scheinbeziehungen}(q),\funref{label_encompassing}(\funref{label_scheinbeziehungen}(q),3,3h),\funref{label_intervention}(\funref{label_scheinbeziehungen}(q),3h)$

\begin{proof}
	Let $q,h\in \Bbb{N}.$
	We use ${\sf q}$ as a shortcut for  the triple $(\funref{label_scheinbeziehungen}(q),\funref{label_encompassing}(\funref{label_scheinbeziehungen}(q),3,3h),\funref{label_intervention}(\funref{label_scheinbeziehungen}(q),3h))$
	and we set $s=\funref{label_superintendent}(q,\funref{label_encompassing}(\funref{label_scheinbeziehungen}(q),3,3h),\funref{label_intervention}(\funref{label_scheinbeziehungen}(q),3h)).$
	%For every $t\in \Bbb{N},$ we define
	%Let ${\cal G}$ be the class of $K_{q}$-minor free boundaried graphs  in ${\cal R}_{h}^{(t)}.$
	Let $t_0=\max\{\min\{t'\mid s\cdot  \sqrt{t'}+s+1\leq t'/\log^{2}t'\},42534179953\}$ and let $x=s\cdot  \sqrt{t_0}+s.$ We
	%42534179952.477947235107421875 is also OK!
	%\sed{\tiny\hspace{-10mm} 42534179952.477947235107421875 is also OK!}
	define the  function ${\bf z}:\Bbb{N}\to\Bbb{N}$ where
	\begin{eqnarray}
		{\bf z}(t)\!\! &\!\!\!\! =\!\!\!\! &\!\! \min \{z\mid  \forall G\ \forall B\subseteq V(G) \mbox{~if $G$ is $K_{q}$-minor-free, $|B|\leq t,$ and $B$ affects every ${\sf q}$-apex-wall triple of $G,$~}\nonumber\\
		& & ~~~~~~~~~~~~~~~~~~~~~~~~~~~~~~~~~~~~~~~~~~~~~~~~~~~~~~~~~~~~~~~\mbox{then }\exists Z \subseteq V(G):  |Z|\leq z, B\subseteq Z, \tw(G\setminus Z)\leq x\}\nonumber.
	\end{eqnarray}
	Let $G$ be a $K_{q}$-minor-free graph and let $B\subseteq V(G)$ such
	$|B|\leq t$ and $B$ affects every ${\sf q}$-apex-wall triple of $G.$
	%We make the following observations.
	%\begin{enumerate}
	From \autoref{label_substantiality}, $\tw(G)\leq s\cdot  \max\{1,\sqrt{|B|}\}\leq s\cdot  \max\{1,\sqrt{t}\}\leq s\cdot  \sqrt{t}+s.$
	From~\autoref{label_instituciones}, $G$ has a 2/3-balanced separator $(L,R)$ of $B$ where $|L\cap  R|\leq  s\cdot  \sqrt{t}+s
		+1.$ This means that $|(L\setminus R)\cap B|, |(R\setminus L)\cap B|\leq \frac{2}{3}|B|.$
	We set $G_{L}=G[L],$ $G_{R}=G[R],$
	$B_{L}=L\cap (R \cup B) \subseteq V(G_{L}),$ and
	$B_{R}=R\cap (L \cup B)\subseteq V(G_{R})$
	and observe that $B\subseteq B_L\cup B_{R}$ and that both $G_{L}$ and $G_{R}$ are $K_{q}$-minor-free.
	Notice that $B_L=\big((L\setminus R)\cap B\big)\cup (L\cap R)$ and $B_R=\big((R\setminus L)\cap B\big)\cup (R\cap L),$ therefore  there is some $α\in [\frac{1}{2},\frac{2}{3}],$ such that
	$|B_L|\leq α\cdot t+s\cdot  \sqrt{t}+s+1$ and $|B_{R}|\leq (1-α)\cdot t+s\cdot  \sqrt{t}+s+1.$
	From~\autoref{label_concerniente},
	$B_{L}$ affects every ${\sf q}$-apex-wall triple of $G_{L}$
	and $B_{R}$ affects every ${\sf q}$-apex-wall triple of $G_{R}.$ This means that there exists some $Z_{L} \subseteq V(G_L)$ such that
	$|Z_{L}|\leq {\bf z}(α\cdot t+s\cdot  \sqrt{t}+s+1),$
	$B_{L}\subseteq Z_{L},$ and $\tw(G_{L}\setminus Z_{L})\leq x.$ Also,  there exists some $Z_{R}  \subseteq V(G_R)$ such that
	$|Z_{R}|\leq {\bf z}((1-α)\cdot t+s\cdot  \sqrt{t}+s+1),$
	$B_{R}\subseteq Z_{R},$ and $\tw(G_{R}\setminus Z_{R})\leq x.$
	We set $Z=Z_{L}\cup Z_{R}$ and observe that
	$|Z| \leq  {\bf z}(α\cdot t+s\cdot  \sqrt{t}+s+1)+{\bf z}((1-α)\cdot t+s\cdot  \sqrt{t}+s+1).$
	Moreover, $B\subseteq B_L\cup B_{R}\subseteq Z_{L}\cup Z_{R}=Z$ and, since $L \cap R \subseteq Z,$  it holds that ${\sf tw}(G\setminus Z)\leq x.$
	%\end{enumerate}
	We obtain that \begin{eqnarray}{\bf z}(t) & \leq  &{\bf z}(α\cdot t+s\cdot  \sqrt{t}+s+1)+{\bf z}((1-α)\cdot t+ s\cdot  \sqrt{t}+s+1).\label{label_accidentally}\end{eqnarray}
	Let now $f:\Bbb{N}_{\geq 1}\to\Bbb{N}$ be the solution of the following recurrence:

	$$f(t) = \begin{cases} f\big(\frac{t}{3}+s\cdot  \sqrt{t}+s+1\big)+f\big( \frac{2t}{3}+ s\cdot  \sqrt{t}+s+1\big) & \mbox{if } t> t_0            \\
			t_0                                                                                        & \mbox{if }1\leq t \leq t_{0}\end{cases}$$

	By the choice of $t_0,$ it holds that $t_0 = \Theta_{s}(1).$ Also, the choice of $t_0$ is made so that the conditions for applying the extended version of Akra-Bazzi Theorem \cite{AkraB98} provided by Leighton~\cite{LeightonAB} are satisfied\footnote{We verified these conditions using an elementary {\sf MATLAB} program, from which the number 42534179953 was generated.}. Consequently, the solution of the above recurrence is $f(t)=\Theta_{s}(t^{α})$ where
	$α$ is the unique solution of the  equation $(1/3)^α+(2/3)^α=1.$ Therefore $f(t)=Θ_{s}(t).$

	Note that, if $1 \leq t \leq t_{0},$ then from  \autoref{label_substantiality} we have that
	${\bf z}(t)\leq t \leq t_0= f(t).$ On the other hand,  by convexity, the right part of \eqref{label_accidentally} is upper-bounded by  $f(t),$ so for all $t > t_0$
	we have that ${\bf z}(t)\leq f(t).$ Summarizing, we have that ${\bf z}(t)\leq f(t)$ for all $t \geq 1.$

	Let now ${\bf G}=(G,B,\rho)$ be
	a $K_{q}$-minor-free boundaried graph  in ${\cal R}_{h}^{(t)}.$ Applying~\autoref{label_dispensaries} with $a=\funref{label_scheinbeziehungen}(q),$  we obtain that $B$ affects every ${\sf q}$-apex-wall triple  of $G.$ Therefore $G$ contains a $x$-treewidth modulator that contains $B$ and has ${\bf z}(t)\leq f(t)=\Ocal_{s}(t)$ vertices, as required. Therefore, $ {\bf z}(t) \leq \funref{label_confrontation}(q,h) \cdot t$ for some function $\funref{label_confrontation}: \Bbb{N}^2\to\Bbb{N}.$  Observe  that $t_{0}=\Ocal(s^2),$ therefore $x=\Ocal(s^2)$ as well.
	The lemma follows with $\funref{label_eavesdropper}(q,h) :=x=\Ocal(s^2).$
	% x=\Ocal(s^2)=(\funref{label_inevitability}(h))^{{2^{2^{\Ocal((q+h)\cdot \log(q+h))}}}}.
\end{proof}

Note that the above proof does not give any estimation on the function $\funref{label_confrontation}(q,h).$ In the Appendix (\autoref{label_confederarlos}) we provide an improved version of \autoref{label_murmuradores}, namely \autoref{label_gleichsinnigen}, with $\funref{label_confrontation}(q,h)=2.$
This will permit us to make a better
estimation of the contribution of $h$ in the running time of our algorithm (cf.~\autoref{label_reconstituirse}). The proof of \autoref{label_gleichsinnigen} is an adaptation to our setting of the one of  \cite[Lemma~3.6]{F.V.Fomin:2010oq}.

\subsection{Finding a linear protrusion decomposition and reducing protrusions}
\label{label_sechseckigen}

Equipped with \autoref{label_gleichsinnigen}, the next step is to construct an appropriate protrusion decomposition of a representative. We first need to define protrusions and protrusion decompositions of graphs and boundaried graphs.

\paragraph{Protrusion decompositions of unboundaried graphs.}
Given a graph $G,$ a set $X \subseteq V(G)$ is a {\em $\beta$-protrusion} of $G$ if $|\partial(X)|\leq \beta$ and $\tw(G[X]) \leq \beta-1.$
Given $\alpha,t\in\Bbb{N},$
an $(\alpha,\beta)${\em -protrusion decomposition} of $G$
is  a sequence ${\cal P}=\langle R_{0},R_{1},\ldots,R_{\ell}\rangle$ of pairwise disjoint subsets of $V(G)$
such that \begin{itemize}
	\item $\bigcup_{i\in[\ell]}=V(G),$
	\item $\max\{\ell, |R_{0}|\}\leq \alpha,$
	\item for $i\in [\ell],$ $N[R_i]$  is a $\beta$-protrusion of $G,$ and
	\item  for $i\in [\ell],$ $N(R_{i})\subseteq R_0.$
	      %SED: actually, this last condition is not necessary! But makes things  more visualizable!
\end{itemize}
We call the sets $N[R_i]$ $i\in [\ell],$ the  {\em protrusions} of ${\cal P}$ and the set $R_0$ the {\em core} of ${\cal P}.$

The above notions can be naturally generalized to boundaried graphs, just by requiring that both boundaries --of the host graph and of the protrusion-- behave as one should expect, namely that
the intersection of the protrusion with the boundary of the considered graph is a subset of the boundary of the protrusion.

\paragraph{Protrusions and protrusion decompositions of boundaried graphs.} Given a boundaried graph ${\bf G}=(G,B,\rho),$ a \emph{tree decomposition} of ${\bf G}$ is any tree decomposition of $G$ with a bag containing  $B.$
The {\em treewidth} of a boundaried graph ${\bf G},$ denoted by $\tw({\bf G}),$ is the minimum width of a  tree decomposition of ${\bf G}.$
A boundaried graph ${\bf G}'=(G',B',\rho')$  is a {\em $\beta$-protrusion} of ${\bf G}$
if
\begin{itemize}
	\item $V(G')$ is a $\beta$-protrusion of $G,$
	\item $\tw({\bf G'}) \leq \beta-1,$
	\item $\partial(V(G'))\subseteq B',$ and
	\item $B\cap V(G')\subseteq B'.$
\end{itemize}

Given a boundaried graph ${\bf G}=(G,B,\rho)$ and $\alpha,t\in\Bbb{N},$
an $(\alpha,\beta)${\em -protrusion decomposition} of ${\bf G}$
is  a sequence ${\cal P}=\langle R_{0},R_{1},\ldots,R_{\ell}\rangle$ of pairwise disjoint subsets of $V(G)$
such that \begin{itemize}
	\item $\bigcup_{i\in[\ell]}=V(G),$
	\item $\max\{\ell, |R_{0}|\}\leq \alpha,$
	\item $B \subseteq R_0,$
	\item for $i\in [\ell],$ $(G(N[R_i]), \partial(N[R_i]), \rho\restr{\partial(N[R_i])})$  is a $\beta$-protrusion of ${\bf G},$ and
	\item  for $i\in [\ell],$ $N(R_{i})\subseteq R_0.$
	      %SED: actually, this last condition is not necessary! But makes things  more visualizable!
\end{itemize}
As in the unboundaried case, we call the sets $N[R_i]$ $i\in [\ell],$ the  {\em protrusions} of ${\cal P}$ and the set $R_0$ the {\em core} of ${\cal P}.$

The following theorem is a reformulation using our notation of one of the main results of Kim et al.~\cite{KimLPRRSS16line}, which is stronger than what we need, in the sense that also applies to {\sl topological}-minor-graphs. It is worth mentioning that, for $H$-minor-free-graphs, an appropriate protrusion decomposition can also be found using the results in {\cite[Lemma 3.10]{F.V.Fomin:2010oq}}.

\begin{theorem}\label{label_misconceptions}
	Let $c,\beta, t$ be positive integers, let $H$ be a $q$-vertex graph, and let $G$ be an $n$-vertex $H$-topological-minor-free graph. If we are given a set $M \subseteq V(G)$ with $|M| \leq c \cdot t$ such that $\tw(G-M) \leq \beta,$ then we can compute in time $\Ocal(n)$ an $((\alpha_{H} \cdot \beta \cdot c)\cdot t, 2\beta + q)$-protrusion decomposition ${\cal P}$ of $G$ with $M$ contained in the core of ${\cal P},$ where $\alpha_{H}$ is a constant depending only on $H$ such that $\alpha_H \leq 40 q^2 2 ^{5 q \log q}.$
\end{theorem}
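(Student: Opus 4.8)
The plan is to obtain Theorem~\ref{label_misconceptions} as a reformulation, in the terminology of this paper, of the protrusion-decomposition result of Kim et al.~\cite{KimLPRRSS16line} (an analogous statement for $H$-minor-free graphs being available from~\cite[Lemma~3.10]{F.V.Fomin:2010oq}); below I recall the ingredients of the underlying construction and the bookkeeping needed to match the precise parameters claimed. The only genuinely new work beyond invoking~\cite{KimLPRRSS16line} is checking that the core of the decomposition can be required to contain a prescribed modulator $M$, that the additive term in the protrusion width is at most $q$, and that the constant is the stated $\alpha_H$.

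First I would record the \emph{sparsity} of the host graph: since $H$ has $q$ vertices, $G$ is in particular $K_q$-topological-minor-free, so by the classical bounds of Bollob\'as--Thomason and Koml\'os--Szemer\'edi every subgraph of $G$ has average degree $\Ocal(q^2)$; hence $G$ is $\Ocal(q^2)$-degenerate and $|E(G)|=\Ocal(q^2 n)$. Next comes the \emph{skeleton construction}: given $M$ with $|M|\le c\cdot t$ and $\tw(G-M)\le\beta$, compute in time $\Ocal(n)$ a rooted tree decomposition $(T,\mathcal{X})$ of $G-M$ of width $\Ocal(\beta)$ and $\Ocal(n)$ nodes; for each $v\in M$ let $T_v$ be the minimal subtree of $T$ whose bags jointly contain $N_G(v)\setminus M$, set $T^\star=\bigcup_{v\in M}T_v$, mark the root of $T$ together with every branching node of $T^\star$, and iterate (marking any node of $T$ that becomes a cut node between already-marked regions) until the marked set $\mathcal{L}$ stabilises.

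The heart of the argument, and the step I expect to be the main obstacle, is the bound $|\mathcal{L}|=\Ocal(\beta\cdot c\cdot t)$. This is proved by a charging argument: a degree-$\ge 3$ node of $T^\star$ is witnessed by vertices of $M$ whose neighbourhoods reach into pairwise separated regions of $T$, and too many such witnesses would allow one to route, through $G-M$, a topological model of $K_q$ using vertices of $M$ among the branch vertices, contradicting $H$-topological-minor-freeness; the $\Ocal(q^2 n)$ edge budget from the sparsity step is what makes the charging finite. Quantifying this requires classifying each region of $T\setminus\mathcal{L}$ according to how it attaches to a $q$-element subset of the core, and it is through this enumeration of at most $2^{5q\log q}$ attachment types that the constant $\alpha_H\le 40q^2\,2^{5q\log q}$ enters; this is the one place where I would follow the combinatorics of~\cite{KimLPRRSS16line} most closely.

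Finally I would \emph{assemble} the decomposition. Set $R_0:=M\cup\bigcup_{t\in\mathcal{L}}X_t$, so that $|R_0|\le |M|+\Ocal(\beta)\cdot|\mathcal{L}|=\Ocal(\beta c t)$, and for each connected component $T'$ of $T-\mathcal{L}$ put $R_{T'}:=\big(\bigcup_{t\in T'}X_t\big)\setminus R_0$; these sets together with $R_0$ partition $V(G)$, and there are $\ell=\Ocal(|\mathcal{L}|)=\Ocal(\beta c t)$ of them. Each $T'$ has at most two neighbours in $\mathcal{L}$ and, by the refinement of $\mathcal{L}$, at most $q$ vertices of $M$ send an edge into $R_{T'}$, so $\partial(N[R_{T'}])\subseteq R_0$ has size $\Ocal(\beta)+q$, $N(R_{T'})\subseteq R_0$, and $\tw(G[N[R_{T'}]])\le\beta+q\le 2\beta+q-1$ (extend $T'$ by its $\le 2$ neighbour nodes in $\mathcal{L}$, which keeps it a subtree of width $\le\beta$, then add the $\le q$ relevant vertices of $M$ to every bag). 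Since $M\subseteq R_0$ and every step runs in time $\Ocal(n)$ once $M$ is given, $\mathcal{P}=\langle R_0,R_1,\ldots,R_\ell\rangle$ is the desired $\big((\alpha_H\cdot\beta\cdot c)\cdot t,\,2\beta+q\big)$-protrusion decomposition, absorbing the constants into $\alpha_H$ with the stated bound.
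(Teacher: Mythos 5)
The paper gives no proof of \autoref{label_misconceptions}: it is imported verbatim (modulo notation) from Kim et al.~\cite{KimLPRRSS16line}, and your proposal ultimately rests on the same citation, deferring the one genuinely hard step (the counting lemma bounding the number of marked nodes, where the constant $\alpha_H$ and the $H$-topological-minor-freeness are actually used) to that reference while sketching the standard skeleton-marking and assembly around it. This matches the paper's approach, and the surrounding bookkeeping (LCA-closed marking giving components with at most two marked neighbours, boundary size $2\beta+q,$ and $M$ placed in the core) is consistent with the cited construction.
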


Having stated the above definitions, the following lemma is an easy consequence of \autoref{label_gleichsinnigen} and \autoref{label_misconceptions}.

%\ig{again, I think that in \autoref{label_unintentional} below we can just set $q = h$}

\begin{lemma}\label{label_unintentional}
	There exists a function $\newfun{label_athanasopoulos}: \Bbb{N}^2\to\Bbb{N}$ such that
	if $t,q,h\in \Bbb{N}$  and ${\bf G}=(G,B,\rho)$ is
	a $K_{q}$-minor-free boundaried graph  in ${\cal R}_{h}^{(t)},$ then
	${\bf G}$ admits a $(\funref{label_athanasopoulos}(q,h)\cdot t,\funref{label_athanasopoulos}(q,h))$-protrusion decomposition.
	Moreover, it holds that
	$\funref{label_athanasopoulos}(q,h)=  \funref{label_eavesdropper}(q,h) \cdot \funref{label_confrontation}(q,h) \cdot 2^{\Ocal(q \log q)}.$

	%\red{$\funref{label_athanasopoulos}(q,h)=(\funref{label_inevitability}(h))^{{2^{2^{\Ocal((q+h)\cdot \log(q+h))}}}}.$}
\end{lemma}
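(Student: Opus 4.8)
The plan is to simply combine the two results that are already available: \autoref{label_gleichsinnigen} (the improved version of \autoref{label_murmuradores}), which yields a treewidth modulator of bounded size for every $K_q$-minor-free representative in ${\cal R}_h^{(t)}$, and \autoref{label_misconceptions}, which turns a treewidth modulator into a linear protrusion decomposition for $H$-topological-minor-free (hence $H$-minor-free) graphs. The only mild subtlety is bookkeeping: \autoref{label_misconceptions} is stated for (unboundaried) graphs, so I must first observe that the boundary $B$ of ${\bf G}$ is contained in the core of the protrusion decomposition it produces, and hence the decomposition is in fact a protrusion decomposition of the \emph{boundaried} graph ${\bf G}$ in the sense defined in \autoref{label_sechseckigen}.

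Concretely, first I would take ${\bf G}=(G,B,\rho)\in {\cal R}_h^{(t)}$ that is $K_q$-minor-free. By \autoref{label_gleichsinnigen} (with $\funref{label_confrontation}(q,h)=2$), the graph $G$ contains an $\funref{label_eavesdropper}(q,h)$-treewidth modulator $M$ with $B\subseteq M$ and $|M|\leq 2t$. Set $\beta:=\funref{label_eavesdropper}(q,h)$, so $\tw(G-M)\leq \beta$, and note $G$ is $K_q$-minor-free, hence $K_q$-topological-minor-free (as $K_q$ is subcubic only for $q\le 3$, but in any case minor-freeness is weaker, so $K_q$-minor-free implies $K_q$-topological-minor-free since a topological minor is a minor; this is the direction we need). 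Applying \autoref{label_misconceptions} with this $q$, $\beta$, $c:=2$, and $M$, we obtain an $((\alpha_{K_q}\cdot \beta\cdot 2)\cdot t,\, 2\beta+q)$-protrusion decomposition ${\cal P}=\langle R_0,R_1,\ldots,R_\ell\rangle$ of $G$ with $M$ contained in the core $R_0$. In particular $B\subseteq M\subseteq R_0$.

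It remains to check that ${\cal P}$ is a protrusion decomposition of the boundaried graph ${\bf G}$ and not merely of the underlying graph $G$. The first four conditions in the definition for boundaried graphs coincide with those for unboundaried graphs except for the requirement $B\subseteq R_0$, which we just verified, and the requirement that each protrusion $(G(N[R_i]),\partial(N[R_i]),\rho\restr{\partial(N[R_i])})$ be a $\beta'$-protrusion of ${\bf G}$ with $\beta'=2\beta+q$. For this last point we must check $B\cap V(G(N[R_i]))\subseteq \partial(N[R_i])$; but since $B\subseteq R_0$ and the sets $R_0,\ldots,R_\ell$ are pairwise disjoint, any vertex of $B$ lying in $N[R_i]=R_i\cup N(R_i)$ must lie in $N(R_i)\subseteq R_0$, and every vertex of $N[R_i]$ that is outside $R_i$ (in particular in $R_0$) lies in $\partial(N[R_i])$ by definition of the closed neighborhood and the fact that $N(R_i)\subseteq R_0$. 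So the condition holds, and all other protrusion conditions ($|\partial(N[R_i])|\le 2\beta+q$, $\tw(G[N[R_i]])\le 2\beta+q-1$, $\partial(V(G(N[R_i])))\subseteq \partial(N[R_i])$, $N(R_i)\subseteq R_0$, $\max\{\ell,|R_0|\}\le (\alpha_{K_q}\cdot \beta\cdot 2)\cdot t$) are inherited directly from \autoref{label_misconceptions}.

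Finally, setting $\funref{label_athanasopoulos}(q,h):=\max\{(\alpha_{K_q}\cdot \funref{label_eavesdropper}(q,h)\cdot 2),\, 2\funref{label_eavesdropper}(q,h)+q\}$, we get that ${\bf G}$ admits a $(\funref{label_athanasopoulos}(q,h)\cdot t,\,\funref{label_athanasopoulos}(q,h))$-protrusion decomposition. For the claimed bound on the function, recall from \autoref{label_misconceptions} that $\alpha_{K_q}\le 40q^2 2^{5q\log q}=2^{\Ocal(q\log q)}$, and that $\funref{label_confrontation}(q,h)=2$ in the improved version; plugging in gives $\funref{label_athanasopoulos}(q,h)=\funref{label_eavesdropper}(q,h)\cdot \funref{label_confrontation}(q,h)\cdot 2^{\Ocal(q\log q)}$, as required. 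I do not anticipate any real obstacle here — the content is entirely in \autoref{label_murmuradores}/\autoref{label_gleichsinnigen} and \autoref{label_misconceptions}; the proof of this lemma is just a careful gluing of the two, with the disjointness of the $R_i$'s doing the small amount of work needed to transfer the statement to boundaried graphs.
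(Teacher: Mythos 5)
Your proof is correct and follows essentially the same route as the paper: apply \autoref{label_gleichsinnigen} to obtain a treewidth modulator $M\supseteq B$ of size at most $2t$, feed it into \autoref{label_misconceptions} with $H=K_q$, and observe that since $B\subseteq M\subseteq R_0$ the resulting decomposition is also a protrusion decomposition of the boundaried graph. The paper leaves the last verification as "easily checked", which you carry out explicitly; the numerical bookkeeping matches the claimed bound on $\funref{label_athanasopoulos}$.
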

%\sed{Please reduce protrusion bound! \ig{what do you mean?}}
\begin{proof}
	By  \autoref{label_gleichsinnigen}, $G$ contains an $\funref{label_compartmentalization}(q,h)$-treewidth modulator $M$ that contains $B$ and has  at most $ 2t$ vertices. We can now apply \autoref{label_misconceptions} to $G$ and $M$ with $H=K_q,$ $c=\funref{label_confrontation}(q,h),$ and $\beta=\funref{label_eavesdropper}(q,h),$ obtaining a $(\funref{label_athanasopoulos}(q,h)\cdot t,\funref{label_athanasopoulos}(q,h))$-protrusion decomposition ${\cal P}$ of $G$ with $M$ contained in the core of ${\cal P}$ and  $\funref{label_athanasopoulos}(q,h):=  \funref{label_eavesdropper}(q,h) \cdot \funref{label_confrontation}(q,h)  \cdot 40 q^2 2 ^{5 q \log q}.$ Since $B \subseteq M$ and $M$ contained in the core of ${\cal P},$ it can be easily checked that ${\cal P}$ is also a $(\funref{label_athanasopoulos}(q,h)\cdot t,\funref{label_athanasopoulos}(q,h))$-protrusion decomposition of ${\bf G}.$
\end{proof}

%\subsection{Reducing protrusions}
%\label{fissures}

Once we have the protrusion decomposition given by \autoref{label_unintentional}, all that remains is to replace the protrusions by equivalent ones of size depending only on the collection $\Fcal.$ The protrusion replacement technique, which is nowadays part of the basic toolbox of parameterized complexity, originated in the meta-theorem of Bodlaender et al.~\cite{BodlaenderFLPST16}, whose objective was to produce linear kernels for a wide family of problems on graphs of bounded genus. This technique was later extended to graphs excluding a fixed minor by Fomin et al.~\cite{F.V.Fomin:2010oq} and then to graphs excluding a fixed topological minor by Kim et al.~\cite{KimLPRRSS16line}.  We could directly apply the results of Fomin et al.~\cite{F.V.Fomin:2010oq} to the protrusion decomposition of a representative given by  \autoref{label_unintentional}, hence reducing each protrusion to an equivalent one of size $\Ocal_{\Fcal}(1),$ yielding an equivalent representative of size $\Ocal_{\Fcal}(t).$ However, the drawback of the results in~\cite{F.V.Fomin:2010oq} (and also in~\cite{BodlaenderFLPST16,KimLPRRSS16line}) is that they do not provide {\sl explicit} bounds on the hidden constants. In order to be able to do so (cf. \autoref{label_reconstituirse}), we apply the protrusion replacement used by Baste et al.~\cite{BasteST20-monster1}, which is suited for the \textsc{$\Fcal$-M-Deletion} problem. This yields explicit constants because it uses ideas similar to the ones presented by Garnero et al.~\cite{GarneroPST15} (later generalized in~\cite{GarneroPST19}) for obtaining kernels with explicit constants.\medskip

Given a function $\xi:\Bbb{N}^2\to\Bbb{N}$ and a $t$-boundaried graph ${\bf G},$ we say that
${\bf G}$ is {\em $\xi$-protrusion-bounded} if, for every $t'\in\Bbb{N},$
all $\beta$-protrusions of ${\bf G}$ have at most $\xi(\beta)$ vertices. The following lemma is again a reformulation using our notation of one of the results of  Baste et al.~\cite{BasteST20-monster1}. Namely, it is a consequence of the proof\footnote{In the statement of~\cite[Lemma 7.2]{BasteST20-monster1} it is required that the family $\Fcal$ contains a planar graph, an assumption that is not true anymore in our case. However, in the proof this fact is only used to guarantee that the considered protrusion has treewidth bounded by a function depending only on $\Fcal.$ Thanks to \autoref{label_unintentional}, we can assume that this also holds in our setting.
	% \ig{Dimitrios, do you think that this comment about \autoref{label_intentionnel} is too much hand-waving?}
} of~\cite[Lemma 7.2]{BasteST20-monster1}.

%\ig{In \autoref{label_intentionnel} below we may also set $q = h,$ right? Otherwise, we have to say that there is a function for every $q,$ as I wrote below}

\begin{lemma}%[{\sf JOHNSON}]
	\label{label_intentionnel}
	There exists a function $\newfun{label_chronological}: \Bbb{N}^2\to\Bbb{N}$ such that
	if $t,q,h\in \Bbb{N}$  and ${\bf G}=(G,B,\rho)$ is
	a $K_{q}$-minor-free boundaried graph  in ${\cal R}_{h}^{(t)},$ then
	${\bf G}$ is $\funref{label_chronological}(q,h)$-protrusion-bounded. Moreover, $\funref{label_chronological}(q,h) = 2^{2^{2^{\Ocal(\funref{label_athanasopoulos}(q,h)\cdot \log \funref{label_athanasopoulos}(q,h))}}}.$% \ig{put explicit bound!!}.
\end{lemma}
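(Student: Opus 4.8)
Looking at this, the final statement to prove is \autoref{label_intentionnel} (Lemma about $\funref{label_chronological}$-protrusion-boundedness). Let me sketch the proof plan.

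The plan is to invoke \autoref{label_unintentional} to obtain a protrusion decomposition of the representative ${\bf G}$, and then argue that since ${\bf G}$ is a minimum-sized representative of $\equiv_h$, each of its $\beta$-protrusions must itself be ``minimal'' in a suitable sense, yielding an explicit bound on its size. First I would recall that, by \autoref{label_unintentional}, ${\bf G}$ admits a $(\funref{label_athanasopoulos}(q,h)\cdot t,\funref{label_athanasopoulos}(q,h))$-protrusion decomposition, so in particular every $\beta$-protrusion of ${\bf G}$ has treewidth at most $\beta-1$ with $\beta$ bounded by a function of $q$ and $h$. This is the ingredient that replaces the ``$\Fcal$ contains a planar graph'' hypothesis originally used in \cite[Lemma 7.2]{BasteST20-monster1}: there, planarity of some member of $\Fcal$ was the only device guaranteeing that the relevant protrusions have treewidth bounded by a function of $\Fcal$; here \autoref{label_unintentional} provides exactly that bound, namely $\funref{label_athanasopoulos}(q,h)$, where $q = |V(H)|$ for the excluded clique minor $K_q$.

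Next I would appeal to the protrusion-replacement machinery of Baste et al.~\cite{BasteST20-monster1}: the proof of \cite[Lemma 7.2]{BasteST20-monster1} associates, to each $\beta$-protrusion ${\bf G}'=(G',B',\rho')$ of a graph, a canonical equivalence — the same $\equiv_h$-style relation restricted to $|B'|$-boundaried graphs of bounded treewidth — and shows that this equivalence has finitely many classes, with a representative in each class of size bounded by a triple-exponential function of $\beta$. The key observation is then that if some $\beta$-protrusion ${\bf G}'$ of ${\bf G}$ had more vertices than this representative, one could replace ${\bf G}'$ inside ${\bf G}$ by the smaller equivalent protrusion and, by the transitivity of $\equiv_h$ together with the fact that gluing along the boundary preserves $\equiv_h$-classes (this is where we use that both boundaries are compatible and that detail at most $h$ is preserved by the replacement), obtain a strictly smaller boundaried graph that is still $\equiv_h$-equivalent to ${\bf G}$ — contradicting the minimality of ${\bf G}\in {\cal R}_h^{(t)}$. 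Hence every $\beta$-protrusion of ${\bf G}$ has at most $\xi(\beta)$ vertices, where $\xi$ is the size bound for the canonical protrusion representatives.

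It then remains to track the parametric dependencies: the number of $\equiv_h$-style classes of $\beta$-boundaried graphs of treewidth at most $\beta$ is, by a counting argument over $h$-folios analogous to \autoref{label_systemizations}, at most doubly exponential in $\beta$; feeding this into the standard ``pick a minimum representative per class'' construction and bounding its size via a dynamic-programming argument over a width-$\beta$ tree decomposition gives a size bound of the form $2^{2^{2^{\Ocal(\beta\log\beta)}}}$, and substituting $\beta = \funref{label_athanasopoulos}(q,h)$ yields the claimed $\funref{label_chronological}(q,h) = 2^{2^{2^{\Ocal(\funref{label_athanasopoulos}(q,h)\cdot \log \funref{label_athanasopoulos}(q,h))}}}$. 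The main obstacle is the first paragraph's point: being careful that the protrusion-replacement argument from \cite{BasteST20-monster1}, originally phrased for collections $\Fcal$ with a planar member, goes through verbatim once \autoref{label_unintentional} supplies the treewidth bound; concretely, one must check that the equivalence used in the replacement step is a refinement of (or at least compatible with) $\equiv_h$ so that the replacement does not change the $\equiv_h$-class of the whole graph, which is precisely the content of the footnote to \cite[Lemma 7.2]{BasteST20-monster1} adapted to our setting.
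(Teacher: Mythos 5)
Your proposal matches the paper's approach: the paper proves this lemma simply by invoking the proof of \cite[Lemma 7.2]{BasteST20-monster1}, noting (in a footnote) that the only role of the ``$\Fcal$ contains a planar graph'' hypothesis there is to bound the treewidth of the protrusions, which in the present setting is supplied instead by \autoref{label_unintentional} --- exactly the substitution you make. Your reconstruction of the underlying replacement argument (a too-large protrusion could be swapped for a smaller $\equiv_h$-equivalent one, contradicting minimality of the representative, with the triple-exponential size bound coming from counting equivalence classes of bounded-treewidth boundaried graphs) is the content of the cited proof, so the proposal is correct and follows the same route.
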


Using  \autoref{label_unintentional}
and \autoref{label_intentionnel},
we can easily prove \autoref{label_afincamiento}, that is the main result on which the algorithm of \autoref{label_instintivamente} is based (cf. \autoref{label_voixoderncrow}). In particular, it implies \autoref{label_privateering}.

%Some consequences of \autoref{label_soggiugnendo} are the following.

%\ig{is it the right place to state this theorem?}

\begin{theorem}%[{\sf REPRESENTATIVES}]
	\label{label_afincamiento}
	There exists a function $\newfun{label_annihilation}:\Bbb{N}^2\to\Bbb{N}$ such that, for every $t \in \Bbb{N}$ and $q,h\in \Bbb{N}_{\geq 1},$ if ${\bf G}=(G,B,\rho)$ is
	a $K_{q}$-minor-free boundaried graph  in ${\cal R}_{h}^{(t)},$  then  $|V(G)|\leq \funref{label_annihilation}(q,h)\cdot t.$ Moreover, it holds that $\funref{label_annihilation}(q,h) \leq \funref{label_athanasopoulos}(q,h)\cdot (\funref{label_chronological}(q,h)+1).$
\end{theorem}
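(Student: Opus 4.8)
The plan is to combine the protrusion decomposition of \autoref{label_unintentional} with the protrusion bound of \autoref{label_intentionnel} in the most direct way possible. First I would invoke \autoref{label_unintentional} on the $K_{q}$-minor-free boundaried graph ${\bf G}=(G,B,\rho)\in {\cal R}_{h}^{(t)}$ to obtain a $(\funref{label_athanasopoulos}(q,h)\cdot t,\funref{label_athanasopoulos}(q,h))$-protrusion decomposition ${\cal P}=\langle R_{0},R_{1},\ldots,R_{\ell}\rangle$ of ${\bf G}$, so that $\ell \le \funref{label_athanasopoulos}(q,h)\cdot t$, $|R_{0}|\le \funref{label_athanasopoulos}(q,h)\cdot t$, and each $N[R_{i}]$ (for $i\in[\ell]$) induces a $\funref{label_athanasopoulos}(q,h)$-protrusion of ${\bf G}$. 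Then, for each $i\in[\ell]$, I would consider the boundaried graph ${\bf G}_{i}=(G[N[R_i]], \partial(N[R_i]), \rho\restr{\partial(N[R_i])})$, which by definition of the protrusion decomposition is a $\funref{label_athanasopoulos}(q,h)$-protrusion of ${\bf G}$.

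The key point is that ${\bf G}_{i}$ itself is (up to isomorphism) a $t'$-boundaried graph with $t'=|\partial(N[R_i])|\le \funref{label_athanasopoulos}(q,h)$, and — crucially — it must be a \emph{minimum-sized} element of its $\equiv_{h}$-equivalence class, hence it lies in ${\cal R}_{h}^{(t')}$. This is because ${\bf G}$ is a minimum-sized representative of its own class: if ${\bf G}_{i}$ were not minimum-sized, we could replace it inside ${\bf G}$ by a smaller $h$-equivalent boundaried graph (gluing along $\partial(N[R_i])$), which by the composability of $\equiv_{h}$ under $\oplus$ (it is a congruence, as follows from \autoref{label_unquenchable}) would yield a strictly smaller boundaried graph that is still $h$-equivalent to ${\bf G}$ — contradicting ${\bf G}\in {\cal R}_{h}^{(t)}$. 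Moreover ${\bf G}_{i}$ is $K_{q}$-minor-free since $G$ is. Therefore \autoref{label_intentionnel} applies to ${\bf G}_{i}$, and since $V(G[N[R_i]])$ is a $\funref{label_athanasopoulos}(q,h)$-protrusion of the $\funref{label_chronological}(q,h)$-protrusion-bounded graph ${\bf G}_{i}$ (indeed it is the whole of ${\bf G}_i$), we get $|N[R_i]|\le \funref{label_chronological}(q,h)$.

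Putting the pieces together, $V(G)=R_{0}\cup\bigcup_{i\in[\ell]}R_{i}\subseteq R_{0}\cup\bigcup_{i\in[\ell]}N[R_i]$, so
\[
|V(G)|\ \le\ |R_{0}|+\sum_{i=1}^{\ell}|N[R_i]|\ \le\ \funref{label_athanasopoulos}(q,h)\cdot t + \funref{label_athanasopoulos}(q,h)\cdot t\cdot \funref{label_chronological}(q,h)\ =\ \funref{label_athanasopoulos}(q,h)\cdot(\funref{label_chronological}(q,h)+1)\cdot t,
\]
which gives the claimed bound with $\funref{label_annihilation}(q,h)\le \funref{label_athanasopoulos}(q,h)\cdot(\funref{label_chronological}(q,h)+1)$. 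The one step I expect to require a little care — and which is really the heart of the argument — is justifying that each protrusion $N[R_i]$, viewed as a boundaried graph, is a minimum-sized representative and hence falls under the scope of \autoref{label_intentionnel}; this rests on the fact that $\equiv_{h}$ is a congruence with respect to the gluing operation $\oplus$ and that protrusion-bounded-ness (from \autoref{label_intentionnel}) is stated precisely for representatives in ${\cal R}_{h}^{(t')}$. Everything else is bookkeeping over the linearly many protrusions.
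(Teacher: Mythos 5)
Your proof reaches the correct bound and uses exactly the two ingredients the paper uses (\autoref{label_unintentional} and \autoref{label_intentionnel}), but you route the second lemma through an unnecessary detour. \autoref{label_intentionnel} already asserts that the representative ${\bf G}\in{\cal R}_{h}^{(t)}$ \emph{itself} is $\funref{label_chronological}(q,h)$-protrusion-bounded, i.e., that every protrusion \emph{of} ${\bf G}$ has at most $\funref{label_chronological}(q,h)$ vertices; since each $N[R_i]$ produced by \autoref{label_unintentional} is an $\funref{label_athanasopoulos}(q,h)$-protrusion of ${\bf G}$, the bound $|N[R_i]|\leq \funref{label_chronological}(q,h)$ follows immediately, and the summation finishes the proof --- this is all the paper does. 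You instead apply the lemma to each protrusion ${\bf G}_i$ separately, which forces you to establish the auxiliary claim that every protrusion of a minimum-sized representative is itself a minimum-sized representative of its $\equiv_h$-class. That claim is true, and your justification (replace a non-minimal protrusion by a smaller $h$-equivalent one, using that $\equiv_h$ behaves as a congruence under gluing) is the standard protrusion-replacement argument; but it is essentially the mechanism underlying \autoref{label_intentionnel} itself, so you are partially re-deriving the lemma you invoke, and your sketch leaves implicit both the bookkeeping between the two boundaries ($B$ and $\partial(N[R_i])$, which requires $B\cap N[R_i]\subseteq \partial(N[R_i])$ as guaranteed by the boundaried protrusion-decomposition definition) and the fact that a minimum-sized element is only isomorphic to, rather than literally a member of, the chosen set ${\cal R}_{h}^{(t')}$. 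Nothing here is wrong, but applying \autoref{label_intentionnel} directly to ${\bf G}$ renders the entire second paragraph of your argument superfluous.
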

\begin{proof}
	By  \autoref{label_unintentional}, ${\bf G}$ admits an $(\funref{label_athanasopoulos}(q,h)\cdot t,\funref{label_athanasopoulos}(q,h))$-protrusion decomposition $\Pcal.$ By \autoref{label_intentionnel}, each of the protrusions of $\Pcal$ has at most $\funref{label_chronological}(q,h)$ vertices. Therefore,
	$$
		|V(G)| \leq \funref{label_athanasopoulos}(q,h)\cdot t + \funref{label_athanasopoulos}(q,h) \cdot \funref{label_chronological}(q,h) \cdot t ,
	$$
	and the theorem follows with $\funref{label_annihilation}(q,h) := \funref{label_athanasopoulos}(q,h)\cdot (\funref{label_chronological}(q,h)+1).$
\end{proof}

Let $h:= \max_{F \in {\mathcal F}} \{ \max_{H \in {\sf ext}(F)} {\sf detail}(H)  \}.$
The following corollary is an immediate consequence of \autoref{label_afincamiento}, by using the fact that all $t$-representatives in ${\cal R}_{h}^{(t)},$ except one, are $K_h$-minor-free, hence they have $\Ocal(\funref{label_annihilation}(h,h)\cdot h\sqrt{\log h})\cdot t$  edges; see for instance~\cite{MyersT05}. Note that are at most ${n^2\choose m}=2^{\Ocal(n\log m)}$ different graphs on $n$ vertices and $m$ edges and that, if $(G,B,
	\rho)\in {\cal R}_{h}^{(t)},$ then  \autoref{label_afincamiento} implies that $|V(G)|\leq \funref{label_annihilation}(h,h)\cdot t.$ Note also that there are ${|V(G)| \choose t}=2^{\Ocal(t\log |V(G)|)}$ choices for $B,$ and $t!=2^{\Ocal(t\log t)}$ choices for
$\rho.$ Therefore,  $|{\cal R}_{h}^{(t)}|=2^{{\cal O}\big(\funref{label_annihilation}(h,h)\cdot t\cdot \log (\funref{label_annihilation}(h,h)\cdot h\sqrt{\log h}\cdot t)+\funref{label_annihilation}(h,h)\cdot t\log (\funref{label_annihilation}(h,h)\cdot t)+t\log t\big)}$ and we can conclude the following.

\begin{corollary}
	\label{label_encyclopedia}
	There exists a function $\newfun{label_liberalization}:\Bbb{N}\to\Bbb{N}$ such that for every $t\in\Bbb{N}_{\geq 1},$  $|{\cal R}_{h}^{(t)}| \leq 2^{\funref{label_liberalization}(h) \cdot t \cdot \log t}.$ In particular, the relation $\equiv_{h}$ partitions  ${\cal B}^{(t)}$
	into at most  $2^{\funref{label_liberalization}(h) \cdot t \cdot \log t}$ equivalence classes.  Moreover, it holds that $\funref{label_liberalization}(h) =\Ocal(\funref{label_annihilation}(h,h) \cdot \log(\funref{label_annihilation}(h,h) \cdot h \sqrt{\log h})).$
	%=\Ocal( \funref{label_chronological}(h,h)).$
\end{corollary}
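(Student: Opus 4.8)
The plan is a direct counting argument that rests entirely on \autoref{label_afincamiento}; essentially no new combinatorics is needed. Fix $h = \max_{F\in\mathcal{F}}\max_{H\in{\sf ext}(F)}{\sf detail}(H)$ as above. First I would set aside the single distinguished equivalence class of $\equiv_h$ whose minimum-size representative is not $K_h$-minor-free, namely the class of boundaried graphs whose underlying graph already contains every graph of detail at most $h$ (equivalently $K_h$) as a minor and which are therefore all $\equiv_h$-equivalent; this contributes only an additive constant to $|\mathcal{R}_h^{(t)}|$ and can be ignored. Every remaining representative ${\bf G}=(G,B,\rho)\in\mathcal{R}_h^{(t)}$ has $G$ that is $K_h$-minor-free, so \autoref{label_afincamiento} applied with $q=h$ gives $|V(G)|\le\funref{label_annihilation}(h,h)\cdot t=:N$, and the linear edge bound for $K_h$-minor-free graphs (Myers--Thomason~\cite{MyersT05}) gives $|E(G)|=\Ocal(\funref{label_annihilation}(h,h)\cdot h\sqrt{\log h})\cdot t=:M$.

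Second, I would count representatives by noting that ${\bf G}$ is determined by the triple consisting of the underlying labeled graph $G$, the $t$-subset $B\subseteq V(G)$, and the bijection $\rho\colon B\to[t]$. There are at most $\binom{N^2}{M}=2^{\Ocal(M\log N)}$ graphs with at most $N$ vertices and at most $M$ edges; for each of them at most $\binom{|V(G)|}{t}=2^{\Ocal(t\log N)}$ choices of $B$; and $t!=2^{\Ocal(t\log t)}$ choices of $\rho$. Since $N=\Ocal_h(t)$ and $M=\Ocal_h(t)$, hence $\log N=\Ocal_h(\log t)$, each of the three factors is $2^{\Ocal_h(t\log t)}$, and multiplying them yields $|\mathcal{R}_h^{(t)}|\le 2^{\funref{label_liberalization}(h)\cdot t\cdot\log t}$ for a suitable function $\funref{label_liberalization}$; tracking the constants through \cite{MyersT05} and \autoref{label_afincamiento} gives $\funref{label_liberalization}(h)=\Ocal(\funref{label_annihilation}(h,h)\cdot\log(\funref{label_annihilation}(h,h)\cdot h\sqrt{\log h}))$. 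The ``in particular'' statement then follows immediately, since $\mathcal{R}_h^{(t)}$ contains exactly one representative per equivalence class of $\equiv_h$ restricted to $\mathcal{B}^{(t)}$.

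I do not expect any real obstacle here: the hard work is \autoref{label_afincamiento} itself (which in turn packages \autoref{label_pretendientes}, \autoref{label_murmuradores}, \autoref{label_unintentional} and the protrusion replacement of~\cite{BasteST20-monster1}). The only points requiring a modicum of care are (i) the elementary bookkeeping confirming that all three counting factors stay within $2^{\Ocal_h(t\log t)}$, which is exactly where the linear-in-$t$ bound on $|E(G)|$, and hence the $K_h$-minor-freeness of the non-exceptional representatives, is used, since only an $\Ocal_h(t^2)$ bound on the number of edges would degrade the estimate to $2^{\Ocal_h(t^2\log t)}$; and (ii) making the dependence of $\funref{label_liberalization}$ on $h$ explicit, which amounts to plugging in the explicit sparsity bound of \cite{MyersT05} together with the explicit value of $\funref{label_annihilation}(h,h)$ supplied by \autoref{label_afincamiento}.
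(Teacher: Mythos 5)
Your argument is correct and is essentially identical to the paper's own justification: both set aside the single non-$K_h$-minor-free class, invoke \autoref{label_afincamiento} for the $\Ocal_h(t)$ vertex bound, use the sparsity of $K_h$-minor-free graphs~\cite{MyersT05} for the $\Ocal_h(t)$ edge bound, and then multiply the counts of graphs, boundary subsets, and labelings, each of which is $2^{\Ocal_h(t\log t)}.$ No gaps.
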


\paragraph{The dynamic programming algorithm.} Having proved \autoref{label_encyclopedia}, we can apply~\cite[Theorem 8.1]{BasteST20-monster1}
to compute the parameter ${\bf m}_{\cal F}(G)$ within the claimed running time.

For the sake of completeness, let us comment some details of this dynamic programming algorithm, whose details can be found in~\cite[Section 8]{BasteST20-monster1}. First of all, to run the algorithm we need to have the set ${\cal R}_{h}^{(t)}$ of representatives at hand. This can be done easily relying on \autoref{label_afincamiento}, by generating all $t$-boundaried graphs on at most $\funref{label_annihilation}(h,h)\cdot t$ vertices and $\Ocal(\funref{label_annihilation}(h,h)\cdot h\sqrt{\log h})\cdot t$  edges, partitioning them into equivalence classes according to $\equiv_{h},$ and picking an element of minimum size in each of them; see~\cite[proof of Lemma 7.1]{BasteST20-monster1} for more details. To simplify the description of the dynamic programming update operations, the main algorithm in~\cite{BasteST20-monster1} is written in terms of \emph{branchwidth} instead of treewidth. Without defining branchwidth here, it is enough to say that it is linearly equivalent to treewidth, in the sense that both parameters differ by a constant  factor and whose corresponding decompositions can be easily transformed from one to the other~\cite{Rob91}. Also, the main algorithm in~\cite{BasteST20-monster1} is written in terms of {\sl topological} minors, that is, given a finite graph class ${\cal F}'$ and a graph $G,$
it computes ${\bf tm}_{\cal F'}(G),$ that is the  minimum-size set of vertices $S \subseteq V(G)$ whose removal leaves a graph without any of the graphs in a fixed collection $\Fcal'$ as a {\sl topological} minor. This works for our purposes
because of the translation of the question on minors to one on topological minors,
provided by   \autoref{label_daskalojannes}. %,  ${\bf m}_{\cal F}(G) = {\bf tm}_{{\sf ext}(\Fcal)}(G).$
%It is easy to see that computing this parameter suffices for computing ${\bf m}_{\cal F}(G),$ since, as observed in~\red{\cite[Lemma 4]{monster-arXiv}}, for every proper collection $\Fcal$ and every graph $G,$ it holds that ${\bf m}_{\cal F}(G) = {\bf tm}_{\cal F'}(G),$ where $\Fcal' = {\sf ext}(\Fcal)$ (see~\autoref{label_daskalojannes}).
The dynamic algorithm  computes, in a typical bottom-up manner, at every bag separator $B$ of the branch decomposition associated with a $t$-boundaried graph ${\bf G}_B$ and for every representative ${\bf R} \in {\cal R}_{h}^{(t)},$ the minimum size of a set $S \subseteq V({\bf G}_B)$ such that ${\bf G}_B \setminus S \equiv_h {\bf R}.$ These values can be computed in a standard way by combining the values associated with the children of a given node; cf.~\cite[Theorem 8.1]{BasteST20-monster1}.
The overall running time is bounded by $\Ocal(|{\cal R}_{h}^{(t)}|^2 \cdot |E(G)|),$ and taking into account that $|E(G)|  \leq \tw(G) \cdot |V(G)|,$ from \autoref{label_encyclopedia} we obtain the following theorem, which  is a more precise reformulation of \autoref{label_instintivamente}.

\begin{theorem}
	\label{label_desgobernada}
	Let $t,h\in \Bbb{N},$ ${\cal F}$ be a proper  collection of size at most $h,$ and $G$
	be an $n$-vertex graph of treewidth at most $t.$ Then ${\bf m}_{\cal F}(G)$
	can be computed  by an algorithm that runs in $2^{\Ocal( \funref{label_liberalization}(h)  \cdot  t \log t)}  \cdot n$ steps.
	%$2^{\Ocal(t)}\cdot |{\cal R}_{h}^{(t)}|\cdot n$ steps.
\end{theorem}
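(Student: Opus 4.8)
The plan is to assemble the final statement \autoref{label_desgobernada} as a direct corollary of the structural bound on representatives established throughout the paper, combined essentially verbatim with the dynamic programming machinery of \cite{BasteST20-monster1}. First I would fix the constant $h := \max_{F \in {\mathcal F}} \{ \max_{H \in {\sf ext}(F)} {\sf detail}(H) \}$, which depends only on $\Fcal$ (it is finite because $\Fcal$ is finite and, by \autoref{label_daskalojannes}, each ${\sf ext}(F)$ consists of boundaried graphs of detail at most $3\,{\sf detail}(F)$, hence is a finite set). The key observation, which is exactly the design principle behind the equivalence $\equiv_h$ from \autoref{label_unquenchable}, is that $\Fcal \preceq_{\sf m} G'$ can be tested by inspecting, for each $F \in \Fcal$, whether some member of ${\sf ext}(F)$ is a topological minor of $G'$; since all these members have detail at most $h$, knowing the $\equiv_h$-class of a boundaried graph suffices to decide, after gluing, whether it contains any graph of $\Fcal$ as a minor.

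Next I would invoke \autoref{label_encyclopedia}, which gives $|{\cal R}_{h}^{(t)}| \leq 2^{\funref{label_liberalization}(h) \cdot t \cdot \log t}$, so that the number of equivalence classes of $\equiv_h$ restricted to $t$-boundaried graphs is single-exponential in $t \log t$ with the exponent's constant depending only on $h$, hence only on $\Fcal$. This is the crucial ingredient that collapses the double-exponential table size of \cite{BasteST20-monster1} to single-exponential: in that algorithm the tables were indexed by $h$-folios (bounded only by the double-exponential \autoref{label_systemizations}), whereas here we index by the coarser relation $\equiv_h$. I would also note that the set ${\cal R}_{h}^{(t)}$ itself can be computed in time depending only on $h$ and $t$: by \autoref{label_afincamiento} every representative has at most $\funref{label_annihilation}(h,h) \cdot t$ vertices and, being $K_h$-minor-free except for one trivial class, $\Ocal(\funref{label_annihilation}(h,h) \cdot h \sqrt{\log h}) \cdot t$ edges (using the sparsity bound of \cite{MyersT05}), so one enumerates all such boundaried graphs, partitions them according to $\equiv_h$, and keeps a minimum-size element of each class; see \cite[proof of Lemma 7.1]{BasteST20-monster1}.

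Then I would run the bottom-up dynamic programming algorithm of \cite[Theorem 8.1]{BasteST20-monster1} along a branch decomposition of $G$ (which, by \cite{Rob91}, is linearly equivalent to a tree decomposition of width $t$, so replacing treewidth by branchwidth changes nothing up to constants). At each bag separator $B$, which naturally defines a $t$-boundaried graph ${\bf G}_B$, and for each representative ${\bf R} \in {\cal R}_{h}^{(t)}$, the table entry records the minimum size of a deletion set $S \subseteq V({\bf G}_B)$ with ${\bf G}_B \setminus S \equiv_h {\bf R}$; these entries are updated from the children by the standard join/forget/introduce operations of \cite{BasteST20-monster1}, which are valid here precisely because $\equiv_h$ is a congruence with respect to $\oplus$ and is fine enough to detect all graphs of $\Fcal$ as minors. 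One subtlety is that the algorithm of \cite{BasteST20-monster1} is phrased for forbidden \emph{topological} minors, but \autoref{label_daskalojannes} lets us translate the minor question on $\Fcal$ into a topological-minor question on $\bigcup_{F \in \Fcal} {\sf ext}(F)$, a finite collection of graphs of detail at most $3h$, so this causes no difficulty. Reading off the root bag gives ${\bf m}_{\cal F}(G)$; the running time is $\Ocal(|{\cal R}_{h}^{(t)}|^2 \cdot |E(G)|) = \Ocal(|{\cal R}_{h}^{(t)}|^2 \cdot t \cdot n)$, and substituting the bound from \autoref{label_encyclopedia} yields $2^{\Ocal(\funref{label_liberalization}(h) \cdot t \log t)} \cdot n$ as claimed.

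The proof of this particular theorem is therefore essentially a bookkeeping step with no real obstacle — all the difficulty has already been absorbed into the chain \autoref{label_pretendientes} $\to$ \autoref{label_dispensaries} $\to$ \autoref{label_murmuradores} (or its sharpening \autoref{label_gleichsinnigen}) $\to$ \autoref{label_unintentional} $\to$ \autoref{label_intentionnel} $\to$ \autoref{label_afincamiento} $\to$ \autoref{label_encyclopedia}. If I had to point to the one place requiring care, it is making sure the dynamic programming from \cite{BasteST20-monster1} is applied with the right parameters and that the translation to topological minors via ${\sf ext}(\cdot)$ is invoked consistently (this is the only point where the connectivity assumption of the conference version entered, and it is now dispensable because the full version of \cite{BasteST20-monster1} removed it); everything else is a direct citation. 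I would close by recording the explicit bound $\funref{label_liberalization}(h) = \Ocal(\funref{label_annihilation}(h,h) \cdot \log(\funref{label_annihilation}(h,h) \cdot h \sqrt{\log h}))$ from \autoref{label_encyclopedia}, so that the constant $c_{\Fcal}$ of \autoref{label_instintivamente} is made explicit in terms of the parametric dependencies of the Unique Linkage Theorem, as promised in \autoref{label_reproductions}.
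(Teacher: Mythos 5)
Your proposal is correct and follows essentially the same route as the paper: fix $h$ as the maximum detail over $\bigcup_{F\in\Fcal}{\sf ext}(F)$, compute ${\cal R}_{h}^{(t)}$ by enumeration using \autoref{label_afincamiento}, run the dynamic programming of \cite[Theorem 8.1]{BasteST20-monster1} over a branch decomposition with tables indexed by representatives, translate minors to topological minors via \autoref{label_daskalojannes}, and bound the running time by $\Ocal(|{\cal R}_{h}^{(t)}|^2\cdot|E(G)|)$ together with \autoref{label_encyclopedia}. The paper's own argument is exactly this bookkeeping, so there is nothing to add.
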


In \autoref{label_reproductions} we give upper bounds on the constants depending on the collection $\Fcal$ involved in our algorithm. These upper bounds depend explicitly on the
parametric dependencies of the Unique Linkage Theorem~\cite{KawarabayashiW2010asho,RobertsonSGM22}.

\section{Further research}
\label{label_pelopponesian}

We presented an algorithm for solving the  $\mathcal{F}$-\textsc{M-Deletion} problem in time $\Ostar(2^{\Ocal(\tw  \cdot \log \tw)})$ for every  collection $\Fcal.$
This algorithm together with the single-exponential algorithms and lower bounds presented in previous papers of this series~\cite{BasteST20-monster2,BasteST20-monster3} yield a complete classification of the asymptotic complexity of $\mathcal{F}$-\textsc{M-Deletion} parameterized by treewidth assuming the \ETH, when $\Fcal = \{H\}$ and $H$ is connected (\autoref{label_purposelessness}).
However, we do not have a complete classification when $|\Fcal| \geq 2,$ even for connected $\Fcal.$ To ease the presentation, let us call a connected graph $H$ \emph{easy} (resp. \emph{hard}) if  $\{H\}$-M-\textsc{Deletion} is solvable in time $\Ostar(2^{\Theta(\tw)})$ (resp. $\Ostar(2^{\Theta(\tw  \cdot \log \tw)})$). Suppose that  $\Fcal= \{H_1,H_2\}$ with both $H_1$ and $H_2$ being connected. Using the recent results of Baste~\cite{Baste19}, it is possible to prove that if both $H_1$ and $H_2$ are easy, then $\Fcal$ is easy as well (easiness of graph collections is defined in the obvious way). However, if both $H_1$ and $H_2$ are hard, then strange things may  happen. For instance, Bodlaender et al.~\cite{BodlaenderOO18} presented an algorithm running in time $\Ostar(2^{\Ocal(\tw)})$ for \textsc{Pseudoforest Deletion}, which consists in,  given a graph $G$ and an integer $k,$ deciding whether one can delete at most $k$ vertices from $G$ to obtain a
\emph{pseudoforest}, i.e.,  a graph
where each connected component contains at most one cycle. Note that \textsc{Pseudoforest Deletion} is equivalent to \textsc{$\{\ourdiamond, \butterfly\}$-M-Deletion}. While both the \ourdiamond and the \butterfly are hard graphs (cf. \autoref{label_thoughtlessness}), $\{\ourdiamond, \butterfly\}$ is an easy collection. The cases where $H_1$ is easy and $H_2$ is hard seem even trickier.
Obtaining (tight) lower bounds when $\Fcal$ may contain disconnected graphs is
another challenging avenue for further research.

It is also interesting to consider the version of the problem where the graphs in $\Fcal$ are forbidden as {\sl topological} minors; we call this problem $\mathcal{F}$-\textsc{TM-Deletion}. While most of the lower bounds that we presented in~\cite{BasteST20-monster3} also hold  for $\mathcal{F}$-\textsc{TM-Deletion}, the algorithm in time $\Ostar(2^{\Ocal(\tw  \cdot \log \tw)})$ of this paper does {\sl not} work for topological minors. In this direction, the algorithm in time $\Ostar(2^{\Ocal(\tw  \cdot \log \tw)})$ for $\mathcal{F}$-\textsc{M-Deletion}  when $\Fcal$ contains a {\sl planar} graph) given in~\cite{BasteST20-monster1} also works for $\mathcal{F}$-TM-\textsc{Deletion}, if we additionally require $\Fcal$ to contain a {\sl subcubic} planar graph (in order to bound the treewidth of the representatives).
The main obstacle for applying  our approach in order  to achieve
a time $\Ostar(2^{\Ocal(\tw  \cdot \log \tw)})$ for {\sl every} collection $\Fcal,$  is that topological-minor-free graphs do not enjoy the
flat wall structure that is omnipresent  in our proofs.
Another reason is that in our rerouting procedure, in order to find an irrelevant vertex (\autoref{label_constitutivos}),  we may find a {\sl different} topological minor model that corresponds to the same minor. Nevertheless, we think that this latter difficulty can be overcome for planar graphs --or even minor-free graphs-- by making use of the rerouting potential of \autoref{label_interessiert},  as this is done in \cite{GolovachST20-SODA} for planar graphs.

Finally, it is worth mentioning that the algorithm presented in this paper, as well as the main combinatorial result (\autoref{label_pretendientes}), have been used in~\cite{ICALP-versions} (see~\cite{SauST21kapiII} for the full version) to obtain a fixed-parameter algorithm for the $\mathcal{F}$-\textsc{M-Deletion} problem parameterized by $k.$ \autoref{label_pretendientes} has also been used in~\cite{SauST21kapiI} in order to provide explicit upper bounds on the size of the minor-obstructions of the set of {\sf yes}-instances of the $\mathcal{F}$-\textsc{M-Deletion} problem, as a function of $\mathcal{F}$ and $k.$

%, and that for every
%(connected) collection $\mathcal{F},$ the $\mathcal{F}$-TM-\textsc{Deletion} problem is also solvable in time $\Ostar(2^{\Ocal(\tw \cdot \log \tw)})$ \ig{do you agree with this statement, or is it too risky?}.% under the {\sf ETH}.

%\newpage
%{\small
%\bibliographystyle{abbrv}
\bibliography{Biblio-Fdeletion}
%}

\newpage
\begin{appendix}

	\section{Illustration of the complexity dichotomy }
	\label{label_apreciadores}

	\begin{figure}[h!]
		\begin{center}%\vspace{-.1cm}
			\includegraphics[width=.76\textwidth]{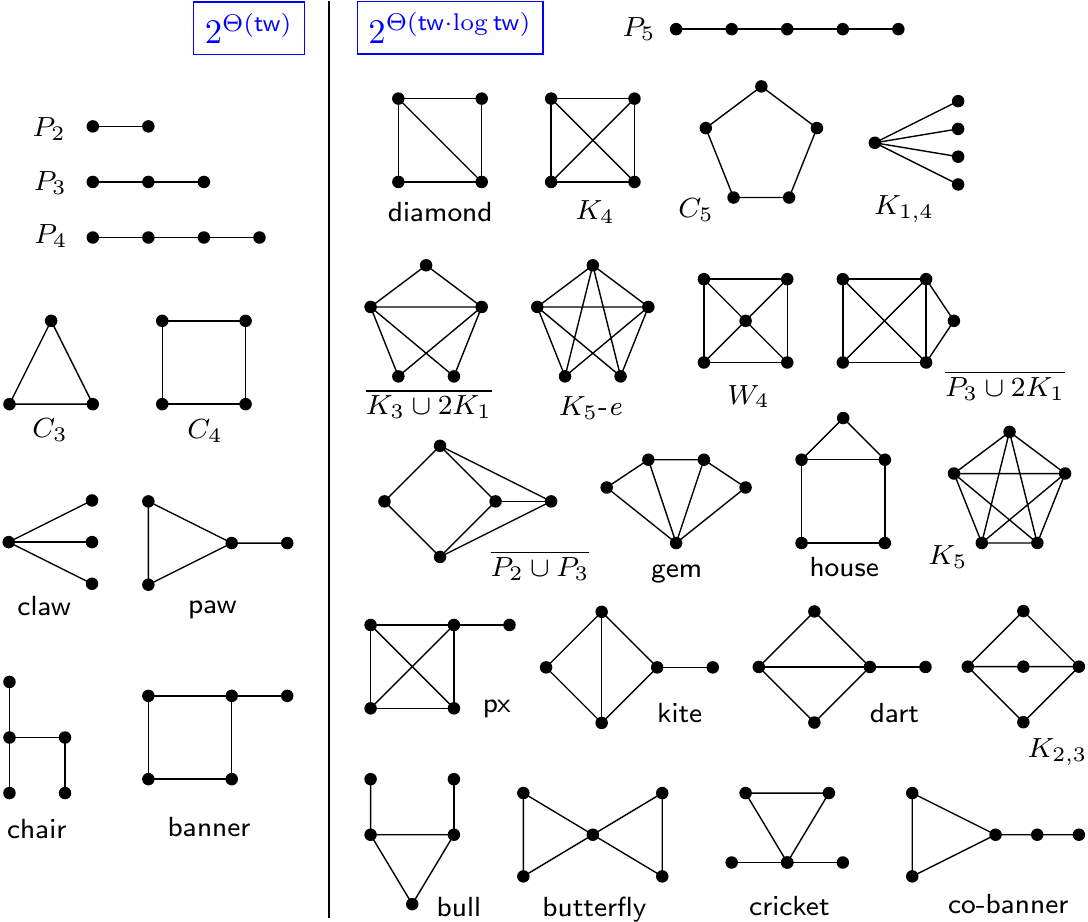}
		\end{center}%\vspace{-.25cm}
		\caption{Classification of the complexity of \textsc{$\{H\}$-M-Deletion} for all connected simple graphs $H$ with $2 \leq |V(H)|\leq 5,$ according to our results: for the nine graphs on the left, the problem is solvable in time $2^{\Theta(\tw)} \cdot n^{\Ocal(1)}$ under the \ETH. For the 21 graphs on the right and for all the connected graphs on at least six vertices, the problem is solvable in time $2^{\Theta(\tw \cdot \log \tw)} \cdot n^{\Ocal(1)}$ under the \ETH.}
		%For \textsc{$\{H\}$-TM-Deletion}, $K_{1,4}$ should be on the left.}
		% \ig{Say that $K_5$ is the only nonplanar one} \ig{Say also that the picture changes for the topological minor version (stars)} \ig{Put in conclusions?}.
		\label{label_thoughtlessness}
		% \vspace{.35cm}
	\end{figure}

	\section{An estimation of the constants depending on $\Fcal$ in our algorithm}
	\label{label_reproductions}

	The main result of this paper is that the \textsc{${\cal F}$-M-Deletion}  problem
	can be solved in time $2^{{\cal O}(f(h)\cdot \tw\cdot \log\tw)} \cdot n$ on $n$-vertex graphs of treewidth at most $\tw,$  for some computable
	function $f:\Bbb{N}\to\Bbb{N},$ where here $h$ is an upper bound on the size of the
	graphs in ${\cal F}.$ This appendix is dedicated to an estimation of an upper bound on the function $f.$
	Notice that almost all the statements of the results in this paper are accompanied with specific bounds
	on the involved functions, usually in terms of functions defined in previous statements. However, there are two exceptions. The first one is the function $\funref{label_inevitability}$ of \autoref{label_wiederzugeben}, which we discuss in \autoref{label_reconstituirse}. The second one is \autoref{label_murmuradores}, where no explicit
	bound for $\funref{label_confrontation}(q,h)$
	is given. This is because the existence of  $\funref{label_confrontation}(q,h)$  follows by
	applying  Akra-Bazzi Theorem~\cite{AkraB98}  as a black box and this does not provide any estimation of $\funref{label_confrontation}.$ To circumvent this issue, in \autoref{label_confederarlos} we provide an improved version of \autoref{label_murmuradores}, namely  \autoref{label_gleichsinnigen}, whose proof uses a direct induction, without invoking the {Akra-Bazzi Theorem \cite{AkraB98,LeightonAB}}. This alternative proof is strongly based on the proof of  \cite[Lemma~3.6]{F.V.Fomin:2010oq}.  Finally, in \autoref{label_reconstituirse} we provide an upper bound on the constants involving $\Fcal$ in our algorithm. For this,  we will use the stronger version of  \autoref{label_murmuradores} given in \autoref{label_confederarlos}.

	\subsection{An improved version of \autoref{label_murmuradores}}
	\label{label_confederarlos}

	In this section we provide an improved version of \autoref{label_murmuradores}, whose proof is  an adaptation of the proof of  \cite[Lemma~3.6]{F.V.Fomin:2010oq}.

	\begin{lemma}%[{\sf MODULATOR$\star$}]
		\label{label_gleichsinnigen}
		There exists a function $\newfun{label_compartmentalization}: \Bbb{N}^2\to\Bbb{N}$ such that
		if $t,q,h\in \Bbb{N}$  and ${\bf G}=(G,B,\rho)$ is
		a $K_{q}$-minor-free boundaried graph  in ${\cal R}_{h}^{(t)},$ then
		$G$ contains an $\funref{label_compartmentalization}(q,h)$-treewidth modulator that contains $B$
		and has  at most $ 2 t$ vertices. Moreover, it holds that $\funref{label_compartmentalization}(q,h)=\Ocal((\funref{label_superintendent}(q,\funref{label_encompassing}(\funref{label_scheinbeziehungen}(q),3,3h),\funref{label_intervention}(\funref{label_scheinbeziehungen}(q),3h)))^2).$
		%(\funref{label_inevitability}(h))^{{2^{2^{\Ocal((q+h)\cdot \log(q+\red{h}))}}}}.$
	\end{lemma}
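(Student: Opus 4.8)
The plan is to follow the proof of \autoref{label_murmuradores} almost verbatim, but replace the black-box invocation of the Akra--Bazzi Theorem by a direct inductive argument in the spirit of \cite[Lemma~3.6]{F.V.Fomin:2010oq}. First I would fix $q,h\in\Bbb{N}$ and, exactly as before, set $s=\funref{label_superintendent}(q,\funref{label_encompassing}(\funref{label_scheinbeziehungen}(q),3,3h),\funref{label_intervention}(\funref{label_scheinbeziehungen}(q),3h))$ and declare $\funref{label_compartmentalization}(q,h):=x$ where $x=\Ocal(s^2)$ is the treewidth bound that will emerge from the recursion (it will be essentially the same $x$ as in \autoref{label_murmuradores}, up to adjusting the threshold constant $t_0$). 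The key structural ingredients are unchanged: by \autoref{label_dispensaries} (applied with $a=\funref{label_scheinbeziehungen}(q)$) the boundary $B$ of a representative ${\bf G}=(G,B,\rho)\in{\cal R}_h^{(t)}$ affects every ${\sf q}$-apex-wall triple of $G$; by \autoref{label_substantiality} this forces $\tw(G)\le s\cdot\max\{1,\sqrt{|B|}\}$; by \autoref{label_instituciones} we get a $2/3$-balanced separation $(L,R)$ of $B$ of order at most $\tw(G)+1\le s\sqrt{t}+s+1$; and by \autoref{label_concerniente} the sets $B_L=L\cap(R\cup B)$ and $B_R=R\cap(L\cup B)$ still affect every ${\sf q}$-apex-wall triple of $G[L]$ and $G[R]$, respectively, with both $G[L]$ and $G[R]$ being $K_q$-minor-free.

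The heart of the argument is then to prove, by strong induction on $t$, the statement: \emph{if $G$ is $K_q$-minor-free, $|B|\le t$, and $B$ affects every ${\sf q}$-apex-wall triple of $G$, then $G$ has an $x$-treewidth modulator $Z\supseteq B$ with $|Z|\le 2t$.} For the base case, when $t\le t_0$ for a suitable constant $t_0=\Theta(s^2)$, \autoref{label_substantiality} directly gives $\tw(G)\le s\sqrt{t_0}+s\le x$, so $Z:=B$ works and $|B|\le t\le 2t$. For the inductive step, take the balanced separation above, write $|B_L|\le \alpha t + s\sqrt{t}+s+1$ and $|B_R|\le (1-\alpha)t+s\sqrt{t}+s+1$ for some $\alpha\in[\tfrac13,\tfrac23]$ (coming from $|B_L|=|(L\setminus R)\cap B|+|L\cap R|$ and symmetrically), apply induction to $(G[L],B_L)$ and $(G[R],B_R)$ to get modulators $Z_L,Z_R$ with $|Z_L|\le 2(\alpha t+s\sqrt t+s+1)$ and $|Z_R|\le 2((1-\alpha)t+s\sqrt t+s+1)$, and set $Z:=Z_L\cup Z_R$. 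Since $L\cap R\subseteq B_L\subseteq Z_L$, the set $Z$ is a genuine separator-respecting modulator and $\tw(G\setminus Z)\le \max\{\tw(G[L]\setminus Z_L),\tw(G[R]\setminus Z_R)\}\le x$; also $B\subseteq B_L\cup B_R\subseteq Z$. The size bound is $|Z|\le 2t + 4(s\sqrt t+s+1)$, which is \emph{not} immediately $\le 2t$, so the crux is to absorb the additive $4(s\sqrt t+s+1)$ term. This is exactly the accounting trick in \cite[Lemma~3.6]{F.V.Fomin:2010oq}: one strengthens the inductive hypothesis so that the bound is $|Z|\le 2|B| - g(t)$ for an auxiliary slack function (roughly $g(t)=\Theta(s\sqrt t)$, i.e. something growing like $s\sqrt{t}$ that dominates the additive overhead accumulated across the recursion tree), and verifies that the balanced split makes the recursion $g(t)\ge g(\alpha t+s\sqrt t+s+1)+g((1-\alpha)t+s\sqrt t+s+1)+4(s\sqrt t+s+1)$ hold for all $t>t_0$; the choice of $t_0=\Theta(s^2)$ is precisely what guarantees this, using that $\sqrt{\alpha t}+\sqrt{(1-\alpha)t}<\sqrt t$ strictly and bounded away from $\sqrt t$ when $\alpha\in[\tfrac13,\tfrac23]$.

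Once the strengthened inequality is verified, the lemma follows by applying the induction to an arbitrary $K_q$-minor-free ${\bf G}=(G,B,\rho)\in{\cal R}_h^{(t)}$: \autoref{label_dispensaries} supplies the hypothesis on $B$, and we conclude $|Z|\le 2t$ with $Z$ an $x$-treewidth modulator containing $B$. The final bound $\funref{label_compartmentalization}(q,h)=x=\Ocal(s^2)=\Ocal((\funref{label_superintendent}(q,\funref{label_encompassing}(\funref{label_scheinbeziehungen}(q),3,3h),\funref{label_intervention}(\funref{label_scheinbeziehungen}(q),3h)))^2)$ is then read off exactly as in \autoref{label_murmuradores}. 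I expect the main obstacle to be the bookkeeping of the slack function $g$ and pinning down the constant $t_0$ so that the single recursive inequality holds uniformly in the split parameter $\alpha\in[\tfrac13,\tfrac23]$; this is routine but delicate, and is where the improvement from a generic Akra--Bazzi constant $\funref{label_confrontation}(q,h)$ down to the explicit constant $2$ comes from. Everything else — the use of \autoref{label_substantiality}, \autoref{label_concerniente}, \autoref{label_dispensaries}, \autoref{label_instituciones}, and the preservation of $K_q$-minor-freeness under taking induced subgraphs — is identical to the proof of \autoref{label_murmuradores}.
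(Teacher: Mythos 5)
Your overall architecture is exactly the paper's: the same balanced-separation recursion driven by \autoref{label_dispensaries}, \autoref{label_substantiality}, \autoref{label_instituciones} and \autoref{label_concerniente}, and the same idea of replacing Akra--Bazzi by a strengthened inductive hypothesis carrying a $\Theta(s\sqrt{t})$ slack. However, the one step where you spell out the quantitative mechanism is stated backwards, and as written it would fail. After summing the inductive bounds $|Z_L|\le 2|B_L|-g(t_L)$ and $|Z_R|\le 2|B_R|-g(t_R)$ and using $|B_L|+|B_R|\le |B|+2|L\cap R|,$ what you must verify is
\[
g(t_L)+g(t_R)\ \ge\ g(t)+4(s\sqrt{t}+s+1),
\]
not $g(t)\ge g(t_L)+g(t_R)+4(s\sqrt{t}+s+1)$ as you wrote; the latter is hopeless for $g(t)=\Theta(s\sqrt{t})$ because the square root is subadditive. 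Correspondingly, your supporting claim that $\sqrt{\alpha t}+\sqrt{(1-\alpha)t}<\sqrt{t}$ is false: for $\alpha\in[\tfrac13,\tfrac23]$ one has $\sqrt{\alpha}+\sqrt{1-\alpha}\ge\sqrt{1/3}+\sqrt{2/3}>1.39,$ and it is precisely this \emph{super}additivity of the slack under a balanced split that generates the surplus absorbing the $\Ocal(s\sqrt{t})$ separator overhead (the paper's incarnation of this is the inequality $16(\sqrt{\alpha}+\sqrt{1-\alpha})-6\ge 16$). So the mechanism you are invoking works for the opposite reason to the one you give, and an attempt to "verify" the recursion in the direction you state would not close.

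A second, smaller defect: carrying the requirement $Z\supseteq B$ through the induction together with the bound $|Z|\le 2|B|-g(t)$ breaks in the base case, where $Z:=B$ forces $|B|\ge g(t),$ which fails whenever $B$ is small (e.g.\ empty) relative to $t.$ The paper sidesteps this by stating the inductive claim purely in terms of ${\bf p}_{\sf q}(G)\le t,$ with no containment requirement and with the bound truncated at zero (its relation $\leq_{\circ}$), so the base-case modulator is empty; the boundary is added to the modulator only once, at the very end, and $(t-16s\sqrt{t})+|B|\le 2t$ yields the stated size. If you insist on keeping $B$ inside the modulator throughout the recursion, the hypothesis should have the shape $|Z|\le |B|+\max\{0,\,t-g(t)\}$ rather than $2|B|-g(t).$ With these two corrections your sketch becomes the paper's proof.
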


	%\removed{
	\begin{proof}
		For simplicity, we use ${\sf q}$ as a shortcut for  the triple $(\funref{label_scheinbeziehungen}(q),\funref{label_encompassing}(\funref{label_scheinbeziehungen}(q),3,3h),\funref{label_intervention}(\funref{label_scheinbeziehungen}(q),3h)).$
		%${\bf p}$ as a shortcut of ${\bf p}_{\funref{label_scheinbeziehungen}(q),\funref{label_encompassing}(\funref{label_scheinbeziehungen}(q),3,3h),\funref{label_intervention}(\funref{label_scheinbeziehungen}(q),3h)}.$
		We   define the constants $s=\funref{label_superintendent}(q,\funref{label_encompassing}(\funref{label_scheinbeziehungen}(q),3,3h),\funref{label_intervention}(\funref{label_scheinbeziehungen}(q),3h)),$ $t_0 = 256 s^{2},$ and  $c=s\cdot  \sqrt{t_0}.$
		We define the relation $\leq_{\circ}$  so that $a\leq_{\circ} b$ means that $a\leq \max\{0,b\}.$
		We first prove, by induction on $t,$ the following  statement.\medskip

		\noindent{\em Claim:} For every  non-negative integer $t$
		% \geq \frac{1}{3}t_0$
		and  every $K_{q}$-minor-free  graph $G,$
		if ${\bf p}_{\sf q}(G)\leq t$ then  $G$ has a $ c$-treewidth modulator $Z$ with $|Z|\leq_{\circ}  {t} - 16 s\cdot \sqrt{t}.$ \medskip

		\noindent{\em Proof of claim:}
		In the base case we consider any $t$ with  $0\leq  t \leq t_0.$
		As ${\bf p}_{\sf q}(G)\leq t,$ \autoref{label_substantiality} implies that $\tw(G)\leq s\cdot \max\{\sqrt{t},1\}\leq s\cdot\sqrt{t_0}.$ % (recall that $k\geq t_0\geq 1$).
		Thus $G$ has a $ c$-treewidth modulator of size $0\leq_{\circ}  {t} - 16 s\cdot \sqrt{t},$ and the claim follows.

		\smallskip

		For the inductive step, let $t > t_0$ and suppose that the claim is true for  every $t'$ with $0\leq t'\leq t-1.$ We prove that the claim holds also for $t.$ Consider a
		graph $G$ with ${\bf p}_{\sf q}(G)\leq t$ and let $S$ be a set  of at most $t$ vertices affecting  every ${\sf q}$-apex-wall triple of  $G.$
		Because of  \autoref{label_substantiality}, ${\bf p}_{\sf q}(G)\leq t$ implies that   $\tw(G) \leq  s\cdot \max\{\sqrt{t},1\}=s\cdot \sqrt{t}.$

		By applying \autoref{label_instituciones} to $G$ and $S,$ there is a 2/3-balanced separation $(L,R)$ of $S$ in $G$ such that $|L\cap R|\leq \tw(G) + 1 \leq s \cdot \sqrt{t} + 1$ and there exists some $\alpha\in  [\frac{1}{3},\frac{2}{3}]$ such that $|(L\setminus R) \cap S| \leq \alpha\cdot |S|\leq \alpha\cdot t$ and  $|(R\setminus L)\cap S| \leq (1-\alpha)\cdot |S|\leq (1-\alpha)\cdot t.$

		Since $S$ affects  every ${\sf q}$-apex-wall triple of $G,$ \autoref{label_concerniente} gives that
		the set
		$L\cap (R \cup S)$ affects  every  ${\sf q}$-apex-wall triple of $G[L].$ This implies
		\begin{align*}
			{\bf p}_{\sf q}(G[L]) & \leq |L\cap (R \cup S)| = |(L\setminus R) \cap S| + |L\cap R| \leq \alpha\cdot t + (s\cdot  \sqrt{t} + 1) \leq \alpha\cdot t + 2s\cdot  \sqrt{t}.
		\end{align*}
		Here the last inequality follows from the assumption that $t \geq t_0 \geq 1.$% and the choice of $2s.$
		\medskip

		In order to apply the inductive hypothesis, note that   $t':=\alpha\cdot t + 2s\cdot  \sqrt{t} \leq t - 1$ for $t \geq t_0.$ This can be verified by using the fact that $s\geq 1,$ $
			\alpha\leq	 \frac{2}{3},$ and checking that $\frac{2}{3} t + 2s\cdot  \sqrt{t} \leq t - 1$ for $t \geq t_0.$
		Indeed, the inequality holds whenever $\sqrt{t}\geq 3s+\frac{3}{2}\sqrt{(2s)^2+\frac{4}{3}}$ and this is the case
		as $\sqrt{t}\geq \sqrt{t_{0}}=16s\geq 3\cdot s+\frac{3}{2} \sqrt{(2s)^2+\frac{4}{3}}.$

		Therefore we can  apply the induction hypothesis to $G[L]$ and $t'$ and obtain a $c$-treewidth modulator $Z_L$ of $G[L],$ such that
		\begin{align*}
			|Z_L| & \leq_{\circ}  t' - 16s\cdot  \sqrt{t'}  \leq_{\circ}  (\alpha\cdot t + 2s\cdot  \sqrt{t}) - 16s\cdot \sqrt{\alpha\cdot t + 2s\cdot  \sqrt{t}} \leq_{\circ}  (\alpha\cdot t + 2s\cdot  \sqrt{t}) - 16s\cdot  \sqrt{\alpha\cdot t}.
		\end{align*}
		A symmetric argument applied to $G[R]$ yields a treewidth modulator $Z_R$ of $G[R],$ such that
		\begin{align*}
			|Z_R| & \leq_{\circ}  \left((1-\alpha)\cdot t + 2s\cdot  \sqrt{t} \right) - 16s\cdot  \sqrt{  (1-\alpha)\cdot t}.
		\end{align*}

		We now construct a $ c$-treewidth modulator $Z$ of $G$ as follows by setting $Z:= Z_L \cup (L\cap R) \cup Z_R.$ The set $Z$ is a $ c$-treewidth-modulator of $G$ because every connected component of $G- Z$ is a subset of either $(L\setminus (L\cap R))\setminus Z_{L}$ or $(R\setminus (L\cap R))\setminus Z_{R},$ and $Z_L$ and $Z_R$ are $ c$-treewidth modulators for $G[L]$ and $G[R]$ respectively. Finally we bound the size of $Z.$
		\begin{align*}
			|Z| & \leq |Z_L| + |Z_R| + |S|                                                                                                                                                                                \\
			    & \leq_{\circ}   (\alpha\cdot t + 2s\cdot  \sqrt{t}) - 16s\cdot  \sqrt{t\cdot  \alpha} +  \left((1-\alpha)\cdot t + 2s\cdot  \sqrt{t} \right) - 16s\cdot  \sqrt{t\cdot (1-\alpha)} + s \cdot \sqrt{t} + 1 \\
			    & \leq_{\circ}   (\alpha\cdot t + 2s\cdot  \sqrt{t}) - 16s\cdot  \sqrt{t\cdot  \alpha} +  \left((1-\alpha)\cdot t + 2s\cdot  \sqrt{t} \right) - 16s\cdot  \sqrt{t\cdot (1-\alpha)} + 2s\cdot  \sqrt{t}    \\
			%& =   t - 16\left(\sqrt{\alpha} +  \sqrt{1-\alpha}\right)\cdot s\cdot  \sqrt{t}  +  6 s\cdot \sqrt{t} \\
			    & \leq_{\circ}   t -   \big(16 \big(\sqrt{\alpha} +  \sqrt{1-\alpha}\big) -  6\big) \cdot s\cdot  \sqrt{t}                                                                                                \\
			    & \leq_{\circ}   t - 16s\cdot  \sqrt{t}.
		\end{align*}
		The last inequality  uses  the fact that $16 \left(\sqrt{\alpha} +  \sqrt{1-\alpha}\right) -  6\geq 16,$ for every $\alpha\in[\frac{1}{3},\frac{2}{3}].$
		The claim follows.
		\medskip

		Suppose now that ${\bf G}=(G,B,\rho)$ is
		a $K_{q}$-minor-free boundaried graph  in ${\cal R}_{h}^{(t)}.$
		From
		\autoref{label_dispensaries}, ${\bf p}_{\sf q}(G)\leq t$ and, because of the above claim, $G$ contains a $c$-treewidth modulator $Z$ where $|Z|\leq_{\circ} {t} - 16s\cdot  \sqrt{t}\leq t.$ This, in turn, implies that $B\cup Z$ is a
		$c$-treewidth modulator of $G$ that contains $B$ and has  size $2 t.$
		Also observe that $c= 16s^2.$ Therefore, the lemma holds for $\funref{label_compartmentalization}(q,h)=\Ocal(s^2).$
		%16(\funref{label_superintendent}(\funref{label_scheinbeziehungen}(q),\funref{sdfhgdfgdfgdsfgdf}(q-5,2h))+1)^2=(\funref{label_inevitability}(h^2))^{{2^{2^{\Ocal((q+h^2)\cdot \log(q+h^2))}}}}.$}
	\end{proof}
	%}

	We  stress that, as it is done in the proof of {\cite[Lemma 3.6]{F.V.Fomin:2010oq}}, it is possible to find a modulator
	of size at most $(1+\varepsilon) \cdot t,$ for every positive real $\varepsilon>0.$ Nevertheless, we have provided the proof of \autoref{label_gleichsinnigen} for the particular case $\varepsilon =2,$ which is enough for our purposes.

	\subsection{Upper bounds on the constants depending on the excluded minors}
	\label{label_reconstituirse}

	In this section we provide an estimation on the function $\funref{label_liberalization}$ in \autoref{label_desgobernada}, or equivalently on the constant $c_{\Fcal}$ in \autoref{label_instintivamente}.
	We first provide some definitions in order to introduce the Unique Linkage Theorem~\cite{KawarabayashiW2010asho,RobertsonSGM22}.

	A \emph{linkage} in a graph $G$ is a subgraph $L$ of $G$  whose connected components are all non-trivial paths. The {\em paths} of a linkage $L$ are its connected components and we denote them by ${\cal P}(L).$
	The {\em size} of $L$ is the number of its paths and is denoted by $|L|.$
	The \emph{terminals} of a linkage $L,$ denoted by $T(L),$ are the endpoints of the paths in ${\cal P}(L),$ and
	%\marg{$|L|,$ $T(L),$  ${\cal P}(L)$}
	the \emph{pattern} of $L$ is the set $$\big\{\{s,t\} \mid {\cal P}(L)\mbox{ contains some $(s,t)$-path}\big\}.$$ Two linkages $L_{1},L_{2}$ of $G$  are {\em equivalent} if they have the same pattern and we denote this fact by $L_{1}\equiv L_{2}.$
	We say that a linkage $L$ in a graph $G$ is {\em unique}
	if for every linkage $L’$ that is equivalent to $L$
	it holds that $V(L’) =V(L).$

	According to the proof of~\autoref{label_interessiert} in~\cite{GolovachST20-SODA}, the function
	$\funref{label_inevitability}$ emerges from the  following result, known as the Unique Linkage Theorem.

	%We say that a linkage $L$ of a graph $G$ is {\em vital} if $V(L)=V(G)$ and there is no other linkage of $G$ that is equivalent to $L.$

	\begin{proposition}[\!\!\cite{KawarabayashiW2010asho,RobertsonSGM22}]
		\label{label_wiederzugeben}
		There exists a  function  $\funref{label_inevitability}:\Bbb{N}_{\geq  0}\to\Bbb{N}_{\geq  0}$ such that
		if $G$ is a graph and $L$ is a unique linkage of $G,$ then $\tw(G)\leq \funref{label_inevitability}(|L|).$
	\end{proposition}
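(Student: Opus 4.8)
The plan is to prove the contrapositive-flavoured form of the statement: if $\tw(G)$ exceeds a suitable function of $|L|$, then $L$ is \emph{not} unique, i.e.\ $G$ contains a linkage $L'$ equivalent to $L$ with $V(L')\neq V(L)$. Concretely, I would isolate a single vertex $v$ that is \emph{irrelevant} to the pattern of $L$, meaning that every linkage realising that pattern can be rerouted to avoid $v$. Deleting such a vertex keeps $G$ with a pattern-equivalent linkage of the same size; iterating this deletion as long as the treewidth remains above the threshold eventually produces a minor-free-of-reductions subgraph, whose treewidth is then bounded, and $\funref{label_inevitability}(|L|)$ is by definition the threshold at which no irrelevant vertex can be guaranteed any more.

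The first step is to dichotomise the structure of $G$ via a Flat Wall Theorem in the style of \autoref{label_intercanvien}: either $G$ has a $K_q$-minor with $q$ growing with $|L|$, or there is a small apex set $A$ such that $G\setminus A$ contains a flat wall $W$ of height much larger than $|L|+|A|$. The clique-minor case is the easy one: a clique minor of order $\Omega(|L|)$ provides enough internal connectivity to re-route all paths of $L$ through the branch sets of the minor, yielding an equivalent linkage that avoids any prescribed vertex lying outside the minor, so that vertex is irrelevant. The hard case is the flat wall, and it is the technical heart of the whole theorem: one must show that inside a sufficiently large flat wall — sufficiently large \emph{as a function of $|L|$ and $|A|$ only} — every linkage can be rerouted so that, within a large central annular region, it runs in a ``tame'', essentially planar and laminar way, using only a prescribed well-structured family of paths that does not depend on the original linkage; this is exactly the railed-annulus / model-taming phenomenon isolated in \autoref{label_interessiert}. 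Once the rerouted linkage is confined in this sense, the central vertex of $W$ lies on none of the admissible paths and is therefore irrelevant.

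The main obstacle, and the reason the theorem is so delicate, is controlling the interaction of the linkage with the apex set $A$ and with the ``vortical'' (non-planar) pieces of the flat-wall decomposition: a path may jump to an apex or dive through a non-planar flap, and one must argue that, after first enlarging the wall to obtain a homogeneous region and then combing the linkage, all such irregularities can be absorbed into a part of bounded size, leaving a genuinely planar core of unbounded size where classical planar rerouting of nested paths in a grid/annulus applies. It is precisely at this step that the parametric blow-up enters, and in fact the results of Adler and Krause show the dependence is necessarily at least $\funref{label_inevitability}(\ell)=2^{\Omega(\ell)}$ (cf.\ \autoref{label_constitueraient}).

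Finally, to convert the irrelevant-vertex statement into the treewidth bound, I would run the deletion argument to termination: as long as $\tw(G)>\funref{label_inevitability}(|L|)$ the dichotomy above yields an irrelevant vertex whose removal gives a proper subgraph still admitting a linkage with the pattern of $L$; since a \emph{unique} linkage cannot survive in a graph from which a vertex of its support has been removed, uniqueness of $L$ forces the reduction to stop only once $\tw(G)\le\funref{label_inevitability}(|L|)$, which is the claim. For the purposes of the present paper this entire development is invoked as the black box of Robertson--Seymour~\cite{RobertsonSGM22} and Kawarabayashi--Wollan~\cite{KawarabayashiW2010asho}: we use only the existence of $\funref{label_inevitability}$ and its known growth rate, which feed into the estimates of \autoref{label_reconstituirse}.
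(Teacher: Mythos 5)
The paper does not prove this proposition at all: it is imported verbatim as the Unique Linkage Theorem of Robertson--Seymour and Kawarabayashi--Wollan, and only the existence and growth rate of $\funref{label_inevitability}$ are used downstream (cf.\ \autoref{label_constitueraient} and \autoref{label_reconstituirse}). So there is no in-paper argument to compare yours against; what you have written is a roadmap of the external proof, and it has to be judged on its own.

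As a proof, your sketch has two genuine gaps. First, the step you yourself call ``the technical heart'' --- showing that inside a flat wall whose height is bounded in terms of $|L|$ and $|A|$ alone, every linkage can be combed into a tame, confined form, with all apex jumps and vortex traversals absorbed into a bounded part --- is precisely the content of Graph Minors XXI/XXII and of Kawarabayashi--Wollan; asserting it is asserting the theorem. In particular you cannot lean on \autoref{label_interessiert} here without circularity, since (as \autoref{label_reconstituirse} records) the function $\funref{label_inevitability}$ appearing there is itself derived from the Unique Linkage Theorem. Second, your concluding deduction does not close: finding an irrelevant vertex $v$ contradicts uniqueness (in the paper's sense that every equivalent linkage has the same vertex set) only when $v\in V(L)$. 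If $v\notin V(L)$, the rerouted linkage may simply be $L$ itself, and deleting $v$ and iterating never manufactures an equivalent linkage with a different vertex set. The theorem in the literature is really about \emph{vital} linkages, i.e.\ it carries the additional hypothesis $V(L)=V(G)$, under which every avoidable vertex automatically lies on $L$ and a single application of the large-treewidth dichotomy yields the contradiction --- no iteration is needed. Your write-up omits this hypothesis (as, admittedly, does the paper's statement of the proposition), and without it the reduction you describe does not yield the claimed treewidth bound.
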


	It is worth mentioning that~\cite{KawarabayashiW2010asho,RobertsonSGM22}
	do not provide the precise number of exponentiations
	involved in the function $\funref{label_inevitability},$ and therefore we will express our upper bounds in terms of this function.
	%For two positive integers $h,x,$ we recursively
	%define the quantity $h\uparrow\uparrow x$ such that $h\uparrow \uparrow0=h$
	%and, for $x\geq 1,$ $h\uparrow\uparrow x=2^{h~\uparrow\uparrow~(x-1)}.$
	%As~\cite{KawarabayashiW2010asho,RobertsonS22}
	%do not provide the precise number of exponentiations
	%involved in the function $\funref{label_inevitability},$ we define
	%$x_{\sf ul}$ as the minimum integer such that $\funref{label_inevitability}(h)=h^{\Ocal(1)} \uparrow\uparrow x_{\sf ul} .$
	Namely, in order to provide an upper bound on $\funref{label_liberalization}(h),$ we backtrack the functions involved in the intermediate results of this paper as follows:
	\begin{itemize}
		\item  According to \autoref{label_encyclopedia}, $\funref{label_liberalization}(h)=\Ocal(\funref{label_annihilation}(h,h) \cdot \log(\funref{label_annihilation}(h,h) \cdot h \sqrt{\log h})).$
		\item By
		      \autoref{label_afincamiento},  $\funref{label_annihilation}(h,h) =
			      \Ocal(\funref{label_athanasopoulos}(h,h)\cdot \funref{label_chronological}(h,h)).$
		\item By \autoref{label_intentionnel}, $\funref{label_chronological}(h,h) = 2^{2^{2^{\Ocal(\funref{label_athanasopoulos}(h,h)\cdot \log \funref{label_athanasopoulos}(h,h))}}}.$
		\item By \autoref{label_unintentional}, $\funref{label_athanasopoulos}(h,h) =\funref{label_eavesdropper}(h,h) \cdot \funref{label_confrontation}(h,h) \cdot 2^{\Ocal(h \log h)}.$
		\item By \autoref{label_gleichsinnigen}, we can take $\funref{label_confrontation}(h,h)=2.$
		\item By \autoref{label_murmuradores} (and \autoref{label_gleichsinnigen} as well), $\funref{label_eavesdropper}(h,h)=
			      \Ocal((\funref{label_superintendent}(h,\funref{label_encompassing}(\funref{label_scheinbeziehungen}(h),3,3h),\funref{label_intervention}(\funref{label_scheinbeziehungen}(h),3h)))^2).$
		\item By \autoref{label_intercanvien}, $\funref{label_scheinbeziehungen}(h)=\Ocal(h^{24}) = h^{\Ocal(1)}.$
		\item By \autoref{label_pretendientes},
		      $\funref{label_intervention}(\funref{label_scheinbeziehungen}(h),3h)=\Ocal(h^{24}) =  h^{\Ocal(1)}.$
		\item By \autoref{label_pretendientes}, $\funref{label_encompassing}(\funref{label_scheinbeziehungen}(h),3,3h)=\Ocal( (\funref{label_inevitability}(h^{\Ocal(1)}))^3).$

		\item By \autoref{label_substantiality}, $\funref{label_superintendent}(h,r,\hat{\ell}) =\funref{label_congiugnersi}(h)\cdot r^{2^{2^{\Ocal((h^{24}+\hat{\ell})\cdot \log (h^{24}+\hat{\ell}))}}},$ which is $r^{2^{2^{(h+\hat{\ell})^{\Ocal(1)}}}}$ by \autoref{label_intercanvien}.
		      If we set $r=\funref{label_encompassing}(\funref{label_scheinbeziehungen}(h),3,3h)=\Ocal( (\funref{label_inevitability}(h^{\Ocal(1)}))^3)$ and $\hat{\ell}=\funref{label_intervention}(\funref{label_scheinbeziehungen}(h),3h)=h^{\Ocal(1)},$ we have that
		      $$
			      \funref{label_eavesdropper}(h,h)=
			      \Ocal((\funref{label_superintendent}(h,r,\hat{\ell}))^2)= \big((\funref{label_inevitability}(h^{\Ocal(1)}))^3\big)^{2^{2^{(h+h^{{\cal O}(1)})^{\Ocal(1)}}}}= \big(\funref{label_inevitability}(h^{{\cal O}(1)})\big)^{2^{2^{h^{{\cal O}(1)}}}}.
		      $$

		\item We set $λ=\funref{label_inevitability}(h^{{\cal O}(1)}).$ Given that $\funref{label_athanasopoulos}(h,h)=  \funref{label_eavesdropper}(h,h) \cdot \funref{label_confrontation}(h,h) \cdot 2^{\Ocal(h \log h)}$ and that $\funref{label_confrontation}(h,h)=2,$ we obtain that   $\funref{label_athanasopoulos}(h,h) =λ^{2^{2^{h^{{\cal O}(1)}}}}.$ We now have that
		      $\funref{label_chronological}(h,h) = 2^{2^{2^{λ^{2^{2^{h^{{\cal O}(1)}}}}}}},$ which implies that $\funref{label_annihilation}(h,h) = 2^{2^{2^{λ^{2^{2^{h^{{\cal O}(1)}}}}}}}$ and thus
		      $\funref{label_liberalization}(h)= 2^{2^{2^{λ^{2^{2^{h^{{\cal O}(1)}}}}}}}.$
		      %In other words, $\funref{label_liberalization}(h)=h^{\Ocal(1)}\uparrow\uparrow (6+\max\{x_{\sf ul}-3,0\}).$
	\end{itemize}
	\medskip

	From \autoref{label_desgobernada} and the above discussion, we conclude  the following corollary, which gives an explicit upper bound  on the contribution  of the maximum size of the graphs in ${\cal F}$ in the complexity of our algorithm, depending on the function $\funref{label_inevitability}$ given by \autoref{label_wiederzugeben}.

	\begin{corollary}
		\label{label_geographical}
		Let ${\cal F}$ be a collection of graphs each of size at most $h,$ and let $G$ be a graph.
		Then  the parameter ${\bf m}_{\cal F}(G)$ can be computed in time

		$$
			2^{\Big(2^{2^{2^{λ^{2^{2^{h^{{\cal O}(1)}}}}}}} \Big)\cdot\tw(G)\cdot \log(\tw(G))}\cdot |V(G)|, \text{ where $λ=\funref{label_inevitability}(h^{{\cal O}(1)}).$}
		$$
	\end{corollary}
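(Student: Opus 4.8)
\textbf{Proof plan for \autoref{label_geographical}.}

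The plan is to simply trace the parametric dependencies through the chain of intermediate results exactly as laid out in the bulleted backtracking list that precedes the corollary statement. The key input is \autoref{label_desgobernada}, which states that ${\bf m}_{\cal F}(G)$ can be computed in time $2^{\Ocal(\funref{label_liberalization}(h)\cdot t\log t)}\cdot n$, where $t=\tw(G)$ and $h$ bounds the sizes of the graphs in $\Fcal$. Hence it suffices to derive an explicit upper bound on $\funref{label_liberalization}(h)$ in terms of the Unique Linkage function $\funref{label_inevitability}$ of \autoref{label_wiederzugeben}. First I would unwind $\funref{label_liberalization}(h)=\Ocal(\funref{label_annihilation}(h,h)\cdot\log(\funref{label_annihilation}(h,h)\cdot h\sqrt{\log h}))$ from \autoref{label_encyclopedia}, so that up to a multiplicative polylogarithmic factor that is absorbed into the tower, it is enough to bound $\funref{label_annihilation}(h,h)$.

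Next I would plug in, in order: $\funref{label_annihilation}(h,h)=\Ocal(\funref{label_athanasopoulos}(h,h)\cdot\funref{label_chronological}(h,h))$ from \autoref{label_afincamiento}; the triple-exponential bound $\funref{label_chronological}(h,h)=2^{2^{2^{\Ocal(\funref{label_athanasopoulos}(h,h)\cdot\log\funref{label_athanasopoulos}(h,h))}}}$ from \autoref{label_intentionnel}; and $\funref{label_athanasopoulos}(h,h)=\funref{label_eavesdropper}(h,h)\cdot\funref{label_confrontation}(h,h)\cdot 2^{\Ocal(h\log h)}$ from \autoref{label_unintentional}. The crucial simplification is that, by the improved treewidth-modulator bound of \autoref{label_gleichsinnigen}, one may take $\funref{label_confrontation}(h,h)=2$, which is what keeps the $\funref{label_confrontation}$ term from contributing anything to the tower. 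For $\funref{label_eavesdropper}(h,h)$ I would use \autoref{label_murmuradores} (equivalently \autoref{label_gleichsinnigen}) to write it as $\Ocal((\funref{label_superintendent}(h,\funref{label_encompassing}(\funref{label_scheinbeziehungen}(h),3,3h),\funref{label_intervention}(\funref{label_scheinbeziehungen}(h),3h)))^2)$, then substitute: $\funref{label_scheinbeziehungen}(h)=\Ocal(h^{24})=h^{\Ocal(1)}$ and $\funref{label_intervention}(\funref{label_scheinbeziehungen}(h),3h)=h^{\Ocal(1)}$ from \autoref{label_intercanvien} and \autoref{label_pretendientes}; and $\funref{label_encompassing}(\funref{label_scheinbeziehungen}(h),3,3h)=\Ocal((\funref{label_inevitability}(h^{\Ocal(1)}))^3)$ from \autoref{label_pretendientes}. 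Finally, \autoref{label_substantiality} gives $\funref{label_superintendent}(h,r,\hat\ell)=\funref{label_congiugnersi}(h)\cdot r^{2^{2^{\Ocal((h^{24}+\hat\ell)\log(h^{24}+\hat\ell))}}}$, which by \autoref{label_intercanvien} is $r^{2^{2^{(h+\hat\ell)^{\Ocal(1)}}}}$; setting $r=\Ocal((\funref{label_inevitability}(h^{\Ocal(1)}))^3)$ and $\hat\ell=h^{\Ocal(1)}$ yields $\funref{label_eavesdropper}(h,h)=(\funref{label_inevitability}(h^{\Ocal(1)}))^{2^{2^{h^{\Ocal(1)}}}}$. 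Writing $\lambda=\funref{label_inevitability}(h^{\Ocal(1)})$, this propagates to $\funref{label_athanasopoulos}(h,h)=\lambda^{2^{2^{h^{\Ocal(1)}}}}$, then $\funref{label_chronological}(h,h)=2^{2^{2^{\lambda^{2^{2^{h^{\Ocal(1)}}}}}}}$, then $\funref{label_annihilation}(h,h)=2^{2^{2^{\lambda^{2^{2^{h^{\Ocal(1)}}}}}}}$, and finally $\funref{label_liberalization}(h)=2^{2^{2^{\lambda^{2^{2^{h^{\Ocal(1)}}}}}}}$. Substituting this into the running time of \autoref{label_desgobernada} gives exactly the claimed bound.

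I do not expect a genuine obstacle here, since the corollary is essentially bookkeeping: all the hard combinatorial work is already packaged in \autoref{label_pretendientes}, \autoref{label_substantiality}, \autoref{label_murmuradores}/\autoref{label_gleichsinnigen}, \autoref{label_unintentional}, and \autoref{label_intentionnel}. The only point requiring a little care — and the one I would double-check — is the compounding of nested exponentials: one must verify that the $\Ocal(\cdot)$ inside each exponent is genuinely polynomial in $h$ (never worse, e.g., than a fixed power times a constant), so that the outermost tower really has the stated height of six (three literal $2$-exponentiations, one $\lambda$-exponentiation, two more $2$-exponentiations) with a polynomial-in-$h$ argument at the top, and that the polylogarithmic factor coming from $\funref{label_liberalization}$ versus $\funref{label_annihilation}$ is absorbed without raising the tower. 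Since $\lambda=\funref{label_inevitability}(h^{\Ocal(1)})$ is itself of unspecified tower-height (inherited from the Unique Linkage Theorem), the final bound is necessarily expressed relative to $\funref{label_inevitability}$, which is precisely the form of the corollary; no further quantitative claim about $\funref{label_inevitability}$ is attempted.
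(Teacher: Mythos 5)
Your proposal is correct and follows essentially the same route as the paper: the corollary is proved in \autoref{label_reconstituirse} by exactly this backtracking of the parametric dependencies through \autoref{label_encyclopedia}, \autoref{label_afincamiento}, \autoref{label_intentionnel}, \autoref{label_unintentional}, \autoref{label_gleichsinnigen}, \autoref{label_substantiality}, \autoref{label_intercanvien}, and \autoref{label_pretendientes}, including the key use of $\funref{label_confrontation}(h,h)=2$ and the substitution into \autoref{label_desgobernada}. The bookkeeping and the resulting tower of exponentials match the paper's derivation.
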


\end{appendix}

\end{document}